\newcolumntype{C}{>{\centering\arraybackslash}X} 
\newcolumntype{R}{>{$}r<{:$}} 
\newtheorem{theorem}{Theorem}
\newtheorem*{theorem*}{Theorem}
\theoremstyle{remark}
\newtheorem{remark}{Remark}
\newtheorem{insight}{Insight}
\newtheorem*{claim*}{Claim}
\theoremstyle{plain}
\newtheorem{example}{Example}
\newtheorem{constraint}{Prior Knowledge}
\newcommand*\MyExp{{\mathbb E}}
\newproof{pf}{Proof}
\newproof{pot}{Proof of Theorem \ref{thm2}}
\def\BibTeX{{\rm B\kern-.05em{\sc i\kern-.025em b}\kern-.08em
		T\kern-.1667em\lower.7ex\hbox{E}\kern-.125emX}}
\definecolor{darkgreen}{rgb}{0,0.5,0}
\definecolor{purple}{rgb}{1,0,1}
\newcommand{\kibitz}[2]{\ifnum\Comments=1\textcolor{#1}{#2}\fi}
\newcommand{\kizito}[1]{\kibitz{red}      {[Kizito: #1]}}
\newcommand{\xingyu}[1]  {\kibitz{darkgreen}   {[Xingyu: #1]}}
\DeclareAcronym{cbi}{
	short = CBI ,
	long  = Conservative Bayesian Inference\kizito{{\bf BI} too?},
	tag = abbrev
}
\DeclareAcronym{av}{
	short = AV ,
	long  = Autonomous Vehicle,
	tag = abbrev
}
\DeclareAcronym{pfe}{
	short = \textit{pfe} ,
	long  = probability of failure per execution,
	tag = abbrev
}
\DeclareAcronym{pk}{
	short = PK ,
	long  = Prior Knowledge,
	tag = abbrev
}
\DeclareAcronym{csc}{
	short = SCS ,
	long  = Safety-Critical System,
	tag = abbrev
}
\DeclareAcronym{NHPP}{
	  short = NHPP ,
	  long  = non-homogeneous Poisson process,
	  tag = abbrev
	}
\DeclareAcronym{SRGM}{
	  short = SRGM ,
	  long  = software reliability growth model,
	  tag = abbrev
	}
\DeclareAcronym{pfd_var}{
	short = \textit{pfd},
	long  = probability of failure per demand,
	tag = abbrev
}
\DeclareAcronym{pfm_var}{
	short = \textit{pfm},
	long  = probability of fatality-event per mile,
	tag = abbrev
}
\DeclareAcronym{Bernoulli_trail}{
	short = \ensuremath{T_i},
	long  =  Random variable that indicates when the $i$th execution fails,
	sort  = 000 ,
	tag = nomencl
}
\DeclareAcronym{pfd}{
	short = \ensuremath{X},
	long  = The unknown probability of an execution failing,
	sort  = 001,
	tag = nomencl
}
\DeclareAcronym{dependence_factor}{
	short = \ensuremath{\Lambda},
	long  = {The unknown probability of the next execution failing when the last execution was a failure.},
	sort  = 010,
	tag = nomencl
}
\DeclareAcronym{engineering_goal}{
	short = $\epsilon$,
	long  = A bound on $X$ representing an ``engineering goal'',
	sort  = 020,
	tag = nomencl
}
\DeclareAcronym{confidence_in_engineering_goal}{
	short = $\theta$,
	long  = Prior confidence in the engineering goal being satisfied,
	sort  = 021,
	tag = nomencl
}
\DeclareAcronym{required_bound}{
	short = $b$,
	long  =  A required upper bound on $X$,
	sort  = 022,
	tag = nomencl
}
\DeclareAcronym{post_confidence_in_b}{
	short = $c$,
	long  = Posterior confidence in $X$ satisfying the bound $b$,
	sort  = 023,
	tag = nomencl
}
\DeclareAcronym{conf_positive_col}{
	short = $\phi_1$,
	long  = Prior confidence in negatively dependent executions,
	sort  = 024,
	tag = nomencl
}
\DeclareAcronym{conf_negative_col}{
	short = $\phi_2$,
	long  = Prior confidence in positively dependent executions,
	sort  = 025,
	tag = nomencl
}
\DeclareAcronym{number_of_tests}{
	short = $n$,
	long  = The total number of executions,
	sort  = 026,
	tag = nomencl
}
\DeclareAcronym{failure_s}{
	short = $s$,
	long  = The number of executions that are failures,
	sort  = 027,
	tag = nomencl
}
\DeclareAcronym{failure_r}{
	short = $r$,
	long  = The number of failures that are preceded by a failure,
	sort  = 028,
	tag = nomencl
}
\DeclareAcronym{likelihood_func}{
	short = $L$,
	long  = The likelihood function,
	sort  = 036,
	tag = nomencl
}
\DeclareAcronym{dom_of_lklhd_fn}{
	short = $\mathcal{R}$,
	long  = The domain of the Klotz failure model,
	sort  = 037,
	tag = nomencl
}
\begin{document}
\let\WriteBookmarks\relax
\def\floatpagepagefraction{1}
\def\textpagefraction{.001}

\begin{textblock*}{20cm}(1cm,1cm)
\textcolor{red}{Accepted by the journal of Quality and Reliability Engineering International}
\end{textblock*}

\shorttitle{Demonstrating Software Reliability using Possibly Correlated Tests}

\shortauthors{Salako and Zhao}

\title[mode = title]{ 
Demonstrating Software Reliability using Possibly Correlated Tests: Insights from a Conservative Bayesian Approach 
}                      



%



\author[1]{Kizito Salako}[orcid=0000-0003-0394-7833]
\ead{k.o.salako@city.ac.uk}



\affiliation[1]{organization={Centre for Software Reliability, City, University of London},
    addressline={Northampton Square}, 
    city={London},
    citysep={}, 
    postcode={EC1V 0HB}, 
    country={United Kingdom}}

\author[2,3]{Xingyu Zhao}[orcid=0000-0002-3474-349X]

 \ead{xingyu.zhao@liverpool.ac.uk}


\affiliation[2]{organization={Department of Computer Science, University of Liverpool},
     addressline={Ashton Stree}, 
    city={Liverpool},
    citysep={}, 
    postcode={L69 3BX}, 
    country={United Kingdom}}
\affiliation[3]{organization={Warwick Manufacturing Group, University of Warwick},
     addressline={Lord Bhattacharyya Way}, 
    city={Coventry},
    citysep={}, 
    postcode={CV4 7AL}, 
    country={United Kingdom}}






\begin{abstract}
	This paper presents Bayesian techniques for conservative claims about software reliability, particularly when evidence suggests the software's executions are not statistically independent. We formalise informal notions of ``\emph{doubting}'' that the executions are independent, and incorporate such doubts into reliability assessments. We develop techniques that reveal the extent to which independence assumptions can undermine conservatism in assessments, and identify conditions under which this impact is not significant. These techniques -- novel extensions of \emph{conservative Bayesian inference} (CBI) approaches -- give conservative confidence bounds on the software's failure probability per execution. With illustrations in two application areas -- nuclear power-plant safety and autonomous vehicle (AV) safety --  our analyses reveals: {\bf 1)} the confidence an assessor should possess before subjecting a system to operational testing. Otherwise, such testing is futile -- favourable operational testing evidence will eventually decrease one's confidence in the system being sufficiently reliable; {\bf 2)} the independence assumption supports conservative claims sometimes; {\bf 3)} in some scenarios, observing a system operate without failure gives less confidence in the system than if some failures \emph{had} been observed; {\bf 4)} building confidence in a system is very sensitive to failures -- each additional failure means significantly more operational testing is required, in order to support a reliability claim.
\end{abstract}









\begin{keywords}
reliability engineering\sep
software reliability \sep
conservative Bayesian inference \sep
software testing \sep
safety-critical systems
\end{keywords}

\maketitle

\printacronyms[include=abbrev,name=Abbreviations]
\printacronyms[include=nomencl,name=Notation]
\acuseall 

\section{Introduction}
\label{sec_intro}
\begin{figure*}[t!]
	\centering
	\includegraphics[width=0.9\linewidth]{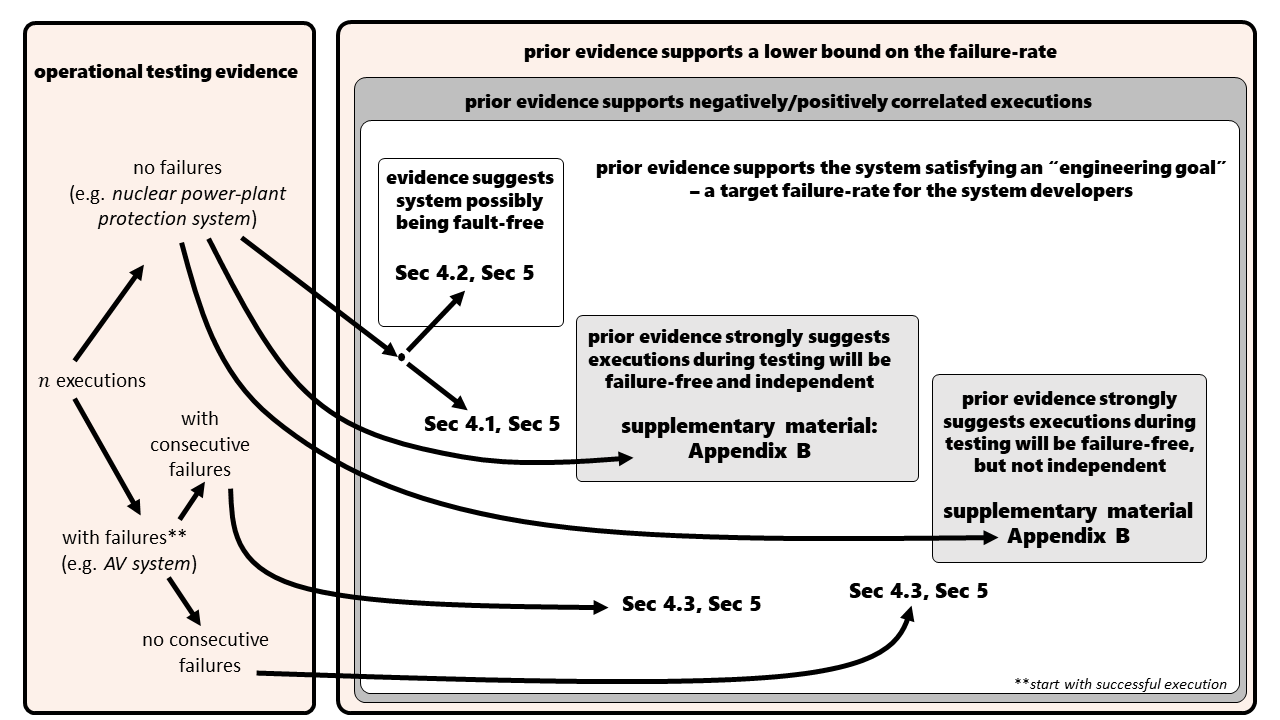}
	\caption{{\footnotesize various assessment scenarios and the sections in this paper that treat them. Each path through the diagram, starting at the ``n executions'' node on the left, indicates a system's behaviour (during operational testing) and the implications of reliability evidence considered by an assessor (before operational testing).   \normalsize}}
	\label{fig_theBigPicture_intro}
\end{figure*}
It is prudent to be conservative when assessing a software-based \emph{safety-critical system} (SCS), since software failure could significantly harm stakeholders in the system.  Rigorous statistical arguments can give support for conservative claims about whether the software is sufficiently reliable, where such claims are based on evidence of achieved levels of reliability. In particular, Bayesian methods provide a natural formalism/calculus for combining various forms of reliability evidence, resulting in probabilistic measures that (given the evidence) articulate one's uncertainty about the reliability of the SCS (see \cite{atwood_handbook_2003}); in particular, the reliability of its software. Examples of evidence that can be utilised in Bayesian approaches for reliability assessment include the observed failure behaviour of (similar) software during past operation (e.g., see \cite{thomas_e_wierman_reliability_2001}, \cite{bunea_two_stage_2005} and \cite{porn_two_stage_1996}).

When using statistical arguments in assessments, a central question is whether the system's software has statistically \emph{independent and identically distributed} (i.i.d.) ``executions''. By ``an execution'' we mean \emph{a set of actions performed by the software that can be regarded as a unit of software operation}\footnote{This is also known as a ``\emph{run}'' (e.g., see \cite{Strigini_Littlewood_EuropeanSpaceAgency_1997}).}. For example, actions performed in response to \emph{each} demand/input the software receives from its environment, or actions in response to \emph{a sequence} of inputs (received over a unit amount of time or distance). In this paper, when a software execution occurs, either all of the actions performed\footnote{Not performing any action could be the required action.} are the required actions or at least one of these actions is incorrect -- i.e., a software execution is either correct or a failure. Our focus is the assessment of the system's software, so we consider only software failures (so, no hardware failures) as defining system failure.

Executions that are i.i.d. make sense for some systems, such as an on-demand system where demands rarely occur, and the system's state and operational environment are reset inbetween demand occurrences. But there are often reasons to doubt the i.i.d. assumption. An \emph{autonomous vehicle} (AV) could experience sudden changes in driving conditions that make it very likely for the AV to make a series of consecutive mistakes; or an airplane (and its flight control systems) can be put under increasing operational stresses when they encounter aggressive weather mid-flight (with turbulent ``air pockets''); and ``failure clustering'' has been observed in various (control) systems. In many situations, at least \emph{some} doubt about independent executions is warranted.\kizito{Might be nice to have examples of real-life incidents where, say, AVs make a series of consecutive errors} 

Even when executions \emph{are} assumed i.i.d., SCS software will typically be required to exhibit \emph{no failures} over a large number of executions. An assessor might (must?) consider whether these successful executions \emph{are} positively correlated after all, and might account for this possibility in assessments. Because, at face-value, an assumption of i.i.d. executions can seem quite strong -- it significantly limits an assessor's hypothesis about which probabilistic laws could characterise a software's failure process. Consequently, one might suspect that assuming independence results in optimistic reliability claims -- it's useful to ask whether this is \emph{actually} the case. Are assessments significantly undermined by assuming independence?

The answer depends on: {\bf i)} how reliable the software actually is, {\bf ii)} the sequence of the software's successes/failures during operational testing, and {\bf iii)} the nature of any dependence between executions. Prior to testing, the assessor is uncertain about (i)--(iii), and is reliant on reliability evidence to shape their beliefs about how reliable the system is. Operational testing provides more evidence that refines these beliefs further.

This process -- of an assessor's uncertainties being initially shaped by evidence obtained prior to operational testing, and then further shaped by the system's performance during testing -- is formalised in this paper in \emph{conservative Bayesian inference} (CBI) terms, for an assessor making claims about the system's probability of failure per ``\emph{execution}'' (\emph{pfe}). An example \emph{pfe} is the \emph{probability of failure per demand} (\emph{pfd}) for an on-demand system, and another example is the \emph{probability of a fatality-event per mile} (\emph{pfm}) for an AV; we consider both examples later in the paper. 
Our primary interest is in assessing software reliability -- specifically, solving constrained optimisation problems, to obtain the least confidence an assessor can justifiably have in a system's \emph{pfe} being ``small enough''.

CBI makes explicit how (i)--(iii) above affect an assessor's uncertainty under various assessment scenarios. Fig.~\ref{fig_theBigPicture_intro} summarises these scenarios and indicates sections in this paper where the scenarios are treated. Prior to testing, an assessor may express some confidence in, say, {\bf i)} the system being sufficiently reliable (e.g. confidence in the unknown \emph{pfe} being smaller than some target value set for the system developers); {\bf ii)} the system being fault-free; {\bf iii)} future executions being negatively or positively dependent, and being failure-free. An assessor uses execution outcomes during testing -- such as \emph{no} failed executions occurring, or \emph{some} failures separated by runs of successes occurring -- to update their confidence.

\xingyu{Thank you Kizito, I like the story of the introduction. One minor---we don't have a single citation in the introduction, which seems a bit unusual... to make the story more convincing to others, maybe we would like to cite some papers? But, indeed, I see we have a quite comprehensive related work section, any citations would be repeating refs already in section 2... So i am both ok.. }\kizito{yes, we should probably include some ``soft'' citations in the introduction. I have left comments in the respective areas. Please add more references that you think might be appropriate.}

\paragraph{Summary of the paper's contributions}
\begin{enumerate}[wide,label={\arabic*)}]
	\item extending CBI techniques that allow an assessor to quantify the potential negative impact of invalid statistical modelling assumptions on reliability claims (e.g., when software executions are assumed i.i.d. when, in actuality, they are not). See section~\ref{sec_conf_bound_in_reliability};
	\item  showing how the outcomes of software executions -- whether failures occurred and whether these were clustered or isolated -- significantly affects how confident an assessor can (justifiably) be of a system being sufficiently reliable (sections \ref{sec_conf_bound_in_reliability}, \ref{sec_senstivity_analysis});
	

	
	
		
	\item  several closed-form solutions for conservative posterior confidence in an upper bound on \emph{pfe} (see Fig.~\ref{fig_theBigPicture_intro}, sections \ref{sec_conf_bound_in_reliability}, \ref{sec_senstivity_analysis} and the supplementary material);
	
		\item illustrating these findings in two scenarios: nuclear power-plant and autonomous vehicle (AV) safety (section \ref{sec_conf_bound_in_reliability}). We give advice and caution for assessors/practitioners, concerning how confident they should be before embarking on operational testing (sections \ref{sec_conf_bound_in_reliability}, \ref{sec_senstivity_analysis}).
	
	
\end{enumerate}

\xingyu{Do we want to also mention that all our implementation of models and experimental results (i.e. examples and SA) are publicly available at some place... The AI/ML guys do this a lot, maybe this is not our thing..}\kizito{this would be great, but I'm a bit hesitant -- unless we can be sure that source code is "good enough". I'm thinking about the numerical instability we have recently noticed with our $r=0$ and $\phi_1$ calculations.\xingyu{I see, agreed.}}

The rest of the paper is organised as follows. Related work is detailed in section \ref{sec_related_work}, while section \ref{sec_preliminary} reviews CBI and the Klotz model for correlated executions. Formalisations of doubting i.i.d. executions are given in section \ref{sec_conf_bound_in_reliability}, and used to derive conservative confidence bounds on system \emph{pfe} under various scenarios. The sensitivity of these bounds to changes in model parameters is studied in section \ref{sec_senstivity_analysis}. 
Section \ref{sec_discussion} discusses results, and section \ref{sec_conclusion} concludes the paper.

\section{Related Work}
\label{sec_related_work}
\xingyu{I feel it would be better to place the related work section near the end than at the beginning... Because the readers might need to first know what is the binary Markov model we use, what is CBI and what is 2d-CBI etc. to follow this section better...}\kizito{I see the point. But I am concerned that many readers like to be motivated before they read the rest of the paper -- in particular, they like to understand why they should care about our work (the "intro"), and how it extends previous work that has already been done. So, I moved the related work back to near the beginning. I have tried writing it in a way that does not require readers to know details of our work.\xingyu{I see, sure, agreed}}
The current paper directly continues our development of statistical techniques for conservative reliability assessment first reported in \cite{SalakoZhao_TSE_2023}. Consequently, the related works detailed in that paper continue to be relevant here; we highlight these works in this section. 
\subsection{Why is Modelling Correlated Executions Necessary?}

An early model for sequences of statistically independent executions, used in works on random testing, is due to \cite{ThayerEtAl_1975} (see \cite{DuranEtal_1984_EvaluationofRandomTesting} for an application of the model). However, reasons to expect correlated failed executions in various systems became well-known. For example, a system can exhibit ``\emph{failure clustering}'' due to the system receiving sequences of inputs that cause the system to fail, where such inputs cluster into subsets of the system's failure region\footnote{A software component's failure region is a geometrical, or mathematically topological, characterisation of those inputs that trigger the software to fail.} -- see \cite{AmmanKnight_FailureClustering_1988} and \cite{Bishop_failureclustering_1993}. The system's operational environment generates input sequences as trajectories (within the set of all inputs) that eventually enter into, and linger in, these failure regions. This phenomenon motivated developing assessment approaches that account for positive failure correlation between executions -- see \cite{Csenki_RecoveryBlocks_1993,TomekTrivedi_recoveryblocks_1993,Huang_Sun_2021_AdaptiveRandomTestingSurvey}. 

\cite{strigini_testing_1996} gives other reasons for correlated executions; e.g. if the software's internal state is corrupted upon an initial failed execution, making subsequent executions more likely to fail. Or, if the system's operational environment becomes increasingly more stressful (i.e. there's an increasing probability of trajectories in the input space entering the failure region).

\subsection{Statistical Models of Correlated Executions}
A number of models with Markov dependence have been proposed for correlated executions. The binary Markov chain of \cite{chen_binary_1996}; the Markov renewal process of \cite{goseva_popstojanova_failure_2000} (that builds upon earlier work in \cite{Csenki_RecoveryBlocks_1993}, \cite{TomekTrivedi_recoveryblocks_1993}); and the \cite{bondavalli_dependability_1995} model that captures benign-failures, and the cumulative impact of such failures when assessing iterative software. \cite{bondavalli_modelling_1997} improve on this model, demonstrating the model's use with fitted steady-state and transition probabilities.

None of the aforementioned models are demonstrated using inference methods that explicitly account for one's uncertainty about whether the executions are i.i.d. or not. Nor do these models provide demonstrably conservative statistical support for reliability claims about software -- where such support is justified by various forms of reliability evidence (in addition to the outcomes of \emph{possibly} correlated executions during operational testing).

\subsection{Conservative Bayesian Methods for Assessments}
\label{subsec_CBAmethods}
A number of studies have applied Bayesian methods to support software reliability assessment, e.g.  \cite{WMiller_1992_BayesianReliabilityAssesssment,Singh_2001_BayesianReliabilityAssessement,LittlewoodPopov_2002_AssessingTheReliabilityDiversesoftware,Popov_2013_BayesianReliabilityAssessement}.
The utility of these methods is in the inference process. An assessor's beliefs about the reliability of a system are initially formed by evaluating relevant evidence. Then these beliefs are updated, upon seeing how the system performs during operation.

The usual challenge with Bayesian methods is the need to characterise one's initial (i.e. ``\emph{prior}'') beliefs as a prior probability distribution -- a distribution that captures all, and only all, of one's prior beliefs. Care must be taken when eliciting a prior distribution; an unrepresentative prior could lead to overly pessimistic, or dangerously optimistic, assessments. 

For reliability assessments, there is the added challenge that prior distributions often represent beliefs about continuous random variables, such as an on-demand system's unknown \emph{pfd}. Requiring that an assessor specify beliefs about the \emph{infinitely many} ranges of possible \emph{pfd} values is often impractical.

CBI methods have been developed to address these challenges. CBI is related to \emph{robust Bayesian analysis} which studies the sensitivity of the results of Bayesian inference to changes in the inference inputs -- see \cite{berger1994_robustBayesianOverview,berger1987_SensitivityToPrior,
	Lavine_1991_SensitivityInBayesianStats,
	Berger_1994_RobustnessInBidinesionalModels}. Inputs such as: the prior distribution; the statistical model that determines the likelihood function; and the posterior measure of interest. An assessor may not be able to use available evidence to fully specify a prior distribution, but the evidence may allow a much more limited \emph{partial} specification of a prior -- e.g. the assessor expects the prior, whatever it may be, to satisfy certain quantiles or moments. By considering \emph{all} of those prior distributions that satisfy these specifications, CBI determines the most conservative inference result (from using these priors) to give support for a claim that the system is sufficiently reliable. This is one way in which conservatism in assessments is realised via Bayesian inference.

\cite{bishop_toward_2011} introduced the CBI idea. A number of studies soon followed, applying CBI in various contexts. For example, {\bf i)} in \cite{strigini_software_2013}, Povyakalo \emph{et al.} use CBI to obtain the smallest probability of the system's next $m$ executions being successful, given prior evidence that the system is very reliable and its last $n$ executions were successful; {\bf ii)} in \cite{zhao_conservative_2015}, with evidence to support some confidence in the system possibly being fault-free, and some confidence in the system being very reliable, Zhao \emph{et al.} use CBI with operational testing to conservatively gain confidence in the system possibly being fault-free; {\bf iii)} \cite{salako_loss_size_2020} bounds the reliability of a binary classifier, given evidence that the classifier's past performance was (un)likely; and {\bf iv)} in \cite{zhao_assessing_2019}, Flynn \emph{et al.} apply CBI to the problem of assessing AV safety -- highlighting circumstances under which attempts to demonstrate the required levels of safety via road testing are in vain. More CBI applications to assessing SCSs are found in \cite{zhao_conservative_2015,
	zhao_modeling_2017,
	zhao_conservative_2018,
	zhao_assessing_2019}.




These applications all involve ``univariate'' priors; i.e., distributions of a single unknown, typically the system \emph{pfe}. More recently, CBI applications have involved ``bivariate'' priors. \cite{littlewood_reliability_2020} consider the assessment of a system in a ``new'' situation -- either the system replaces an older system in a given operational environment, or the system has been deployed in a new environment after operating in a previous environment for some time. They demonstrate how CBI supports dependability claims, when evidence suggests the system's failure propensity in the new situation is ``no worse'' than the propensity in the ``old'' situation. \cite{zhao_assessing_2020} study ``improvement arguments'' of this kind in the context of assessing AV safety -- but with different dependability  measures of interest and a more general failure model for the system. While \cite{salako_conservative_2021} consider more general ``improvement arguments'' for an even wider range of assessment scenarios.

In \cite{SalakoZhao_TSE_2023}, we introduce a CBI technique for incorporating doubts about the i.i.d. assumption into conservative reliability claims. The statistical model used was the first CBI model to capture correlated executions, and is based on the \cite{klotz_statistical_1973} model -- a binary Markov chain that predates and agrees with \cite{chen_binary_1996} and \cite{goseva_popstojanova_failure_2000}. We illustrated how assessments where i.i.d. executions are assumed can be very optimistic. That paper was concerned with assessing on-demand systems where (despite the software containing faults) no software failures are observed during extensive operational testing. The current paper significantly extends this CBI approach, to apply to a wider range of assessment scenarios -- e.g., scenarios where some failures are observed during extensive testing, and where the assessor can justify only very weak beliefs about the unknown \emph{pfe}.

\subsection{Assessing Continuously Operating Software}
The current paper focuses on assessment scenarios where one observes the software's success/failure behaviour on each of a number of ``unit'' software operations -- i.e., on each execution. Example scenarios include assessing on-demand systems (see \cite{iectr63161_2022,2014_Rausand_relOfSafetyCriticalSys}). Hence, we employ ``discrete-time'' statistical models (i.e., \emph{Bernoulli processes}) and our reliability measure of interest is \emph{pfe}. 

Contrastingly, for assessments where the reliability measure of interest is the software's failure-rate in continuous time, 
``continuous-time'' statistical models are more appropriate (e.g., \emph{non-homogeneous Poisson processes} (NHPPs)). 

Many \emph{Software reliability growth models} (SRGMs) -- useful in predicting future reliability for continuously operating software (see \cite{MichaelLyu_1996,xie_software_1991,musa1987software}) -- have been developed over the years;  \cite{1994_Sinpurwalla_RelGrowthModels}, \cite{1986_Miller_ExpOrderStatistics} and \cite{1991_BergmanXie_BayesianSRGMs} give good overviews of early SRGMs.

\section{Preliminaries: a CBI Model of Correlated Executions}
\label{sec_preliminary}
\subsection{A Review of CBI}
\label{sec_CBIreview}
Consider the following scenario from \cite{zhao_assessing_2019}. An on-demand system is subjected to operational testing, to determine if its \textit{probability of failing on a randomly occurring demand} (i.e. \emph{pfe}) is acceptably low. Let $X$ be this unknown \emph{pfe} -- i.e. on a random demand, the system fails (with probability $X$) or succeeds (with probability $1-X$). Demands occur randomly according to an \emph{operational profile}, see \cite{musa_operational_1993}. 

Before operational testing, an assessor might have sufficient evidence to fully specify a \textit{prior distribution} representing their beliefs about which values of $X$ are likely to be the true value, and which values aren't. During operational testing, the system correctly responds to all $n$ demands that occur -- these successes are assumed to occur in an ``\emph{independent and identically distributed}'' (i.i.d.) manner
. If the value of $X$ is $x$ then the probability of observing these successes is $L(x;n)=(1-x)^n$. Let $b$ be a required upper bound on $X$. The assessor's confidence (after seeing the successes) in $X$ being no larger than $b$ is:
\begin{align}	\label{eqn_post_cf_bound_with_complete_prior}
	P(X \leqslant b \mid n\mbox{\it\ successes})&=\frac{P(X\leqslant b,\,n\mbox{\it\ successes})}{P(n\mbox{\it\ successes})} 
=\frac{\MyExp[L(X;n){\bf 1}_{X\leqslant b}]}{\MyExp[L(X;n)]}
\end{align}
where ${\bf 1}_{\tt S}$ is an indicator function---it equals 1 when predicate ${\tt S}$ is true, and 0 otherwise. 

Often, there isn't enough evidence to fully justify a specific prior distribution for \eqref{eqn_post_cf_bound_with_complete_prior}, but there may be enough to justify weaker constraints on the prior (such as a few quantiles). We refer to such constraints as \emph{prior knowledge} (PK). A basic form of PK is:  
\begin{constraint}[certainty in a lower bound]
	\label{cons_pl_lowerbound}
	$100\%$ confidence in the system's pfe not being lower than $p_l$; i.e., $P(X\geqslant p_l) = 1$.
\end{constraint}
\noindent $X$ is a probability, so $p_l$ should be non-negative. $p_l$ can be 0 (see \cite{littlewood_reasoning_2012} for possibly ``fault-free'' software) or a very small number (e.g. the best reliability feasible for the system given current levels of technology).
\begin{constraint}[confidence in satisfying an engineering goal]
	\label{cons_engineering_goal}
	$\theta \times 100\%$ confidence in the system's pfe being better than, or equal to, an upper bound $\epsilon$; i.e., $P(X \leqslant \epsilon)=\theta$.
\end{constraint}
\noindent $\epsilon$ is an ``engineering goal'': a target \emph{pfe} value that system developers try to achieve. $\epsilon$ is typically chosen to be much smaller than the required bound $b$, so $\epsilon \leqslant b$. While $\theta$ is how confident the assessor is, before operational testing, that the engineering goal \emph{has} been achieved. $\theta$ has to be large enough to support conducting operational testing; reducing the chance that unreliable systems use up the operational testing budget.
\begin{figure}[htbp!]
	\centering
	\includegraphics[width=0.25\linewidth]{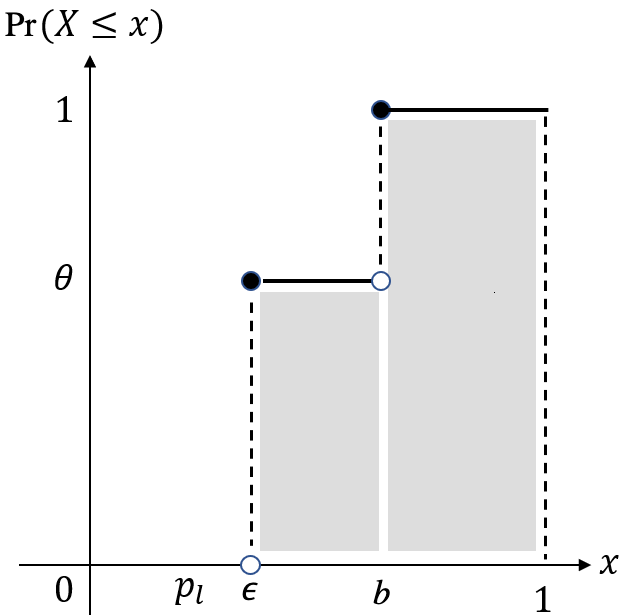}
	\caption{A conservative prior cumulative distribution function. Illustration from \cite{salako_conservative_2021}.}
	\label{fig_unicbi_wc_prior}
\end{figure}


The following theorem shows one can conservatively gain confidence in a bound $b$ on $X$ (see \cite{zhao_assessing_2019}).
\begin{theorem}[univariate CBI]
	Let ${\mathcal D}_u$ be the set of all probability distributions over the interval $[0,1]$. Using \eqref{eqn_post_cf_bound_with_complete_prior}, the optimisation problem\begin{align*}
		&\inf\limits_{{\mathcal D}_u}P(X\leqslant b \mid n\mbox{ executions without failure}  ) \\
		s.t.\,\,\,\,& PK\ref{cons_pl_lowerbound},\,\,\,PK\ref{cons_engineering_goal}
	\end{align*} 
	is solved by Fig.~\ref{fig_unicbi_wc_prior}'s prior distribution: using this prior, $P(X< b \mid n\mbox{ executions without failure})$ equals the infimum\footnote{To obtain this solution, one constructs sequences of feasible priors (where each subsequent prior in a sequence gives an increasingly worse value for the objective function) that converge ``\emph{almost everywhere}'' to a prior that gives the infimum. This prior differs from the feasible priors converging to it at $X=b$. As a consequence, it is the value of  $P(X<b\mid\ldots)$ from this prior, rather than $P(X\leqslant b\mid\ldots)$, that is the infimum.}.
	\label{theorem_univariateCBI}
\end{theorem}

\begin{insight}[The basic CBI idea]
	\label{insight_gist_cbi}
	One considers the set of all feasible priors that satisfy an assessor's PKs. For a given posterior measure of interest (e.g. posterior confidence in a bound on $X$), CBI determines a prior that gives the most pessimistic value for this measure -- no feasible prior gives a more pessimistic value, and any prior that does must violate at least one PK. The CBI prior is referred to as a ``worst-case'' prior.  
\end{insight}

\subsection{A Model of Correlated Software Executions}
\label{sec_KlotzModel}
The following stochastic failure process for software exhibiting correlated executions -- used in \cite{SalakoZhao_TSE_2023} -- is based on the \cite{klotz_statistical_1973} model. 
A sequence of Bernoulli random variables $T_1,\dots, T_n$, each take on the values 1 or 0; indicating failure or success, respectively, on each of $n$ software executions. Similar to section \ref{sec_CBIreview}, let $x$ be the unconditional \emph{probability the next execution is a failure} (\emph{pfe}). Let $\lambda$ be the probability that \emph{a failure is followed by another failure}. That is,
\begin{align}
	&P(T_i=1)=1-P(T_i=0)=x,  &i=1,\dots,n \nonumber
	\\
	&P(T_i=1\mid T_{i-1}=1)=\lambda,  &i=2,\dots,n \nonumber
\end{align}
If the process is \emph{1st-order stationary}, we obtain the Markov model in Fig.~\ref{fig_klotz} (see \cite{klotz_statistical_1973}, \cite{SalakoZhao_TSE_2023}).

\begin{figure}[htbp!]
	\centering
	\includegraphics[width=0.3\textwidth]{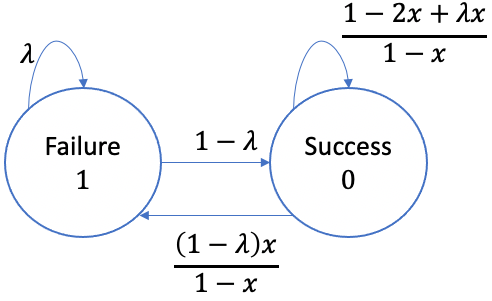}
	\caption{The Klotz model for Bernoulli trials with dependence. Illustration from  \cite{SalakoZhao_TSE_2023}.}
	\label{fig_klotz}
\end{figure}

This stochastic process is well-defined if the transition probabilities lie between zero and one. That is, if
\begin{align}
	\label{eq_ranges_x_lambda}
	0\leqslant x <1, \quad \max \left\{0,(2x-1)/x\right\} \leqslant\lambda\leqslant1
\end{align}
Inequalities \eqref{eq_ranges_x_lambda} define the subset $\mathcal R$ of the unit square (see Fig.~\ref{fig:fig_RegionR}).

\begin{remark}[correlation in the Klotz model]
	\label{remark_3_caes_of_dependance}
	The correlation coefficient for two successive executions is $\frac{\lambda - x}{1-x}{\bf 1}_{0\leqslant x<1} + {\bf 1}_{x=1}$. This defines 3 correlation types: {\bf i)} when $x=\lambda$, the model exhibits independent execution outcomes; {\bf ii)} when $\lambda>x$, execution outcomes tend to cluster more often (e.g. bursts of failures) -- positive correlation; and {\bf iii)} when $\lambda<x$, execution outcomes tend to alternate more often, between failure and success -- negative correlation.
\end{remark}

Over $n$ executions, suppose the software makes: $\alpha$ transitions from a successful execution to a failed execution, $\beta$ transitions from ``success'' to ``success'', $\gamma$ from ``failure'' to ``failure'', and $\delta$ from ``failure'' to ``success''. So, $\alpha+\beta+\gamma+\delta + 1=n$. Under the Klotz model, for given $(x,\lambda)$, the probability of observing these transitions is the likelihood $L(x,\lambda ; \alpha, \beta, \gamma, \delta)$: 
\begin{equation}
	L(x,\lambda;\alpha,\beta,\gamma,\delta) = \left\{ \begin{array}{lr}
		x\left(\frac{(1-\lambda)x}{1-x}\right)^{\alpha}\left(1-\frac{(1-\lambda)x}{1-x}\right)^{\beta}\lambda^{\gamma}(1-\lambda)^{\delta}; & \\
		\mbox{when the 1st execution is a failure} & \\ &
		\\
		(1-x)\left(\frac{(1-\lambda)x}{1-x}\right)^{\alpha}\left(1-\frac{(1-\lambda)x}{1-x}\right)^{\beta}\lambda^{\gamma}(1-\lambda)^{\delta}; &\\
		\mbox{when the 1st execution is a success}&\end{array}\right.
	\label{eqn_xKlotzlklhdFn_maintxt}
\end{equation}

During operational testing, an assessor observes $\alpha$, $\beta$, $\gamma$ and $\delta$. But both $x$ and $\lambda$ are unknown to the assessor\footnote{In theory, knowing these values would allow the assessor to completely characterise the system's failure process; e.g., one may then compute the probability of future failure-free operation over a sequence of software executions.}; in particular, the assessor is uncertain about the \emph{pfe} $X$. So, upon observing $n$ executions of the system, an assessor's confidence in $X$ being no larger than a bound $b$ is:
\begin{align}
	P(X\leqslant b\,\mid\,\mbox{\it\ outcomes of }n\mbox{\it\ executions}) 
	=\frac{P(X\leqslant b\,,\,\mbox{\it\ outcomes of }n\mbox{\it\ executions})}{P(\mbox{\it\ outcomes of }n\mbox{\it\ executions})} 
	=\frac{\MyExp[L(X,\Lambda;\alpha,\beta,\gamma,\delta){\bf 1}_{X\leqslant b}]}{\MyExp[L(X,\Lambda;\alpha,\beta,\gamma,\delta)]}
	\label{eqn_Klotzmodelpostconf}    
\end{align}
which generalises \eqref{eqn_post_cf_bound_with_complete_prior}. 
It will be useful to refer to $s$ -- the number of failed executions --  and $r$ -- the number of failed executions preceded by a failed execution (``consecutive failures'', for short). The likelihood \eqref{eqn_xKlotzlklhdFn_maintxt} can be re-expressed in terms of $n$, $s$ and $r$, which are related to $\alpha, \beta, \gamma, \delta$ (see supplementary material, \ref{sec_LikeFnAltForms}). In particular, $r=\gamma$.

\section{Conservative Upper Confidence Bounds on Probability of Failure per Execution}
\label{sec_conf_bound_in_reliability}
\subsection{Practical Context for Applying CBI techniques}\label{subsec_usingCBIinPractice}
Upon observing $n$ executions, what is the least confidence an assessor should have about $X$ being ``no bigger than'' the bound $b$? 
We determine this for different scenarios by deriving the greatest lower bound for \eqref{eqn_Klotzmodelpostconf} using PKs \ref{cons_pl_lowerbound}, \ref{cons_engineering_goal}, \ref{cons_negative_dependence}, \ref{cons_positive_dependence}. Section \ref{sec_CBIreview} introduced PKs  \ref{cons_pl_lowerbound}, \ref{cons_engineering_goal}, while section \ref{sec_basline_post_reliability} introduces PKs \ref{cons_negative_dependence}, \ref{cons_positive_dependence}.

Practical implications of these results -- with domain-specific PK parameterisations for nuclear power-plant safety protection systems and AV safety subsystems -- are given in sections \ref{sec_nuclear_post_reliability} and \ref{sec_av_post_reliability} respectively. The parameterisations are illustrative of plausible PK values when assessing functionally redundant software components for safety systems employing fault-tolerance -- e.g., systems highlighted in \cite{Wood_Nureg_2010,koopman2016challenges, horwick2010strategy, LuiSha_2001}.

These CBI techniques/results -- for conservatively gaining confidence in the software being sufficiently reliable -- are primarily intended for use in assessments based on operational/statistical testing, where the software is treated as a black-box. In operational/statistical testing, the test cases for the software are randomly generated software inputs. To do this correctly, the probability of generating a given test case must be consistent with an ``\emph{operational profile}'' -- i.e., this probability must be the same as the probability of the same ``case'' occurring when the software is deployed in real operation. Test cases can take different forms, depending on the type of software under study. For example, consider a batch program that receives numerical values for all of its input variables at the start of an execution, it executes, and then it produces all of its outputs. A test case for such a program could be a fixed-length vector of numerical values --  each number in the vector is the value for an input variable. Alternatively, consider a control program that receives multiple numerical values for each input variable over time; here, a test case could be a collection of numerical sequences -- each sequence represents the changing values over time for an input variable. More examples of test cases can be found in \cite{MichaelLyu_1996} and \cite{Strigini_Littlewood_EuropeanSpaceAgency_1997}.

Statistical testing supports direct estimates of reliability, for the purposes of reliability assessment and product acceptance. It also supports decisions on whether software is ready for use in a specific system. Thus, CBI techniques can be applied in any software development testing phase where statistical testing may be applied, and where decisions may be taken on whether software is ready to be deployed; e.g., integration or acceptance testing.

Of course, good practice for carrying out statistical testing must be followed when employing our theorems and results; e.g., \cite{Strigini_Littlewood_EuropeanSpaceAgency_1997} give detailed guidance in this regard. See also \cite{SalakoZhao_TSE_2023} for more discussion. Best practice approaches for expert belief elicitation should be followed to elicit the PKs; e.g., \cite{1994_NuRegTechreport,OHagan_2006}.

Table \ref{tab_all_wcp_examples} lists the worst-case priors used for the curves in the example plots. While these priors are consistent with those in \cite{SalakoZhao_TSE_2023} (for assessors that can justify continuous marginal prior distributions of $X$), the current priors are applicable in many more scenarios (e.g., when assessors can justify only very limited properties of the marginal prior distribution of $X$, in the form of PKs \ref{cons_pl_lowerbound} and \ref{cons_engineering_goal}).

\begin{figure}[h!]
	\captionsetup[figure]{format=hang}
	\centering	
	\begin{subfigure}[]{0.3\linewidth}
		\centering
		\includegraphics[width=1.0\linewidth]{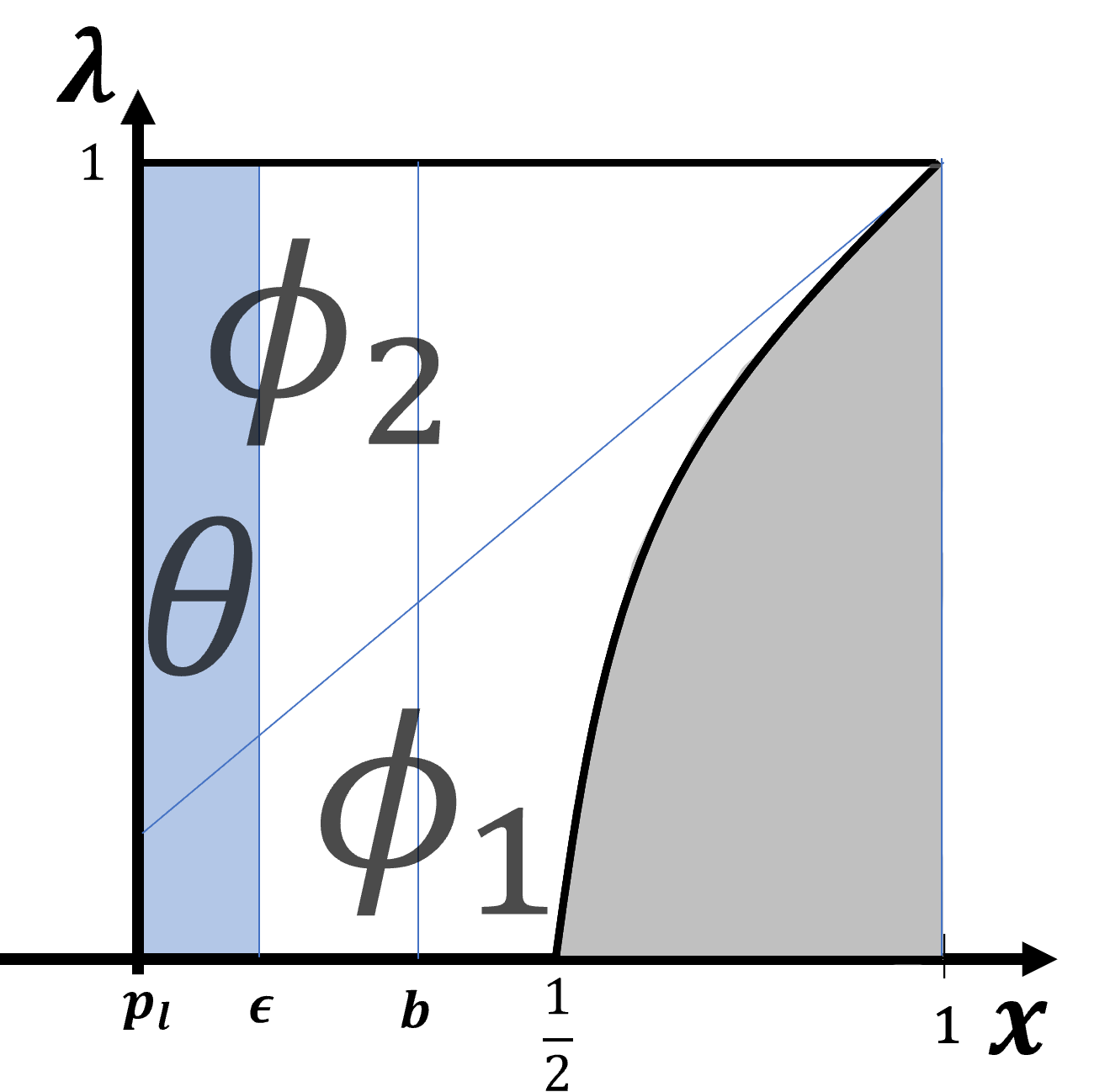}
		\caption{{\footnotesize  }}
		\label{fig:fig_RegionR}
	\end{subfigure}
	\\
	\begin{subfigure}[]{0.3\linewidth}
		\centering
			\includegraphics[width=1.0\linewidth]{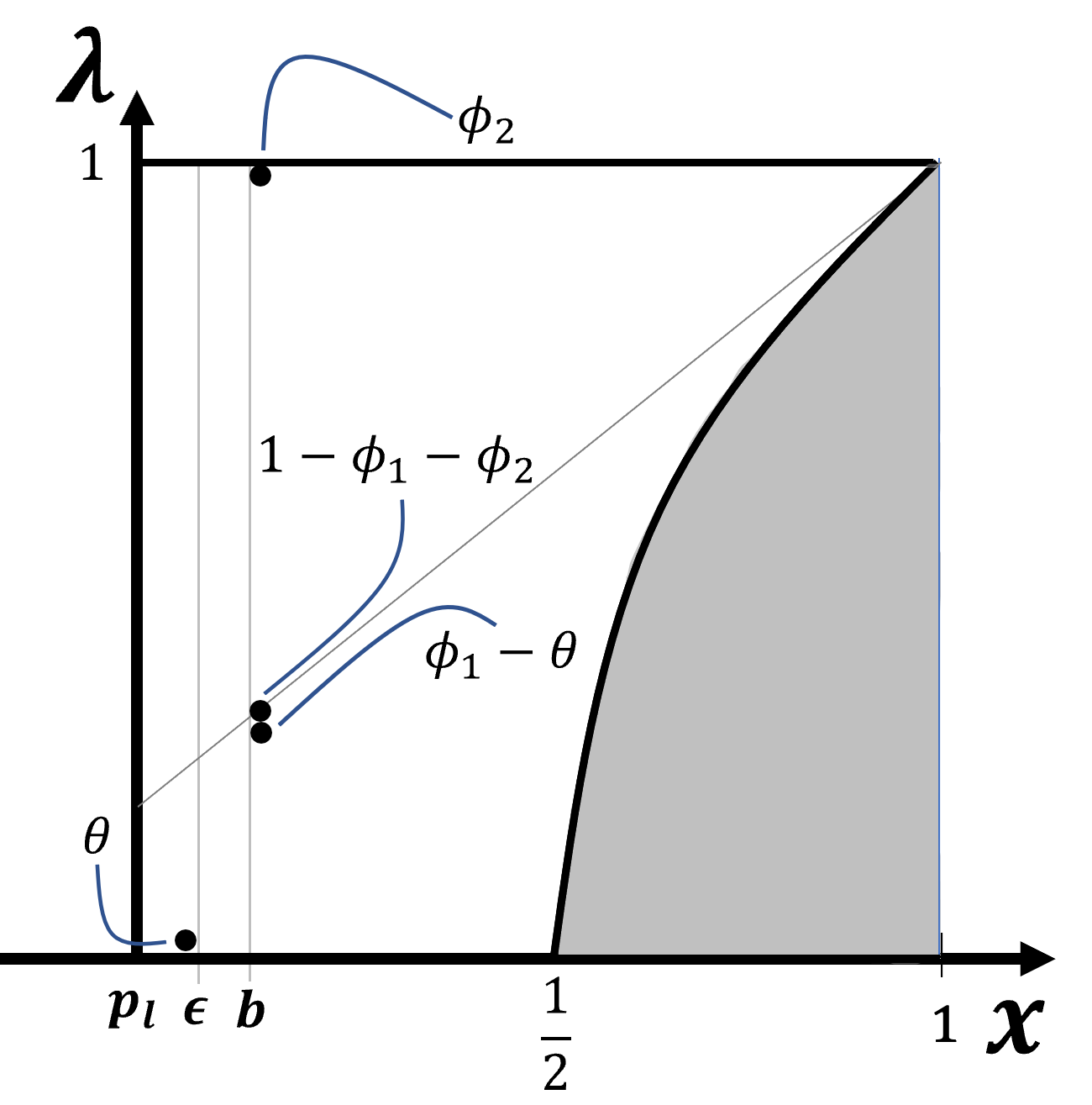}
		\caption{{\footnotesize $\phi_1\geqslant\theta\,$ }}
		\label{fig:fig_CBInoFails_maintxt_Phi1gtTheta}
	\end{subfigure}
	\begin{subfigure}[]{0.3\linewidth}
		\centering
			\includegraphics[width=1.0\linewidth]{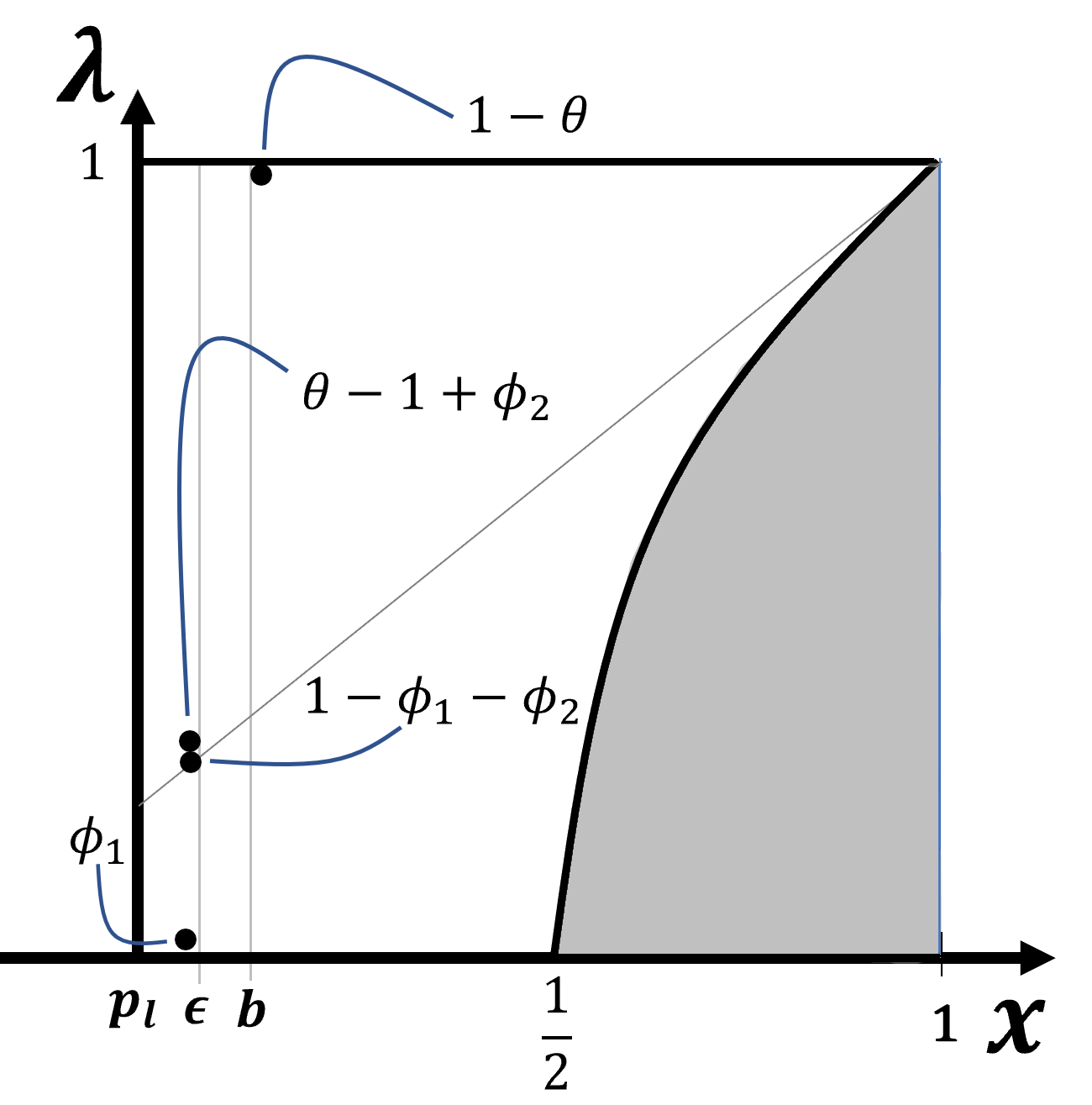}
		\caption{{\footnotesize $\ \phi_2\geqslant 1-\theta$  }}
		\label{fig:fig_CBInoFails_maintxt_Phi1ltThetaPhi2gt1minusTheta}
	\end{subfigure}
	\begin{subfigure}[]{0.3\linewidth}
		\centering
			\includegraphics[width=1.0\linewidth]{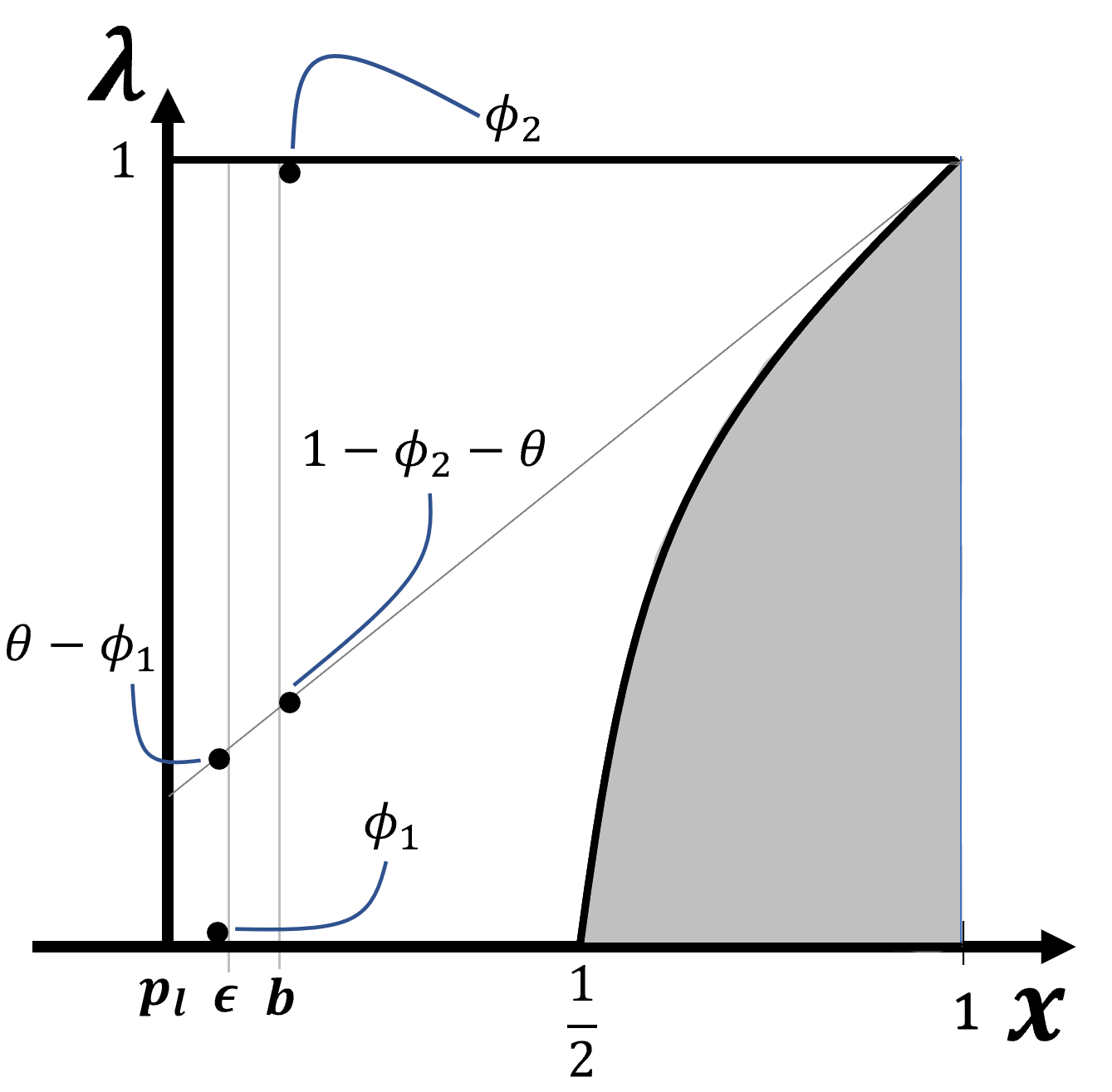}
		\caption{{\footnotesize $\phi_1\leqslant\theta\ $ and $\ \phi_2\leqslant 1-\theta$ }}
		\label{fig:fig_CBInoFails_maintxt_Phi1ltThetaPhi2lt1minusTheta}
	\end{subfigure}
	\caption[Region $\mathcal R$ and CBI priors: no failures]{{\footnotesize  The support $\mathcal R$ of the Klotz likelihood function, and the subsets of $\mathcal R$ related to PKs
			\ref{cons_pl_lowerbound},
			\ref{cons_engineering_goal}, \ref{cons_negative_dependence} and \ref{cons_positive_dependence}, are shown in subfig.~\ref{fig:fig_RegionR}. Upon observing executions with {\bf no failures}, subfig.s \ref{fig:fig_CBInoFails_maintxt_Phi1gtTheta}, \ref{fig:fig_CBInoFails_maintxt_Phi1ltThetaPhi2gt1minusTheta} and \ref{fig:fig_CBInoFails_maintxt_Phi1ltThetaPhi2lt1minusTheta} are 3 prior distributions that solve the optimisation problem in Theorem \ref{theorem_thm1_baseline}. These priors are relevant for the ranges of parameter values indicated in each subfigure. \normalsize}}
	\label{fig_CBInoFails_maintxt}
\end{figure}

\subsection{Assessment with Doubts about i.i.d. Executions}
\label{sec_basline_post_reliability}
Our first assessment scenario is a baseline. Before operational testing, an assessor uses reliability evidence to justify the engineering goal PK \ref{cons_engineering_goal}, and the following two PKs about the independence assumption (\emph{cf.} Remark \ref{remark_3_caes_of_dependance}):
\begin{constraint}[confidence in negative dependence]
	\label{cons_negative_dependence}
	$\phi_1 \times 100\%$ confidence in successive software executions having negative dependence; i.e., $P(\Lambda < X)=\phi_1$.
\end{constraint}
\begin{constraint}[confidence in positive dependence]
	\label{cons_positive_dependence}
	$\phi_2 \times 100\%$ confidence in successive software executions having positive dependence; i.e., $P(\Lambda > X)=\phi_2$.
\end{constraint}
\noindent Consequently, the assessor's \textit{prior confidence in independence} is $(1-\phi_1-\phi_2)$, i.e. $P(\Lambda = X)=1-\phi_1-\phi_2$. Note that in all of the remaining theorems in this paper, $\phi_1=\phi_2=0$ is the special case of i.i.d. execution outcomes -- in this limit, all the theorems agree with previously published univariate CBI results.

An assessor observes all $n$ executions during testing are successful. Using the Klotz model, the CBI problem of determining the least amount of confidence the assessor can justifiably have, about the system \emph{pfe} satisfying bound $b$, is the following constrained optimisation problem. 
Consider the support $\mathcal R$ of the Klotz likelihood, defined by \eqref{eq_ranges_x_lambda} and depicted in Fig.~\ref{fig:fig_RegionR}. Let $\mathcal D$ be the set of all prior probability distributions over $\mathcal R$, and $0\leqslant p_l\leqslant \epsilon <b<\frac{1}{2}$. Then, the following theorem holds (generalisation proved in supplementary material, \ref{sec_proofThrm1}).

\begin{theorem}
	Using \eqref{eqn_xKlotzlklhdFn_maintxt} and \eqref{eqn_Klotzmodelpostconf}, the optimisation problem
	\begin{align*}
		&\qquad\inf\limits_{\mathcal D} P(\,X\leqslant b \mid n\mbox{ executions without failure} ) \\
		&\mbox{s.t.} \,\,\,\,\,\,PK\ref{cons_pl_lowerbound},\,\,\,PK\ref{cons_engineering_goal},\,\,\,PK\ref{cons_negative_dependence},\,\,\,PK\ref{cons_positive_dependence} 
	\end{align*}
	is solved by the prior distributions in Fig.s~\ref{fig:fig_CBInoFails_maintxt_Phi1gtTheta}, \ref{fig:fig_CBInoFails_maintxt_Phi1ltThetaPhi2gt1minusTheta} and \ref{fig:fig_CBInoFails_maintxt_Phi1ltThetaPhi2lt1minusTheta}, since $P(\, X<b \mid n\mbox{ executions without failure} )$ from these priors equals the infimum.
	\label{theorem_thm1_baseline}
\end{theorem}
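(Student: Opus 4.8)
The plan is to treat the problem as a linear-fractional program over the space of probability measures on $\mathcal R$, exploiting the fact that each PK imposes only a \emph{linear} constraint on the prior. First I would specialise the Klotz likelihood \eqref{eqn_xKlotzlklhdFn_maintxt} to the observed data ``$n$ executions without failure'', i.e. $\alpha=\gamma=\delta=0$, $\beta=n-1$ with a successful first execution, giving the closed form $L(x,\lambda)=(1-x)\bigl(1-\tfrac{(1-\lambda)x}{1-x}\bigr)^{n-1}$, a function of $(x,\lambda)$ only. Writing the posterior confidence \eqref{eqn_Klotzmodelpostconf} as $c(F)=\MyExp[L\,{\bf 1}_{X\leqslant b}]/\MyExp[L]=1-\MyExp[L\,{\bf 1}_{X> b}]/\MyExp[L]$, minimising $c$ is equivalent to \emph{maximising} the ``leakage ratio'' $\MyExp[L\,{\bf 1}_{X> b}]/\MyExp[L]$, which is a cleaner quantity to reason about geometrically.

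Next I would reduce the infinite-dimensional problem to a finite one. The objective is a ratio of two expectations (linear functionals of $F$), and the feasible set is cut out by the finitely many linear equality constraints $P(X\geqslant p_l)=1$ (PK \ref{cons_pl_lowerbound}, which simply truncates the support to $x\geqslant p_l$), $P(X\leqslant\epsilon)=\theta$ (PK \ref{cons_engineering_goal}), $P(\Lambda<X)=\phi_1$ (PK \ref{cons_negative_dependence}), $P(\Lambda>X)=\phi_2$ (PK \ref{cons_positive_dependence}) and total mass $1$. A standard moment-problem extreme-point argument (the Richter--Rogosinski principle, as used in the univariate CBI proofs) then guarantees that the infimum is attained by a \emph{discrete} prior supported on at most as many atoms as there are constraints. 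This justifies restricting the search to finitely-supported priors and converts the task into a finite-dimensional optimisation over atom locations and masses.

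The core step is locating those atoms. I would partition $\mathcal R$ using the thresholds $x=\epsilon$, $x=b$ and the independence diagonal $\lambda=x$ (the boundaries defining the PKs), and use two monotonicity facts about $L$: for fixed $x$, $L$ is increasing in $\lambda$ (more positive dependence makes an all-success run more likely), so mass is driven to the extreme admissible values of $\lambda$; and the behaviour of $L$ in $x$ dictates, along $x>b$ versus $x\leqslant b$, where mass should sit. To make $c$ small one wants to move as much mass as the constraints permit above $b$ and onto high-$L$ corners (inflating the denominator-only term), while forcing the mandatory $x\leqslant b$ mass (at least $\theta$, by PK \ref{cons_engineering_goal}) onto low-$L$ corners. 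I would then verify optimality of the candidate priors in Figs.~\ref{fig:fig_CBInoFails_maintxt_Phi1gtTheta}, \ref{fig:fig_CBInoFails_maintxt_Phi1ltThetaPhi2gt1minusTheta} and \ref{fig:fig_CBInoFails_maintxt_Phi1ltThetaPhi2lt1minusTheta} by a direct-inequality (complementary-slackness) argument: showing that for \emph{every} feasible $F$ the leakage ratio is bounded above by the value these priors achieve, using the constraint equalities to pin down the available masses.

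The main obstacle is the case analysis, which is forced by the coupling between the \emph{dependence} constraints ($\lambda\lessgtr x$) and the \emph{reliability} constraints ($x\lessgtr\epsilon,b$). Whether the negative-dependence mass $\phi_1$ can be absorbed inside the engineering-goal region $x\leqslant\epsilon$, and whether the positive-dependence mass $\phi_2$ exceeds the ``above-goal'' budget $1-\theta$, determine which corners are simultaneously reachable by a feasible prior -- producing exactly the three regimes $\phi_1>\theta$; $\phi_1<\theta,\ \phi_2>1-\theta$; and $\phi_1<\theta,\ \phi_2<1-\theta$ named in the statement. Getting the feasible corner configuration correct in each regime, and checking that no constraint-respecting perturbation lowers $c$ further, is where the real work lies; by comparison, the likelihood specialisation, the reduction to discrete priors, and the monotonicity observations are routine.
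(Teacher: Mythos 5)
Your proposal is correct and follows essentially the same route as the paper's proof of the general result in \ref{sec_proofThrm1}: reduce to finitely-supported priors, use the behaviour of the Klotz likelihood in $\lambda$ and $x$ to push each constrained mass to an extreme location in its $\mathcal R$-subset, and then resolve the remaining allocation as a linear-fractional programme, with the case split on $\phi_1\lessgtr\theta$ and $\phi_2\lessgtr 1-\theta$. Two small differences are worth noting. First, you justify the finite-support reduction by an extreme-point (Richter--Rogosinski) argument, whereas the paper constructs an explicit sequence of mass reassignments whose objective values decrease monotonically to the infimum; both are legitimate, and your observation that in the no-failure case ($\alpha=\gamma=\delta=0$) the likelihood is globally \emph{monotone} in $\lambda$ (and decreasing in $x$) lets you bypass the paper's heavier unimodality analysis (its stages 3--4), which is only needed when failures occur. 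Second, you write that ``the infimum is attained by a discrete prior,'' which glosses over the one subtlety the theorem statement is careful about: the optimal atoms at $(b,1)$ and $(b,b)$ must be approached through sequences of feasible points from the right of the line $x=b$, so no feasible prior attains the infimum of $P(X\leqslant b\mid\cdot)$; rather, the limiting (degenerate) prior yields $P(X<b\mid\cdot)$ equal to that infimum. Your complementary-slackness verification would still go through, but you should phrase the conclusion in terms of $P(X<b\mid\cdot)$ for the limiting distribution, as the statement does.
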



Each of Fig.s~\ref{fig:fig_CBInoFails_maintxt_Phi1gtTheta}, \ref{fig:fig_CBInoFails_maintxt_Phi1ltThetaPhi2gt1minusTheta} and \ref{fig:fig_CBInoFails_maintxt_Phi1ltThetaPhi2lt1minusTheta} shows the domain $\mathcal R$ of a joint prior distribution for $(X,\Lambda)$ random variables, and the $4$ points (i.e., black dots) in $\mathcal R$ assigned nonzero probabilities by this distribution. So, these joint priors are depicted as if one were looking down on the distribution and its domain ``from above''.

\begin{example}[baseline]
	\label{exp_baseline_scenario}
	Consider an on-demand SCS which acts only upon receipt of a demand from its environment. An assessor is $75\%$ confident the software's pfe -- i.e., its probability of failure per demand (pfd) -- is no worse than $\epsilon=10^{-5}$ (i.e. the engineering goal of PK\ref{cons_engineering_goal} with $\theta=0.75$). After $n$ failure-free tests of the SCS, the assessor is $c\times 100\%$ confident that the system meets Safety Integrity Level (SIL) 4, i.e. $b=10^{-4}$ (see \cite{iec_61508_2010}). Fig.~\ref{fig_example1_baseline} shows three plots of $c$ as a function of $n$ using three Bayesian models\footnote{Note, with failure-free executions, the posterior confidence from CBI is not a function of $p_l$ (since the distributions in Fig.~\ref{fig_CBIsoln_noFails} have no probabilities along the $p_l$ line). Thus, \emph{w.l.o.g.}, we set $p_l=0$ in PK\ref{cons_pl_lowerbound}.\xingyu{actually, we can only use $p_l=0$ to compare with the beta prior...}\kizito{I wonder if we could, perhaps, use a ``re-scaled'' beta prior in the $p_l>0$ case?}}: univariate CBI (cf. Theorem \ref{theorem_univariateCBI}),
	Bayesian Inference (BI) using a Beta prior\footnote{Specifically, to illustrate with a $Beta(\alpha,\beta)$, we first set the $\alpha$ parameter to $0.03$, then fitted a ``$\beta$'' value to satisfy the quantile in PK\ref{cons_engineering_goal}.} satisfying PK\ref{cons_engineering_goal}, and CBI with confidence $\phi_1$ and $\phi_2$ in negative and positive dependence respectively (cf. Theorem \ref{theorem_thm1_baseline}).
\end{example}
\begin{figure}[hbtp!]
	\centering
	\includegraphics[width=0.45\linewidth]{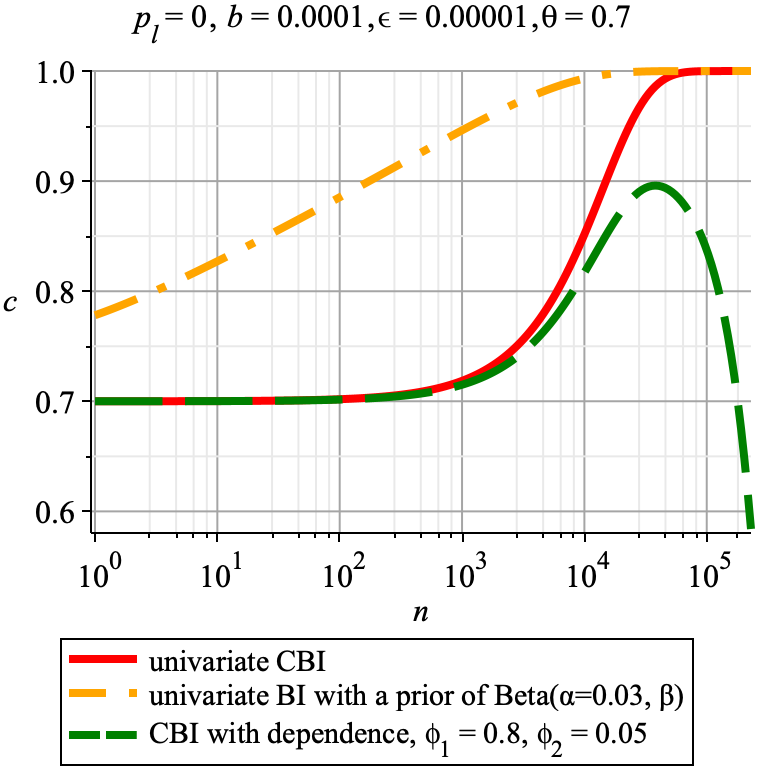}
	\caption{(Example \ref{exp_baseline_scenario}) $c\times 100\%$ posterior confidence in $[X\leqslant 10^{-4}]$ upon seeing $n$ failure-free tests, from three Bayesian models with different PKs.\kizito{Perhaps have the curves here match the curves in Fig.~\ref{fig_example2_nuclear}.\xingyu{Done, and also changed the table of priors..}}}
	\label{fig_example1_baseline}
\end{figure}
\kizito{TODO: what is the value of the beta parameter for the beta prior?}

Univariate CBI and BI (using a Beta prior) both assume the executions are statistically independent (so, have likelihood $L(x;n)=(1-x)^n$). In Fig.~\ref{fig_example1_baseline}, as expected, univariate CBI gives less confidence (so, is more conservative) than BI using a Beta prior\footnote{This observation holds for any feasible prior, not only Beta distributions.}; while both of these are more optimistic than CBI with doubts about independence. Failure-free evidence is always ``good news'' under univariate CBI and BI (i.e., the solid and dashed-dot curves monotonically increase to ``certainty'' in the bound). Contrastingly, with doubts in independence, such evidence can eventually undermine an assessor's posterior confidence in the $10^{-4}$ bound for \emph{all} sufficiently large $n$ (i.e., the uni-modal pattern of the dashed curve). This is because there are pessimistic reasons for why the failure-free tests could be occurring -- reasons that suggest the successes are occurring despite the SCS not being very reliable. For example, the test-cases might be unrepresentatively ``easy'' for the SCS to correctly respond to, or there may be problems with the test oracle which mean some failures go undetected, see \cite{littlewood_use_2007,Barr_Oracleproblem_2015,SalakoZhao_TSE_2023}.

So, as failure-free evidence accumulates, \emph{any} prior confidence the assessor has in the executions being positively correlated -- i.e., any $\phi_2>0$ -- will eventually undermine confidence in \emph{any} \emph{pfd} upper bound. On the other hand, prior confidence in negatively correlated executions (i.e. $\phi_1>0$) has a negligible impact on posterior confidence in the bound (see section \ref{sec_senstivity_analysis}'s sensitivity analysis). Intuitively, the longer testing goes on for without failure, the greater the evidence \emph{against} the tests being negatively dependent. An example of how negative dependence can occur is if an assessor intentionally tries to ``stress'' the software during testing, by randomly including a disproportionate number of ``difficult'' demands -- demands that are thought will likely cause software failure. So one might expect testing to reveal some negative dependence -- a failure quickly followed by a success, then followed by another failure relatively soon afterwards, and so on. However, ``no failures'' may suggest the ``difficult'' demands are not actually difficult for the software.  

\begin{table*}[h!]
	\centering
	\caption{A list of the prior distributions for the curves in the example plots of Section \ref{sec_conf_bound_in_reliability}. See supplementary material for any listed figures not included in the paper.}
	\label{tab_all_wcp_examples}
	\begin{tabular}{|ccccccc|}
		\hline
		\hline
		\multicolumn{2}{|c|}{} &
		\multicolumn{1}{c|}{red solid} &
		\multicolumn{1}{c|}{green dashed} &
		\multicolumn{1}{c|}{blue dotted} &
		\multicolumn{1}{c|}{orange (dash)dotted} &
		\multicolumn{1}{c|}{pink spacedashed} \\ \hline
		\multicolumn{2}{|c|}{Example \ref{exp_baseline_scenario} (Fig.~\ref{fig_example1_baseline})} &
		\multicolumn{1}{c|}{Fig.~\ref{fig_unicbi_wc_prior}} &
		\multicolumn{1}{c|}{Fig.~\ref{fig:fig_CBInoFails_maintxt_Phi1gtTheta}} &
		\multicolumn{1}{c|}{n/a} &
		\multicolumn{1}{c|}{a Beta distribution$\ast$} &
		\multicolumn{1}{c|}{n/a} 
		\\ \hline
		\multicolumn{2}{|c|}{Example \ref{exp_nuclear_scenario} (Fig.~\ref{fig_example2_nuclear})} &
		\multicolumn{1}{c|}{Fig.~\ref{fig_unicbi_wc_prior}} &
		\multicolumn{1}{c|}{Fig.~\ref{fig:fig_CBInoFails_maintxt_Phi1gtTheta}} &
		\multicolumn{1}{c|}{Fig.~\ref{fig:fig_CBInoFails_maintxt_Phi1gtTheta}} &
		\multicolumn{1}{c|}{n/a} &
		\multicolumn{1}{c|}{n/a} 
		\\ \hline
		\multicolumn{2}{|c|}{Example \ref{exp_av} (Fig.~\ref{fig_av_with_failures_example3})} &
		\multicolumn{1}{c|}{Fig.~\ref{fig_unicbi_wc_prior}} &
		\multicolumn{1}{c|}{Fig.~\ref{fig:fig_CBInoFails_maintxt_Phi1ltThetaPhi2lt1minusTheta}} &
		\multicolumn{1}{c|}{\begin{tabular}[c]{@{}c@{}}  $\ast\ast$Fig.s~\ref{fig:fig_CBIsoln_withnoconsFails_Phi1gt1minustheta_1},\\ \ref{fig:fig_CBIsoln_withnoconsFails_Phi1gt1minustheta_2}, \ref{fig:fig_CBIsoln_withnoconsFails_Phi1gt1minustheta_3},\\ \ref{fig:fig_CBIsoln_withnoconsFails_Phi1gt1minustheta_4} as $n$ \\ increases \end{tabular} } &
		\multicolumn{1}{c|}{\begin{tabular}[c]{@{}c@{}} $\ast\ast$Fig.s~\ref{fig:fig_CBIsoln_withnoconsFails_Phi1gt1minustheta_1},\\ \ref{fig:fig_CBIsoln_withnoconsFails_Phi1gt1minustheta_2}, \ref{fig:fig_CBIsoln_withnoconsFails_Phi1gt1minustheta_3},\\ \ref{fig:fig_CBIsoln_withnoconsFails_Phi1gt1minustheta_4} as $n$ \\ increases \end{tabular} } & 
		\multicolumn{1}{c|}{\begin{tabular}[c]{@{}c@{}} $\ast\ast$Fig.s~\ref{fig:fig_CBIsoln_withFails_Phi2lt1minustheta_1},\\ \ref{fig:fig_CBIsoln_withFails_Phi2lt1minustheta_2}, \ref{fig:fig_CBIsoln_withFails_Phi2lt1minustheta_3},\\ \ref{fig:fig_CBIsoln_withFails_Phi2lt1minustheta_4} as $n$ \\ increases \end{tabular}} 
		\\ \hline \hline
	\end{tabular}
	\\
	{\footnotesize $\ast$An arbitrary Beta distribution satisfying the PKs. \\
		$\ast\ast$This curve is a piecewise function -- i.e. it's the confidence from the listed priors, in sequence, as $n$ increases. The precise values $n$ at which the curve switches between confidence from different priors depends on the execution outcomes (see proof in supplementary material, \ref{sec_app_B}).
	}
\end{table*}

\subsection{Assessment with PKs for Nuclear Reactor Safety Systems}
\label{sec_nuclear_post_reliability}
Next consider the assessment of a nuclear reactor safety protection system that is simple enough to possibly be fault-free (i.e., the system's \emph{pfd} could be $0$), see \cite{littlewood_reasoning_2012}. Typically, failure-free operational testing from such a system is required -- otherwise, if a failure occurs, the system is taken offline and fixed, before testing resumes with a new sequence of demands. This scenario is very similar to the baseline of the last section -- the testing evidence and most of the PKs are the same -- but now, the engineering goal is ``perfection'' (i.e., PK\ref{cons_engineering_goal} with $\epsilon=0$).

As before, we are interested in an assessor's confidence in a \emph{pfd} bound $b$ upon seeing $n$ failure-free runs, where the assessor harbours doubts about the execution outcomes being i.i.d.  (i.e., nonzero $\phi_1$ or $\phi_2$). This confidence in $b$ is given by Theorem~\ref{theorem_thm1_baseline}, simply by replacing PK\ref{cons_engineering_goal} with $P(X=0)=\theta$; that is, $\epsilon=0$ in the distributions of Fig.s~\ref{fig:fig_CBInoFails_maintxt_Phi1gtTheta}, \ref{fig:fig_CBInoFails_maintxt_Phi1ltThetaPhi2gt1minusTheta} and \ref{fig:fig_CBInoFails_maintxt_Phi1ltThetaPhi2lt1minusTheta}.

\begin{example}[nuclear reactor protection systems]
	\label{exp_nuclear_scenario}
	Consider a nuclear reactor safety protection system that an assessor is $70\%$ confident contains no faults (i.e., PK\ref{cons_engineering_goal} with $\epsilon=0$ and $\theta=0.7$). Upon seeing $n$ failure-free tests, an assessor's conservative posterior confidence in the pfd bound $10^{-4}$ (SIL 4) is shown in Fig.~\ref{fig_example2_nuclear}, for three Bayesian models with different PKs: univariate CBI with $\epsilon=0$, CBI with doubts in the independence assumption and $\epsilon=10^{-5}$ (i.e., the baseline Example \ref{exp_baseline_scenario}), and CBI with doubts in independence and $\epsilon=0$.
\end{example}

\begin{figure}[hbtp!]
	\centering
	\includegraphics[width=0.45\linewidth]{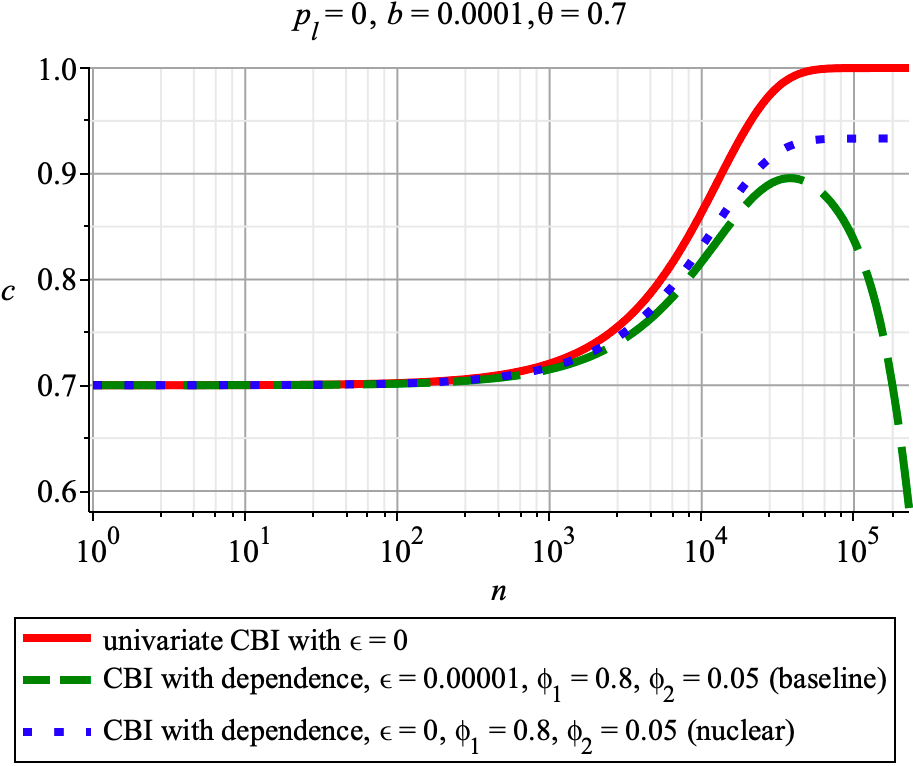}
	\caption{(Example \ref{exp_nuclear_scenario}) $c\times 100\%$ posterior confidence in $[X\leqslant 10^{-4}]$ upon seeing $n$ failure-free tests, from three Bayesian models with different PKs.\kizito{Perhaps have the curves here match the curves in Fig.~\ref{fig_example1_baseline}.}}
	\label{fig_example2_nuclear}
\end{figure}

Example \ref{exp_nuclear_scenario} highlights the benefit of the software possibly being fault-free: in contrast to Example \ref{exp_baseline_scenario}, accumulated failure-free evidence \emph{will not} eventually undermine posterior confidence in $b$. That is, the dotted curve in Fig.~\ref{fig_example2_nuclear} is an increasing function of $n$ that is asymptotic to the horizontal line $c=\frac{\theta}{\theta+(1-b)\phi_2}$, so it lies above the dashed curve\footnote{The asymptote is obtained by setting $\epsilon=0$ in the distribution of  Fig.~\ref{fig:fig_CBInoFails_maintxt_Phi1gtTheta}, and computing $\lim\limits_{n\rightarrow\infty}P(\, X<b \mid n\mbox{ executions without failure} )$.}, yet lies below the solid curve (i.e., its more conservative than the confidence from univariate CBI).

\begin{insight}[For a possibly perfect system, failure-free testing cannot undermine confidence in a bound]
	\label{insight_perf_prevent_cor}
	As more successful executions are observed, the more likely it is that these observations are the result of either a fault-free system (so $\epsilon=0$) or a perfectly positively correlated system (so $\lambda=1$). That is, as $n$ increases, the distribution in Fig.~\ref{fig:fig_CBInoFails_maintxt_Phi1gtTheta} tends to a distribution that has probability mass at only two points: a probability $\frac{\theta}{\theta+(1-b)\phi_2}$ at $(0,0)$, and a complementary probability $\frac{(1-b)\phi_2}{\theta+(1-b)\phi_2}$ at $(b,1)$. In the limit of large $n$, the assessor will need more evidence to distinguish between these two possibilities. 
\end{insight}

The sensitivity of these insights -- to changes in the strength of the PKs -- is explored in section \ref{sec_senstivity_analysis}. While further implications of these insights are discussed in section \ref{sec_discussion}.

\subsection{Assessment with PKs for Autonomous Vehicles}
\label{sec_av_post_reliability}
We turn our attention to assessing AV-safety. In line with \cite{kalra_driving_2016} and \cite{zhao_assessing_2019}, the \emph{pfe} is the \textit{probability of a fatality-event per mile} (\textit{pfm}). Here, each mile is treated as a ``unit distance'' over which software that enacts AV safety functions must correctly operate\footnote{This is a coarse model of operation represented by a Bernoulli process: AV software responds to, say, discrete visual/positional stimulus it receives at a constant rate per mile. It would be interesting to see if the conclusions from the following analyses change significantly with a more sophisticated model of continuous operation (e.g., failures occurring according to an NHPP).}. Unlike the (possibly fault-free) protection software of subsection \ref{sec_nuclear_post_reliability}, AV software cannot be expected to be fault-free -- it relies on imperfect, sophisticated machine learning solutions performing a complex driving task. Consequently, unfortunately, AV safety software failures (leading to fatalities, in particular) have been known to occur; see \cite{2022_NHTSAreport}. This suggests a nonzero lower bound $p_l$ on the \emph{pfm} (see PK\ref{cons_pl_lowerbound}), and an engineering goal of a ``safe enough''\footnote{ How safe is ``safe enough'' is a separate topic, see \cite{liu_how_2019}. A typical target would be several times safer than the average human driver.} system rather than ``perfection''. 

The following theorem gives conservative confidence in the \emph{pfm} bound $b$ (see general proof in supplementary material, \ref{sec_app_B}). Failures change the form of the Klotz likelihood from the one in Theorem \ref{theorem_thm1_baseline} (see \eqref{eqn_xKlotzlklhdFn_maintxt}).  Here, an assessor doubts the independence of successive software executions as the AV drives over successive miles.

\begin{theorem}
	Using \eqref{eqn_xKlotzlklhdFn_maintxt} and \eqref{eqn_Klotzmodelpostconf}, the optimisation problem
	\begin{align*}
		&\,\,\inf\limits_{\mathcal D} P(\,X\leqslant b \mid n\mbox{ executions, }s\mbox{ failures, }r\mbox{ consecutive failures} ) \\
		&\mbox{s.t.} \,\,\,\,\,\,PK\ref{cons_pl_lowerbound},\,\,\,PK\ref{cons_engineering_goal},\,\,\,PK\ref{cons_negative_dependence},\,\,\,PK\ref{cons_positive_dependence}
	\end{align*}
	is solved by prior distributions such as those in Fig.~\ref{fig_CBIsoln_pl}, since $P(\,X< b \mid n\mbox{ executions, }s\mbox{ failures, }r\mbox{ consecutive failures} )$ from these priors equals the infimum.
	\label{theorem_CBI_withFailures_and_pl}
\end{theorem}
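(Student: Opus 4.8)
The plan is to recast the statement as a linear-fractional optimisation over probability measures and solve it by the same machinery used for Theorem~\ref{theorem_univariateCBI} and Theorem~\ref{theorem_thm1_baseline}, now with the general Klotz likelihood \eqref{eqn_xKlotzlklhdFn_maintxt} in which failures ($s>0$, $r>0$) are present. Writing the objective \eqref{eqn_Klotzmodelpostconf} as
\[
P(X\leqslant b\mid\mbox{data}) = \frac{\MyExp[L(X,\Lambda){\bf 1}_{X\leqslant b}]}{\MyExp[L(X,\Lambda)]},
\]
both numerator and denominator are linear functionals of the joint prior $F$, while PK\ref{cons_pl_lowerbound} restricts the support to $\mathcal R\cap\{x\geqslant p_l\}$ and PK\ref{cons_engineering_goal}--PK\ref{cons_positive_dependence} impose linear (region-mass) constraints. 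Hence the feasible set is convex, the objective is linear-fractional, and the infimum is attained at an extreme point. First I would invoke the extreme-point characterisation used in the earlier CBI results (generalised moment problem / robust-Bayes): such extreme measures are discrete, supported on at most as many points as there are linear constraints --- here four, namely normalisation together with the three region-mass conditions of PK\ref{cons_engineering_goal}, PK\ref{cons_negative_dependence} and PK\ref{cons_positive_dependence}. This matches the four support points depicted in Fig.~\ref{fig_CBIsoln_pl} and reduces the infinite-dimensional problem to a search over discrete priors.

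Second, I would localise those support points. Partition $\mathcal R\cap\{x\geqslant p_l\}$ by the diagonal $\lambda=x$ (separating the negative-dependence region $\{\lambda<x\}$ of PK\ref{cons_negative_dependence} from the positive-dependence region $\{\lambda>x\}$ of PK\ref{cons_positive_dependence}) and by the vertical lines $x=p_l$, $x=\epsilon$ and $x=b$. Since the objective distinguishes mass only according to whether $x\leqslant b$, the minimiser should place any mass the constraints force into $\{x\leqslant b\}$ at the point of \emph{least} Klotz likelihood within its subregion, so as to minimise its contribution to the numerator relative to the denominator, and place the remaining ``free'' mass in $\{x>b\}$ at the point of \emph{greatest} likelihood, so as to inflate the denominator alone. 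Each such relocation is justified by a mass-transport argument: shifting probability within a subregion towards the extremising point moves numerator and denominator monotonically in the direction that lowers the ratio, while keeping every quantile constraint PK\ref{cons_pl_lowerbound}--PK\ref{cons_positive_dependence} satisfied. The candidate priors of Fig.~\ref{fig_CBIsoln_pl} are exactly the measures obtained in this way.

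Third, once the support is confined to these candidate points, PK\ref{cons_pl_lowerbound}--PK\ref{cons_positive_dependence} become linear equations in the point masses $M_i$ and the objective becomes an explicit rational function of the $M_i$. I would solve this low-dimensional problem (splitting into parameter cases such as $\phi_1\geqslant\theta$ and $\phi_2\geqslant 1-\theta$, mirroring the sub-cases of Theorem~\ref{theorem_thm1_baseline}), obtain the masses, substitute the resulting prior back into \eqref{eqn_Klotzmodelpostconf} to confirm it yields the claimed infimum, and close the argument with a complementary-slackness / dual-feasibility check that no feasible prior does strictly better.

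The main obstacle, and what distinguishes this from the failure-free Theorem~\ref{theorem_thm1_baseline}, is the extremisation of the likelihood in step two. With no failures the factors $\lambda^{\gamma}$ and $(1-\lambda)^{\delta}$ are absent and the likelihood is monotone in $\lambda$, so the extremising points are simply corners of $\mathcal R$ (e.g. $\lambda=1$ for the denominator-inflating mass). When failures are present these two factors are both active and pull in opposite directions --- $\lambda^{\gamma}$ increasing in $\lambda$, $(1-\lambda)^{\delta}$ decreasing --- so that $L(x,\lambda;\alpha,\beta,\gamma,\delta)$, also carrying the terms $\big(\tfrac{(1-\lambda)x}{1-x}\big)^{\alpha}$ and $\big(1-\tfrac{(1-\lambda)x}{1-x}\big)^{\beta}$, is no longer monotone and may be extremised at an interior critical point rather than a corner. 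Pinning these points down requires analysing the stationarity conditions $\partial_x L=\partial_\lambda L=0$ on $\mathcal R$, together with the lower boundary $\lambda=\max\{0,(2x-1)/x\}$ from \eqref{eq_ranges_x_lambda} and the extra active constraint contributed by a nonzero $p_l$. Resolving this non-monotonicity is the crux; once the extremising points are identified, steps one and three proceed exactly as in the earlier theorems.
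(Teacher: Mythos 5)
Your overall strategy is the paper's own: reduce to discrete priors via an extreme-point/moment argument, use the sign of the derivative of the posterior ratio (which you correctly tie to the sign of the likelihood's derivative together with whether $x\leqslant b$ or $x>b$) to relocate mass towards or away from the likelihood's maxima, and then solve the residual linear-fractional programme in the masses case-by-case on $\phi_1,\phi_2,\theta$. The paper does exactly this in \ref{sec_proofThrm1} (stages 1, 2, 5 and 6 of its proof of Theorem \ref{theorem_CBIwithdependence}, extended to $p_l>0$ in the closing remarks). However, you explicitly defer the step you yourself identify as the crux --- locating the extremising points of the non-monotone likelihood when $s>0$, $r>0$ --- and that step is the bulk of the paper's technical work. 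The mass-transport argument in your second step needs more than the existence of stationary points: it needs \emph{unimodality}, i.e.\ that along any vertical line, any horizontal line, the diagonal, and indeed over all of $\mathcal R$, the Klotz likelihood has at most one stationary point and it is a maximum. Without this, ``shifting probability within a subregion towards the extremising point'' is not guaranteed to move the likelihood monotonically, and the reassignment sequence need not converge to the claimed limit points. The paper establishes unimodality by showing that $\partial_\lambda L=0$ reduces to the intersection of two quadratics in $\lambda$ (one with roots at $\lambda=\frac{2x-1}{x}$ and $\frac{\gamma}{\gamma+\delta}$, the other with a root at $0$), which can cross at most once inside $\mathcal R$; an analogous line-versus-quadratic argument handles $\partial_x L=0$, and a boundary argument shows any interior stationary point is a global maximum.

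Two further points you should tighten. First, the infimum is generally \emph{not attained}: the worst-case ``priors'' put mass at limit points such as $(b,1)$ approached from the right of $x=b$, so the infimum equals $P(X<b\mid\cdot)$ for the limiting distribution rather than $P(X\leqslant b\mid\cdot)$ for any feasible prior --- your phrase ``the infimum is attained at an extreme point'' glosses over this, and the theorem statement is worded around it deliberately. Second, the case split that actually drives the different panels of Fig.~\ref{fig_CBIsoln_pl} is $r>0$ versus $r=0$ (i.e.\ whether the exponent $\gamma$ vanishes), not only the $\phi_1,\phi_2$ versus $\theta$ comparisons; with $r=0$ the roles of $\phi_1$ and $\phi_2$ in producing zero confidence interchange, so this split cannot be folded into the parameter cases you list.
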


\begin{figure}[h!]
	\captionsetup[figure]{format=hang}
	\centering	
	\begin{subfigure}[]{0.3\linewidth}
		\centering
		\includegraphics[width=1.0\linewidth]{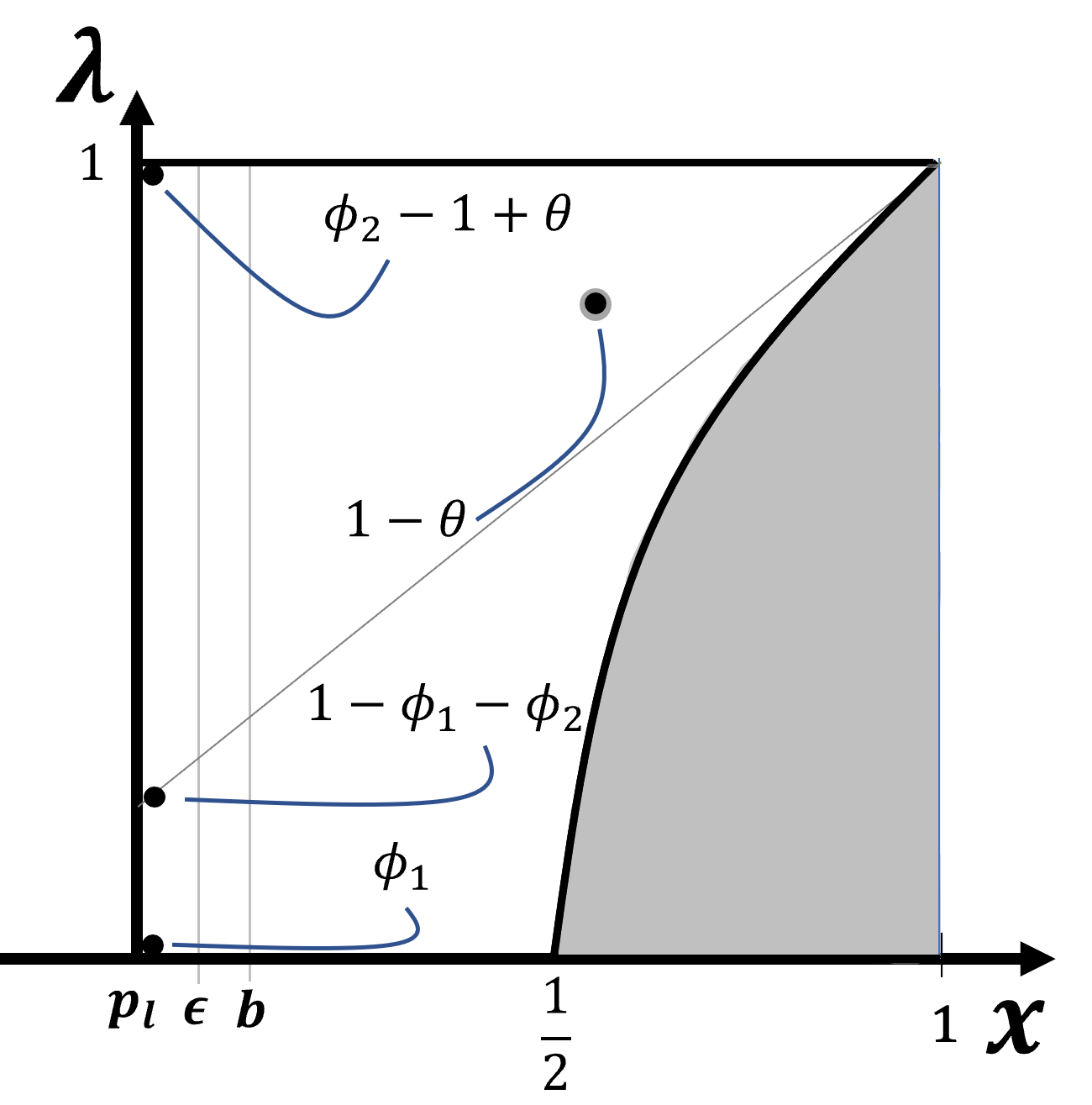}
		\caption{{\footnotesize $\,\,\phi_2\geqslant 1-\theta$ }}
		\label{fig:fig_CBIsoln_plgeneralFailsPhi2gt1minTheta}
	\end{subfigure}
	\\
	\begin{subfigure}[]{0.3\linewidth}
		\centering
			\includegraphics[width=1.0\linewidth]{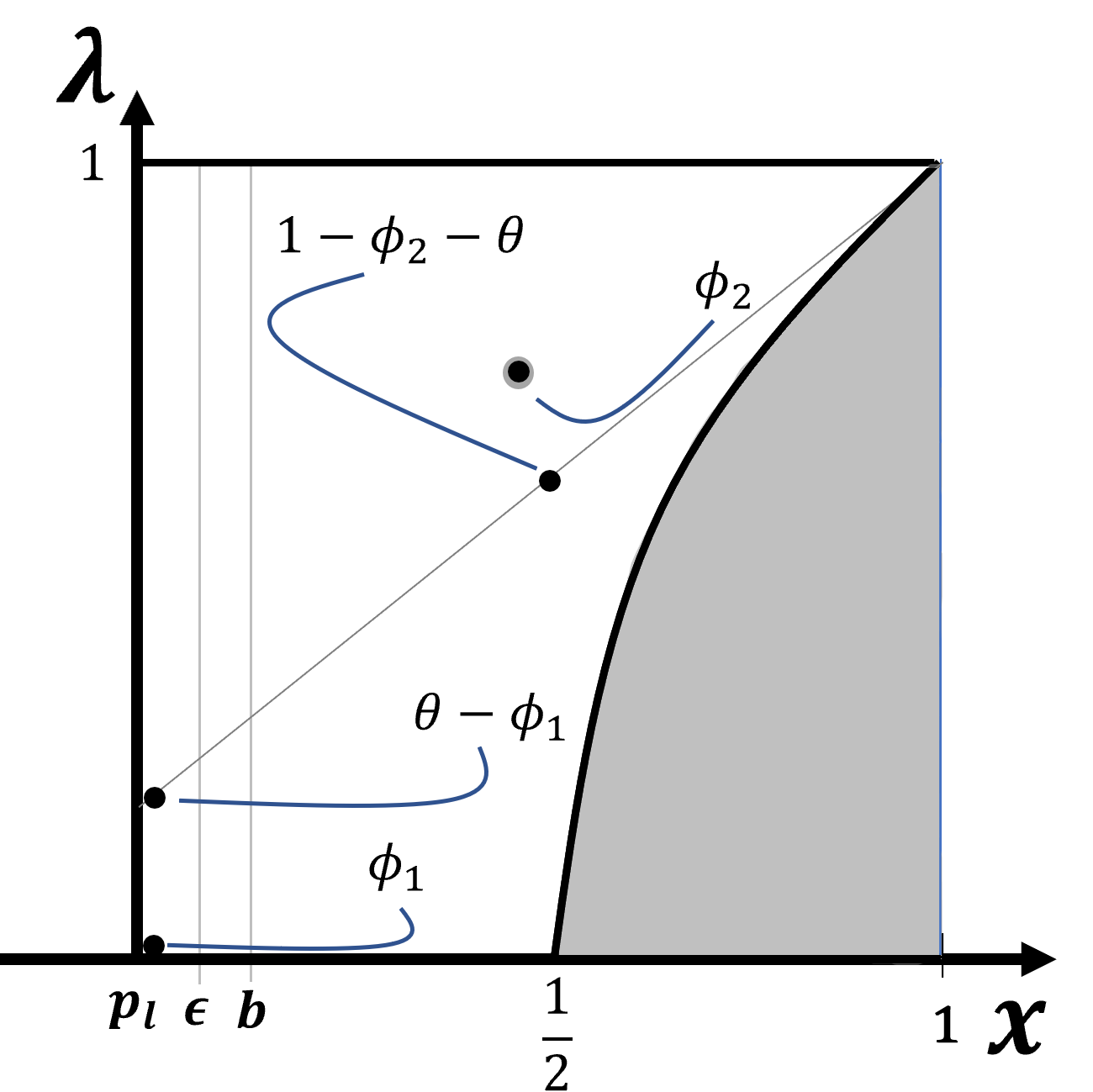}
		\caption{{\footnotesize $\,\,\phi_1\leqslant\theta\,\,$ and $\,\,\phi_2\leqslant 1-\theta$ }}
		\label{fig:fig_CBIsoln_plgeneralFailsPhi2lt1minTheta_Phi1ltTheta}
	\end{subfigure}
	\begin{subfigure}[]{0.3\linewidth}
		\centering
			\includegraphics[width=1.0\linewidth]{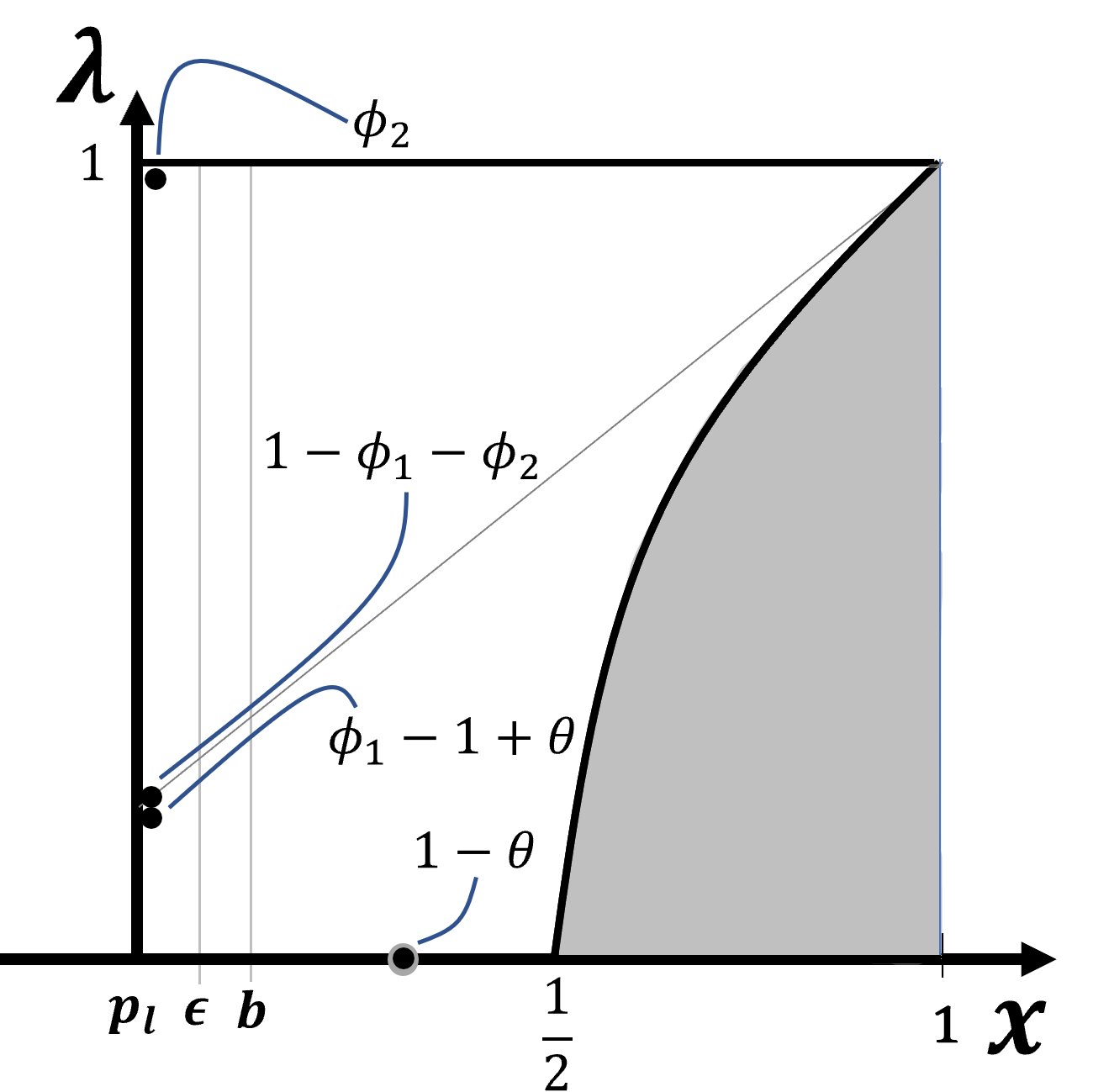}
		\caption{{\footnotesize $\,\,\phi_1\geqslant 1-\theta\,\,$ }}
		\label{fig:fig_CBIsoln_plnoConsecFailsPhi2ltTheta_Phi1gt1minTheta}
	\end{subfigure}
	\begin{subfigure}[]{0.3\linewidth}
		\centering
			\includegraphics[width=1.0\linewidth]{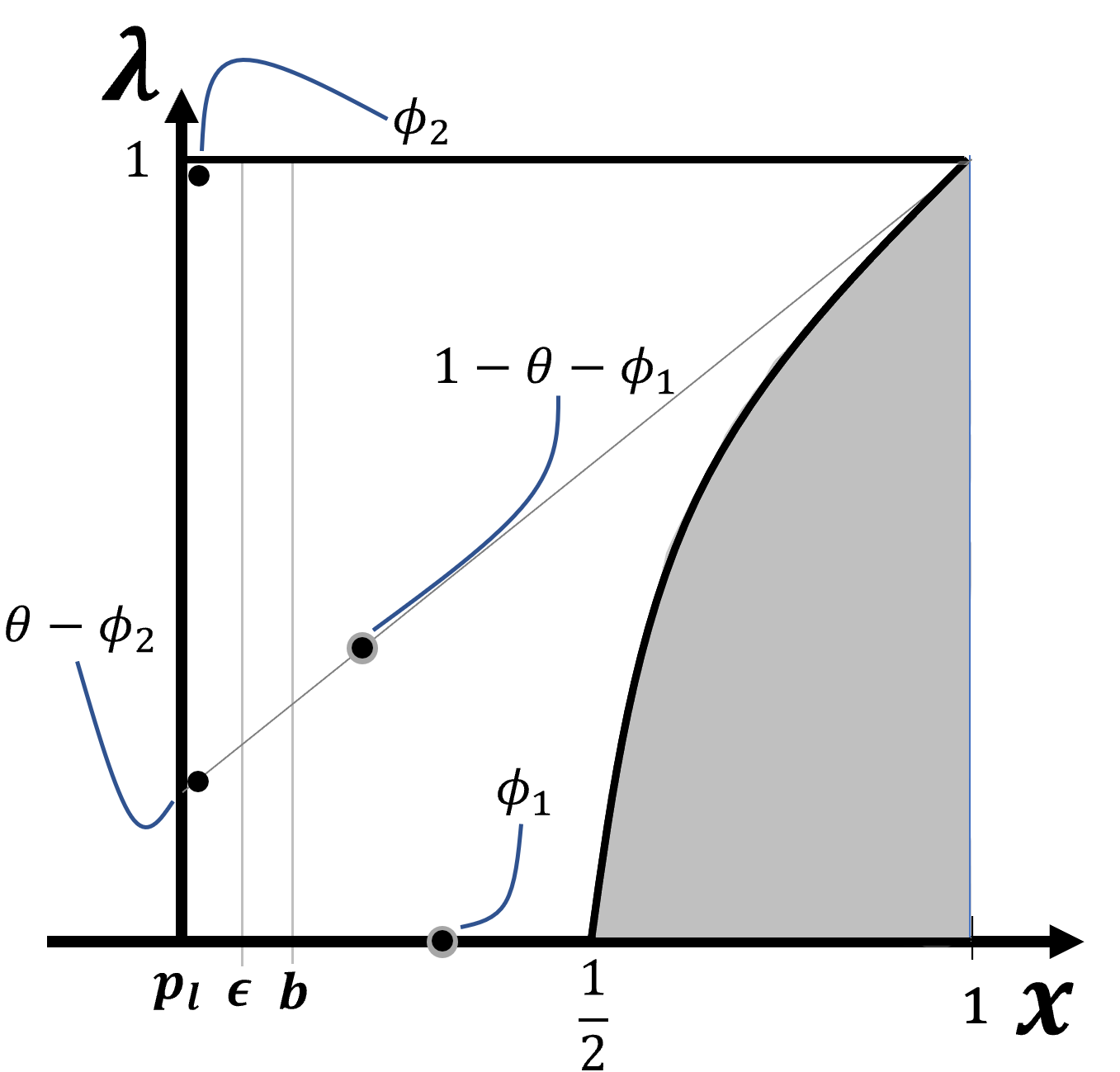}
		\caption{{\footnotesize $\,\,\phi_1\leqslant 1-\theta\,\,$ and $\,\,\phi_2\leqslant\theta$ }}
		\label{fig:fig_CBIsoln_plnoConsecFailsPhi2ltTheta_Phi1lt1minTheta}
	\end{subfigure}
	\caption[CBI priors with $p_l$ lower bound]{{\footnotesize Some examples of worst-case priors that give nonzero infima for the optimisation in Theorem \ref{theorem_CBI_withFailures_and_pl}; please see Figs.~\ref{fig_CBIsoln_withFails_Phi2glt1minustheta}, \ref{fig_CBIsoln_withnoconsFails_Phi1glt1minustheta}, \ref{fig_CBIsoln_Fails_zeroconf} of the supplementary material for all of the priors. When the software executes with {\bf some isolated and consecutive failures} (i.e. $r>0$), the worst-case priors can look like those in subfig.s \ref{fig:fig_CBIsoln_plgeneralFailsPhi2gt1minTheta} and \ref{fig:fig_CBIsoln_plgeneralFailsPhi2lt1minTheta_Phi1ltTheta}. And, if the executions contain some {\bf isolated failures, but no consecutive failures} (i.e. $r=0$), worst-case priors can look like subfig.s \ref{fig:fig_CBIsoln_plnoConsecFailsPhi2ltTheta_Phi1gt1minTheta} and \ref{fig:fig_CBIsoln_plnoConsecFailsPhi2ltTheta_Phi1lt1minTheta} instead. The exact locations of the support of these distributions (i.e. the black dots) depend on the values of the exponents in the likelihood function, and whether the 1st execution is a failure or not. 
			\normalsize}}
	\label{fig_CBIsoln_pl}
\end{figure}

\begin{example}[AVs] 
	\label{exp_av}
	Consider an assessor's confidence in an AV being as safe as the average human driver\footnote{The exact statistic in the U.S. (2013) was $1.09e{\!-\!8}$, as used by \cite{kalra_driving_2016}. For simplicity we round this to $10^{-8}$ in the examples.} in terms of \textit{pfm} (so $b=10^{-8}$), after a fleet of AVs have driven millions of miles. Using PK parameter values from \cite{zhao_assessing_2020}, we have: the engineering goal, $\epsilon=10^{-10}$, is 2 orders of magnitude safer than the pfm for human drivers; and the lower bound on pfm is $p_l=10^{-15}$. Fig.~\ref{fig_av_with_failures_example3} shows confidence in $b$ under different values of $s$ (number of failures), $r$ (consecutive failures), $\phi_1$, and $\phi_2$.
\end{example}

\begin{figure}[h!]
	\centering
	\includegraphics[width=0.45\linewidth]{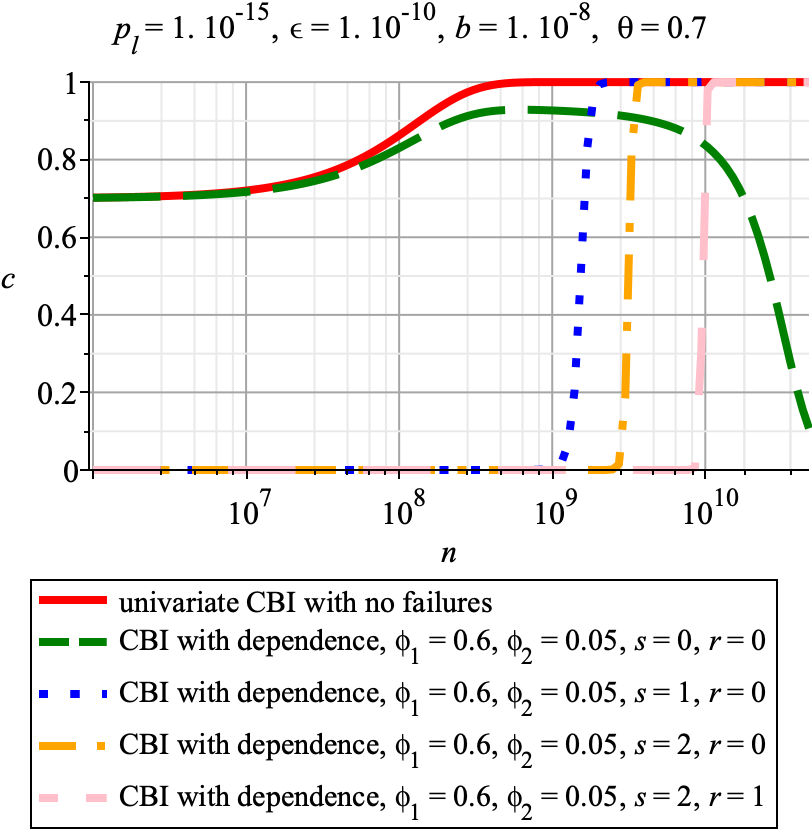}
	\caption{(Example \ref{exp_av}) $c\times 100\%$ posterior confidence in $[X\leqslant 10^{-8}]$ upon seeing $s$ failures ($r$ of them consecutive) in $n$ tests, from CBI models with different PKs.
		\xingyu{Empirically, there seems to be two equivalence relations: i. Univariate CBI = CBI with dependence but with phi1=phi2=0; ii. Univariate CBI = CBI with dependence when r=0 (including failure-free case); Not sure if the second one is true analytically... Maybe we should place a Remark somewhere regarding this two (effectively) equivalence relations; Double check this later... } \kizito{(i) failures allow confidence to grow to 1 asymptotically, because failures-free operation due to ``easy demands'' becomes unlikely. Otherwise failure-free evidence eventually harm the confidence; this is not possible in the univaraite case where failure-free evidence is always an indication of very reliable systems\xingyu{yes, I add in this point in insight 2 which is the first time we saw this result}; this observation is only part of the picture, more general results might be obtained by using more general posterior measures --measures that take into account both pfd and positive correlation (i.e. no failures). (ii) Seeing more failures requires more testing for confidence to grow. (iii) Seeing consecutive failures requires even more testing for confidence to grow.}}
	\label{fig_av_with_failures_example3}
\end{figure}

Three observations from Fig.~\ref{fig_av_with_failures_example3}: {\bf i)} like previous scenarios, the dashed curve shows that doubts in independent executions eventually undermine confidence in $b$ when no failures occur (\emph{cf.} Fig.s~\ref{fig_example1_baseline},\ref{fig_example2_nuclear}). However, the other curves in Fig.~\ref{fig_av_with_failures_example3}  (except the solid curve) show that failures allow confidence $c$ to eventually approach 1. This is explained in Insight \ref{insight_failures_allow_conf_1}; {\bf ii)} unsurprisingly, more failures  requires more testing for confidence in $b$ to grow (i.e., the dash-dot curve lies to the right of the dotted curve); and {\bf iii)} more consecutive failures requires even more testing (i.e., the space-dash curve lies to the right of the dash-dot curve). Section \ref{sec_senstivity_analysis}'s sensitivity analysis explores this further.

\begin{table*}[h!]
\small
\centering
\caption{A list of all prior distributions used in the sensitivity analysis of Section \ref{sec_senstivity_analysis}. See the supplementary material for any listed figures not included in the paper.} 
\label{tab_wcp_sa}
\resizebox{\linewidth}{!}{
	\begin{tabular}{|cccccccc|}
		\hline
		\hline
		\multicolumn{8}{|c|}{Sensitivity Analysis in Section \ref{sec_senstivity_analysis}} \\ \hline
		\multicolumn{2}{|c|}{Scenarios with various PKs} &
		\multicolumn{1}{c|}{red solid} &
		\multicolumn{1}{c|}{green dashed} &
		\multicolumn{1}{c|}{blue dotted} &
		\multicolumn{1}{c|}{orange dashdotted} &
		\multicolumn{1}{c|}{\begin{tabular}[c]{@{}c@{}} pink \\ spacedashed\end{tabular}} &
		\begin{tabular}[c]{@{}c@{}} black \\ spacedotted\end{tabular}\\ \hline
		\multicolumn{1}{|c|}{\multirow{3}{*}{\begin{tabular}[c]{@{}c@{}}Nuclear \\ reactor\\ protection \\ system\\ (Fig.~\ref{fig_sa_nuclear})\end{tabular}}} &
		\multicolumn{1}{c|}{\begin{tabular}[c]{@{}c@{}}Varying PK\ref{cons_engineering_goal} \\and $b$  (Fig.~\ref{fig_sa_nuclear_vary_PK2})\end{tabular}} &
		\multicolumn{1}{c|}{Fig.~\ref{fig_unicbi_wc_prior}} &
		\multicolumn{1}{c|}{Fig.~\ref{fig:fig_CBInoFails_maintxt_Phi1gtTheta}} &
		\multicolumn{1}{c|}{Fig.~\ref{fig:fig_CBInoFails_maintxt_Phi1ltThetaPhi2gt1minusTheta}} &
		\multicolumn{1}{c|}{Fig.~\ref{fig:fig_CBInoFails_maintxt_Phi1gtTheta}} &
		\multicolumn{1}{c|}{Fig.~\ref{fig:fig_CBInoFails_maintxt_Phi1gtTheta}} & n/a
		\\ \cline{2-8} 
		\multicolumn{1}{|c|}{} &
		\multicolumn{1}{c|}{\begin{tabular}[c]{@{}c@{}}Varying PK\ref{cons_negative_dependence}\\ (Fig.~\ref{fig_sa_nuclear_vary_PK3})\end{tabular}} &
		\multicolumn{1}{c|}{Fig.~\ref{fig_unicbi_wc_prior}} &
		\multicolumn{1}{c|}{Fig.~\ref{fig:fig_CBInoFails_maintxt_Phi1gtTheta}} &
		\multicolumn{1}{c|}{Fig.~\ref{fig:fig_CBInoFails_maintxt_Phi1ltThetaPhi2gt1minusTheta}} &
		\multicolumn{1}{c|}{Fig.~\ref{fig:fig_CBInoFails_maintxt_Phi1ltThetaPhi2gt1minusTheta}} &
		\multicolumn{1}{c|}{Fig.~\ref{fig:fig_CBInoFails_maintxt_Phi1gtTheta}} & n/a
		\\ \cline{2-8} 
		\multicolumn{1}{|c|}{} &
		\multicolumn{1}{c|}{\begin{tabular}[c]{@{}c@{}}Varying PK\ref{cons_positive_dependence}\\ (Fig.~\ref{fig_sa_nuclear_vary_PK4})\end{tabular}} &
		\multicolumn{1}{c|}{Fig.~\ref{fig_unicbi_wc_prior}} &
		\multicolumn{1}{c|}{Fig.~\ref{fig:fig_CBInoFails_maintxt_Phi1ltThetaPhi2gt1minusTheta}} &
		\multicolumn{1}{c|}{Fig.~\ref{fig:fig_CBInoFails_maintxt_Phi1ltThetaPhi2gt1minusTheta}} &
		\multicolumn{1}{c|}{Fig.s~\ref{fig:fig_CBInoFails_maintxt_Phi1ltThetaPhi2gt1minusTheta}/\ref{fig:fig_CBInoFails_maintxt_Phi1ltThetaPhi2lt1minusTheta}} &
		\multicolumn{1}{c|}{Fig.~\ref{fig:fig_CBInoFails_maintxt_Phi1ltThetaPhi2gt1minusTheta}} & n/a
		\\ \hline
		\multicolumn{1}{|c|}{\multirow{3}{*}{\begin{tabular}[c]{@{}c@{}}AVs with no \\consecutive \\ failures\\ (Fig.~\ref{fig_sa_av_r0})\end{tabular}}} &
		\multicolumn{1}{c|}{\begin{tabular}[c]{@{}c@{}}Varying PK\ref{cons_pl_lowerbound}, \\ PK\ref{cons_engineering_goal} and $b$ \\ (Fig.~\ref{fig_sa_av_r0_vary_pk12})\end{tabular}} &
		\multicolumn{1}{c|}{\begin{tabular}[c]{@{}c@{}} Fig.s~\ref{fig:fig_CBIsoln_withnoconsFails_Phi1gt1minustheta_1},\\ \ref{fig:fig_CBIsoln_withnoconsFails_Phi1gt1minustheta_2}, \ref{fig:fig_CBIsoln_withnoconsFails_Phi1gt1minustheta_3},\\ \ref{fig:fig_CBIsoln_withnoconsFails_Phi1gt1minustheta_4} as $n$ \\ increases \end{tabular} } &
		\multicolumn{1}{c|}{\begin{tabular}[c]{@{}c@{}} Fig.s~\ref{fig:fig_CBIsoln_withnoconsFails_Phi1gt1minustheta_1},\\ \ref{fig:fig_CBIsoln_withnoconsFails_Phi1gt1minustheta_2}, \ref{fig:fig_CBIsoln_withnoconsFails_Phi1gt1minustheta_3},\\ \ref{fig:fig_CBIsoln_withnoconsFails_Phi1gt1minustheta_4} as $n$ \\ increases \end{tabular}} &
		\multicolumn{1}{c|}{\begin{tabular}[c]{@{}c@{}} Fig.s~\ref{fig:fig_CBIsoln_withnoconsFails_Phi1gt1minustheta_1},\\ \ref{fig:fig_CBIsoln_withnoconsFails_Phi1gt1minustheta_2}, \ref{fig:fig_CBIsoln_withnoconsFails_Phi1gt1minustheta_3},\\ \ref{fig:fig_CBIsoln_withnoconsFails_Phi1gt1minustheta_4} as $n$ \\ increases \end{tabular}} &
		\multicolumn{1}{c|}{\begin{tabular}[c]{@{}c@{}} Fig.s~\ref{fig:fig_CBIsoln_withnoconsFails_Phi1gt1minustheta_1},\\ \ref{fig:fig_CBIsoln_withnoconsFails_Phi1gt1minustheta_2}, \ref{fig:fig_CBIsoln_withnoconsFails_Phi1gt1minustheta_3},\\ \ref{fig:fig_CBIsoln_withnoconsFails_Phi1gt1minustheta_4} as $n$ \\ increases \end{tabular}} &
		\multicolumn{1}{c|}{\begin{tabular}[c]{@{}c@{}} Fig.s~\ref{fig:fig_CBIsoln_withnoconsFails_Phi1gt1minustheta_1},\\ \ref{fig:fig_CBIsoln_withnoconsFails_Phi1gt1minustheta_2}, \ref{fig:fig_CBIsoln_withnoconsFails_Phi1gt1minustheta_3},\\ \ref{fig:fig_CBIsoln_withnoconsFails_Phi1gt1minustheta_4} as $n$ \\ increases \end{tabular}} & \begin{tabular}[c]{@{}c@{}} Fig.s~\ref{fig:fig_CBIsoln_withnoconsFails_Phi1gt1minustheta_1},\\ \ref{fig:fig_CBIsoln_withnoconsFails_Phi1gt1minustheta_2}, \ref{fig:fig_CBIsoln_withnoconsFails_Phi1gt1minustheta_3},\\ \ref{fig:fig_CBIsoln_withnoconsFails_Phi1gt1minustheta_4} as $n$ \\ increases \end{tabular}
		\\ \cline{2-8} 
		\multicolumn{1}{|c|}{} &
		\multicolumn{1}{c|}{\begin{tabular}[c]{@{}c@{}}Varying PK\ref{cons_negative_dependence}\\  (Fig.~\ref{fig_sa_av_r0_vary_PK3})\end{tabular}} &
		\multicolumn{1}{c|}{\begin{tabular}[c]{@{}c@{}} Fig.s~\ref{fig:fig_CBIsoln_withnoconsFails_Phi1gt1minustheta_1},\\ \ref{fig:fig_CBIsoln_withnoconsFails_Phi1gt1minustheta_2}, \ref{fig:fig_CBIsoln_withnoconsFails_Phi1gt1minustheta_3},\\ \ref{fig:fig_CBIsoln_withnoconsFails_Phi1gt1minustheta_4} as $n$ \\ increases \end{tabular}} &
		\multicolumn{1}{c|}{\begin{tabular}[c]{@{}c@{}} Fig.s~\ref{fig:fig_CBIsoln_withnoconsFails_Phi1gt1minustheta_1},\\ \ref{fig:fig_CBIsoln_withnoconsFails_Phi1gt1minustheta_2}, \ref{fig:fig_CBIsoln_withnoconsFails_Phi1gt1minustheta_3},\\ \ref{fig:fig_CBIsoln_withnoconsFails_Phi1gt1minustheta_4} as $n$ \\ increases \end{tabular}} &
		\multicolumn{1}{c|}{\begin{tabular}[c]{@{}c@{}} Fig.s~\ref{fig:fig_CBIsoln_withnoconsFails_Phi1lt1minustheta_1},\\ \ref{fig:fig_CBIsoln_withnoconsFails_Phi1lt1minustheta_2}, \ref{fig:fig_CBIsoln_withnoconsFails_Phi1lt1minustheta_3},\\ \ref{fig:fig_CBIsoln_withnoconsFails_Phi1lt1minustheta_4} as $n$ \\ increases* \end{tabular}} &
		\multicolumn{1}{c|}{\begin{tabular}[c]{@{}c@{}} Fig.s~\ref{fig:fig_CBIsoln_withnoconsFails_Phi1gt1minustheta_1},\\ \ref{fig:fig_CBIsoln_withnoconsFails_Phi1gt1minustheta_2}, \ref{fig:fig_CBIsoln_withnoconsFails_Phi1gt1minustheta_3},\\ \ref{fig:fig_CBIsoln_withnoconsFails_Phi1gt1minustheta_4} as $n$ \\ increases \end{tabular}} &
		\multicolumn{1}{c|}{\begin{tabular}[c]{@{}c@{}} Fig.s~\ref{fig:fig_CBIsoln_withnoconsFails_Phi1gt1minustheta_1},\\ \ref{fig:fig_CBIsoln_withnoconsFails_Phi1gt1minustheta_2}, \ref{fig:fig_CBIsoln_withnoconsFails_Phi1gt1minustheta_3},\\ \ref{fig:fig_CBIsoln_withnoconsFails_Phi1gt1minustheta_4} as $n$ \\ increases \end{tabular}} & \begin{tabular}[c]{@{}c@{}} Fig.s~\ref{fig:fig_CBIsoln_withnoconsFails_Phi1lt1minustheta_1},\\ \ref{fig:fig_CBIsoln_withnoconsFails_Phi1lt1minustheta_2}, \ref{fig:fig_CBIsoln_withnoconsFails_Phi1lt1minustheta_3},\\ \ref{fig:fig_CBIsoln_withnoconsFails_Phi1lt1minustheta_4} as $n$ \\ increases \end{tabular}
		\\ \cline{2-8} 
		\multicolumn{1}{|c|}{} &
		\multicolumn{1}{c|}{\begin{tabular}[c]{@{}c@{}}Varying PK\ref{cons_positive_dependence}\\ (Fig.~\ref{fig_sa_av_r0_vary_PK4})\end{tabular}} &
		\multicolumn{1}{c|}{\begin{tabular}[c]{@{}c@{}} Fig.s~\ref{fig:fig_CBIsoln_withnoconsFails_Phi1lt1minustheta_1},\\ \ref{fig:fig_CBIsoln_withnoconsFails_Phi1lt1minustheta_2}, \ref{fig:fig_CBIsoln_withnoconsFails_Phi1lt1minustheta_3},\\ \ref{fig:fig_CBIsoln_withnoconsFails_Phi1lt1minustheta_4} as $n$ \\ increases* \end{tabular}} &
		\multicolumn{1}{c|}{\begin{tabular}[c]{@{}c@{}} Fig.s~\ref{fig:fig_CBIsoln_withnoconsFails_Phi1lt1minustheta_1},\\ \ref{fig:fig_CBIsoln_withnoconsFails_Phi1lt1minustheta_2}, \ref{fig:fig_CBIsoln_withnoconsFails_Phi1lt1minustheta_3},\\ \ref{fig:fig_CBIsoln_withnoconsFails_Phi1lt1minustheta_4} as $n$ \\ increases* \end{tabular}} &
		\multicolumn{1}{c|}{\begin{tabular}[c]{@{}c@{}} Fig.s~\ref{fig:fig_CBIsoln_withnoconsFails_Phi1lt1minustheta_1},\\ \ref{fig:fig_CBIsoln_withnoconsFails_Phi1lt1minustheta_2}, \ref{fig:fig_CBIsoln_withnoconsFails_Phi1lt1minustheta_3},\\ \ref{fig:fig_CBIsoln_withnoconsFails_Phi1lt1minustheta_4} as $n$ \\ increases* \end{tabular}} &
		\multicolumn{1}{c|}{\begin{tabular}[c]{@{}c@{}} Fig.s~\ref{fig:fig_CBIsoln_withnoconsFails_Phi1lt1minustheta_1},\\ \ref{fig:fig_CBIsoln_withnoconsFails_Phi1lt1minustheta_2}, \ref{fig:fig_CBIsoln_withnoconsFails_Phi1lt1minustheta_3},\\ \ref{fig:fig_CBIsoln_withnoconsFails_Phi1lt1minustheta_4} as $n$ \\ increases* \end{tabular}} &
		\multicolumn{1}{c|}{\begin{tabular}[c]{@{}c@{}} Fig.s~\ref{fig:fig_CBIsoln_withnoconsFails_Phi1lt1minustheta_1},\\ \ref{fig:fig_CBIsoln_withnoconsFails_Phi1lt1minustheta_2}, \ref{fig:fig_CBIsoln_withnoconsFails_Phi1lt1minustheta_3},\\ \ref{fig:fig_CBIsoln_withnoconsFails_Phi1lt1minustheta_4} as $n$ \\ increases \end{tabular}} & \begin{tabular}[c]{@{}c@{}} Fig.s~\ref{fig:fig_CBIsoln_withnoconsFails_zeroconf_1},\\ \ref{fig:fig_CBIsoln_withnoconsFails_zeroconf_2}, \ref{fig:fig_CBIsoln_withnoconsFails_zeroconf_3},\\ \ref{fig:fig_CBIsoln_withnoconsFails_zeroconf_4} as $n$ \\ increases \end{tabular}
		\\ \hline
		\multicolumn{1}{|c|}{\multirow{3}{*}{\begin{tabular}[c]{@{}c@{}}AVs with \\consecutive \\ failures\\ (Fig.~\ref{fig_sa_av_rnot0})\end{tabular}}} &
		\multicolumn{1}{c|}{\begin{tabular}[c]{@{}c@{}}Varying PK\ref{cons_pl_lowerbound}, \\ PK\ref{cons_engineering_goal} and $b$ \\ (Fig.~\ref{fig_sa_av_rnot0_vary_pk12})\end{tabular}} &
		\multicolumn{1}{c|}{\begin{tabular}[c]{@{}c@{}} Fig.s~\ref{fig:fig_CBIsoln_withFails_Phi2lt1minustheta_1},\\ \ref{fig:fig_CBIsoln_withFails_Phi2lt1minustheta_2}, \ref{fig:fig_CBIsoln_withFails_Phi2lt1minustheta_3},\\ \ref{fig:fig_CBIsoln_withFails_Phi2lt1minustheta_4} as $n$ \\ increases \end{tabular}} &
		\multicolumn{1}{c|}{\begin{tabular}[c]{@{}c@{}} Fig.s~\ref{fig:fig_CBIsoln_withFails_Phi2lt1minustheta_1},\\ \ref{fig:fig_CBIsoln_withFails_Phi2lt1minustheta_2}, \ref{fig:fig_CBIsoln_withFails_Phi2lt1minustheta_3},\\ \ref{fig:fig_CBIsoln_withFails_Phi2lt1minustheta_4} as $n$ \\ increases \end{tabular}} &
		\multicolumn{1}{c|}{\begin{tabular}[c]{@{}c@{}} Fig.s~\ref{fig:fig_CBIsoln_withFails_Phi2lt1minustheta_1},\\ \ref{fig:fig_CBIsoln_withFails_Phi2lt1minustheta_2}, \ref{fig:fig_CBIsoln_withFails_Phi2lt1minustheta_3},\\ \ref{fig:fig_CBIsoln_withFails_Phi2lt1minustheta_4} as $n$ \\ increases \end{tabular}} &
		\multicolumn{1}{c|}{\begin{tabular}[c]{@{}c@{}} Fig.s~\ref{fig:fig_CBIsoln_withFails_Phi2lt1minustheta_1},\\ \ref{fig:fig_CBIsoln_withFails_Phi2lt1minustheta_2}, \ref{fig:fig_CBIsoln_withFails_Phi2lt1minustheta_3},\\ \ref{fig:fig_CBIsoln_withFails_Phi2lt1minustheta_4} as $n$ \\ increases \end{tabular}} &
		\multicolumn{1}{c|}{\begin{tabular}[c]{@{}c@{}} Fig.s~\ref{fig:fig_CBIsoln_withFails_Phi2lt1minustheta_1},\\ \ref{fig:fig_CBIsoln_withFails_Phi2lt1minustheta_2}, \ref{fig:fig_CBIsoln_withFails_Phi2lt1minustheta_3},\\ \ref{fig:fig_CBIsoln_withFails_Phi2lt1minustheta_4} as $n$ \\ increases \end{tabular}} &\begin{tabular}[c]{@{}c@{}} Fig.s~\ref{fig:fig_CBIsoln_withFails_Phi2lt1minustheta_1},\\ \ref{fig:fig_CBIsoln_withFails_Phi2lt1minustheta_2}, \ref{fig:fig_CBIsoln_withFails_Phi2lt1minustheta_3},\\ \ref{fig:fig_CBIsoln_withFails_Phi2lt1minustheta_4} as $n$ \\ increases \end{tabular}
		\\ \cline{2-8} 
		\multicolumn{1}{|c|}{} &
		\multicolumn{1}{c|}{\begin{tabular}[c]{@{}c@{}}Varying PK\ref{cons_negative_dependence}   \\ (Fig.~\ref{fig_sa_av_rnot0_vary_PK3})\end{tabular}} &
		\multicolumn{1}{c|}{\begin{tabular}[c]{@{}c@{}} Fig.s~\ref{fig:fig_CBIsoln_withFails_Phi2lt1minustheta_1},\\ \ref{fig:fig_CBIsoln_withFails_Phi2lt1minustheta_2}, \ref{fig:fig_CBIsoln_withFails_Phi2lt1minustheta_3},\\ \ref{fig:fig_CBIsoln_withFails_Phi2lt1minustheta_4} as $n$ \\ increases \end{tabular}} &
		\multicolumn{1}{c|}{\begin{tabular}[c]{@{}c@{}} Fig.s~\ref{fig:fig_CBIsoln_withFails_zeroconf_1},\\ \ref{fig:fig_CBIsoln_withFails_zeroconf_2}, \ref{fig:fig_CBIsoln_withFails_zeroconf_3},\\ \ref{fig:fig_CBIsoln_withFails_zeroconf_4} as $n$ \\ increases \end{tabular}} &
		\multicolumn{1}{c|}{\begin{tabular}[c]{@{}c@{}} Fig.s~\ref{fig:fig_CBIsoln_withFails_Phi2lt1minustheta_1},\\ \ref{fig:fig_CBIsoln_withFails_Phi2lt1minustheta_2}, \ref{fig:fig_CBIsoln_withFails_Phi2lt1minustheta_3},\\ \ref{fig:fig_CBIsoln_withFails_Phi2lt1minustheta_4} as $n$ \\ increases \end{tabular}} &
		\multicolumn{1}{c|}{\begin{tabular}[c]{@{}c@{}} Fig.s~\ref{fig:fig_CBIsoln_withFails_Phi2lt1minustheta_1},\\ \ref{fig:fig_CBIsoln_withFails_Phi2lt1minustheta_2}, \ref{fig:fig_CBIsoln_withFails_Phi2lt1minustheta_3},\\ \ref{fig:fig_CBIsoln_withFails_Phi2lt1minustheta_4} as $n$ \\ increases \end{tabular}} &
		\multicolumn{1}{c|}{\begin{tabular}[c]{@{}c@{}} Fig.s~\ref{fig:fig_CBIsoln_withFails_zeroconf_1},\\ \ref{fig:fig_CBIsoln_withFails_zeroconf_2}, \ref{fig:fig_CBIsoln_withFails_zeroconf_3},\\ \ref{fig:fig_CBIsoln_withFails_zeroconf_4} as $n$ \\ increases \end{tabular}} & \begin{tabular}[c]{@{}c@{}} Fig.s~\ref{fig:fig_CBIsoln_withFails_Phi2lt1minustheta_1},\\ \ref{fig:fig_CBIsoln_withFails_Phi2lt1minustheta_2}, \ref{fig:fig_CBIsoln_withFails_Phi2lt1minustheta_3},\\ \ref{fig:fig_CBIsoln_withFails_Phi2lt1minustheta_4} as $n$ \\ increases \end{tabular}
		\\ \cline{2-8} 
		\multicolumn{1}{|c|}{} &
		\multicolumn{1}{c|}{\begin{tabular}[c]{@{}c@{}}Varying PK\ref{cons_positive_dependence}   \\ (Fig.~\ref{fig_sa_av_rnot0_vary_PK4})\end{tabular}} &
		\multicolumn{1}{c|}{\begin{tabular}[c]{@{}c@{}} Fig.s~\ref{fig:fig_CBIsoln_withFails_Phi2lt1minustheta_1},\\ \ref{fig:fig_CBIsoln_withFails_Phi2lt1minustheta_2}, \ref{fig:fig_CBIsoln_withFails_Phi2lt1minustheta_3},\\ \ref{fig:fig_CBIsoln_withFails_Phi2lt1minustheta_4} as $n$ \\ increases \end{tabular}} &
		\multicolumn{1}{c|}{\begin{tabular}[c]{@{}c@{}} Fig.s~\ref{fig:fig_CBIsoln_withFails_Phi2gt1minustheta_1},\\ \ref{fig:fig_CBIsoln_withFails_Phi2gt1minustheta_2}, \ref{fig:fig_CBIsoln_withFails_Phi2gt1minustheta_3},\\ \ref{fig:fig_CBIsoln_withFails_Phi2gt1minustheta_4} as $n$ \\ increases \end{tabular}} &
		\multicolumn{1}{c|}{\begin{tabular}[c]{@{}c@{}} Fig.s~\ref{fig:fig_CBIsoln_withFails_Phi2gt1minustheta_1},\\ \ref{fig:fig_CBIsoln_withFails_Phi2gt1minustheta_2}, \ref{fig:fig_CBIsoln_withFails_Phi2gt1minustheta_3},\\ \ref{fig:fig_CBIsoln_withFails_Phi2gt1minustheta_4} as $n$ \\ increases \end{tabular}} &
		\multicolumn{1}{c|}{\begin{tabular}[c]{@{}c@{}} Fig.s~\ref{fig:fig_CBIsoln_withFails_Phi2lt1minustheta_1},\\ \ref{fig:fig_CBIsoln_withFails_Phi2lt1minustheta_2}, \ref{fig:fig_CBIsoln_withFails_Phi2lt1minustheta_3},\\ \ref{fig:fig_CBIsoln_withFails_Phi2lt1minustheta_4} as $n$ \\ increases \end{tabular}} &
		\multicolumn{1}{c|}{\begin{tabular}[c]{@{}c@{}} Fig.s~\ref{fig:fig_CBIsoln_withFails_Phi2gt1minustheta_1},\\ \ref{fig:fig_CBIsoln_withFails_Phi2gt1minustheta_2}, \ref{fig:fig_CBIsoln_withFails_Phi2gt1minustheta_3},\\ \ref{fig:fig_CBIsoln_withFails_Phi2gt1minustheta_4} as $n$ \\ increases \end{tabular}} &\begin{tabular}[c]{@{}c@{}} Fig.s~\ref{fig:fig_CBIsoln_withFails_Phi2gt1minustheta_1},\\ \ref{fig:fig_CBIsoln_withFails_Phi2gt1minustheta_2}, \ref{fig:fig_CBIsoln_withFails_Phi2gt1minustheta_3},\\ \ref{fig:fig_CBIsoln_withFails_Phi2gt1minustheta_4} as $n$ \\ increases \end{tabular}
		\\ \hline \hline
	\end{tabular}}
	\\
	{\footnotesize\raggedright *The set of worst-case priors can be replaced by Fig.s~\ref{fig:fig_CBIsoln_withnoconsFails_Phi1gt1minustheta_1},  \ref{fig:fig_CBIsoln_withnoconsFails_Phi1gt1minustheta_2}, \ref{fig:fig_CBIsoln_withnoconsFails_Phi1gt1minustheta_3},  \ref{fig:fig_CBIsoln_withnoconsFails_Phi1gt1minustheta_4} as $n$ increases, since the parameters chosen cover these cases. \par}
\end{table*}

\begin{insight}[failures can allow confidence in $b$ to grow to 1]
	\label{insight_failures_allow_conf_1}
	Consider the following two possibilities when failures occur: {\bf i)} \emph{no consecutive failures} (so $s>r=0$), \emph{and prior evidence weakly supports positively correlated executions} ($\phi_2\leqslant\theta$). Then initially, execution outcomes are evidence of negative correlation (possibly from a system with \emph{pfm} larger than $b$). However, as the number of successful executions increases (with no more failures), it becomes less likely that the executions are negatively (or positively) correlated; otherwise, more failures should have been observed. Instead, it's more likely that the successes are occurring because the \emph{pfm} is smaller than $b$; {\bf ii)} \emph{consecutive failures} (so $s>r>0$), \emph{and prior evidence weakly supports negatively correlated executions} ($\phi_1\leqslant\theta$). Again, initially,  correlated executions (possibly from a system with \emph{pfm} larger than $b$) are most likely. And like the previous case, as the successful executions increase (with no more failures), it becomes less likely that the executions \emph{are} correlated, and more likely that the successes are due to the \emph{pfm} being smaller than $b$.
\end{insight}


\begin{figure*}[h]
	\centering
	\begin{subfigure}[b]{0.33\textwidth}
		\centering
		\includegraphics[width=\textwidth]{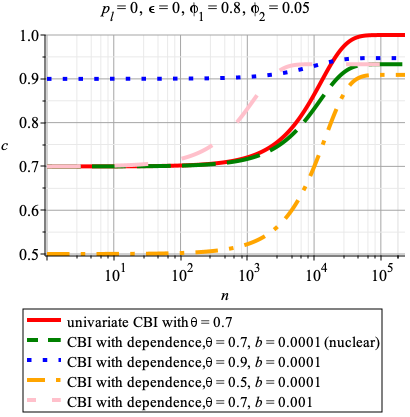}
		\caption{Varying PK\ref{cons_engineering_goal} and $b$}
		\label{fig_sa_nuclear_vary_PK2}
	\end{subfigure}
	\hfill
	\begin{subfigure}[b]{0.33\textwidth}
		\centering
		\includegraphics[width=\textwidth]{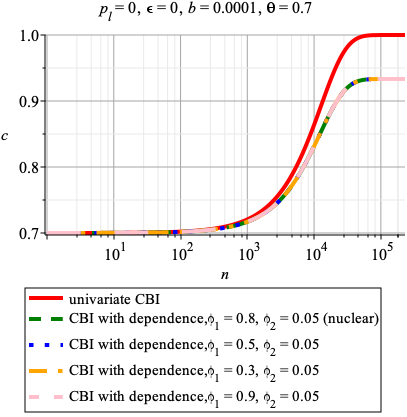}
		\caption{Varying PK\ref{cons_negative_dependence}}
		\label{fig_sa_nuclear_vary_PK3}
	\end{subfigure}
	\hfill
	\begin{subfigure}[b]{0.33\textwidth}
		\centering
		\includegraphics[width=\textwidth]{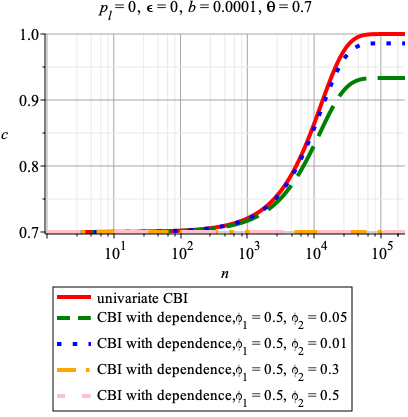}
		\caption{Varying PK\ref{cons_positive_dependence}}
		\label{fig_sa_nuclear_vary_PK4}
	\end{subfigure}
	\caption{Sensitivity analysis varying PKs for the nuclear reactor safety protection system scenario.}
	\label{fig_sa_nuclear}
\end{figure*}

\begin{figure*}[h]
	\centering
	\begin{subfigure}[b]{0.33\textwidth}
		\centering
		\includegraphics[width=\textwidth]{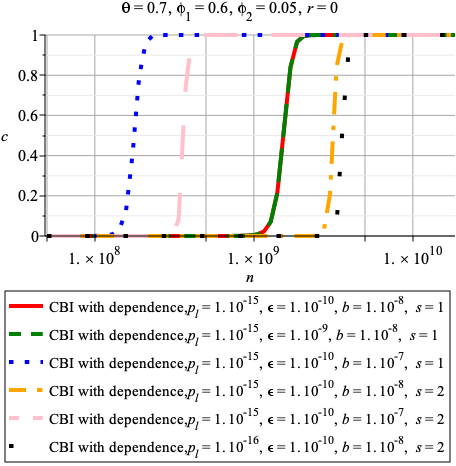}
		\caption{Varying PK\ref{cons_pl_lowerbound}, PK\ref{cons_engineering_goal} and $b$}
		\label{fig_sa_av_r0_vary_pk12}
	\end{subfigure}
	\hfill
	\begin{subfigure}[b]{0.33\textwidth}
		\centering
		\includegraphics[width=\textwidth]{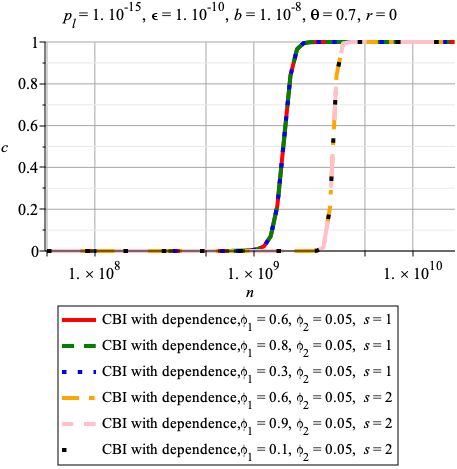}
		\caption{Varying PK\ref{cons_negative_dependence}}
		\label{fig_sa_av_r0_vary_PK3}
	\end{subfigure}
	\hfill
	\begin{subfigure}[b]{0.33\textwidth}
		\centering
		\includegraphics[width=\textwidth]{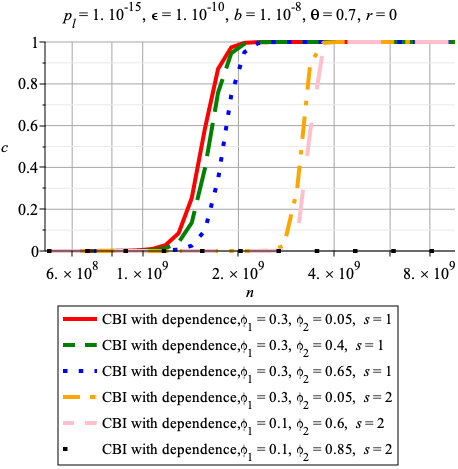}
		\caption{Varying PK\ref{cons_positive_dependence}}
		\label{fig_sa_av_r0_vary_PK4}
	\end{subfigure}
	\caption{Sensitivity analysis varying PKs for the AV-safety scenario with no consecutive failures.}
	\label{fig_sa_av_r0}
\end{figure*}

\begin{figure*}[h]
	\centering
	\begin{subfigure}[b]{0.33\textwidth}
		\centering
		\includegraphics[width=\textwidth]{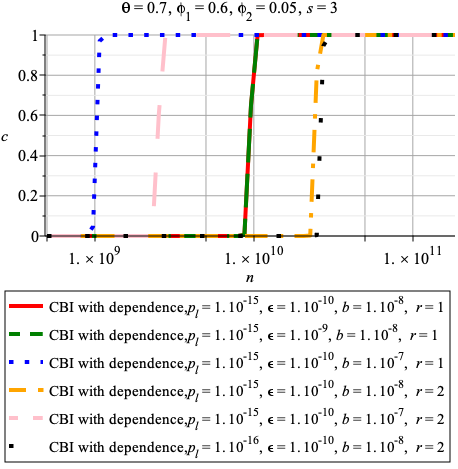}
		\caption{Varying PK\ref{cons_pl_lowerbound}, PK\ref{cons_engineering_goal} and $b$}
		\label{fig_sa_av_rnot0_vary_pk12}
	\end{subfigure}
	\hfill
	\begin{subfigure}[b]{0.33\textwidth}
		\centering
		\includegraphics[width=\textwidth]{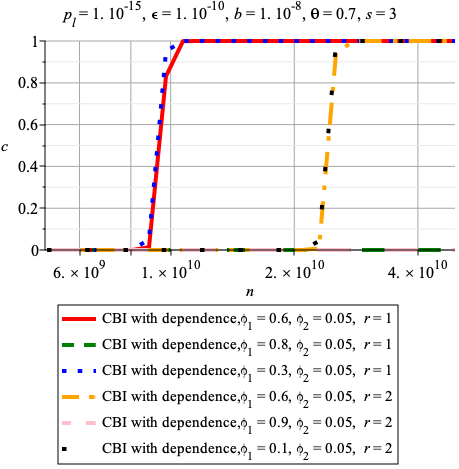}
		\caption{Varying PK\ref{cons_negative_dependence}}
		\label{fig_sa_av_rnot0_vary_PK3}
	\end{subfigure}
	\hfill
	\begin{subfigure}[b]{0.33\textwidth}
		\centering
		\includegraphics[width=\textwidth]{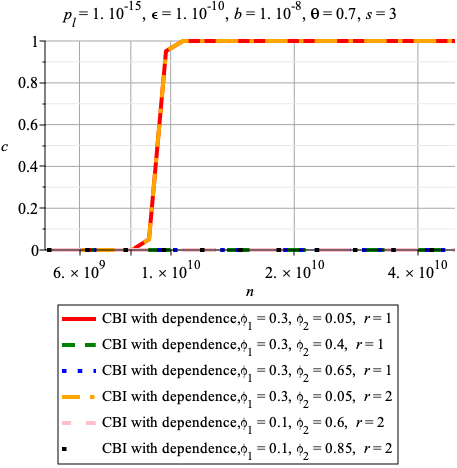}
		\caption{Varying PK\ref{cons_positive_dependence}}
		\label{fig_sa_av_rnot0_vary_PK4}
	\end{subfigure}
	\caption{Sensitivity analysis varying PKs for the AV-safety scenario with some consecutive failures (note, when $\phi_2\geqslant 1-\theta$, the plots are effectively 0, but not 0).}
\label{fig_sa_av_rnot0}
\end{figure*}

\section{The Sensitivity of Confidence Bounds to Changes in Prior Knowledge and Evidence}
\label{sec_senstivity_analysis}
For the assessor that is uncertain about PK values, this section illustrates how to check the sensitivity/robustness of confidence in $b$ to changes in PK values. The analyses also gives insight into how confidence responds to a ``strengthening'' of prior reliability evidence. We systematically vary PK parameters, the bound $b$, and the execution outcomes. The prior distributions used in the numerical analyses are summarised in Table \ref{tab_wcp_sa}.

\subsection{The Nuclear Safety Protection System Scenario}
Fig.~\ref{fig_sa_nuclear} provides 3 sub-figures highlighting the effects of changes in PK parameters (except PK\ref{cons_pl_lowerbound}).

In Fig.~\ref{fig_sa_nuclear_vary_PK2}, ``CBI with dependence'' curves are asymptotically more conservative than the univariate CBI solid curve. However, because the system could be fault-free, the curves show that confidence in $b$ always increases with increasing failure-free operational evidence. Also, the smaller $\phi_2$ becomes, or the bigger $b$ or $\theta$ are, the greater confidence in $b$ can become.

Fig.~\ref{fig_sa_nuclear_vary_PK3} illustrates how changes in $\phi_1$ have no apparent impact on confidence in $b$. However, Fig.~\ref{fig_sa_nuclear_vary_PK4} shows that the smaller $\phi_2$ becomes, the closer the ``CBI with dependence'' curve gets to the univariate CBI curve. While, for $\phi_2\geqslant 1-\theta$, the confidence in $b$ is the constant horizontal line $c=\frac{\theta}{\theta+(1-b)(1-\theta)}$ (see Fig.~\ref{fig:fig_CBInoFails_maintxt_Phi1ltThetaPhi2gt1minusTheta} with $p_l=\epsilon=0$).


\xingyu{all lines are asymptotic to the analytical value mentioned in section 3.2; the two straight horizontal lines in the right say they are not functions of n, which can be seen from analytical results}

\subsection{The AV Scenario}
Recall that, unlike the baseline and nuclear protection system scenarios, failures occur (albeit rarely\footnote{Due to the severe negative impact failures can have in SCSs, we only consider operational campaigns with no more than a few failures.}) during sufficiently long road testing campaigns (so $s>0$). The confidence in bound $b$ from Theorem \ref{theorem_CBI_withFailures_and_pl} is dependent on whether some of these failures are consecutive ($r>0$) or not ($r=0$). We conduct two sets of analyses\footnote{Note, Fig.~\ref{fig_av_with_failures_example3} shows the result of $r$ becoming nonzero for fixed $s$.} along these possibilities.

\subsubsection{With No Consecutive Failures $(s>r=0)$}
In Fig.~\ref{fig_sa_av_r0_vary_pk12} confidence in $b$ changes in response to increases in $s$ and $b$. The increase in $\epsilon$ has no noticeable impact -- the solid and dashed curves overlap. However, when $b$ is increased, the required number of executions $n$ to support a given confidence level reduces by a similar order of magnitude. In contrast, an additional failure increases $n$ significantly -- so the dotted and solid curves lie to the left of the spacedashed and  dashdotted curves, respectively. This is consistent with the findings of \cite{littlewood_conservative_1997}.


Changes in $\phi_1$ have no noticeable effect on confidence in $b$ (see Fig.~\ref{fig_sa_av_r0_vary_PK3}). Perhaps because there is very little operational evidence to support negative correlations -- i.e. only few instances of ``switching'' between failure and success. 


On the other hand, changes in $\phi_2$ have a clear impact, as shown in Fig.~\ref{fig_sa_av_r0_vary_PK4}. An increase in $\phi_2$ requires an increase in $n$. Moreover, when $\phi_2\geqslant\theta$, confidence in $b$ becomes 0 \emph{for all} $n$ (see the prior distributions for $r=0$ in Fig.~\ref{fig_CBIsoln_Fails_zeroconf}).

\xingyu{The right one covers all those 3 sets of priors in KS' blackboard picture. It seems, overall, not quite sensitive to phi1 and phi2, unless phi2 is bigger than theta...(and this actually holds for both previous cases...);\kizito{for the left: green and red are different for large $n$. Black line requires more failure-free test for confidence to grow, because conservatively we need to claim something more stringent (i.e. (pl,pl)) in this case; For the middle: the biggest impact comes from the failures--each extra failure requires significantly more testing (we should cross reference to earlier publications). the curves are not identical but the difference is negligible. to check when s is big, change phi1 will separate the curves..; For the right one, increasing phi2 is undesirable; when phi2 is bigger than theta,it becomes 0 posterior confidence no matter how many tests}}

\subsubsection{With Some Consecutive Failures $(s>r>0)$}
Despite consecutive failures, Figs.~\ref{fig_sa_av_rnot0_vary_pk12} and \ref{fig_sa_av_r0_vary_pk12} broadly give the same insights.

When prior confidence in negative correlations is strong (i.e. $\phi_1\geqslant\theta$), Fig.~\ref{fig_sa_av_rnot0_vary_PK3} shows confidence in $b$ is 0 \emph{for all} $n$, just like the condition on $\phi_2$ in the $r=0$ scenario (see the left column of prior distributions in Fig.~\ref{fig_CBIsoln_Fails_zeroconf}). While a large $\phi_2$ (i.e. $\phi_2\geqslant 1- \theta$) gives practically 0 confidence in $b$ in Fig.~\ref{fig_sa_av_rnot0_vary_PK4}.\kizito{I can't see how increasing $r$ in the final plot requires more $n$} 

\xingyu{for the left, all curves say the same lesson as before; in summary, bigger phi1, bigger phi2 and bigger r require more tests.}

We now summarise the results of the sensitivity analysis:
\begin{insight}[results of sensitivity analysis]
Confidence in $b$ from ``CBI with dependence'' is insensitive to changes in $\phi_1$, although confidence is 0 for $\phi_1\geqslant\theta, r>0$. Confidence in $b$ is sensitive to $\phi_2$. Both the number of failures $s$ and the number of consecutive failures $r$ have a significant effect on confidence in $b$. When there are no failed executions $p_l$ is irrelevant, with some effect when failures occur. The impact of the  engineering goal, $\epsilon$, is consistent with previous CBI models, e.g. \cite{zhao_assessing_2020} -- smaller $\epsilon$ gives greater confidence in $b$ when no failures are observed, but has no significant impact when there are failures. 
\end{insight}

\section{Discussion}
\label{sec_discussion}
\begin{figure*}[h!]
\centering
\includegraphics[width=0.9\linewidth]{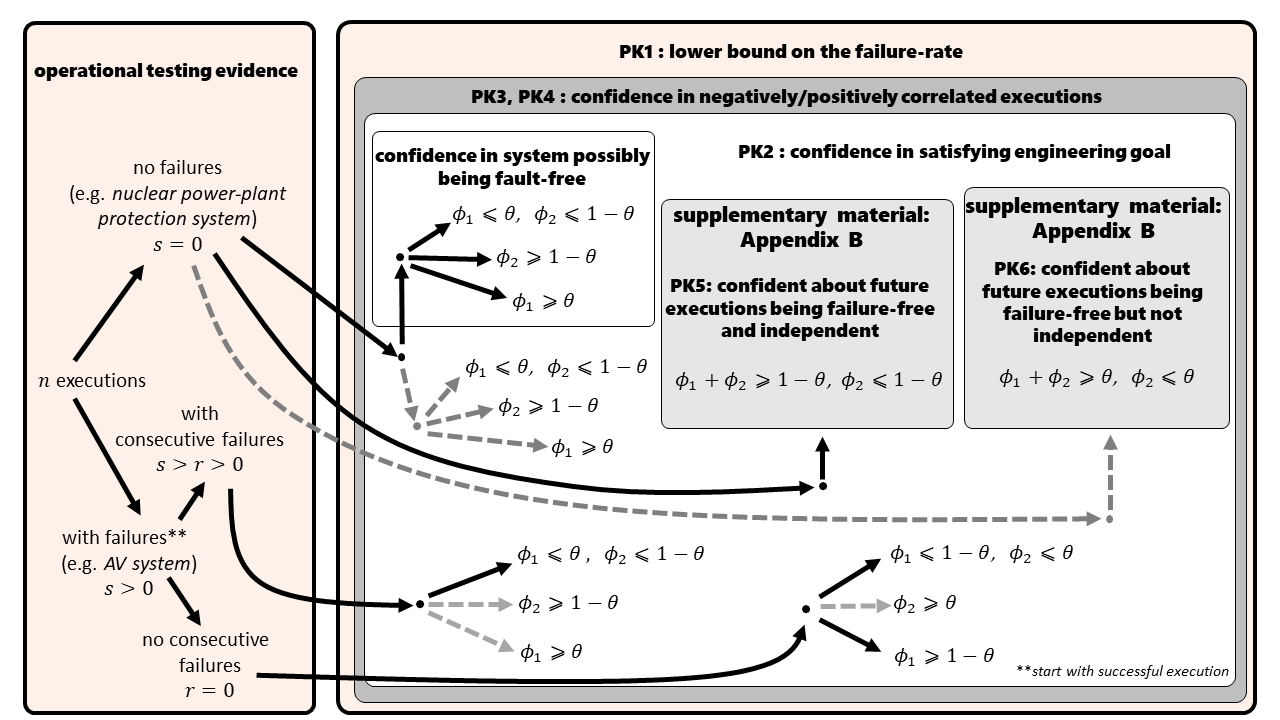}
\caption{{\footnotesize An updated overview of the various assessment scenarios analysed in this paper (\emph{cf.} Fig.~\ref{fig_theBigPicture_intro}), indicating the possible testing outcomes and prior confidence an assessor could have. The dashed paths are scenarios where either testing is futile in supporting reliability claims, or the scenarios are of little practical interest. \normalsize}}
\label{fig_theBigPicture_discussion}
\end{figure*}


\subsection{Incorporating Doubts about Model Assumptions into Reliability Assessments}
\label{sec_includingDoubtsinDepClaims}
In Bayesian assessments, one should always question the properties of the statistical model (such as independent executions), and whether these properties are appropriate in a given real-world context. Our use of CBI illustrates a general, formal method for incorporating such doubts -- about the suitability of any statistical model properties -- directly into the assessment. Since the results of CBI are guaranteed to be conservative (see Insight \ref{insight_gist_cbi}), this is a conservative version of a Bayesian approach first proposed by \cite{Draper_1995}. Draper suggests that if one is uncertain about the suitability of model properties, one should perform the inference with an ``expanded'' model that weakens the properties in question and has the original model as a special case. In this sense, our choice of the Klotz model is not arbitrary: it is the simplest model we know that exhibits dependent, stationary Bernoulli trials, and that has ``independent executions'' as a special case. This approach is incremental; if one is doubtful of the Klotz model, a model expansion of the Klotz model can be used for assessment.

Using this approach in \cite{SalakoZhao_TSE_2023}, we illustrated how to formally check the impact of one's doubts (about i.i.d. execution outcomes) on reliability claims, where such claims are based on Bayesian inference with operational testing data. The results suggested that, for on-demand systems, using the i.i.d. assumption in assessments could result in extremely optimistic claims, but not always. By weakening the prior knowledge an assessor must justify before commencing operational testing, the current paper extends this previous work to a much wider class of scenarios. For example, there are assessment scenarios for which the i.i.d. assumption supports claims ``close to being'' conservative. ``How close'' will depend on the strength of reliability evidence available (e.g., Fig.~\ref{fig_av_with_failures_example3} shows how accumulating operational evidence can make i.i.d.-based claims less optimistic).             


\subsection{Which Prior Beliefs give Conservative Confidence Bounds?}
Only certain prior beliefs about the \emph{pfe}, $X$, and the dependence between executions, $\Lambda$, will ensure an assessor's posterior confidence in a \emph{pfe} upper bound $b$ is conservative. The CBI solutions of section \ref{sec_conf_bound_in_reliability} make clear what these beliefs are -- i.e., these beliefs are encoded in the prior distributions that solve Theorems \ref{theorem_thm1_baseline} and \ref{theorem_CBI_withFailures_and_pl}
. Specifically, the beliefs are encoded as those $(x,\lambda)$ locations in region $\mathcal R$ that each of these distributions assign nonzero probability to (e.g., see Fig.s~\ref{fig_CBInoFails_maintxt}, \ref{fig_CBIsoln_pl}, \ref{fig_beliefsAboutIndependence}). Four main factors determine such beliefs: {\bf i)} the execution outcomes during operational testing (i.e., the successes/failures); {\bf ii)} prior knowledge, e.g. evidence strongly suggests the executions are positively correlated (i.e., large $\phi_2$), not negatively correlated (i.e., small $\phi_1$); {\bf iii)} which beliefs about $(X, \Lambda)$ are \emph{least likely} to have produced the testing outcomes, if the \emph{pfe} is less than $b$; and {\bf iv)} which beliefs are \emph{most likely} to have produced the outcomes, if the \emph{pfe} is larger than $b$. Here, ``least likely'' and ``most likely'' are determined by the Klotz likelihood. 

In addition to ensuring an assessor's confidence is conservative, here are two more reasons for why such beliefs are important. Firstly, they are consistent with the available evidence, since the beliefs are encoded in prior distributions that are (the limits of sequences of prior distributions that are) consistent with the evidence. So, the assessor cannot ``rule out'' these beliefs without extra evidence, and the consequences of these beliefs should be taken seriously. Secondly, when reliability evidence is ``weak'', these beliefs can make operational testing futile: the more one observes successful executions, the more doubtful one becomes about the \emph{pfe}. Assessors should have enough evidence before embarking on operational testing; we elaborate on these points below.

For example, consider when all of the executions are successful (e.g. Example~\ref{exp_baseline_scenario} of section~\ref{sec_conf_bound_in_reliability}). Superficially, this suggests the executions are strongly positively correlated, or the system's \emph{pfe} is low. However, the assessor can take the more conservative view that these successful executions are evidence the system is not quite good enough. The assessor does this by holding the following beliefs: {\bf i)} if the \emph{pfe} is larger than $b$, then successful tests are most likely if the following two beliefs are true: the executions are ``perfectly positively correlated'' (i.e. $\lambda=1$), and the \emph{pfe} is ``as small as possible, but no smaller than $b$''. In terms of the Klotz likelihood, these beliefs are encoded as the location\footnote{$(b,1)$ is a \emph{limit point} -- the limit of a sequence of $\mathcal R$ locations that represent increasingly pessimistic beliefs. See supplementary material, \ref{sec_app_B}.} $(b, 1)$ in $\mathcal R$; {\bf ii)} if instead, the \emph{pfe} \emph{is} smaller than $b$, then successful tests are least likely if the following two beliefs are true: the executions are as ``negatively correlated'' as possible, and the \emph{pfe} is ``as big as possible, but no bigger than $b$''. These beliefs are encoded as the location $(b, 0)$. 

PKs refine these beliefs, giving the prior distributions shown in Fig.s~\ref{fig_CBInoFails_maintxt} and \ref{fig_CBIsoln_noFails}. These beliefs imply that as failure-free executions increase without bound, in order for the assessor to be conservative, their confidence in the bound must diminish (e.g., see dotted curve in Fig.~\ref{fig_example2_nuclear}). Because the increasing number of successful executions makes all other beliefs unlikely, except the beliefs that the executions are ``perfectly positively correlated'' and the \emph{pfe} is ``as small as possible, but no smaller than $b$''. The Klotz likelihood tends to zero at all of the nonzero probability locations in Fig.~\ref{fig_CBInoFails_maintxt}, except at the point $(b,1)$ where the likelihood has the constant value $(1-b)$ for all $n$. Failure-free executions eventually undermine confidence. 

It is possible that the successful executions are occurring because the \emph{pfe} is very small. The problem with this possibility is that any very small \emph{pfe} eventually becomes too unlikely to have produced a sufficiently large number of successful executions. Moreover, there are more pessimistic reasons (not disallowed by the evidence) for runs of successful executions. For example, all of the test inputs may have been ``easy'' for the software to respond to. This could happen by chance: e.g., the operational environment just happens to be submitting a sequence of easy inputs. Overcoming such problems of chance may require an infeasible amount of testing. Another pessimistic reason for the successful executions could be an error in the test-case generation procedure, which systematically fails to generate inputs that lie in the system's failure region. Or an error in the test oracle, which fails to indicate true failures. Such possibilities are consistent with previous works that show how ``favourable'' operational evidence can undermine confidence during assessments; e.g., \cite{littlewood_use_2007,SalakoZhao_TSE_2023}. To make progress with using failure-free evidence, an assessor must use appropriate additional evidence to rule out pessimistic reasons for not observing any failures.   

If the system \emph{could} be fault-free -- in modelling terms, the engineering goal $\epsilon$ is zero -- then a fault-free system \emph{could} produce the increasing number of successful executions. This possibility allows the assessor's confidence in the bound $b$ to grow, because the confidence an assessor has in the bound being satisfied is never smaller than their confidence in the system being fault-free. The confidence in the system being fault-free increases as the number of successful executions increases, thus increasing confidence in $b$ too. See Fig.~\ref{fig_example2_nuclear}, where confidence in $b$ increases from $\theta$ to $\frac{\theta}{\theta+(1-b)\phi_2}$ along the dotted curve. Contrarily, the successes could also be produced by more pessimistic reasons, so confidence in these more pessimistic reasons also increases. For example, the dotted curve in Fig.~\ref{fig_example2_nuclear} and the prior distribution in Fig.~\ref{fig:fig_CBInoFails_maintxt_Phi1gtTheta} that produced this curve, together imply that this confidence increases from $\frac{(1-b)\phi_2}{\theta+(1-b)(1-\theta)}$ to $\frac{(1-b)\phi_2}{\theta+(1-b)\phi_2}$, since $\phi_2\leqslant 1-\theta$ is assumed. So, the assessor will always be uncertain about whether the \emph{pfe} is better than $b$.

So far we have discussed conservative beliefs when only failure-free executions are observed. Ironically, a few failures during testing can help overcome the pessimism we have highlighted. As the number of successful executions increases, any initial failures become evidence that the executions cannot be ``perfectly positively correlated''. Otherwise, if the executions \emph{were} this strongly correlated, either no successes or no failures would have occurred! The previously pessimistic belief implied by location $(b,1)$ -- where $\lambda=1$ -- must now move to a different pessimistic location where $x<\lambda<1$ (indicating less strong positive correlation). Now, as successful executions increase, it eventually becomes most likely that the successes are being produced by a system with a \emph{pfe} smaller than $b$. Consequently, the assessor's confidence in $b$ eventually approaches certainty, although this can require considerable amounts of successful executions because of the few failures (e.g. Fig.~\ref{fig_av_with_failures_example3}). 

Like the ``failure-free'' scenario, there are also situations where testing is futile when some failures occur. The futility here is even more extreme than before: confidence in the bound is now identically zero, no matter how many more successful tests are observed. For instance, note the zero confidence in Fig.s~\ref{fig_sa_av_r0} and \ref{fig_sa_av_rnot0} from priors such as those in Fig.~\ref{fig_CBIsoln_Fails_zeroconf}. If the failures during testing are isolated and few, then strong confidence in the executions being positively correlated (i.e., $\phi_2\geqslant\theta$) undermines confidence in the bound $b$. If the failures are clustered and few, then strong confidence in the executions being negatively correlated (i.e., $\phi_1\geqslant\theta$) undermines confidence in $b$. In both situations, the assessor's prior confidence in the system being ``very reliable'' is simply not strong enough to rule out pessimistic causes for the testing outcomes.  

The various assessment scenarios are summarised in Fig.~\ref{fig_theBigPicture_discussion} (an updated version of Fig.~\ref{fig_theBigPicture_intro}). Each path through the figure -- starting from the far-left at the ``$n$ executions'' node -- gives the operational evidence and the prior confidence (i.e., PK parameter ranges) an assessor could have. The paths containing dashed lines are paths to be weary of; paths that our analyses reveal to be asymptotically futile or of little practical interest. Assessor's on such paths should seek stronger evidence of the system being sufficiently reliable prior to commencing testing.

\subsection{Is assuming Independence always very Optimistic?}
\begin{figure}[htbp!]
\centering
\includegraphics[width=0.45\linewidth]{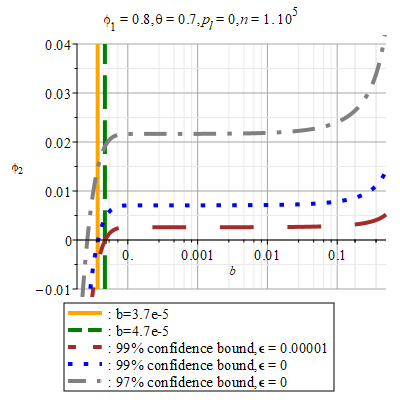}
\caption{{\footnotesize The relationship between prior confidence $\phi_2$ in positively correlated executions and the $(1-\alpha)\times100\%$ confidence bound $b$, when the system is subjected to $10^5$ tests without failure. The curves are obtained from posterior confidence given by the prior in Fig.~\ref{fig:fig_CBInoFails_maintxt_Phi1gtTheta}. The values of $\alpha$ plotted here are curves for $\alpha=0.01, 0.03$. The smaller $\phi_2$ becomes, the smaller the \emph{pfe} upper bound $b$ that can be supported at a given level of confidence. \kizito{legend needs to be fixed}\normalsize}}
\label{fig_confidencebounds_and_phi2}
\end{figure}
Concerning the impact of assuming independent executions, \cite{chen_binary_1996} observe the following when using classical inference with their Markov model: for a given number of successes/failures during testing, {\bf i)} positively correlated executions give bigger confidence bounds (i.e., worse values for \emph{pfe} estimates), compared with using Thayer \emph{et al.}'s independent executions model; {\bf ii)} a ``\emph{not so big}'' positive correlation gives confidence bounds that are comparable to those given by Thayer \emph{et al.}'s model. This is consistent with our findings. 

Indeed, consider the examples in Fig.~\ref{fig_confidencebounds_and_phi2}, where all $10^5$ executions are successful, and the system could be ``fault-free''. Then, for instance, the $99\%$ confidence bound from univariate CBI (i.e., CBI that assumes independence) is $3.7\times 10^{-5}$ -- the smallest $b$-value from the $99\%$ confidence bound (dotted) curve, precisely when $\phi_2=0$. All other $99\%$ confidence bounds from this curve -- i.e., all $99\%$ confidence bounds from CBI using the Klotz model, $\epsilon=0$ and $\phi_2>0$ -- are larger. Moreover, as $\phi_2$ increases, the confidence bounds $b$ increase. While, the smaller the assessor's prior confidence in positive correlation (i.e., $\phi_2$ decreases), the closer the Klotz confidence bound becomes to the confidence bound under independent executions (i.e., the intersection of the curves with the horizontal axes). 

If the system cannot be fault-free, CBI with the Klotz model is significantly more conservative. The long-dashed curve gives the $99\%$ confidence bounds when $\epsilon=10^{-5}$. Here, the univariate CBI $99\%$ confidence bound is $4.7\times 10^{-5}$ (at $\phi_2=0$) -- ``4 orders of magnitude'' smaller than the $99\%$ confidence bound $0.1$ from the Klotz model with $\phi_2=3.5\times 10^{-3}$. 

Interestingly, unlike Chen and Mill's other observation (that a reduction in confidence bounds accompanies an increase in negative correlation), section \ref{sec_senstivity_analysis} suggests that confidence bounds from failure-free testing are insensitive to prior confidence in negative correlation $\phi_1$. In fact, since $\phi_1\geqslant\theta$ in  Fig.~\ref{fig_confidencebounds_and_phi2}, the relevant posterior confidence is given by the prior in Fig.~\ref{fig:fig_CBInoFails_maintxt_Phi1gtTheta} as $\frac{(1-\epsilon)\left(1-\sfrac{\epsilon}{1-\epsilon}\right)^{n-1}\theta}{(1-\epsilon)\left(1-\sfrac{\epsilon}{1-\epsilon}\right)^{n-1}\theta + (1-\theta-\phi_2)(1-b)^n + (1-b)\phi_2}$, which doesn't depend on $\phi_1$. ``No failures'' supports confidence in positive correlation $\phi_2$, and undermines confidence in negative correlation $\phi_1$. 

So, the plots illustrate how conservative confidence bounds can be very sensitive to confidence in positive correlation -- i.e., small changes in $\phi_2$ can result in ``orders of magnitude'' changes in confidence bounds. If evidence strongly supports the executions being positively correlated, the confidence bounds obtained under assuming independence can be quite optimistic. On the other hand, \ref{sec_post_conf_in_independence} and Fig.~\ref{fig_doubtInIndependence_noperfection} of the supplementary material show how being skeptical about independent executions can be initially optimistic; giving smaller confidence in $b$ than the confidence from univariate CBI. While strongly believing in independence can be conservative initially. As successes accumulate, these roles between ``skepticism '' and ``strong belief'' are reversed, with ``skepticism'' in independence eventually giving conservative confidence and ``strong belief'' giving optimistic confidence.


\subsection{Limitations, Generalisations and Future Work}

The following Klotz model limitations, first highlighted in \cite{SalakoZhao_TSE_2023}, remain.

Using a relatively simple model of dependent Bernoulli trials -- i.e., the Klotz model -- we have illustrated how one might account for dependent executions in conservative reliability assessments. 
Of course, there is scope for studying the implications of more expressive failure-models. For instance, many systems experience different types of failure, some of which may be considered benign. Some Markov models that capture this include the models of \cite{Csenki_RecoveryBlocks_1993,goseva_popstojanova_failure_2000,Bondavalli_1999}. In contrast, the Klotz model treats all failure-types (and all successes-types) identically, in terms of how likely they are to occur, and the model ignores variations in the impact different failure-types have on system stakeholders when they occur. 

Another Klotz model limitation is that positive correlation in both of its forms -- i.e., whether failures are likely to follow previous failures or successes follow previous successes -- are captured by the size of parameter $\lambda$ relative to $x$ (see Remark \ref{remark_3_caes_of_dependance}). A further limitation is that the type of dependence -- i.e., whether executions are positively or negatively correlated, or independent -- is fixed for the duration of the system's operation. Certainly, there are practical situations where dependence can vary significantly over time; e.g., a change in the system's internal state makes failures much more/less likely. Or dependence can exist between several executions separated in time. Or, the sequence of executions could be halted whenever a failure occurs and the software could be fixed, before the software is allowed to resume executing -- thus altering the faults the software contains and the dependence among execution outcomes. Accounting for such dependence variation requires a failure-model that explicitly captures time-dependent correlations. These scenarios justify a weakening of the conditional independence in the Klotz model: in the model, $T_i$ is conditionally independent of $T_{i-2}, T_{i-3}, \ldots, T_1$ given $T_{i-1}$. In future work, it will be interesting to consider longer dependence structures (over several ``time steps'' into the past) -- e.g., $T_i$ being dependent on the last ``$i-1$'' execution outcomes. By applying the general conservative approach illustrated in this paper, an assessor can check the robustness of assessment claims based on models with more general dependence structures.

The Klotz model is a 1st-order stationary stochastic process (see \cite{klotz_statistical_1973,SalakoZhao_TSE_2023}). \emph{pfe}s used in reliability assessment make sense when the failure process is stationary. Because then, the probability of the system failing its $n$-th execution is the same for all $n$, and it equals the \emph{pfe}. This, despite the conditional probability of failing the very next execution being dependent on, say, the success/failure of the last execution. Consequently, upper confidence bounds on such \emph{pfe}s are useful measures of reliability in those practical scenarios characterised by a stationary failure process. But when failure probabilities are time-dependent, one should forego using \emph{pfe}s in assessment claims and opt for more suitable reliability measures, such as the probability of failure-free operation in the future (see \cite{strigini_testing_1996}).


Even with a 1st-order stationary model, it's still worth studying the impact of the independence assumption on reliability measures like the probability of future failure-free operation. Previous CBI studies have shown that an assessor's justifications for a conservative claim are often different for different measures -- even if the justifications are ultimately based on the same PKs. Also, some measures may be more sensitive to PK changes than other measures.


\section{Concluding Remarks}
\label{sec_conclusion}
Statistically independent software executions are often assumed when assessing software reliability. If inappropriate, this assumption can result in (dangerously) optimistic reliability claims. By formalising informal notions of ``doubting'' the independence assumption, and by employing conservative Bayesian methods, this work demonstrates how such doubts can be accounted for in assessments.

This paper contains analyses of various assessment scenarios. This involved the constrained mathematical optimisation of an assessor's confidence in an upper bound on the probability of failure per execution (\emph{pfe}), after observing the system in operation. 
The work highlights a number of practical considerations. For example, a system exhibiting no failures during operation can give \emph{less} confidence in a \emph{pfe} bound, compared with if the system \emph{had} exhibited failures. Or confidence can be very sensitive to failures; each additional failure means significantly more failure-free operation is needed for confidence to grow.

The scope of the results makes clear that a nuanced answer is required to the question of whether assuming independence undermines assessments. The answer depends, often sensitively, on various factors outlined in the paper. So that sometimes, the independence assumption has ``\emph{little to no}'' impact on conservatism. And sometimes, the impact is simply too great to ignore. A ``case-by-case'' approach to estimating this impact in practice is advised, and the methods and many solutions in this paper provide assessors/practitioners with the means to do this. 


\section*{Acknowledgements} This work was partly funded by the European Union's Horizon 2020 Research and Innovation Programme under grant agreement No 956123, and by the UK EPSRC through the End-to-End Conceptual Guarding of Neural Architectures [EP/T026995/1]. We thank Bev Littlewood for comments on earlier versions of the paper.

\bibliography{ref}

\bigskip

\begin{biography}{Kizito Salako} is a Lecturer at the department of computer science, City, University of London. He holds a double honours (1st-class) degree in mathematics and statistics from the University of Lagos; a Master of advanced study in mathematics degree from the University of Cambridge (where he was both a Shell Centenary scholar and a Commonwealth scholar); and a PhD in computer science from City, university of London. Kizito is passionate about applications of probability theory, Bayesian statistics, geometry and machine-learning, when simulating, assessing and forecasting the (failure) behaviour of software-based systems. His research produces statistical techniques that support conservative dependability claims for safety-critical systems. He also builds simulations of large-scale, complex, interdependent, critical infrastructure, in order to forecast the occurrence and impact of cascading failures, cyber-attacks and the efficacy of mitigation strategies.
\bigskip

\end{biography}

\begin{biography}{Xingyu Zhao} is a Assistant Professor in Safety-Critical Systems at WMG, University of Warwick. After receiving Bachelor and Masters degrees from the Beihang University, he earned a PhD in computer science at the Centre for Software Reliability, City, University of London in 2017. His research expertise covers probabilistic verification of autonomous systems, Bayesian inference with partial/vague prior knowledge, reliability assessment and safety assurance, and trustworthy AI. He has published 40+ papers in interdisciplinary fields of software engineering, AI, and system safety and reliability. Beyond publications, he has secured funding from UK EPSRC, UK DSTL and Innovate UK as a co-Investigator.
\bigskip
\end{biography}

\newpage
\appendix

\section{Conservative Bayesian Assessment}
\label{sec_app_B}
\subsection{Klotz Failure-Model Likelihood Function}
\label{sec_LikeFnAltForms}
The Klotz failure-model is naturally expressed in coordinates $(x, \lambda)$, where $x$ is the Bernoulli frequency parameter and $\lambda$ is the model's correlation parameter. If $\lambda^{\ast}$ denotes $\max\left\{0,\frac{2x-1}{x}\right\}$, the Klotz model likelihood function is well-defined\footnote{Strictly speaking, the likelihood function in \eqref{eqn_xKlotzlklhdFn_appndx} is not well-defined at the point $(1,1)$ except for appropriate ranges of the greek exponents. The likelihood's value at this point is $0$, and thus well-defined, if either the demand executions begin with a success, or $\delta>0$ (so failure is not an absorbing state). These sufficient conditions are satisfied in all practical scenarios of interest.} over the region $\mathcal R$ defined by $0\leqslant x \leqslant 1$, $\lambda^{\ast}\leqslant\lambda\leqslant 1$ (depicted in Fig.s~\ref{fig:fig_RegionR} and \ref{fig:fig_RegionRProbMasses}). $\mathcal R$ ensures the likelihood's magnitude is no greater than $1$. The likelihood has two primary forms:
\begin{equation}
L(x,\lambda;\alpha,\beta,\gamma,\delta) = \left\{ \begin{array}{lr}
	x\left(\frac{(1-\lambda)x}{1-x}\right)^{\alpha}\left(1-\frac{(1-\lambda)x}{1-x}\right)^{\beta}\lambda^{\gamma}(1-\lambda)^{\delta}; & \\
	\mbox{when the 1st execution is a failure} & \\ &
	\\
	(1-x)\left(\frac{(1-\lambda)x}{1-x}\right)^{\alpha}\left(1-\frac{(1-\lambda)x}{1-x}\right)^{\beta}\lambda^{\gamma}(1-\lambda)^{\delta}; &\\
	\mbox{when the 1st execution is a success}&\end{array}\right.
\label{eqn_xKlotzlklhdFn_appndx}
\end{equation}
where the greek exponents in the likelihood are fixed by the outcomes of $n$ system executions and satisfy $\alpha, \beta, \gamma, \delta \geqslant 0$.  

In practice, the Klotz likelihood is used as follows. Consider a system executing $n$ demands, with $s$ failed executions, and $r$ of these failures being \emph{consecutive failures} -- i.e. when a failure immediately follows a previous failure. Then, the Klotz likelihood becomes one of the following four expressions, depending on the particular values of the greek exponents in \eqref{eqn_xKlotzlklhdFn_appndx}. That is, depending on whether the $n$ executions, 

\noindent \emph{1) begin with a failure and end with a failure}: 	
\begin{align}
x\left(\frac{(1-\lambda)x}{1-x}\right)^{s-r-1}\lambda^r(1-\lambda)^{s-r-1}\left(1-\frac{(1-\lambda)x}{1-x}\right)^{n-2s+r+1}  &
\label{eqn_failfail}	
\end{align}
when $\alpha=s-r-1$, $\beta=n-2s+r+1$, $\gamma=r$ and $\delta=s-r-1$;

\noindent \emph{2) begin with a success and end with a failure}:	
\begin{align}
(1-x)\left(\frac{(1-\lambda)x}{1-x}\right)^{s-r}\lambda^r(1-\lambda)^{s-r-1}\left(1-\frac{(1-\lambda)x}{1-x}\right)^{n-2s+r}  &
\label{eqn_succfail}
\end{align}
when $\alpha=s-r$, $\beta=n-2s+r$, $\gamma=r$ and $\delta=s-r-1$;

\noindent	\emph{3) begin with a success and end with a success}:	
\begin{align}
(1-x)\left(\frac{(1-\lambda)x}{1-x}\right)^{s-r}\lambda^r(1-\lambda)^{s-r}\left(1-\frac{(1-\lambda)x}{1-x}\right)^{n-2s+r-1}  &
\label{eqn_succsucc}
\end{align}
when $\alpha=s-r$, $\beta=n-2s+r-1$, $\gamma=r$ and $\delta=s-r$;

\noindent	\emph{4) begin with a failure and end with a success}:	
\begin{align}
x\left(\frac{(1-\lambda)x}{1-x}\right)^{s-r-1}\lambda^r(1-\lambda)^{s-r}\left(1-\frac{(1-\lambda)x}{1-x}\right)^{n-2s+r}  &
\label{eqn_failsucc}
\end{align}
when $\alpha=s-r-1$, $\beta=n-2s+r$, $\gamma=r$ and $\delta=s-r$.

In alternative coordinates $(\lambda,y)$, defined by the transformation $\lambda=\lambda$ and $y=\frac{(1-\lambda)x}{1-x}$ from $(x,\lambda)$ coordinates, the Klotz likelihood \eqref{eqn_xKlotzlklhdFn_appndx} has the equivalent forms:
\begin{equation}
L(\lambda,y;\alpha,\beta,\gamma,\delta) = \left\{ \begin{array}{lr}
	\frac{y}{y+1-\lambda}y^{\alpha}(1-y)^{\beta}\lambda^{\gamma}(1-\lambda)^{\delta}; & \\
	\mbox{when the 1st execution is a failure} & \\ &
	\\
	\frac{1-\lambda}{y+1-\lambda}y^{\alpha}(1-y)^{\beta}\lambda^{\gamma}(1-\lambda)^{\delta}; &\\
	\mbox{when the 1st execution is a success}&\end{array}\right.
\label{eqn_yKlotzlklhdFn}
\end{equation}	
In what follows, we will use the likelihood function in either $(x,\lambda)$ or $(\lambda, y)$ coordinates (whichever is more convenient to work with), for suitable $\alpha, \beta, \gamma$ and $\delta$.

\begin{figure}[h!]
	\captionsetup[figure]{format=hang}
	\begin{subfigure}[]{1.0\linewidth}
		\centering
		\includegraphics[width=0.4\linewidth]{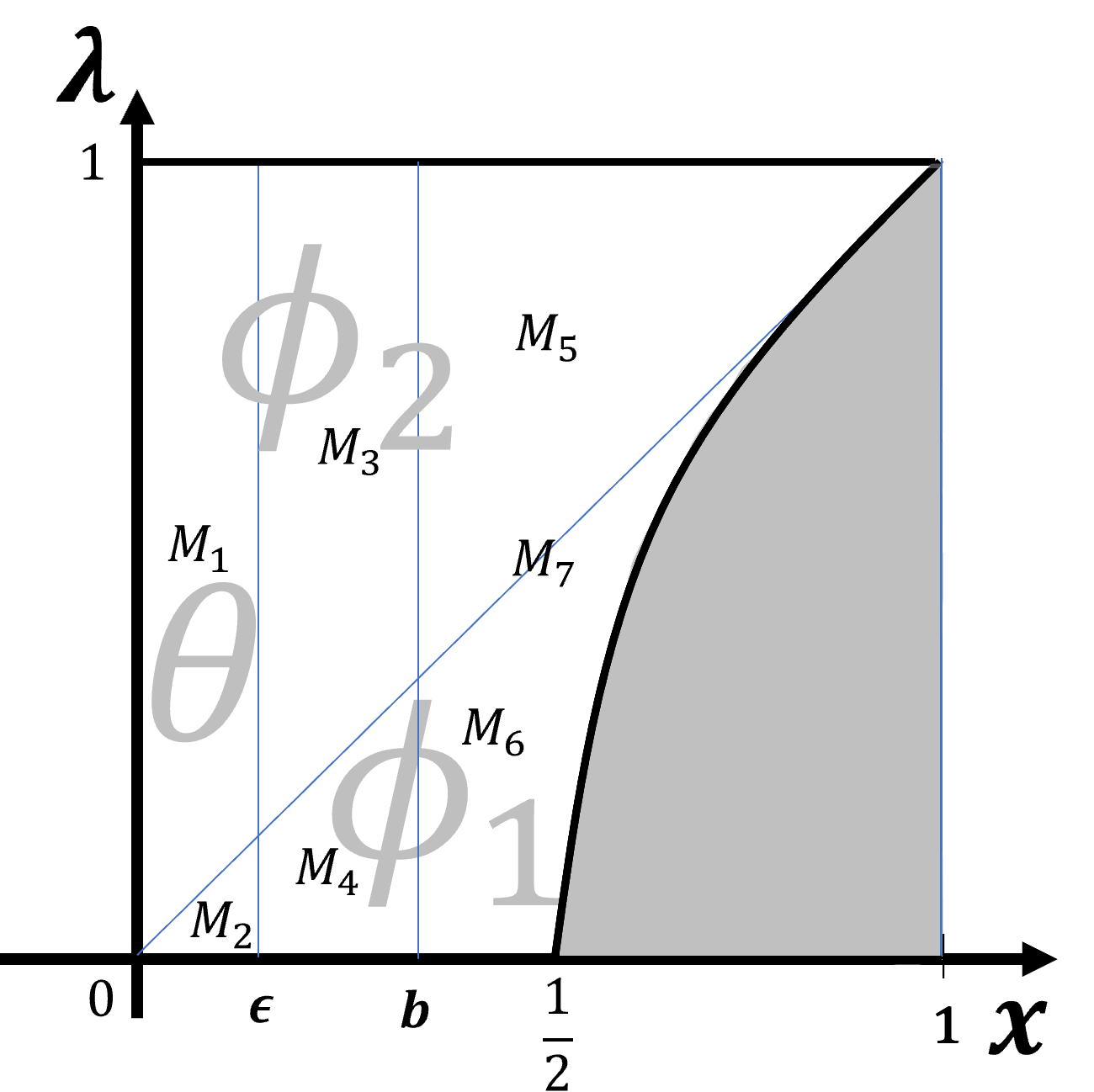}
		\caption[Region $\mathcal R$s probability masses]{{\small A feasible joint distribution of $(X, \Lambda)$, over the region $\mathcal R$, allocates probability masses $\{M_i\}$ over $\mathcal R$ subsets. The masses must satisfy $M_1+M_3+M_5=\Phi_2$, $M_2+M_4+M_6=\Phi_1$, and probability $M_7=1-\sum_{i=1}^{6}M_i$ is allocated to the diagonal.   \normalsize}}
		\label{fig:fig_RegionRProbMasses}
	\end{subfigure}
	\begin{subfigure}[]{1.0\linewidth}
		\centering
		\includegraphics[width=0.4\linewidth]{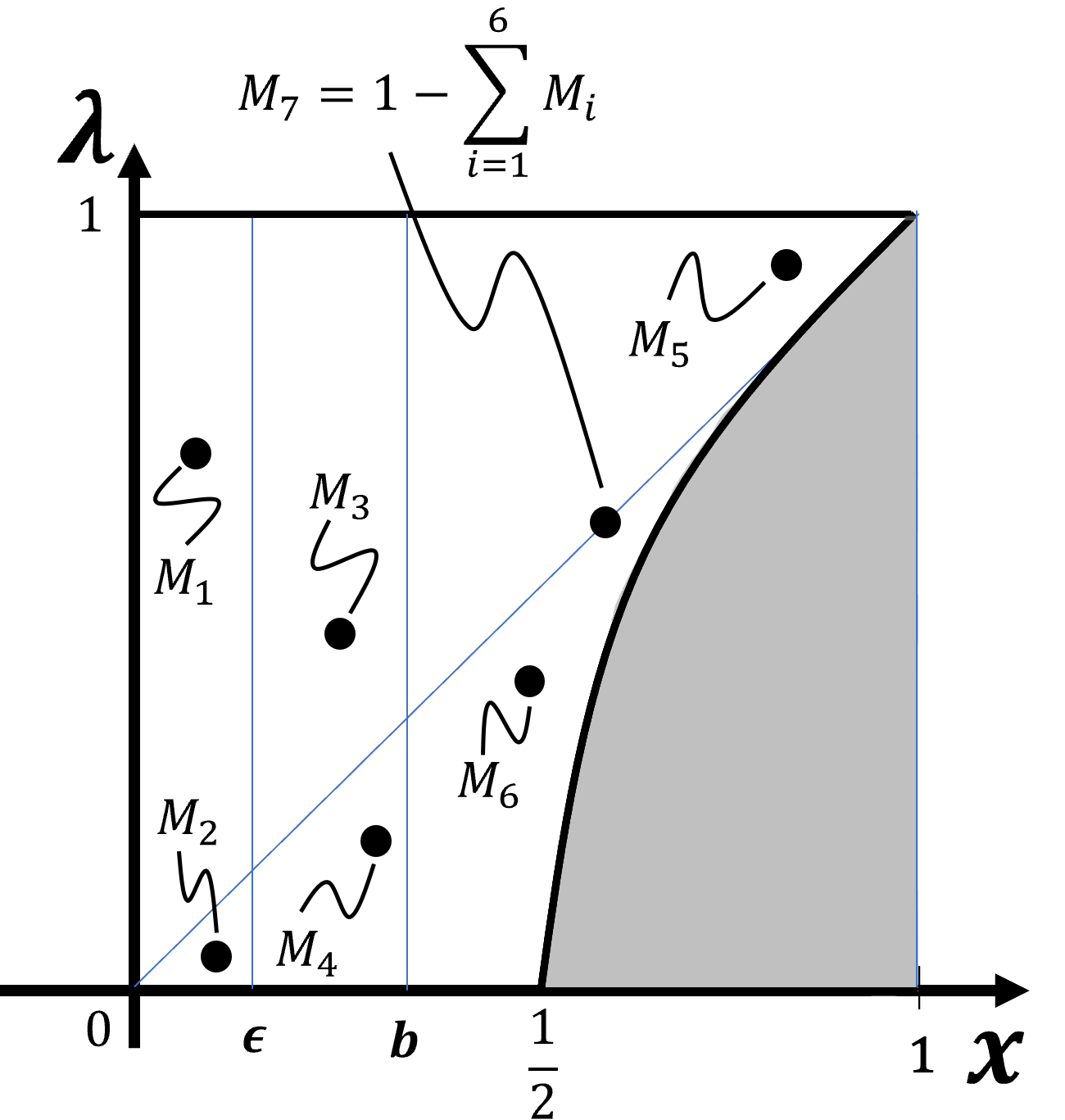}
		\caption[Discrete Priors over $\mathcal R$]{{\small A discrete joint prior distribution over $\mathcal R$. It allocates probability masses $M_i$ to single points within each subset and along the diagonal.   \normalsize}}
		\label{fig:fig_RregionDiscretePriors}
	\end{subfigure}
	\caption{}
\end{figure}

\subsection{Posterior Confidence in a Bound on the Probability of Failure per Execution}
\label{sec_proofThrm1}	
We prove the following theorem.	
\begin{theorem}
Let $\mathcal D$ be the set of all prior distributions over the region $\mathcal R$ and let $0\leqslant\epsilon<b<\frac{1}{2}$ (see Fig.~\ref{fig:fig_RegionRProbMasses}). The optimisation problem
\begin{align*}
	&\qquad\inf\limits_{\mathcal D} P(\,X\leqslant b \mid \mbox{outcomes of }n\mbox{ executions} ) \\
	&\mbox{s.t.} \,\,\,\,\,P(\Lambda< X) = \phi_1,\,\,\,\,P(\Lambda> X) = \phi_2,\,\,\,\,P(X\leqslant\epsilon) = \theta 
\end{align*}
is solved by prior distributions such as those in Fig.s~\ref{fig_CBIsoln_withFails_Phi2glt1minustheta} --  \ref{fig_CBIsoln_noFails}, since $P(\, X<b \mid \mbox{\it outcomes of }n \mbox{\it\ executions } )$ from these priors (for $p_l=0$) equals the infimum.
\label{theorem_CBIwithdependence}
\end{theorem}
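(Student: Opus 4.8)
The plan is to treat the posterior confidence as a linear-fractional functional of the unknown joint prior $F$ and to exploit the fact that all three constraints are themselves linear functionals of $F$. Writing the posterior as $\frac{\MyExp[L\,\mathbf 1_{X\leqslant b}]}{\MyExp[L]}$, with $L$ the Klotz likelihood of \eqref{eqn_xKlotzlklhdFn_appndx}, I would first note that the constraints $P(\Lambda<X)=\phi_1$, $P(\Lambda>X)=\phi_2$, $P(X\leqslant\epsilon)=\theta$ (with normalisation) fix the total mass $F$ places on each of the disjoint subsets of $\mathcal R$ drawn in Fig.~\ref{fig:fig_RegionRProbMasses}, and the numerator/denominator split by the event $[X\leqslant b]$ refines this partition further. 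For such a constrained moment problem a linear-fractional objective attains its extremum at a discrete measure with a bounded number of atoms, so the infinite-dimensional search over $\mathcal D$ reduces to choosing, within each subset, the single location at which to concentrate the prescribed mass. Here the hypothesis $b<\tfrac12$ is convenient, since it forces $\lambda^{\ast}=0$ throughout the relevant part of $\mathcal R$, letting $\lambda$ range freely over $[0,1]$.

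The second step is a mass-pushing argument, region by region. Since $\frac{N}{N+D}$ with $N=\MyExp[L\,\mathbf 1_{X\leqslant b}]$ and $D=\MyExp[L\,\mathbf 1_{X>b}]$ is increasing in $N$ and decreasing in $D$, minimising the posterior requires concentrating mass lying in $[X\leqslant b]$ where the likelihood $L$ is \emph{smallest}, and mass lying in $[X>b]$ where $L$ is \emph{largest}. To locate these points I would study the monotonicity of $L$ in each subset, which is cleanest in the transformed coordinates $(\lambda,y)$ of \eqref{eqn_yKlotzlklhdFn}, where the dependence on $\lambda$ and on $y$ separates into simple power factors. This should drive the mass to the boundaries of each subset --- in particular to corners and to the limit points $(b,1)$ (perfectly positively correlated, failure-rate just above $b$) and $(b,0)$, together with $(p_l,p_l)$, which are exactly the nonzero-probability locations drawn in Fig.s~\ref{fig_CBIsoln_withFails_Phi2glt1minustheta}--\ref{fig_CBIsoln_noFails}.

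Having pinned the candidate support to finitely many points, the third step is to solve the resulting finite-dimensional linear-fractional program in the atom masses $M_i$, subject to the linear constraints that the masses over each region sum to $\phi_1$, $\phi_2$, $\theta$ and $1$. As the feasible set is a polytope and the objective is linear-fractional, the optimum is attained at a vertex; the parameter orderings ($\phi_1\lessgtr\theta$, $\phi_2\lessgtr 1-\theta$, and $r=0$ versus $r>0$) select which vertex is active, and these cases match the different sub-figures of the worst-case prior. I would then confirm by direct substitution that each stated prior returns the claimed infimum value.

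The main obstacle I anticipate is twofold. First, the likelihood monotonicity of the second step is genuinely delicate: $L$ is a product of competing power factors, and ruling out an interior stationary point that beats the boundary --- for \emph{every} admissible $\alpha,\beta,\gamma,\delta$ arising from $n,s,r$ and the four first/last-execution cases \eqref{eqn_failfail}--\eqref{eqn_failsucc} --- requires care. Second, and relatedly, the extremal support includes $(b,1)$, where the likelihood is discontinuous and the infimum is only \emph{approached}, not attained, through priors placing mass at $x=b-\eta$ as $\eta\downarrow 0$. This is precisely why the theorem equates $\inf_{\mathcal D}P(X\leqslant b\mid\cdot)$ with the value of $P(X<b\mid\cdot)$ evaluated at the limiting prior, and handling this open-versus-closed subtlety rigorously, rather than heuristically, is where most of the technical work will lie.
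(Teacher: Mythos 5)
Your proposal follows essentially the same route as the paper's proof: reduction to discrete priors with one atom per constraint-induced subset of $\mathcal R$, a mass-pushing step governed by the monotonicity and unimodality of the Klotz likelihood (mass where $L$ is smallest on $[X\leqslant b]$, largest on $[X>b]$), convergence of the reassigned priors to limit points such as $(b,1)$, and a final finite-dimensional linear-fractional program in the masses whose active vertex is selected by the orderings of $\phi_1$, $\phi_2$, $\theta$ and by $r=0$ versus $r>0$ --- including the same open-versus-closed subtlety that forces the infimum to be reported as $P(X<b\mid\cdot)$ at the limiting prior. The one imprecision is your expectation that all extremal mass lands on subset boundaries and corners: for the $x>b$ regions with $\alpha,\beta,\gamma,\delta\geqslant 1$ the likelihood attains an interior global maximum $(x^{\ast},\lambda^{\ast})$, and the worst-case priors concentrate mass there rather than on the boundary, exactly as your own ``largest $L$'' principle dictates.
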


\begin{proof}
The optimisation constraints can be used to partition $\mathcal R$ into 7 disjoint subsets (with one of the subsets being the $45^{\circ}$ diagonal). Each prior distribution $F\in\mathcal D$ must assign 7 probabilities $\{M_i\}^{7}_{i=1}$ to these subsets, in such a way as to satisfy the constraints of the optimisation problem (see Fig.~\ref{fig:fig_RegionRProbMasses}).

The proof progresses in 6 stages:
\begin{enumerate}
	\item restrict the optimisation from $\mathcal D$ to its subset  ${\mathcal D}^{\prime}$ of discrete prior distributions. An arbitrary discrete prior assigns its probabilities $M_i$ to $7$ arbitrary points $\{(x_i,\lambda_i)\}_{i=1}^{7}$ within $\mathcal R$. Hence, the objective function becomes a rational function of the ``$x_i$''s, ``$\lambda_i$''s and ``$M_i$''s;
	\item show that the gradient of this objective function is determined by the gradient of the Klotz model likelihood;
	\item show the likelihood is unimodal along vertical and horizontal lines in $\mathcal R$, as well as along the $45^{\circ}$ diagonal line;
	\item show that the likelihood is also unimodal over all of  $\mathcal R$, and it attains its maximum either at a stationary point in the interior of $\mathcal R$, or along the boundary of $\mathcal R$;
	\item the previous steps in the proof imply the following: starting from any $F\in{\mathcal D}^{\prime}$, and the probabilities $\{M_i\}$ assigned by $F$, we can construct a new prior that gives a smaller value for the objective function (compared with $F$'s objective function value). We simply use the gradient of the likelihood to determine new locations within each of the $7$ $\mathcal R$-subsets, and reassign the ``$M_i$''s to these new locations. This reassignment produces a new prior distribution, which in turn can have \emph{its} probabilities reassigned to new points (and so on, indefinitely). In the limit, depending on the values of $\alpha, \beta, \gamma$ and $\delta$, the sequence of new points obtained by successive reassignments will converge to \emph{limit points}\footnote{Definition: for a given topology (e.g., the ``open balls'' topology associated with 2D Euclidean space), a \emph{limit point} of a subset of the plane is a point that is arbitrarily well-approximated by sequences of points within the subset (see \cite{rudin1976principles,copson_1968}).} in each $\mathcal R$-subset. That is, the objective function values converge in a monotonically decreasing manner, as the sequence of ``reassigned'' priors converge to a limiting distribution with support at, no more than, 7 \emph{limit points};
	\item finally, determine the values for the ``$M_i$''s that a limiting distribution should assign to \emph{limit points} -- to ensure that the related sequence of objective function values converge to the infimum. Determining these worst-case ``$M_i$''s is a constrained \emph{linear fractional programming} problem. One may solve this either numerically, or by a logical allocation of probability masses to the relevant \emph{limit points} in $\mathcal R$. For the CBI solutions in this paper, we use the latter approach. These final forms of limiting distribution (illustrated in Fig.s~\ref{fig_CBIsoln_withFails_Phi2glt1minustheta} -- \ref{fig_CBIsoln_noFails}) are worst-case prior distributions; so-called because $P(\,X< b \mid \mbox{\emph{outcomes of } }n \mbox{ \emph{executions} } )$ for these distributions equals the infimum we seek.   
\end{enumerate}

\noindent Let us proceed with the proof:

\emph{stage 1)} By definition, for any $F\in \mathcal D$,
\begin{align*}
	P(\,X\leqslant b \mid \mbox{\it outcomes of }n\mbox{\it\ executions}  ) =\frac{\MyExp[L(X,\Lambda;\alpha,\beta,\gamma,\delta){\bf 1}_{X\leqslant b}]}{\MyExp[L(X,\Lambda;\alpha,\beta,\gamma,\delta)]} 
	=\frac{\int_{[0,b]\times[\lambda^\ast,1]} L(x,\lambda;\alpha,\beta,\gamma,\delta) \,\mathtt{d}F(x,\lambda)}{\int_{[0,1]\times[\lambda^\ast,1]} L(x,\lambda;\alpha,\beta,\gamma,\delta)\,\mathtt{d}F(x,\lambda)}
\end{align*}
However, the set {$\mathcal D$} can be restricted to the subset {$\mathcal D^{\prime}$} of discrete joint distributions -- i.e., to those distributions that assign their ``$M_i$''s to single points within each $\mathcal R$ subset (e.g. see Fig.~\ref{fig:fig_RregionDiscretePriors}), see \cite{Moreno_1991}. 
So that, for any $F\in \mathcal D^{\prime}$, the objective function of the optimisation becomes
\begin{align*}
	P(\,X\leqslant b \mid \mbox{\it outcomes of }n\mbox{\it\ executions}  ) &=\frac{\int_{[0,b]\times[\lambda^\ast,1]} L(x,\lambda;\alpha,\beta,\gamma,\delta) \,\mathtt{d}F(x,\lambda)}{\int_{[0,1]\times[\lambda^\ast,1]} L(x,\lambda;\alpha,\beta,\gamma,\delta)\,\mathtt{d}F(x,\lambda)} 
	\\
	&= \frac{\sum_{i=1}^{7} L(x_i,\lambda_i;\alpha,\beta,\gamma,\delta){\bf 1}_{x_i\leqslant b} \,M_i}{\sum_{i=1}^{7} L(x_i,\lambda_i; \alpha,\beta,\gamma,\delta)\,M_i} 
	\\
	&=\frac{Num}{Denum}
\end{align*}
Consequently the objective function has become $\frac{Num}{Denum}$; a rational function of the ``$x_i$''s, ``$\lambda_i$''s and ``$M_i$''s.

\emph{stage 2)} Consider how this objective function changes when restricted to a vertical line in the subset of $\mathcal R$ where $x\leqslant \frac{1}{2}$. The rate of change of $\frac{Num}{Denum}$ with respect to $\lambda$ is then
\begin{align}
	&\frac{\partial}{\partial \lambda}\left(\frac{Num}{Denum}\right)=\frac{\frac{\partial}{\partial \lambda}Denum}{Denum}\left(\frac{\frac{\partial}{\partial \lambda}Num}{\frac{\partial}{\partial \lambda}Denum}-\frac{Num}{Denum}\right)
	\label{eq_deriv_obj_lambda}
\end{align}
Since $\frac{Num}{Denum}$ is a rational function of $\lambda$, it is smooth (except where $Denum=0$). Consequently, the sign of $\frac{\partial}{\partial \lambda}\left(\frac{Num}{Denum}\right)$ indicates how to move the location of the ``$M_i$''s along vertical lines in each $\mathcal R$ subset, in order to minimise $\frac{Num}{Denum}$. 

The following argument shows how the sign of $\frac{\partial}{\partial \lambda}\left(\frac{Num}{Denum}\right)$ is determined by the sign of $\frac{\partial}{\partial \lambda}L(x,\lambda;\alpha, \beta,\gamma,\delta)$ and the size of $x$. Observe that $\frac{\partial}{\partial \lambda}\left(\frac{Num}{Denum}\right)\geqslant 0$ \emph{if, and only if}, $\frac{\partial}{\partial \lambda}Denum$ and $\left(\frac{\frac{\partial}{\partial \lambda}Num}{\frac{\partial}{\partial \lambda}Denum}-\frac{Num}{Denum}\right)$ have the same sign. When the $\mathcal R$ region satisfies $x\leqslant b$, this implies  $\frac{\frac{\partial}{\partial \lambda}Num}{\frac{\partial}{\partial \lambda}Denum}=1$ for $\lambda$ from that region. Substituting $1$ for $\frac{\frac{\partial}{\partial \lambda}Num}{\frac{\partial}{\partial \lambda}Denum}$ in  \eqref{eq_deriv_obj_lambda}, and noting that the objective function $\frac{Num}{Denum}$ is a probability (hence must lie between $0$ and $1$),
we have that $\frac{\partial}{\partial \lambda}\left(\frac{Num}{Denum}\right)$ and $\frac{\partial}{\partial \lambda}Denum$ share the same sign when $x\leqslant b$. When $x\geqslant b$ instead, $\frac{\partial}{\partial \lambda}\left(\frac{Num}{Denum}\right)$ and $\frac{\partial}{\partial \lambda}Denum$ have opposite signs\footnote{These statements exclude the unimportant edge cases when $\frac{Num}{Denum}=0,1$.}. So, the sign of $\frac{\partial}{\partial \lambda}Denum$ (i.e. the sign of $\frac{\partial}{\partial \lambda}L(x,\lambda;\alpha, \beta,\gamma,\delta)$), and the value of $x$, together determine the sign of $\frac{\partial}{\partial \lambda}\left(\frac{Num}{Denum}\right)$. A similar argument shows that along horizontal lines in $\mathcal R$, $\frac{\partial}{\partial x}L(x,\lambda;\alpha, \beta,\gamma,\delta)$ and $x$'s value determine the sign of $\frac{\partial}{\partial x}\left(\frac{Num}{Denum}\right)$. And thus, they determine where the ``$M_i$''s should be allocated to minimize $\frac{Num}{Denum}$ along that line.

\emph{stage 3)} Along any vertical line in $\mathcal R$ where $x\leqslant \frac{1}{2}$ is satisfied, $L(x,\lambda;\alpha,\beta,\gamma,\delta)$ is a non-negative unimodal function of $\lambda$. To see this, note that $\frac{\partial}{\partial \lambda}L(x,\lambda;\alpha, \beta,\gamma,\delta)=0$ has non-trivial solutions at $\lambda$ values where two quadratic functions of $\lambda$ intersect. That is, solutions to 
\begin{align}
	\left(1-\frac{(1-\lambda)}{1-x}x\right)(\gamma-\lambda(\gamma+\delta))=\lambda\left(\alpha-(\alpha+\beta)\frac{(1-\lambda)}{1-x}x\right)
	\label{eq_lambdaMode}
\end{align}

An illustration of these two functions is given in Fig.~\ref{fig_LiklhdstatpointverticalLine}. One function has two roots of opposite sign (at $\lambda = \frac{2x-1}{x}, \frac{\gamma}{\gamma+\delta}$) and a maximum, while the other function has a root at $\lambda=0$ and a minimum. This means at least one solution to \eqref{eq_lambdaMode} cannot lie within $\mathcal R$ -- it must be non-positive. And the other solution must be positive and represents a maximum turning point. Because the l.h.s of \eqref{eq_lambdaMode} is bigger than the r.h.s. for $\lambda$ values slightly smaller than the positive solution, and the l.h.s. is smaller than the r.h.s. for all $\lambda$ values bigger than the positive solution.


\begin{figure}[h!]
	\centering
	\begin{subfigure}[]{0.35\linewidth}
		\centering
			\includegraphics[width=1.0\linewidth]{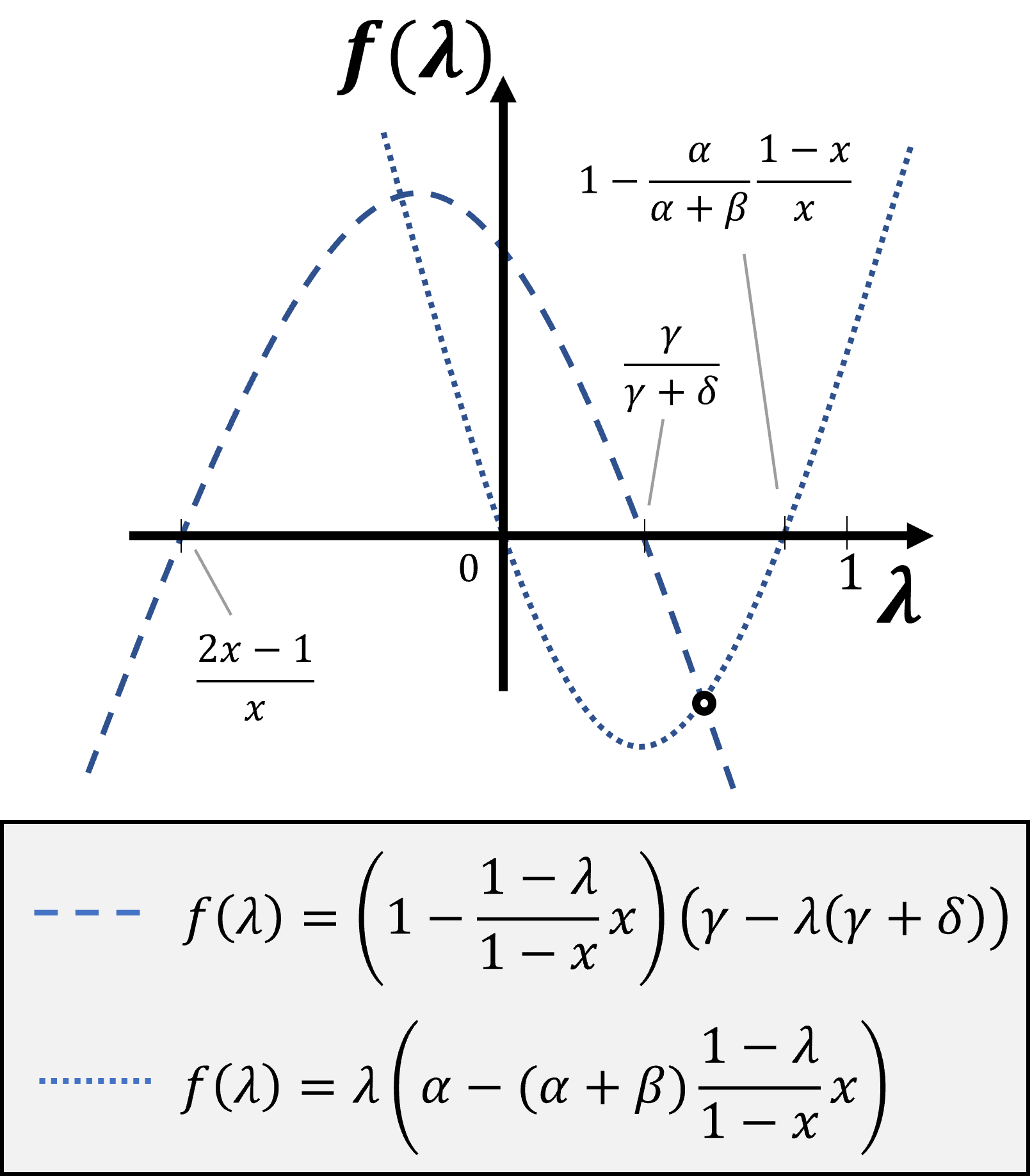}
		\caption{{\footnotesize $0\leqslant\frac{\alpha(1-x)}{(\alpha+\beta)x}\leqslant 1$}}
		\label{fig:fig_LiklhdstatpointverticalLine_1}
	\end{subfigure}
	\begin{subfigure}[]{0.35\linewidth}
		\centering
			\includegraphics[width=1.0\linewidth]{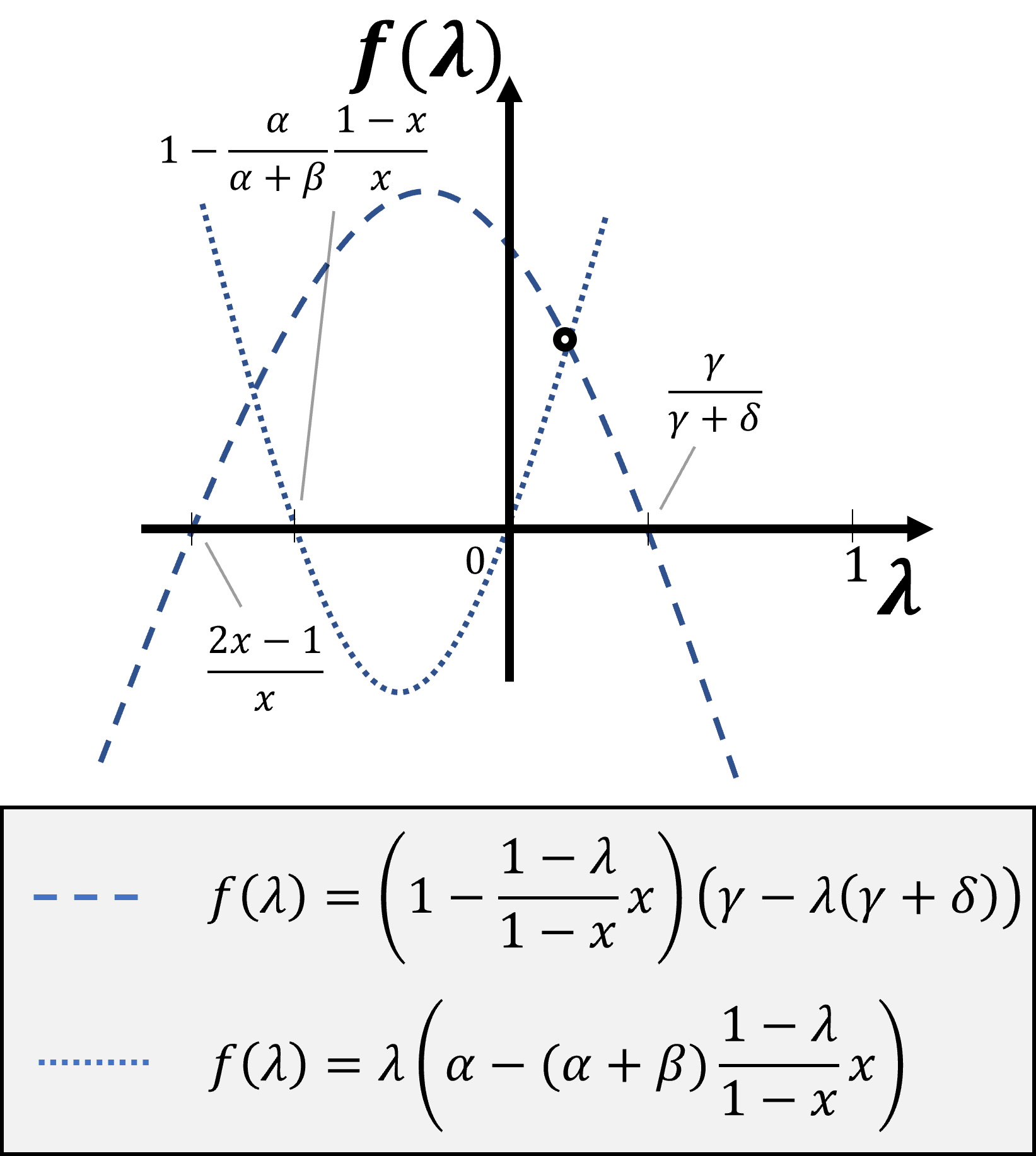}
		\caption{{\footnotesize $1\leqslant\frac{\alpha(1-x)}{(\alpha+\beta)x}$}}
		\label{fig:fig_LiklhdstatpointverticalLine_2}
	\end{subfigure}
	\caption[Prior Constraints]{{\small Two illustrations of two quadratic functions of $\lambda$ having, at most, one intersection over the range $0<\lambda<1$. For fixed $x$, this geometric fact implies $L(x,\lambda;\alpha,\beta,\gamma,\delta)$ is unimodal over any vertical line in $\mathcal R$ such that $x\leqslant\frac{1}{2}$.  \normalsize}}
	\label{fig_LiklhdstatpointverticalLine}
\end{figure}

Thus, as $\lambda$ grows from $0$ to $1$ along any vertical line in $\mathcal R$ (where $x\leqslant \frac{1}{2}$), there is (at most) one stationary point at which the likelihood is maximum. The likelihood is monotonic on either side of this maximum along the vertical line.

$L(x,\lambda;\alpha,\beta,\gamma,\delta)$ is also unimodal along the $45^{\circ}$ diagonal (i.e., when $x=\lambda$). Because it has only one non-trivial stationary point\footnote{This stationary point is located at either $x=\frac{1+\alpha+\gamma}{1+\alpha+\gamma+\beta+\delta}$ or $x=\frac{\alpha+\gamma}{1+\alpha+\gamma+\beta+\delta}$, depending on whether executions begin with a failure or success respectively.}, and this must be a maximum since the likelihood is non-negative with value $0$ at the endpoints of the diagonal.

\begin{figure}[h!]
	\centering
	\begin{subfigure}[]{0.35\linewidth}
		\centering
			\includegraphics[width=1.0\linewidth]{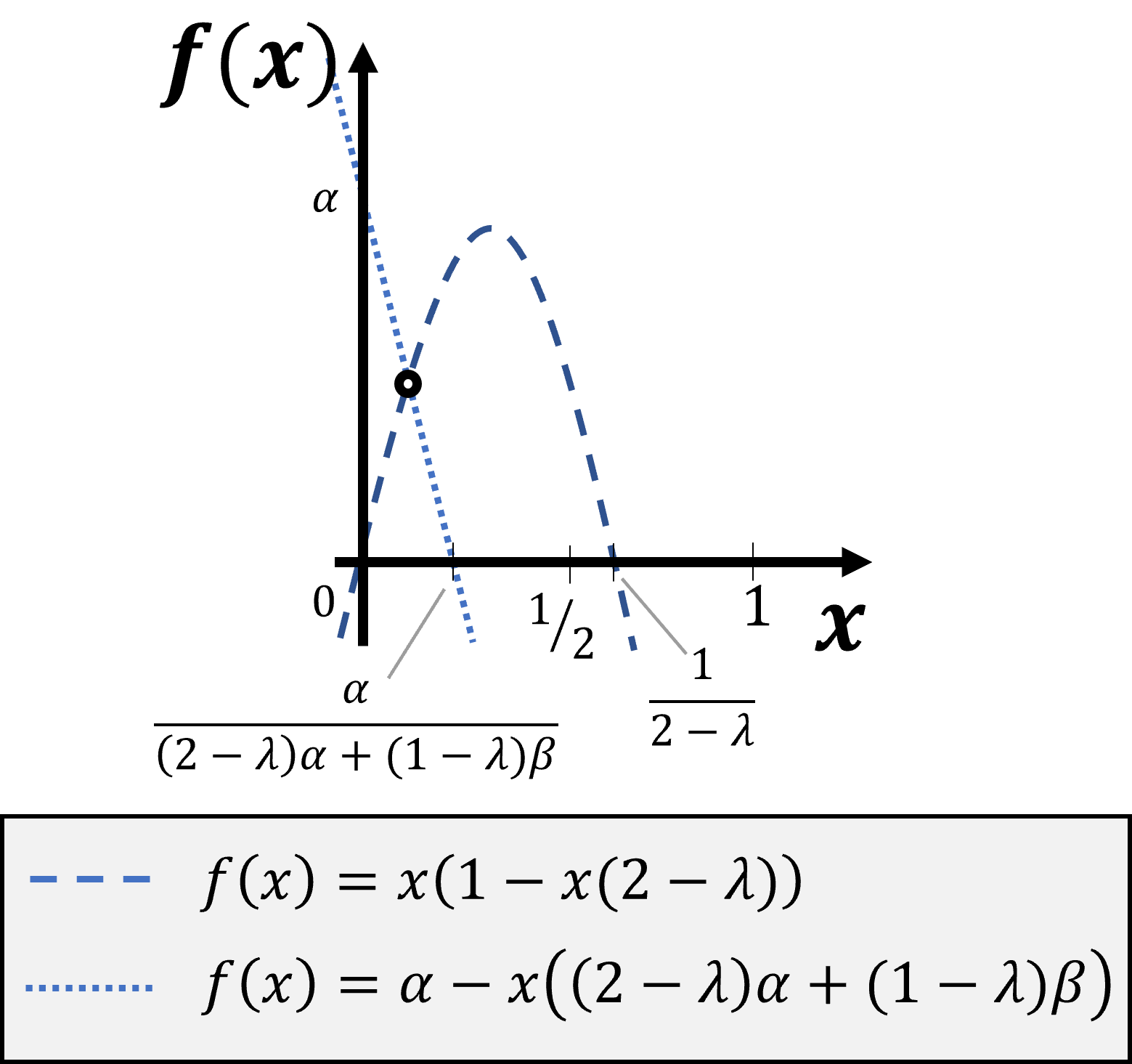}
		\caption{{\footnotesize From likelihood with a successful 1st execution}}
		\label{fig:fig_LiklhdstatpointhorizontalLine_1}
	\end{subfigure}
	\begin{subfigure}[]{0.35\linewidth}
		\centering
			\includegraphics[width=1.0\linewidth]{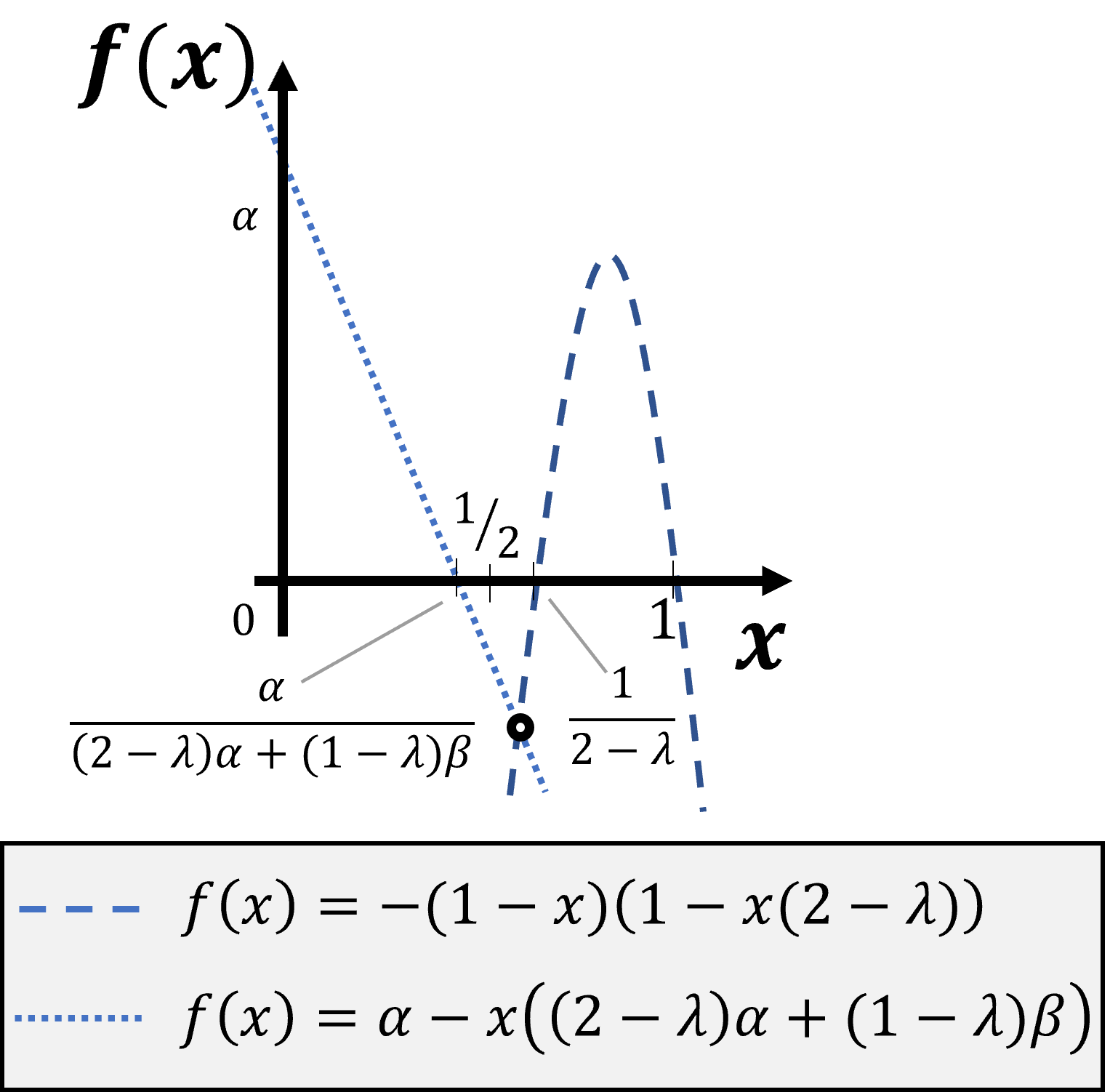}
		\caption{{\footnotesize From likelihood with a failed 1st execution}\vspace{0.2cm}}
		\label{fig:fig_LiklhdstatpointhorizontalLine_2}
	\end{subfigure}
	\caption[Prior Constraints]{{\small Two illustrations of quadratic and linear functions of $x$ having, at most, one intersection over the range $0<x<\frac{1}{2}$. For fixed $\lambda$, this geometric fact implies $L(x,\lambda;\alpha,\beta,\gamma,\delta)$ is unimodal over any horizontal line in $\mathcal R$. \normalsize}}
	\label{fig_LiklhdstatpointhorizontalLine}
\end{figure}

Analogously, $L(x,\lambda;\alpha,\beta,\gamma,\delta)$ is unimodal along any horizontal line within $\mathcal R$, since it has (at most) one stationary point at which it attains a maximum. The stationary point solves $\frac{\partial}{\partial x}L(x,\lambda;\alpha, \beta,\gamma,\delta)=0$ non-trivially. Equivalently, the stationary point satisfies the leftmost intersection between a straight line and a quadratic function in $x$ (see Fig.~\ref{fig_LiklhdstatpointhorizontalLine}). This leftmost intersection must occur to the left of $x=\frac{1}{2-\lambda}$, ensuring that the stationary point lies in $\mathcal R$. The fact that the line lies above the quadratic before this intersection, and then below the quadratic immediately after, ensures that, as $x$ increases from $0$, the $\frac{\partial}{\partial x}L(x,\lambda;\alpha, \beta,\gamma,\delta)$ transitions from being positive to being negative. That is, the stationary point is a maximum.  

\begin{figure}[h!]
	\captionsetup[figure]{format=hang}
	\centering
	\begin{subfigure}[]{0.35\linewidth}
		\centering
			\includegraphics[width=1.0\linewidth]{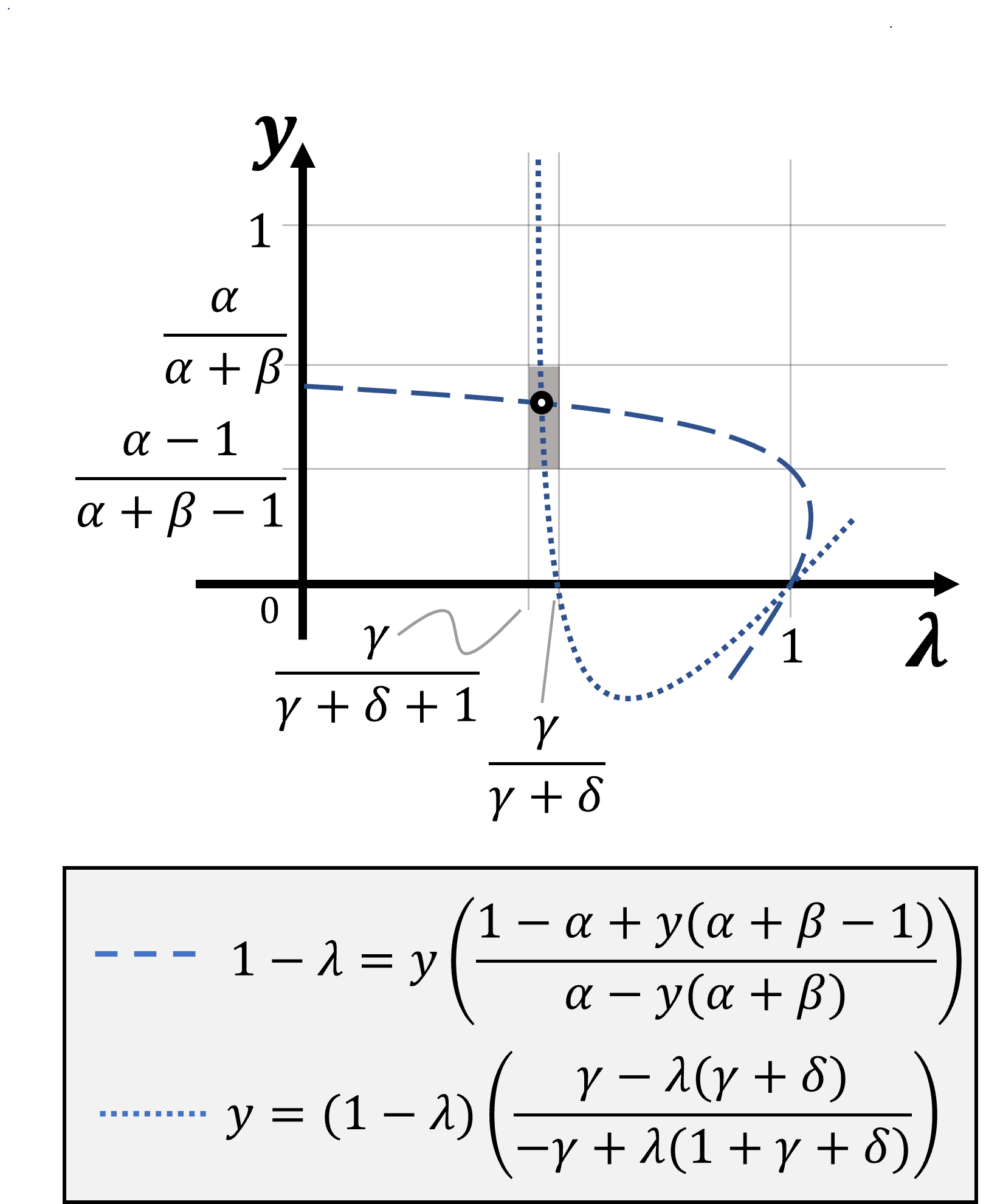}
		\caption{{\footnotesize From likelihood with a successful 1st execution, $\gamma+\delta \neq 0$ and $\alpha+\beta\neq 1,0$ }}
		\label{fig:fig_SingleStationary Point_ffns}
	\end{subfigure}
	\begin{subfigure}[]{0.35\linewidth}
		\centering\vspace{0.75cm}
			\includegraphics[width=1.0\linewidth]{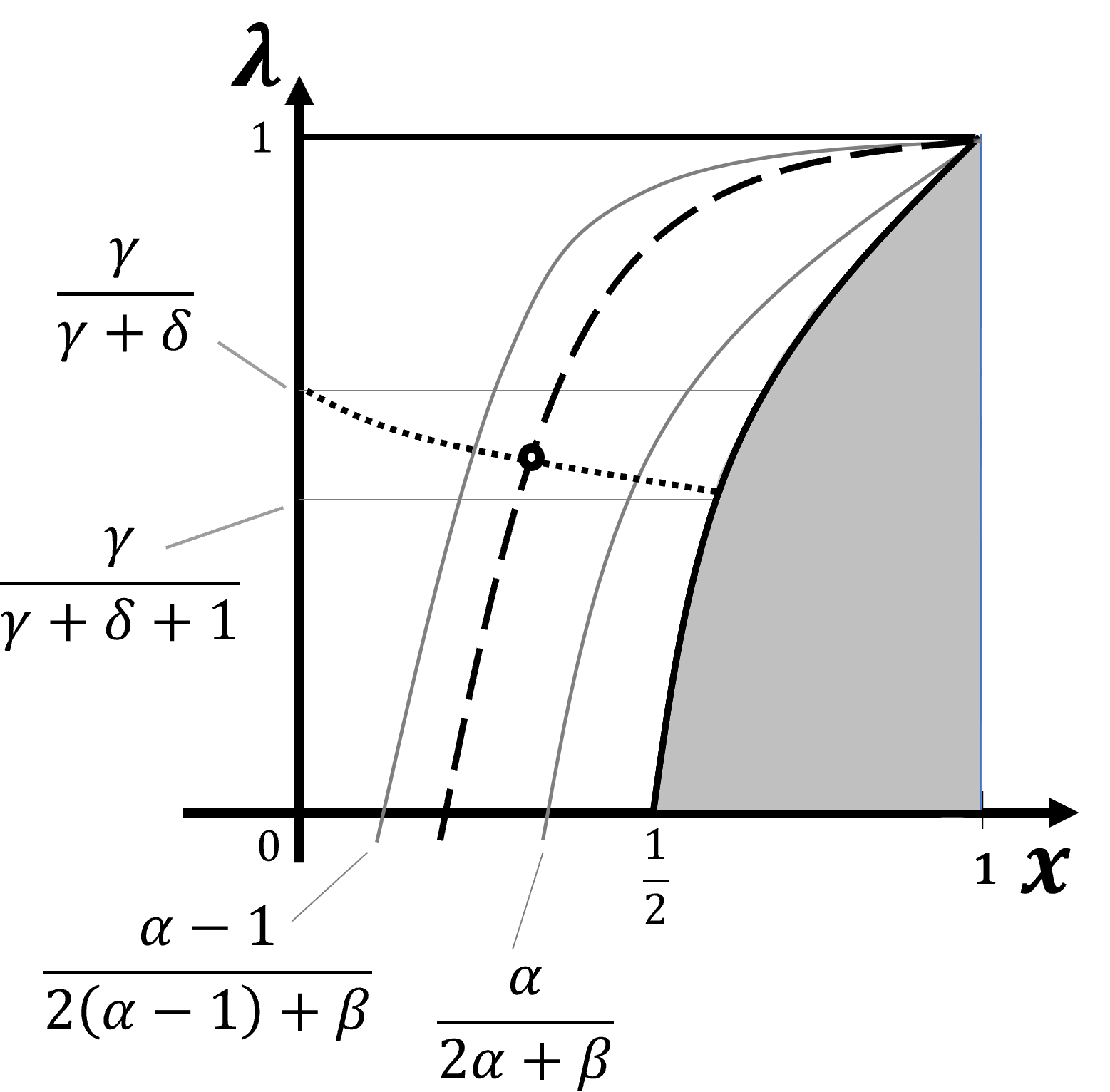}
		\caption{{\footnotesize The stationary curves of Fig.~\ref{fig:fig_SingleStationary Point_ffns} depicted in $\mathcal R$. }}
		\label{fig:fig_RregionSingleStationary Point_ffns}
	\end{subfigure}
	\begin{subfigure}[]{0.35\linewidth}
		\centering
			\includegraphics[width=1.0\linewidth]{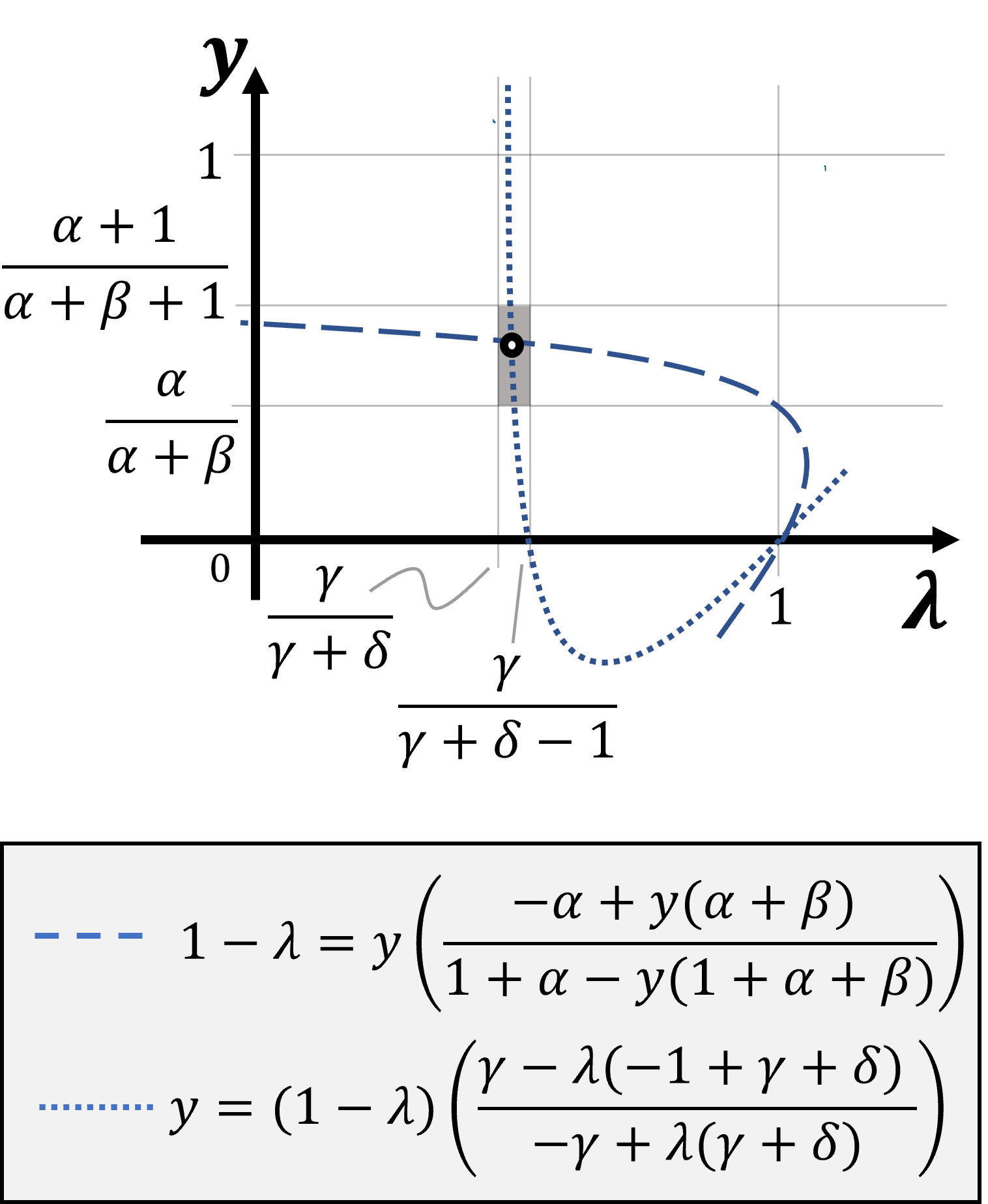}
		\caption{{\footnotesize From likelihood with a failed 1st execution, $\alpha+\beta\neq 0$ and $\gamma +\delta\neq 1,0$ }}
		\label{fig:fig_SingleStationary Point_gfns}
	\end{subfigure}
	\begin{subfigure}[]{0.35\linewidth}
		\centering\vspace{0.75cm}
			\includegraphics[width=1.0\linewidth]{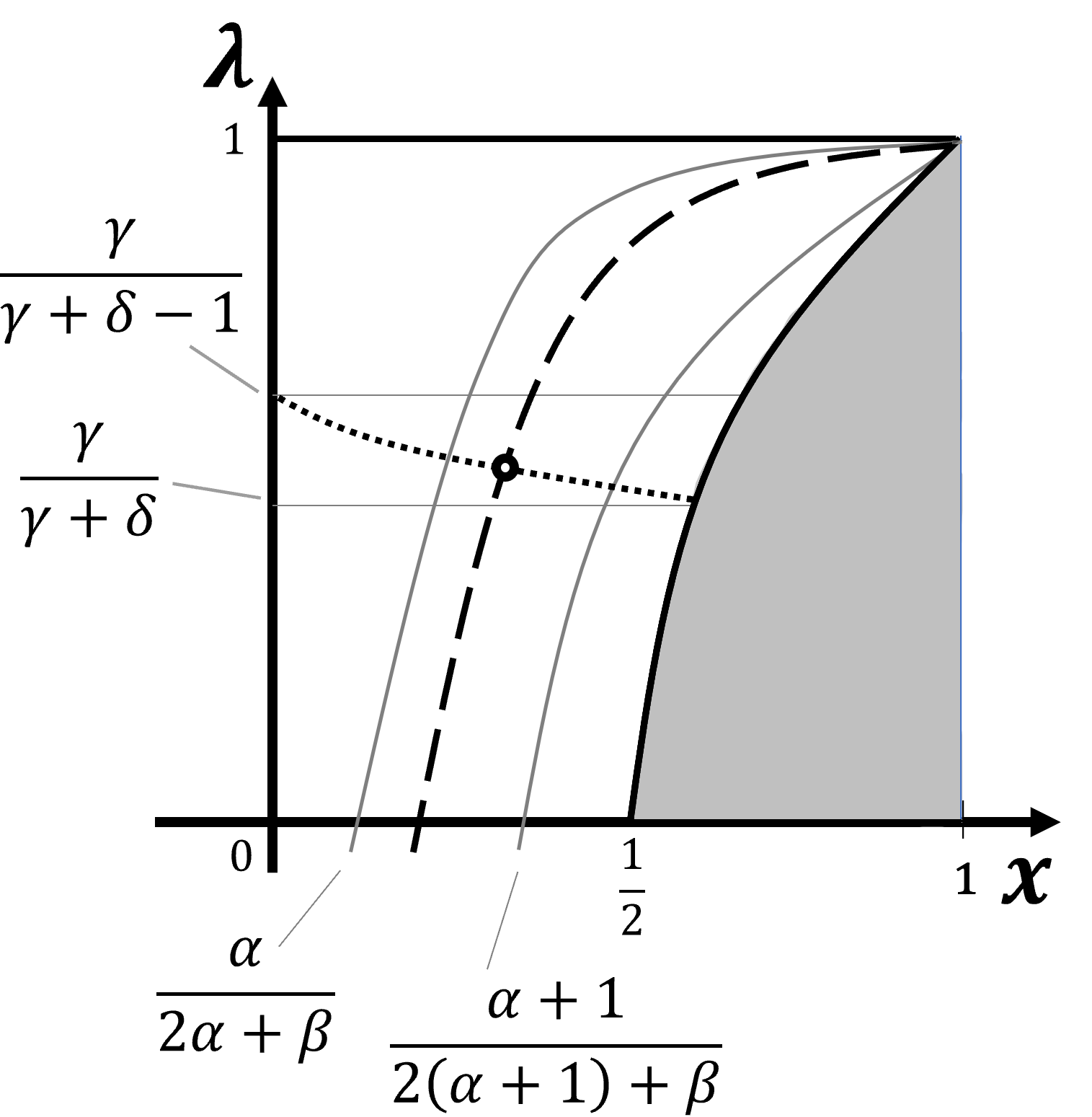}
		\caption{{\footnotesize The stationary curves of Fig.~\ref{fig:fig_SingleStationary Point_gfns} depicted in $\mathcal R$. }}
		\label{fig:fig_RregionSingleStationary Point_gfns}
	\end{subfigure}
	\caption[single global stationary point]{{\small Two illustrations of pairs of curves intersecting, at most, once, over the range $0<\lambda<1, 0<y<1$. This geometric fact implies $L(x,\lambda;\alpha,\beta,\gamma,\delta)$ is unimodal over $\mathcal R$ with, at most, one stationary point in the interior of $\mathcal R$ .  \normalsize}}
	\label{fig_singleStationarypoint}
\end{figure}

\emph{stage 4)} Finally, the likelihood either has a single stationary point within $\mathcal R$ at which it attains a maximum value over $\mathcal R$, or it attains its maximum value over $\mathcal R$ on the boundary of $\mathcal R$. When the single stationary point lies within $\mathcal R$, it can be determined by solving $\frac{\partial}{\partial \lambda}L(\lambda, y;\alpha,\beta,\gamma,\delta) = 0$ and $\frac{\partial}{\partial y}L(\lambda, y;\alpha,\beta,\gamma,\delta) = 0$ simultaneously for $\lambda$ and $y$. The non-trivial solutions for this simultaneous system of equations are given by the intersections of pairs of curves, as illustrated in Fig.~\ref{fig_singleStationarypoint}.

If the stationary point lies within $\mathcal R$ then it must be a maximum; because the stationary curves in Fig.~\ref{fig_singleStationarypoint} imply that, from any point along the boundary of $\mathcal R$, we can always move away from that point along an appropriate path within $\mathcal R$ to increase the likelihood's value.    

\emph{stage 5)} stages 1-4 of this proof demonstrate the existence and uniqueness of locations in $\mathcal R$ that are local or global maxima, as exemplified in Fig.~\ref{fig_allStationarypoints}. For the region $x\leqslant b$ in $\mathcal R$, the ``further away'' from maxima the locations a prior assigns probabilities to, the smaller the objective function. For $x>b$, the ``closer'' the nonzero probability locations are to the maxima, the smaller the objective function. Here, ``further away'' and ``closer'' are in terms of the Klotz likelihood's gradients.
\begin{figure}[h!]
	\captionsetup[figure]{format=hang}
	\centering
	\begin{subfigure}[]{0.35\linewidth}
		\centering
			\includegraphics[width=1.0\linewidth]{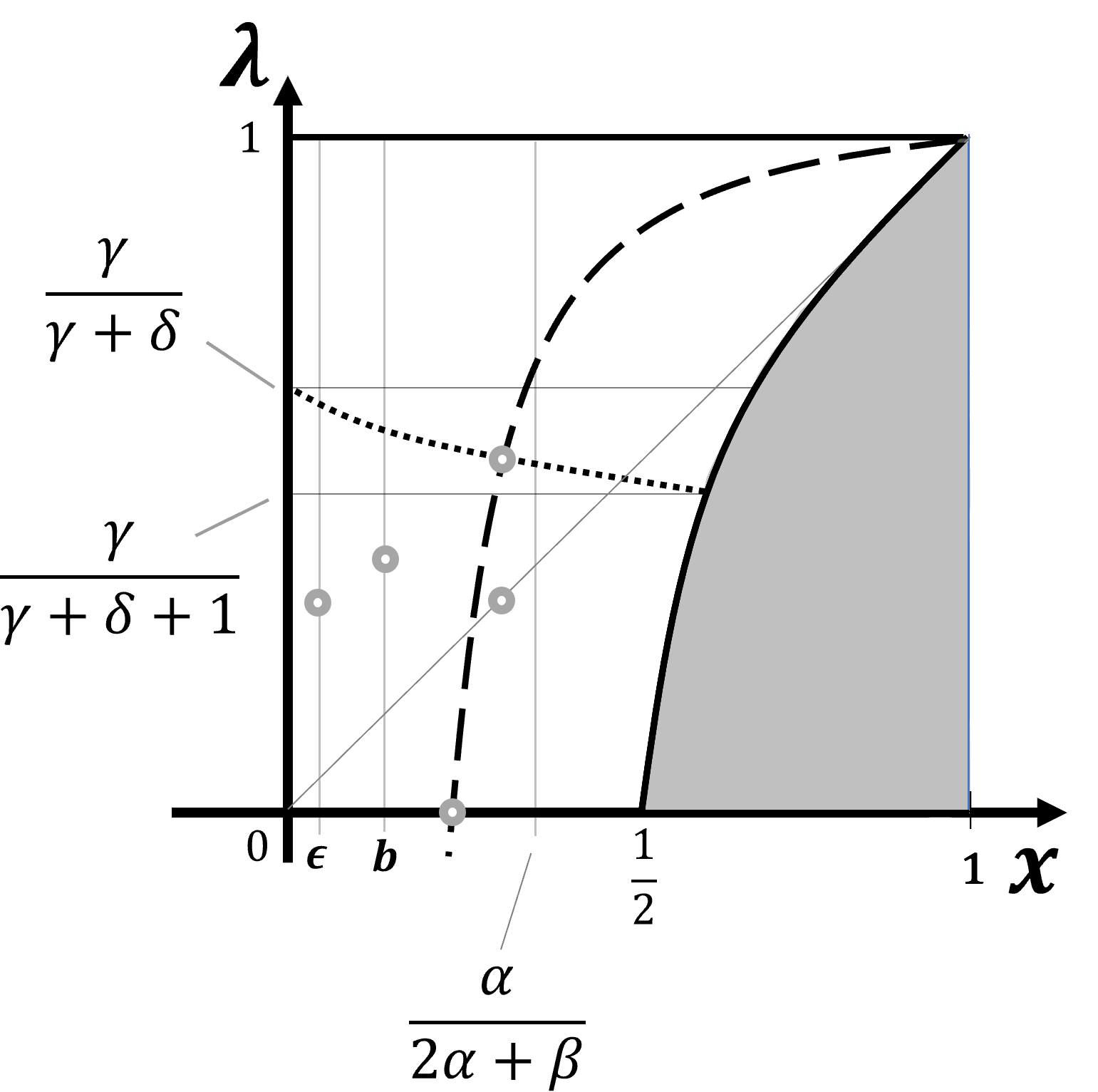}
		\caption{{\footnotesize A successful 1st execution  }}
		\label{fig:fig_RregionAllStationary Points_ffns}
	\end{subfigure}
	\begin{subfigure}[]{0.35\linewidth}
		\centering
			\includegraphics[width=1.0\linewidth]{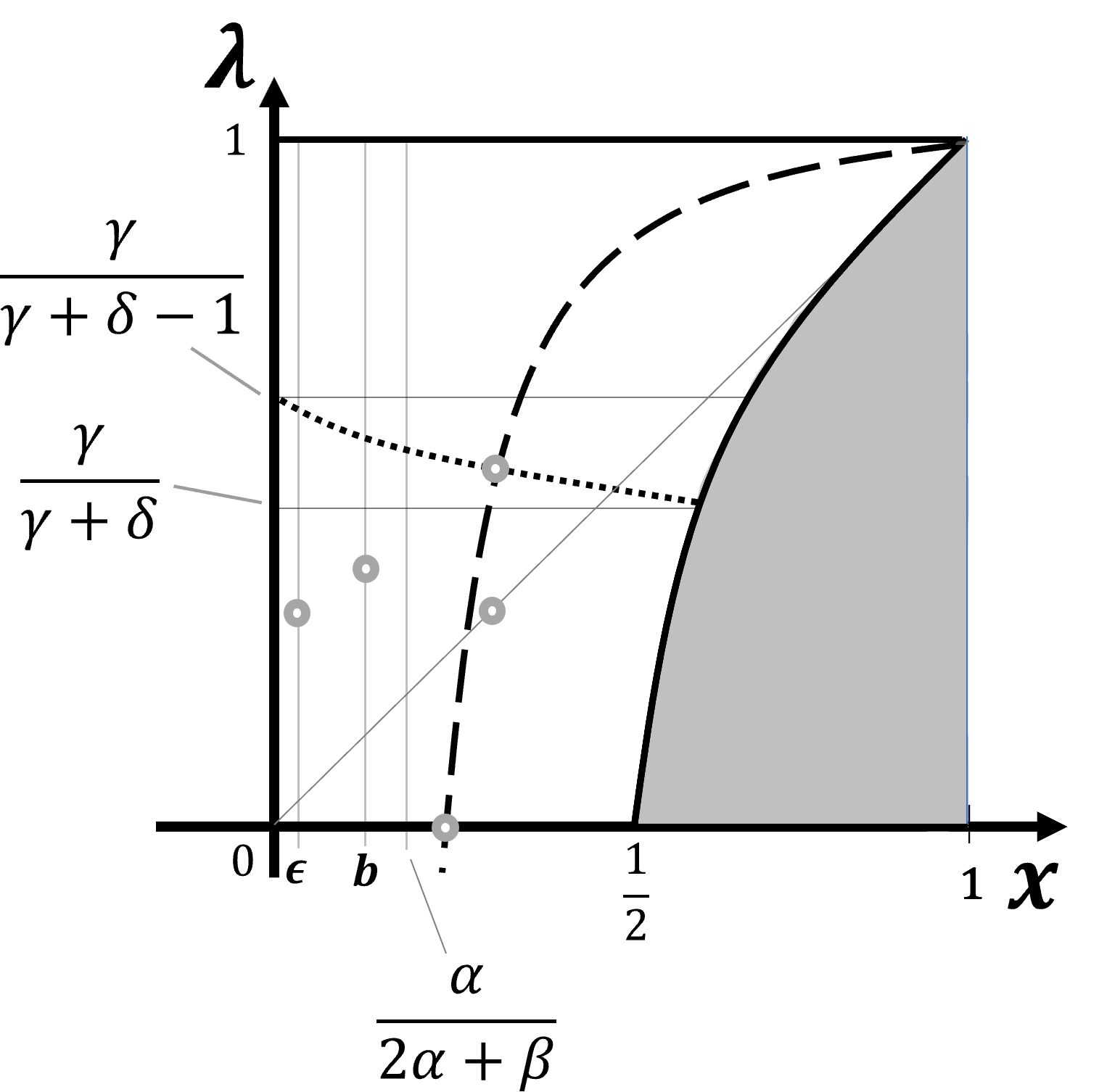}
		\caption{{\footnotesize A failed 1st execution }}
		\label{fig:fig_RregionAllStationary Points_gfns}
	\end{subfigure}
	\caption[stationary points in $\mathcal R$]{{\small Examples of locations in $\mathcal R$ (indicated by grey circles) at which local and global maxima of the Klotz likelihood occur. Here, $\alpha, \beta, \gamma, \delta \geqslant 1$. \normalsize}}
	\label{fig_allStationarypoints}
\end{figure}

That is, given any $F\in {\mathcal D}^{\prime}$, we can reassign the probabilities $\{M_i\}$ that $F$ allocates to points in $\mathcal R$, to new points suggested by the likelihood's gradients -- resulting in a new prior with a smaller objective function value. Such reassignments can be carried out indefinitely, creating a sequence of priors with an associated, monotonically decreasing sequence of objective function values. And the \emph{completeness of the real numbers} guarantees that this sequence of objective function values converge\footnote{Note that the objective function is a probability, and is therefore bounded.}. Being discrete distributions, it is also clear that these reassigned priors, themselves, converge to some limiting discrete distribution. Examples of limiting distributions converged to in this manner are illustrated in Fig.~\ref{fig_limitingDists}. The points in each subfigure indicated by black dots are the limits of the sequences of new points chosen for reassignments -- so-called \emph{limit points}.

\emph{stage 6)} So, the limiting distributions assign probabilities only to certain limit points of the 7 $\mathcal R$-subsets. The exact values of the probabilities will depend on which initial prior $F$ (with its probabilities $\{M_i\}$) was chosen to create the ``reassigned'' priors sequence. To determine those values for the ``$M_i$''s that ensure the sequence of objective function values converges to the infimum, one can systematically allocate probability masses to the limit points. We will now illustrate this, and show how the priors (when $p_l=0$) in Fig.s~\ref{fig:fig_CBIsoln_withFails_Phi2gt1minustheta_1}, \ref{fig:fig_CBIsoln_withFails_Phi2lt1minustheta_1}, \ref{fig:fig_CBIsoln_withnoconsFails_Phi1lt1minustheta_1}, \ref{fig:fig_CBIsoln_withnoconsFails_Phi1gt1minustheta_1}, \ref{fig:fig_CBIsoln_withFails_zeroconf_1}, \ref{fig:fig_CBIsoln_withnoconsFails_zeroconf_1} and \ref{fig_CBIsoln_noFails} can be obtained from the limiting distribution forms in Fig.~\ref{fig_limitingDists}. All of the priors in Fig.s~\ref{fig_CBIsoln_withFails_Phi2glt1minustheta} -- \ref{fig_CBIsoln_noFails} (when $p_l>0$) are similarly obtained.

\begin{figure}[h!]
	\captionsetup[figure]{format=hang}
	\begin{subfigure}[]{0.48\linewidth}
		\centering
			\includegraphics[width=0.65\linewidth]{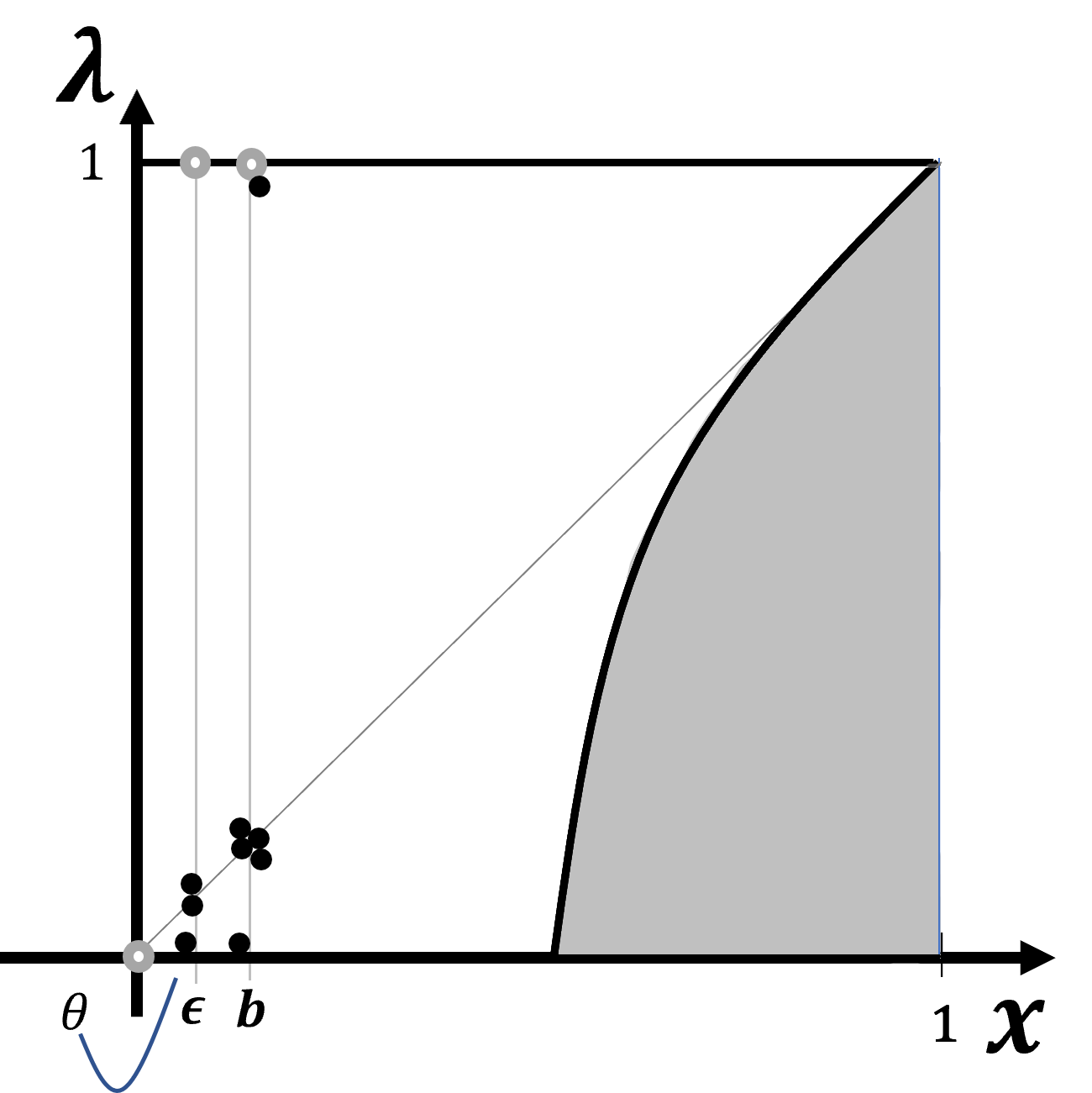}
		\caption{{\footnotesize For $x\leqslant\epsilon$, assign $\theta$ to limit point where likelihood is smallest, i.e. $(\epsilon, 0)$. }}
		\label{fig:figB8_proof_step1}
	\end{subfigure}
	\begin{subfigure}[]{0.48\linewidth}
		\centering
			\includegraphics[width=0.65\linewidth]{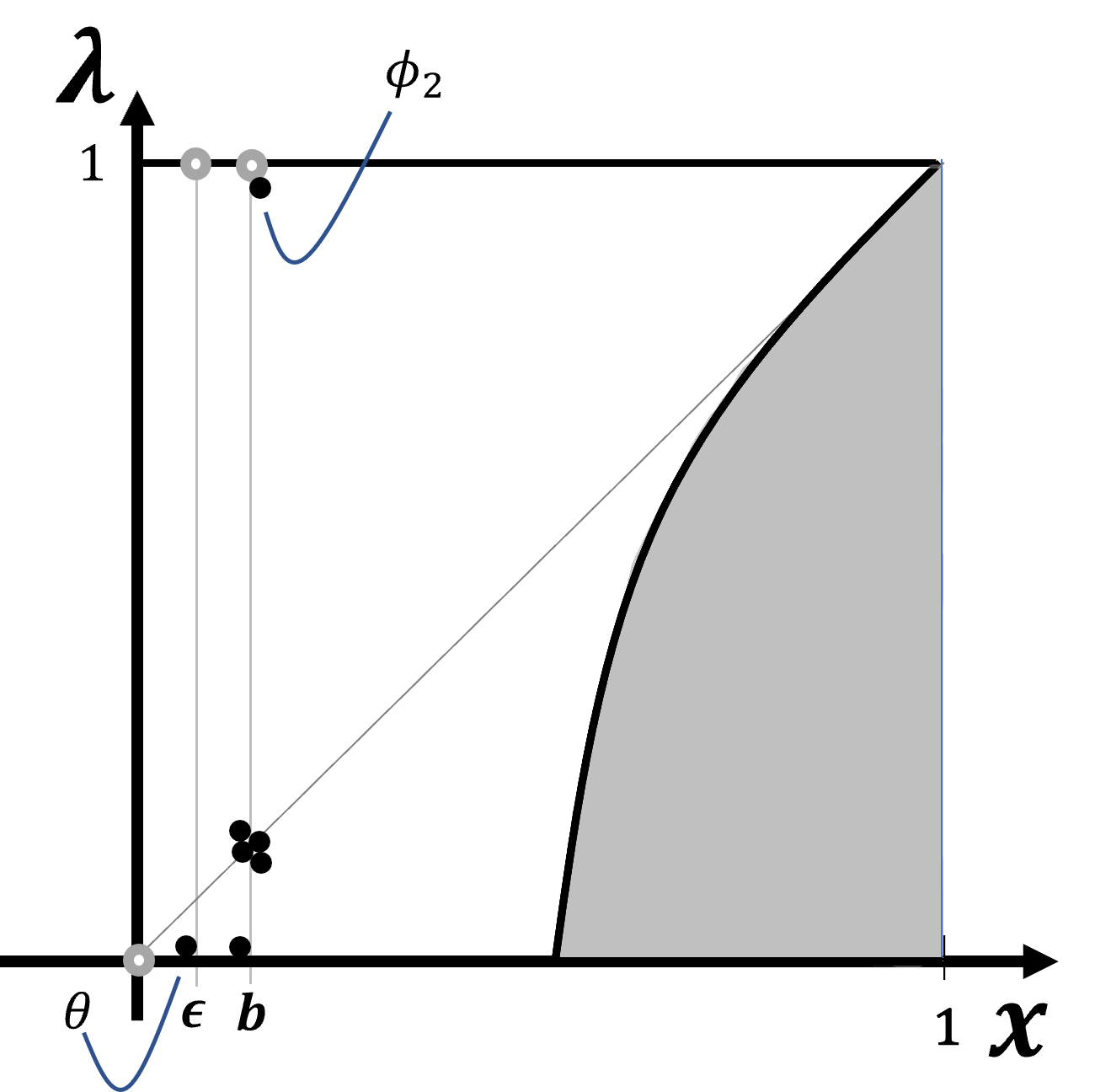}
		\caption{{\footnotesize Assign $\phi_2$ to limit point where $x\geqslant b$, i.e. $(b, 1)$.  }}
		\label{fig:figB8_proof_step2}
	\end{subfigure}
	\begin{subfigure}[]{0.48\linewidth}
		\centering
			\includegraphics[width=0.65\linewidth]{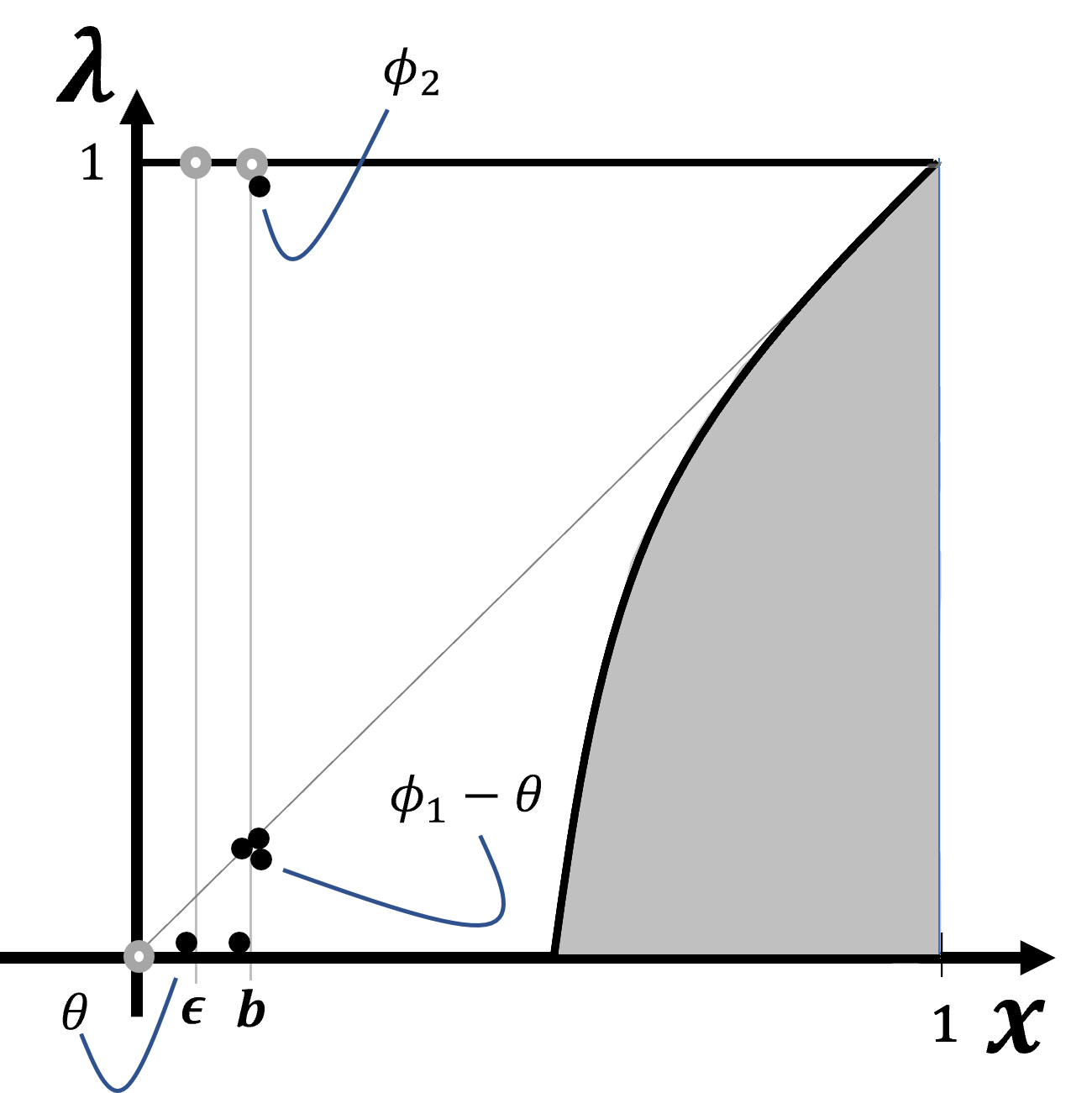}
		\caption{{\footnotesize Assign $\phi_1-\theta$ to limit point where $x\geqslant b$, i.e. $(b, b)$. }}
		\label{fig:figB8_proof_step3}
	\end{subfigure}
	\begin{subfigure}[]{0.48\linewidth}
		\centering
			\includegraphics[width=0.65\linewidth]{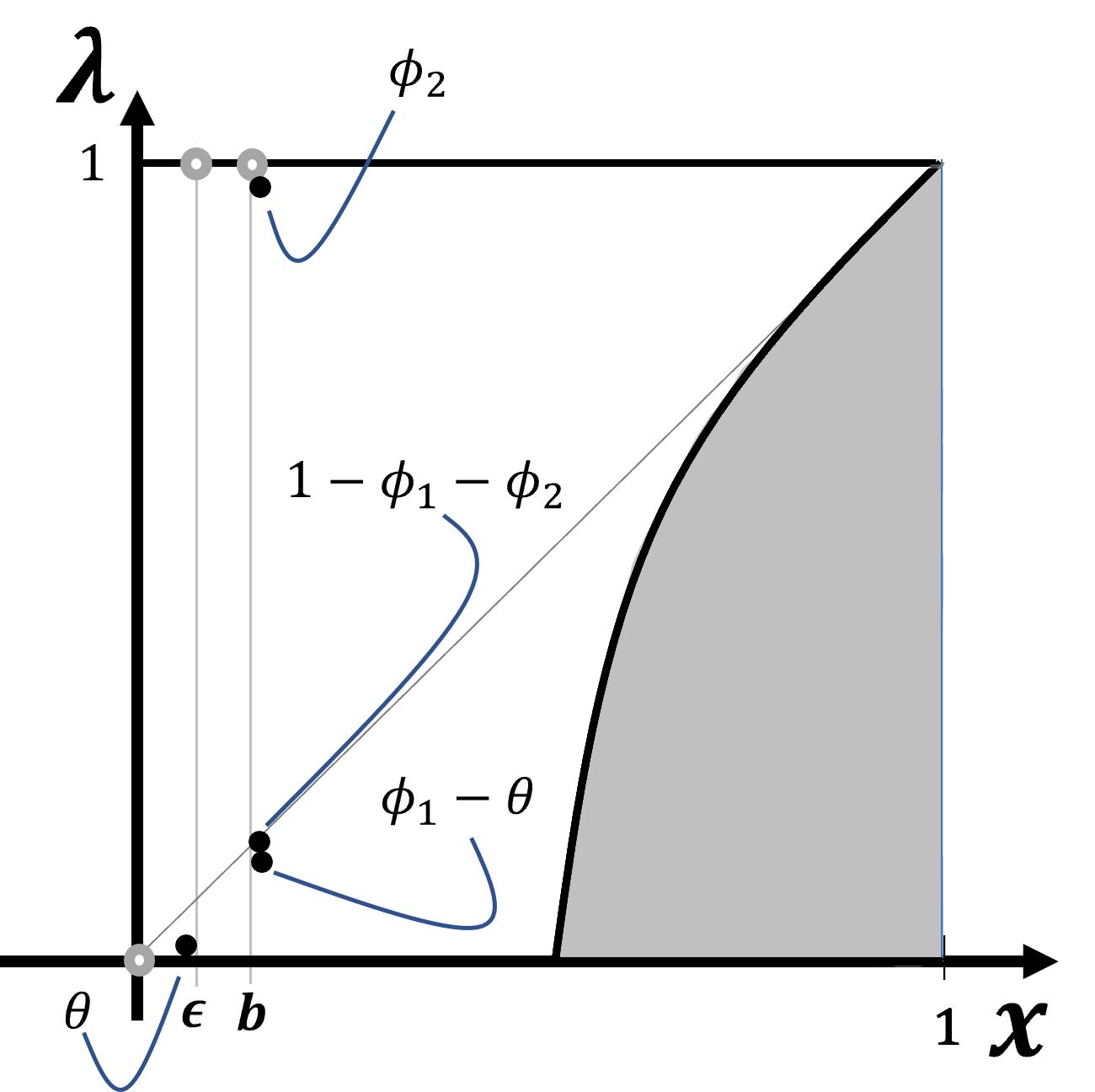}
		\caption{{\footnotesize Assign $1-\phi_1-\phi_2$ to limit point on diagonal where $x\geqslant b$, i.e. $(b, b)$. }}
		\label{fig:figB8_proof_step4}
	\end{subfigure}
	\caption[Example derivation 1 of worst-case prior]{{\small A systematic allocation of probabilities to limit points, that demonstrates how Fig.~\ref{fig:fig_CBIsoln_noFails_Phi1gtTheta} is obtained from Fig.\ref{fig:fig_limitingDist_gammadeltaalpha0}.  \normalsize}}
	\label{figB8_proof}
\end{figure}

\begin{figure*}[h!]
	\captionsetup[figure]{format=hang}
	\begin{subfigure}[]{0.33\linewidth}
		\centering
			\includegraphics[width=0.8\linewidth]{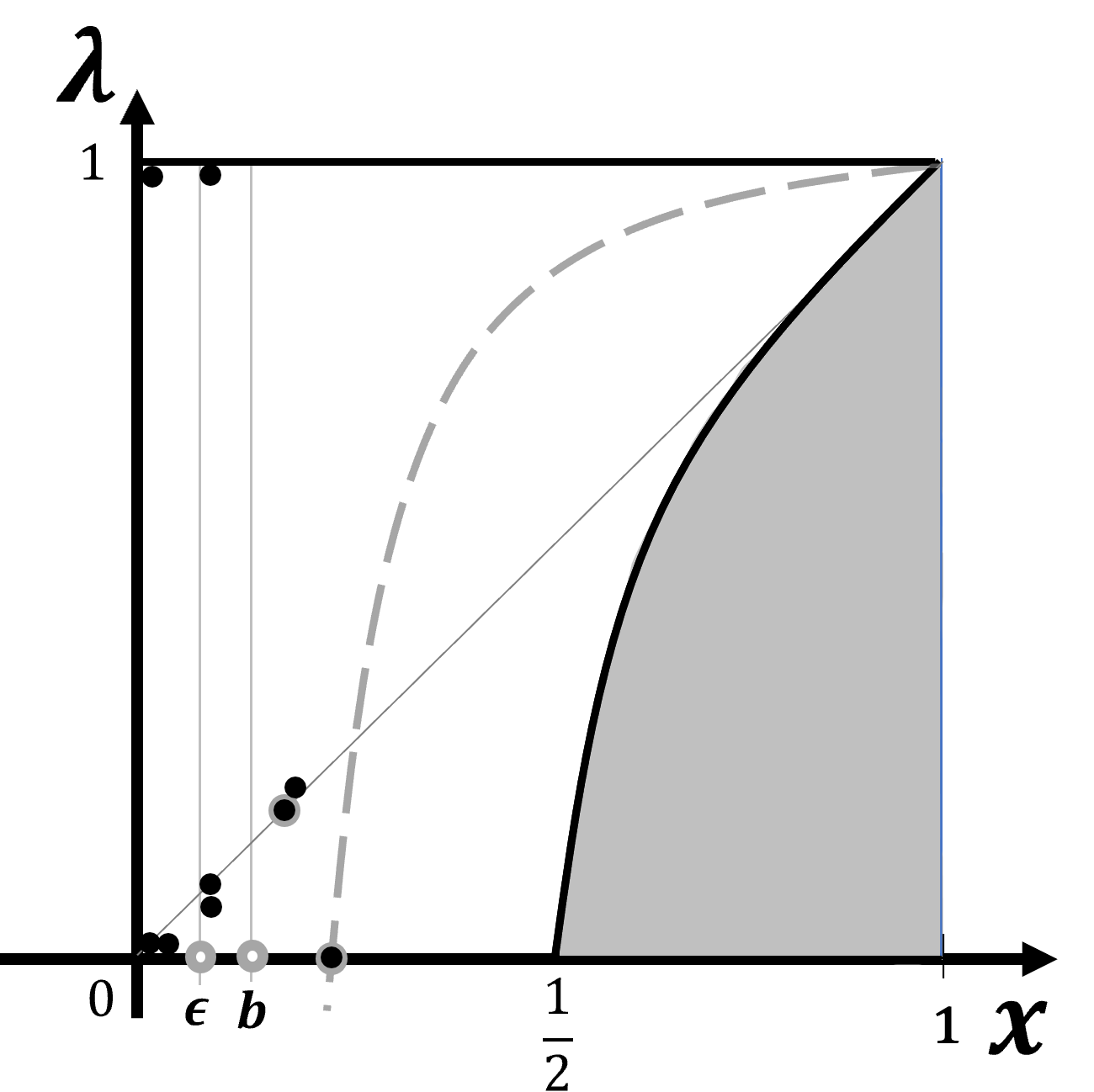}
		\caption{{\footnotesize No consecutive failures (so $\gamma=0$ and $\alpha,\beta,\delta\geqslant 1$) }}
		\label{fig:fig_limitingDist_gamma0}
	\end{subfigure}
	\begin{subfigure}[]{0.33\linewidth}
		\centering
			\includegraphics[width=0.8\linewidth]{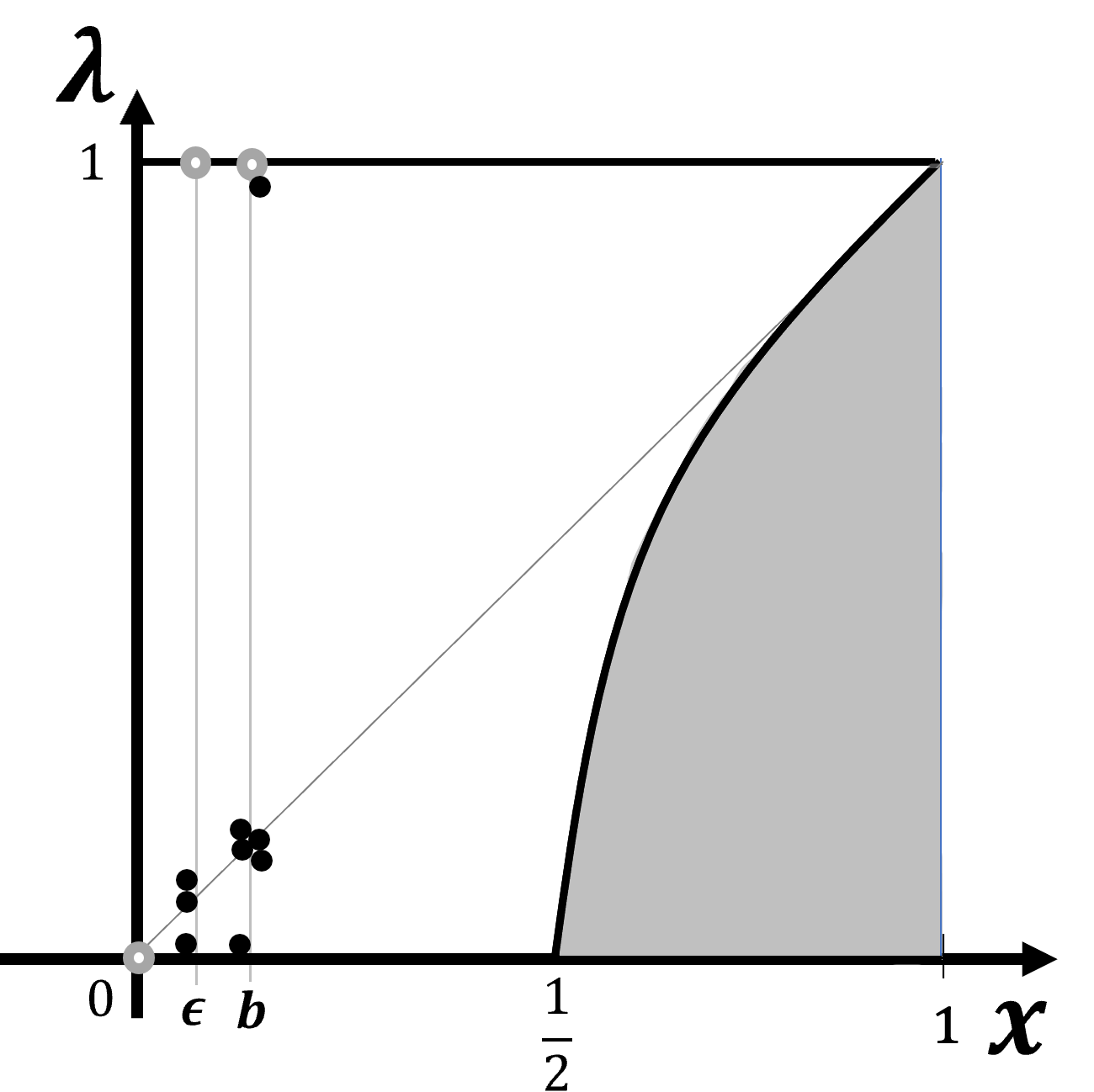}
		\caption{{\footnotesize No failures (so $\alpha, \gamma, \delta =0$)  }}
		\label{fig:fig_limitingDist_gammadeltaalpha0}
	\end{subfigure}
	\centering
	\begin{subfigure}[]{0.33\linewidth}
		\centering
			\includegraphics[width=0.8\linewidth]{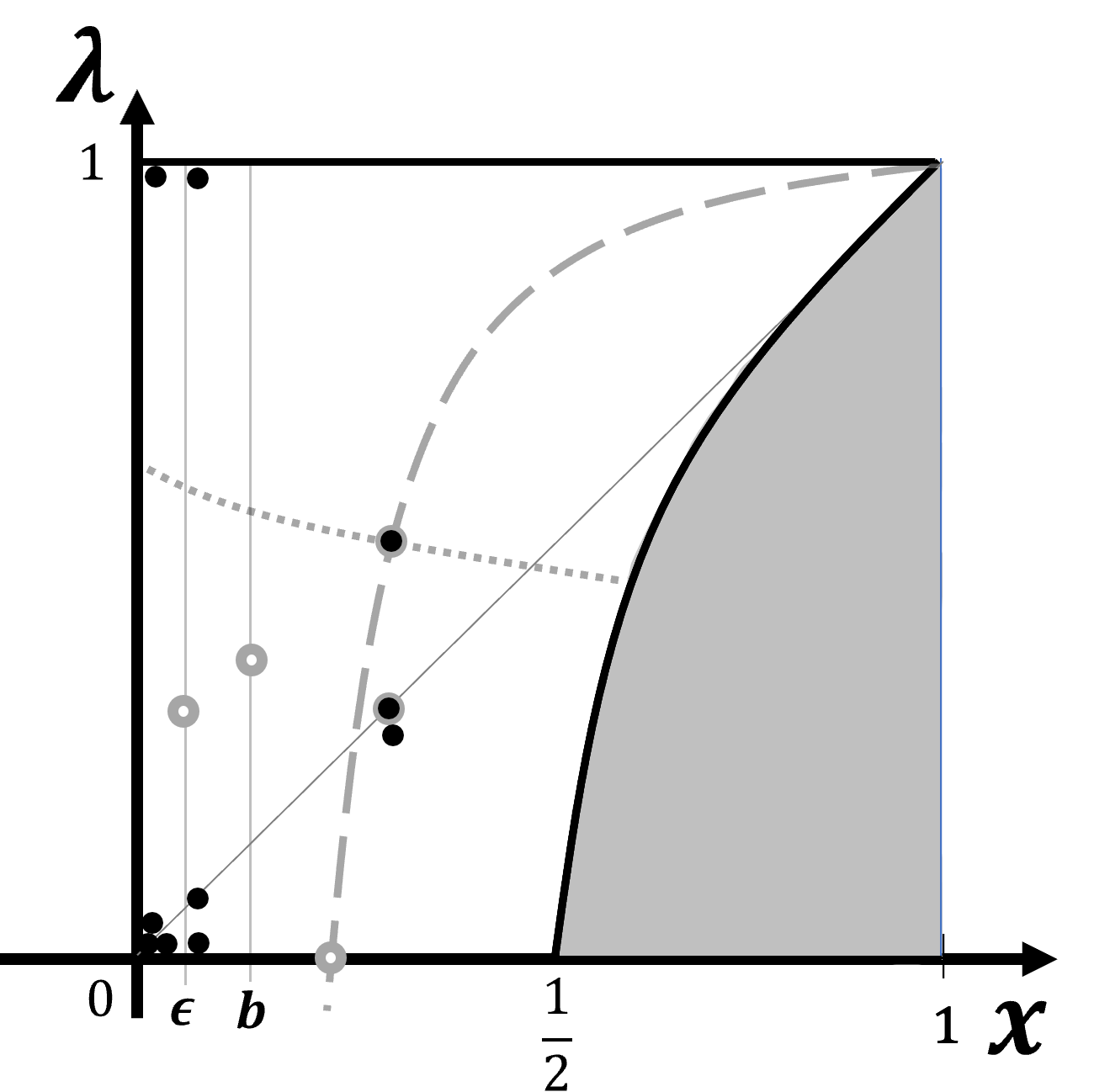}
		\caption{{\footnotesize No restrictions on failures (so $\alpha,\beta,\gamma, \delta\geqslant 1$) }}
		\label{fig:fig_limitingDist_betagammadeltaalphaPositive}
	\end{subfigure}
	\caption[Example limiting distributions]{{\small Examples of 3 limiting distributions for sequences of prior distributions (in ${\mathcal D}^{\prime}$) that give progressively smaller posterior confidence in the failure-rate bound $b$. These distributions must allocate mass only at certain limit points of each $\mathcal R$-subset, as indicated by the black circles. Some relevant stationary points in $\mathcal R$ are also indicated as grey circles.  \normalsize}}
	\label{fig_limitingDists}
\end{figure*}

\begin{figure}[htbp!]
	\captionsetup[figure]{format=hang}
	\centering
	\begin{subfigure}[]{0.3\linewidth}
		\centering
			\includegraphics[width=1.0\linewidth]{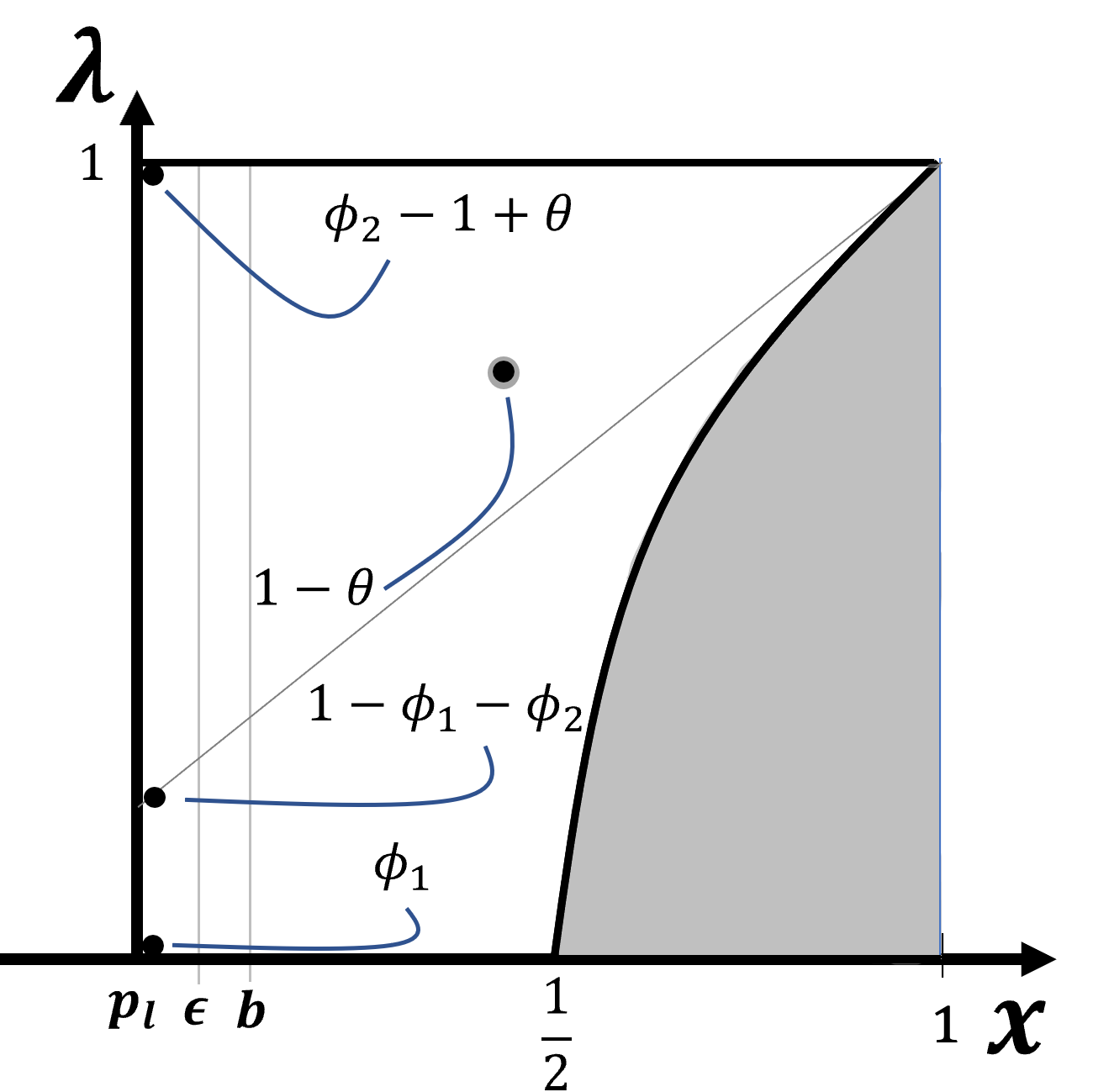}
		\caption{{\footnotesize $\phi_2\geqslant 1-\theta$ }}
		\label{fig:fig_CBIsoln_withFails_Phi2gt1minustheta_1}
	\end{subfigure}
	\begin{subfigure}[]{0.3\linewidth}
		\centering
			\includegraphics[width=1.0\linewidth]{Images/fig_CBIsoln_withFails_Phi2lt1minustheta_1}
		\caption{{\footnotesize $\phi_2\leqslant 1-\theta\ $ and $\ \phi_1\leqslant \theta$ }}
		\label{fig:fig_CBIsoln_withFails_Phi2lt1minustheta_1}
	\end{subfigure}
	\begin{subfigure}[]{0.3\linewidth}
		\centering
			\includegraphics[width=1.0\linewidth]{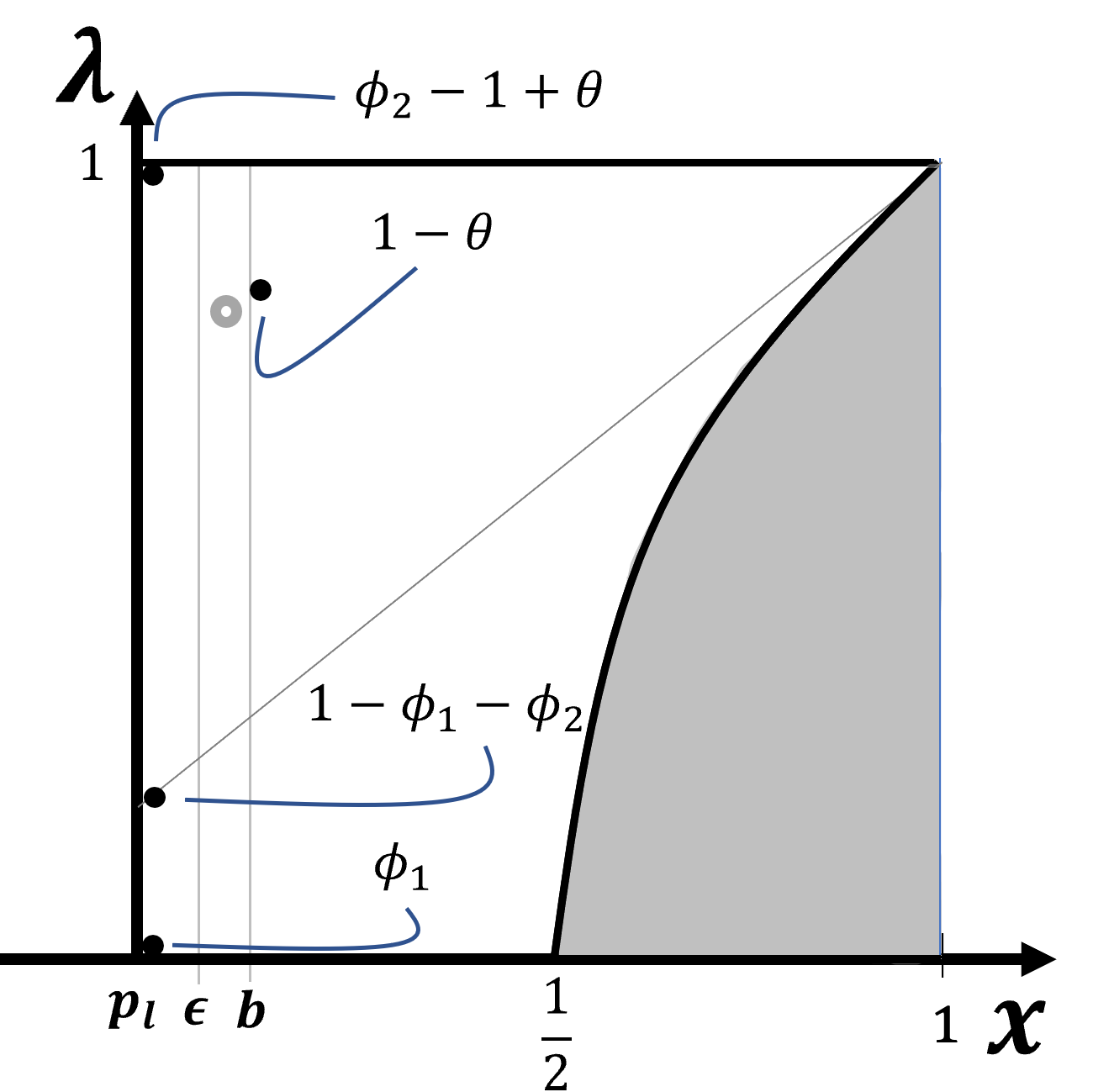}
		\caption{{\footnotesize $\phi_2\geqslant 1-\theta$  }}
		\label{fig:fig_CBIsoln_withFails_Phi2gt1minustheta_2}
	\end{subfigure}
	\begin{subfigure}[]{0.3\linewidth}
		\centering
			\includegraphics[width=1.0\linewidth]{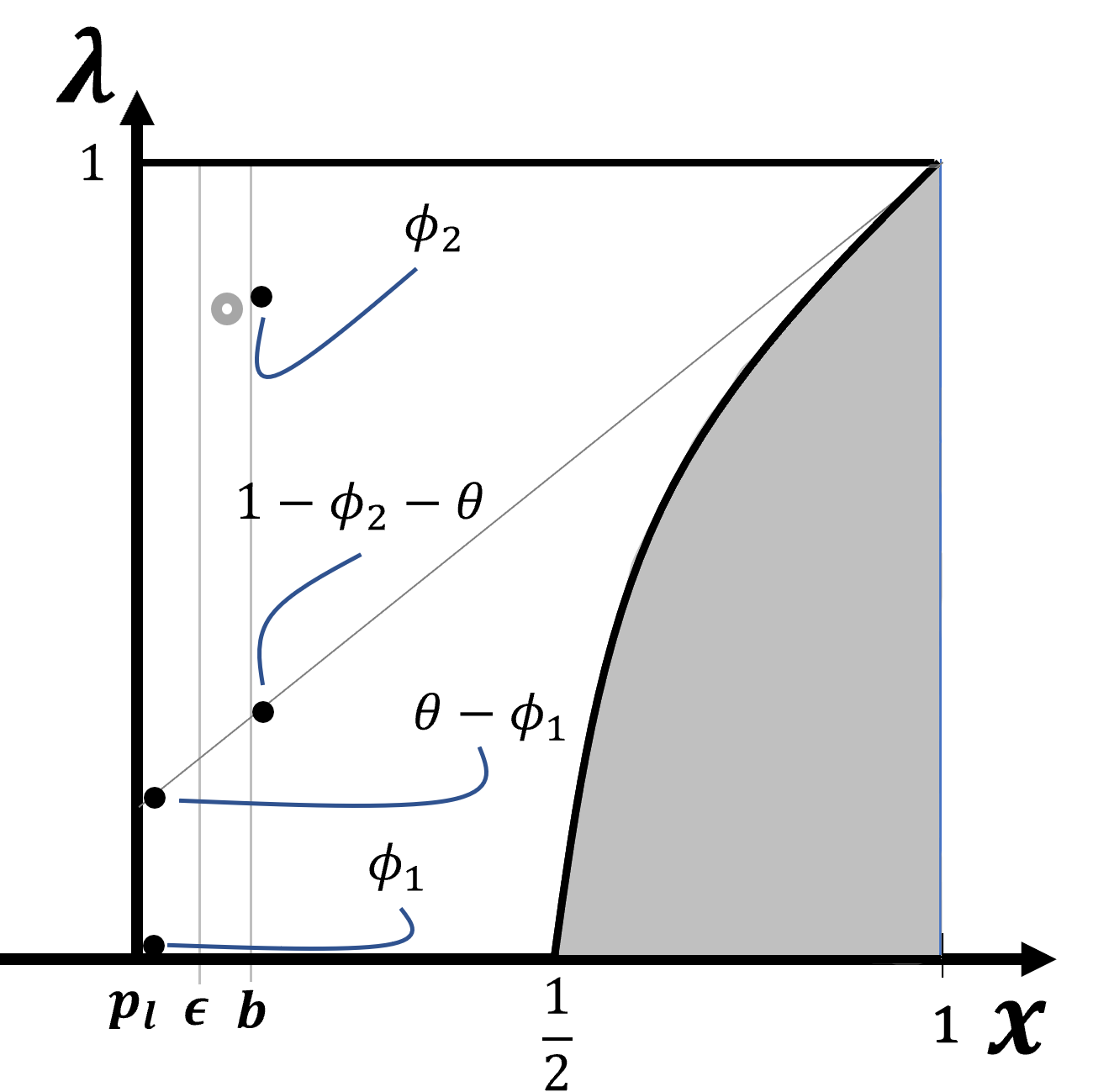}
		\caption{{\footnotesize $\phi_2\leqslant 1-\theta\ $ and $\ \phi_1\leqslant \theta$ }}
		\label{fig:fig_CBIsoln_withFails_Phi2lt1minustheta_2}
	\end{subfigure}
	\begin{subfigure}[]{0.3\linewidth}
		\centering
			\includegraphics[width=1.0\linewidth]{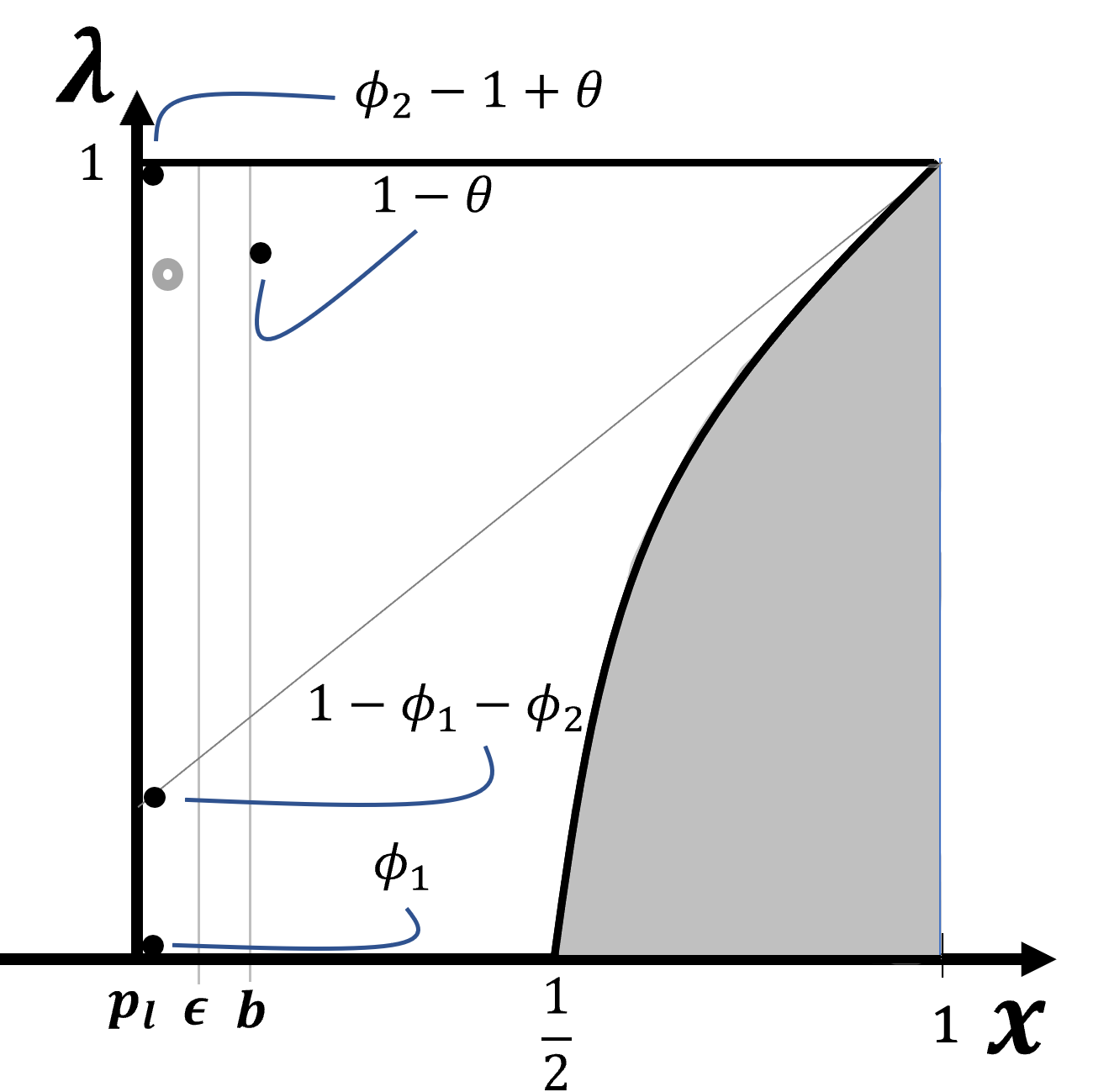}
		\caption{{\footnotesize $\ L(p_l, p_l ;\alpha,\ldots)$ $\leqslant$ $L(\epsilon,\epsilon;\alpha,\ldots)$ and $\ \phi_2\geqslant 1-\theta$ }}
		\label{fig:fig_CBIsoln_withFails_Phi2gt1minustheta_3}
	\end{subfigure}
	\begin{subfigure}[]{0.3\linewidth}
		\centering
			\includegraphics[width=1.0\linewidth]{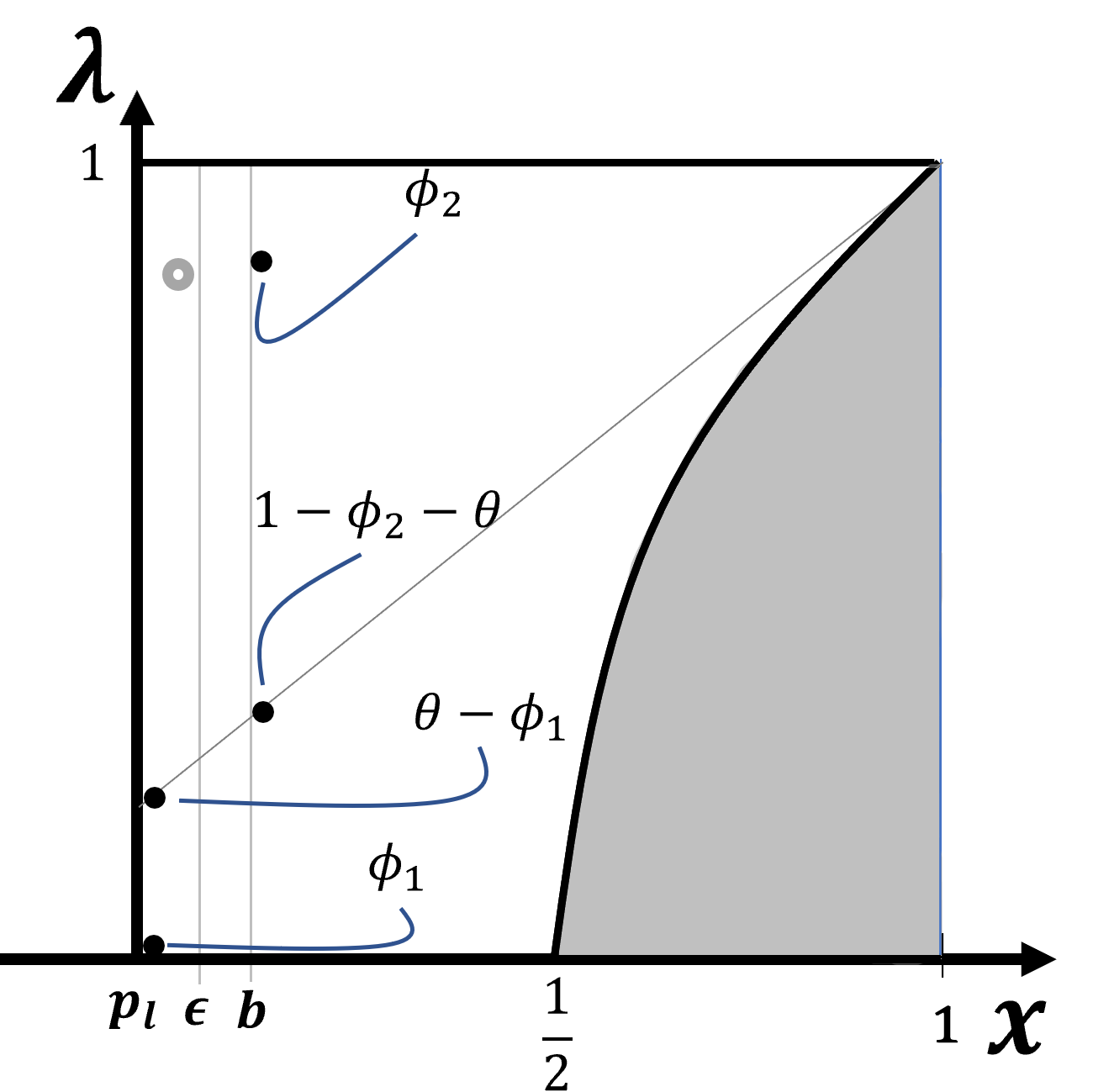}
		\caption{{\footnotesize $\ L(p_l, p_l ;\alpha,\ldots)$ $\leqslant$ $L(\epsilon,\epsilon;\alpha,\ldots)\ $ and $\ \phi_1\leqslant \theta\ $ and $\ \phi_2\leqslant 1-\theta$  }}
		\label{fig:fig_CBIsoln_withFails_Phi2lt1minustheta_3}
	\end{subfigure}
	\begin{subfigure}[]{0.3\linewidth}
		\centering
			\includegraphics[width=1.0\linewidth]{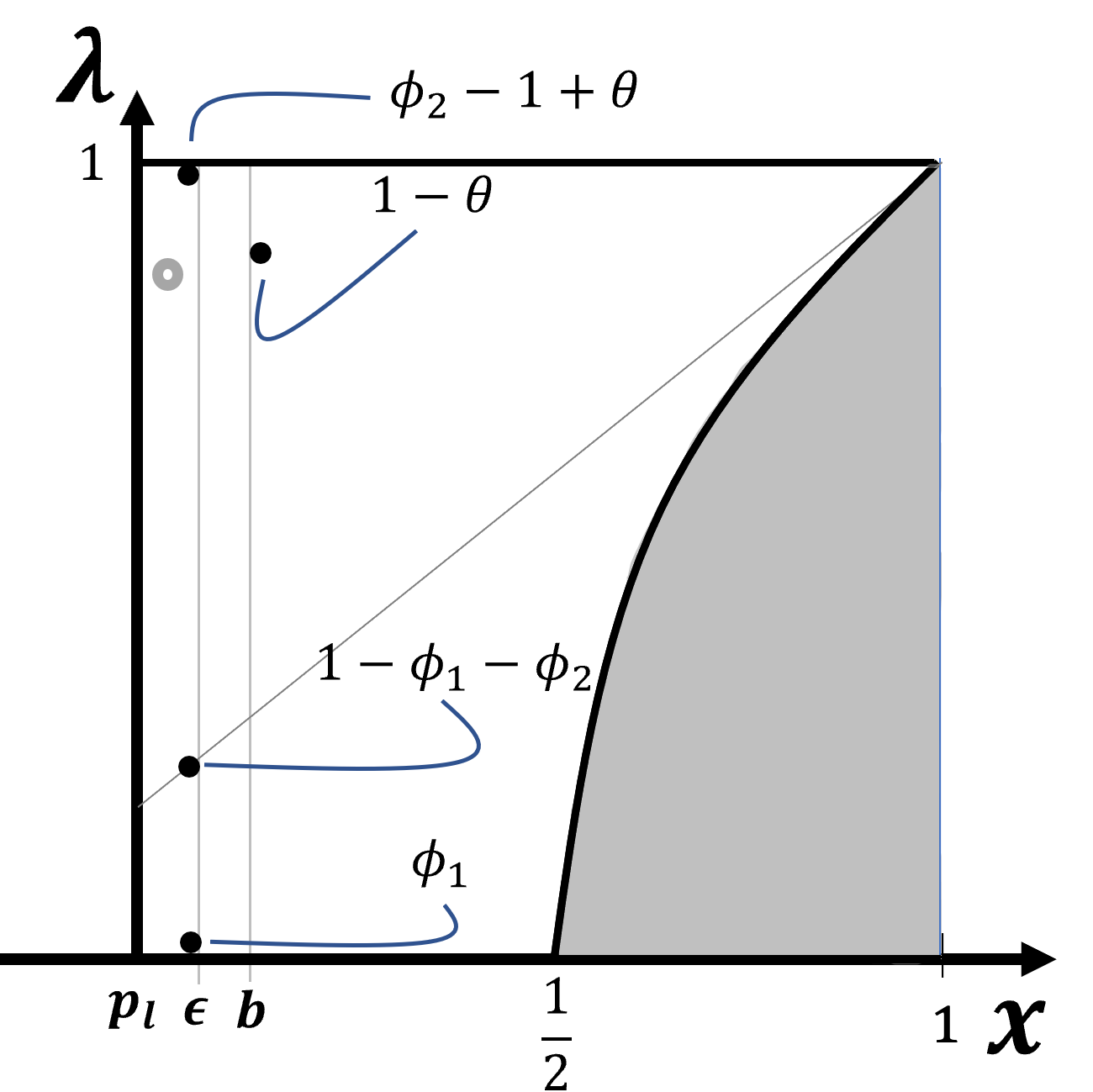}
		\caption{{\footnotesize $\ L(p_l, p_l ;\alpha,\ldots)$ $\geqslant$ $L(\epsilon,\epsilon;\alpha,\ldots)\ $ and $\ \phi_2\geqslant 1-\theta\ $ }}
		\label{fig:fig_CBIsoln_withFails_Phi2gt1minustheta_4}
	\end{subfigure}
	\begin{subfigure}[]{0.3\linewidth}
		\centering
			\includegraphics[width=1.0\linewidth]{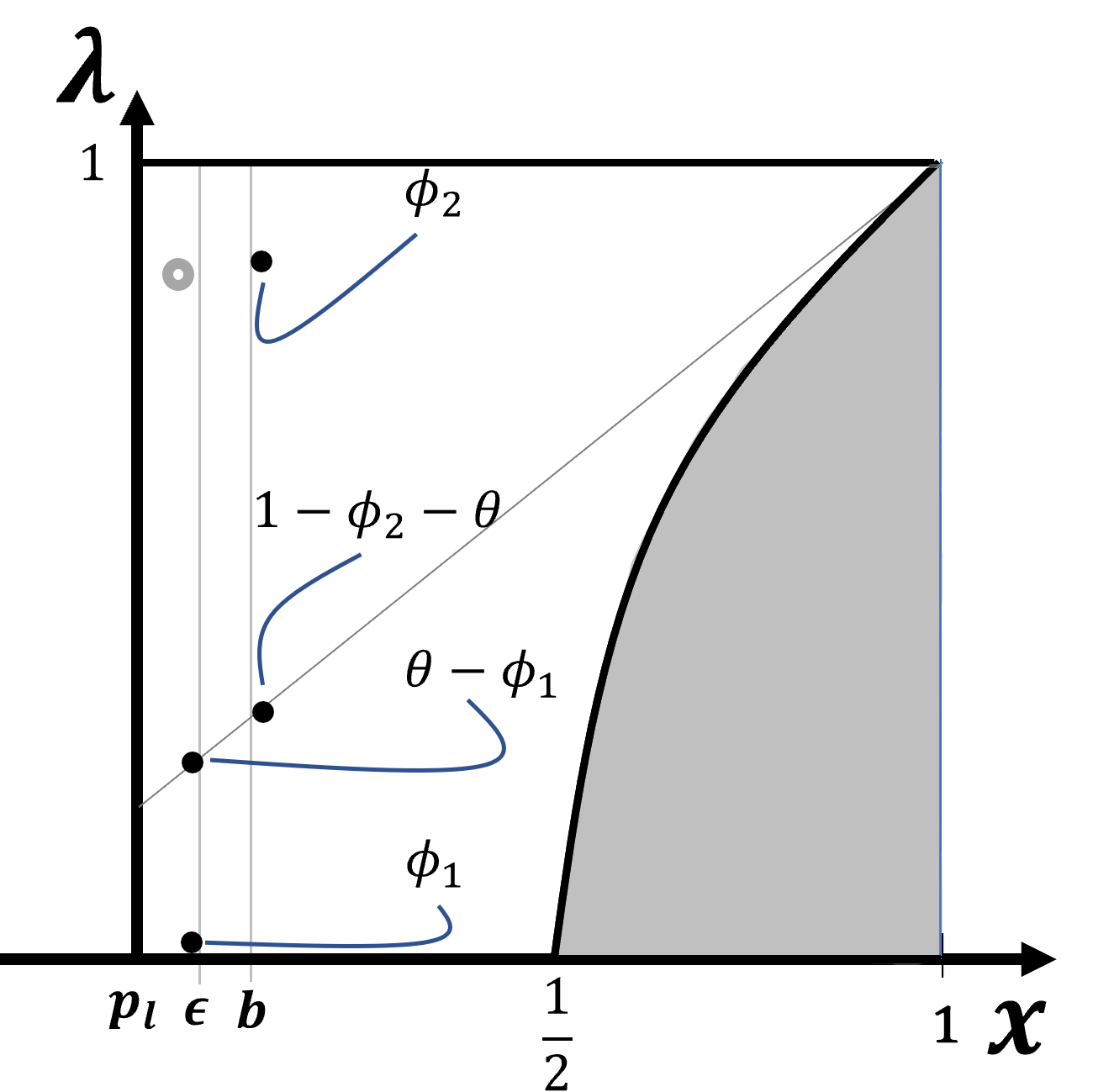}
		\caption{{\footnotesize $\ L(p_l, p_l ;\alpha,\ldots)$ $\geqslant$ $L(\epsilon,\epsilon;\alpha,\ldots)\ $ and $\ \phi_1\leqslant \theta\ $ and $\ \phi_2\leqslant 1-\theta$ }}
		\label{fig:fig_CBIsoln_withFails_Phi2lt1minustheta_4}
	\end{subfigure}
	\caption[CBI priors: with failures]{{\footnotesize Worst case prior distributions that solve the optimisation problem in Theorem \ref{theorem_CBI_withFailures_and_pl} when consecutive failures are observed (i.e. $r>0$). It's important to note the following: the precise locations of the ``black dots'' for each such distribution are determined by 1) the values of $\alpha, \beta, \gamma$ and $\delta$, 2) whether the first execution is a success or failure, and 3) the indicated parameter ranges in each subfigure. The location $(x^\ast, \lambda^\ast)$ of the global maximum for the Klotz likelihood is indicated by the grey circle. The $4$ priors, illustrated in subfigures \ref{fig:fig_CBIsoln_withFails_Phi2gt1minustheta_1},  \ref{fig:fig_CBIsoln_withFails_Phi2gt1minustheta_2}, \ref{fig:fig_CBIsoln_withFails_Phi2gt1minustheta_3} and \ref{fig:fig_CBIsoln_withFails_Phi2gt1minustheta_4}, are solutions when $\phi_2\geqslant 1-\theta$. While the priors in \ref{fig:fig_CBIsoln_withFails_Phi2lt1minustheta_1},  \ref{fig:fig_CBIsoln_withFails_Phi2lt1minustheta_2}, \ref{fig:fig_CBIsoln_withFails_Phi2lt1minustheta_3} and  \ref{fig:fig_CBIsoln_withFails_Phi2lt1minustheta_4} solve the problem when $\phi_2\leqslant 1-\theta\ $ and $\ \phi_1\leqslant \theta$. These solutions assume $\alpha, \beta, \gamma, \delta > 0$.  \normalsize}}  
	\label{fig_CBIsoln_withFails_Phi2glt1minustheta}
\end{figure}

\begin{figure}[htbp!]
	\captionsetup[figure]{format=hang}
	\centering
	\begin{subfigure}[]{0.3\linewidth}
		\centering
			\includegraphics[width=1.0\linewidth]{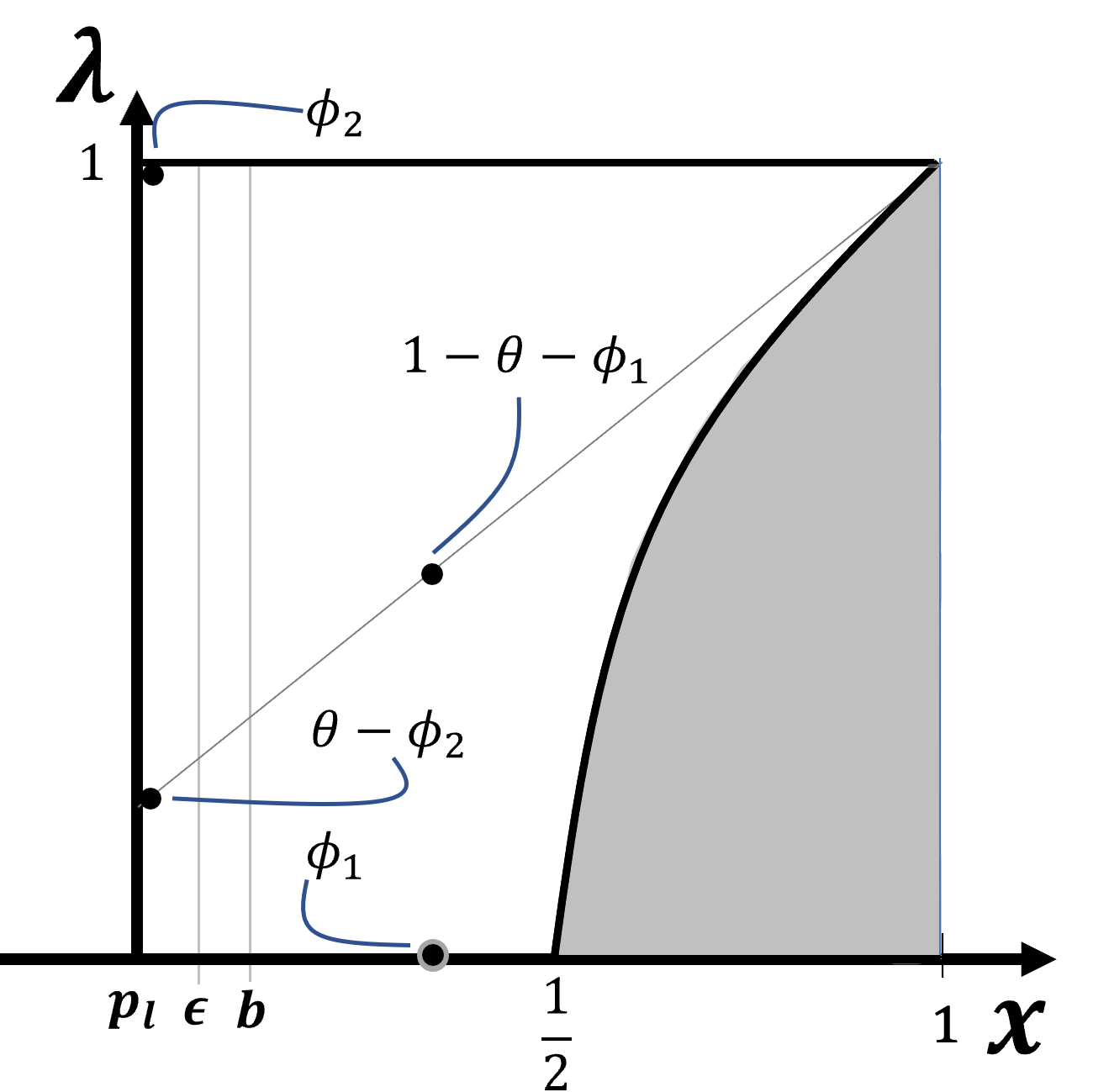}
		\caption{{\footnotesize $\phi_1\leqslant 1-\theta\ $ and $\ \phi_2\leqslant \theta\ $ }}
		\label{fig:fig_CBIsoln_withnoconsFails_Phi1lt1minustheta_1}
	\end{subfigure}
	\begin{subfigure}[]{0.3\linewidth}
		\centering
			\includegraphics[width=1.0\linewidth]{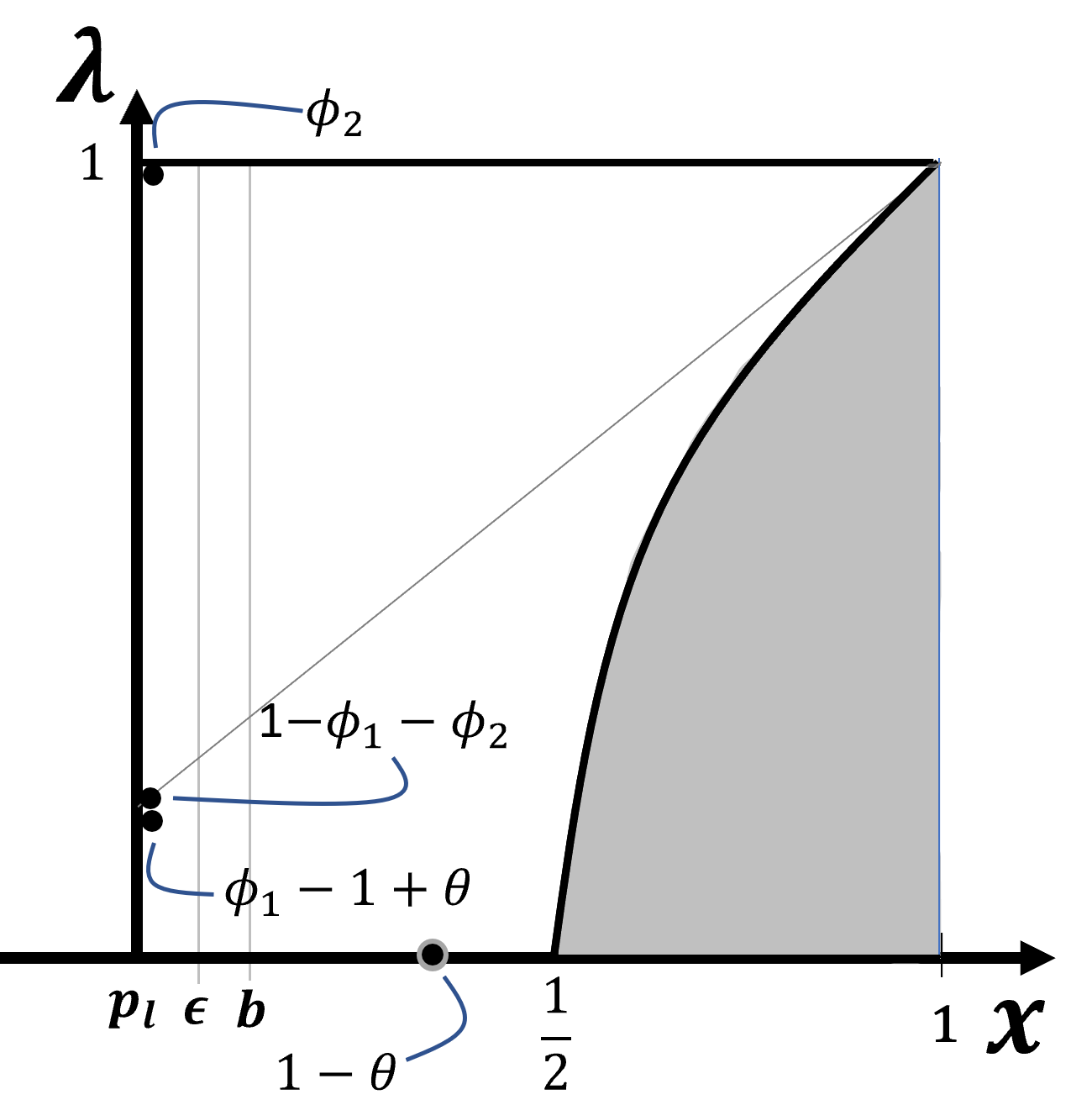}
		\caption{{\footnotesize $\phi_1\geqslant 1-\theta\ $}}
		\label{fig:fig_CBIsoln_withnoconsFails_Phi1gt1minustheta_1}
	\end{subfigure}
	\begin{subfigure}[]{0.3\linewidth}
		\centering
			\includegraphics[width=1.0\linewidth]{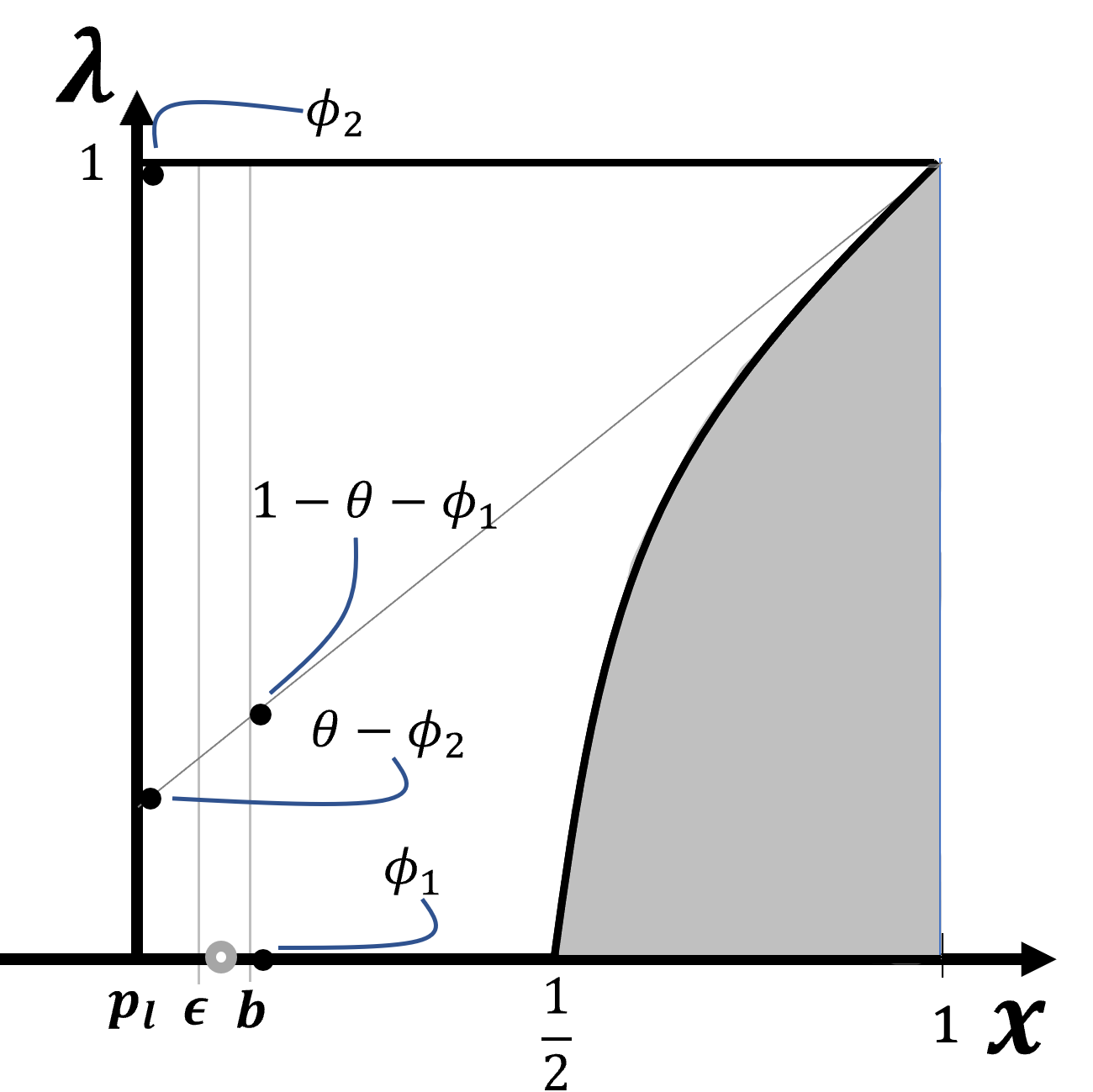}
		\caption{{\footnotesize $\phi_1\leqslant 1-\theta\ $ and $\ \phi_2\leqslant \theta$  }}
		\label{fig:fig_CBIsoln_withnoconsFails_Phi1lt1minustheta_2}
	\end{subfigure}
	\begin{subfigure}[]{0.3\linewidth}
		\centering
			\includegraphics[width=1.0\linewidth]{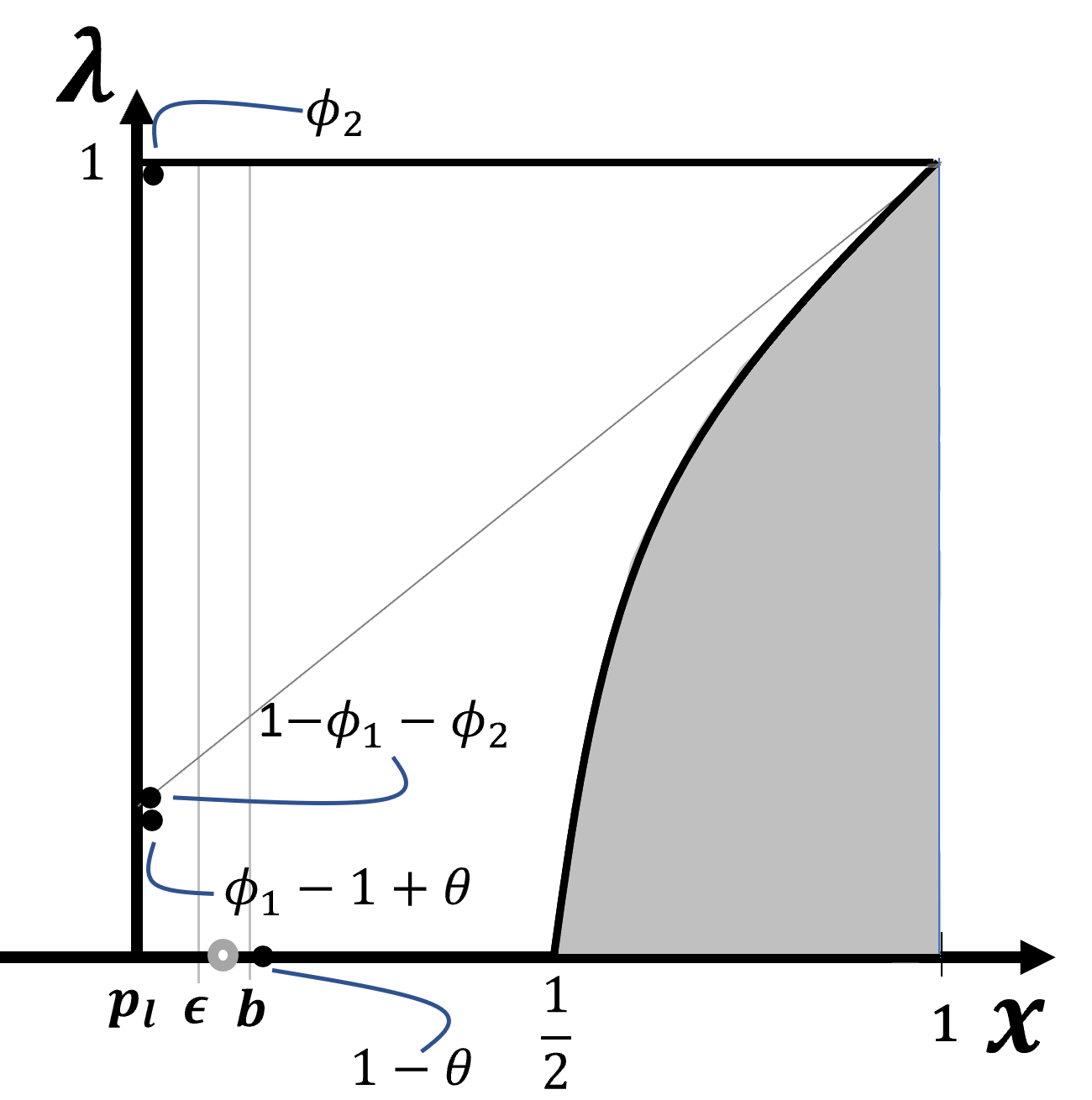}
		\caption{{\footnotesize $\phi_1\geqslant 1-\theta$ }}
		\label{fig:fig_CBIsoln_withnoconsFails_Phi1gt1minustheta_2}
	\end{subfigure}
	\begin{subfigure}[]{0.3\linewidth}
		\centering
			\includegraphics[width=1.0\linewidth]{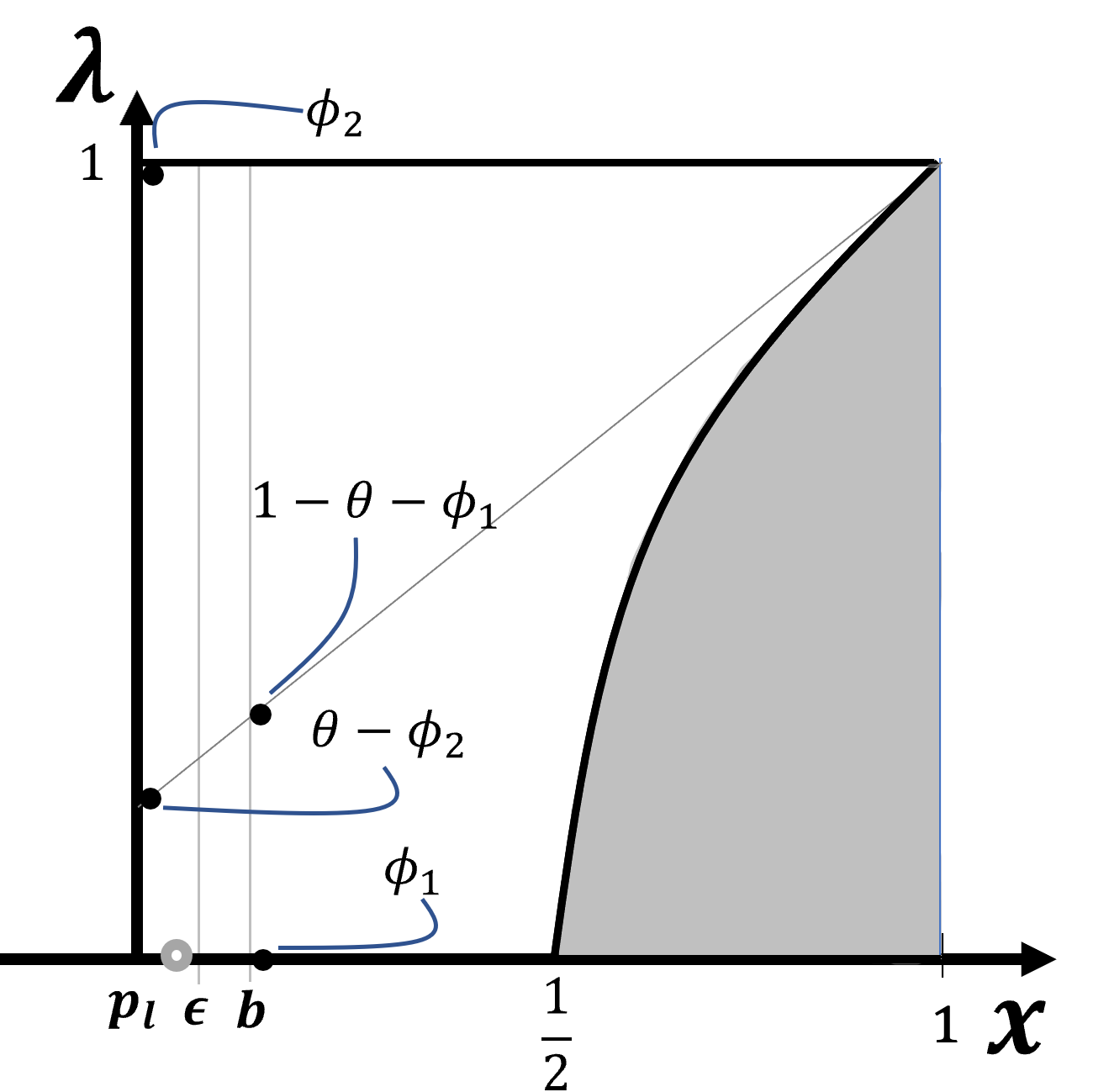}
		\caption{{\footnotesize $\ L(p_l, p_l ;\alpha,\ldots)$ $\leqslant$ $L(\epsilon,\epsilon;\alpha,\ldots)$ and $\ \phi_1\leqslant 1-\theta\ $ and $\ \phi_2\leqslant \theta\ $ }}
		\label{fig:fig_CBIsoln_withnoconsFails_Phi1lt1minustheta_3}
	\end{subfigure}
	\begin{subfigure}[]{0.3\linewidth}
		\centering
			\includegraphics[width=1.0\linewidth]{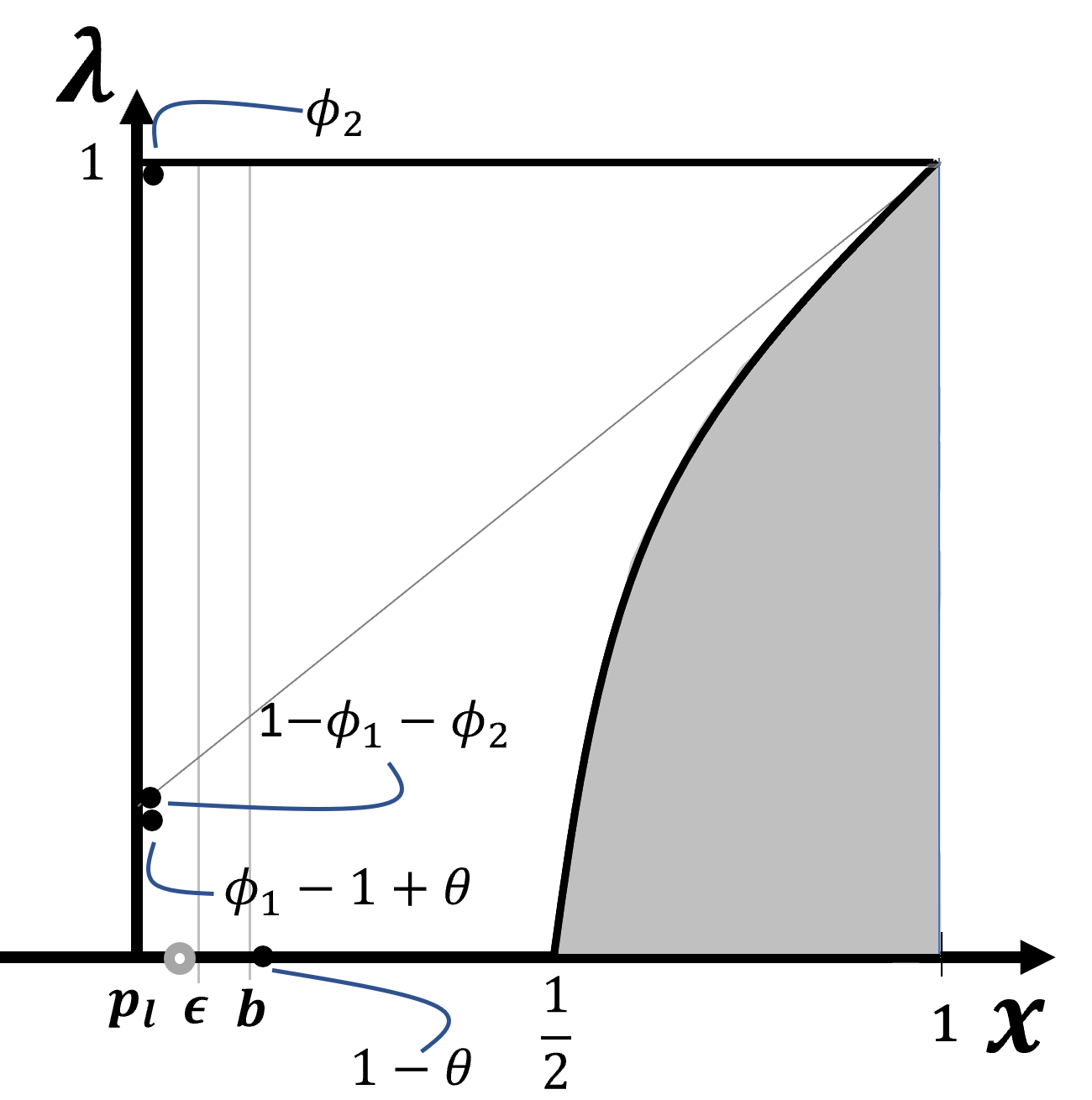}
		\caption{{\footnotesize $\ L(p_l, p_l ;\alpha,\ldots)$ $\leqslant$ $L(\epsilon,\epsilon;\alpha,\ldots)\ $ and $\ \phi_1\geqslant 1-\theta$  }}
		\label{fig:fig_CBIsoln_withnoconsFails_Phi1gt1minustheta_3}
	\end{subfigure}
	\begin{subfigure}[]{0.3\linewidth}
		\centering
			\includegraphics[width=1.0\linewidth]{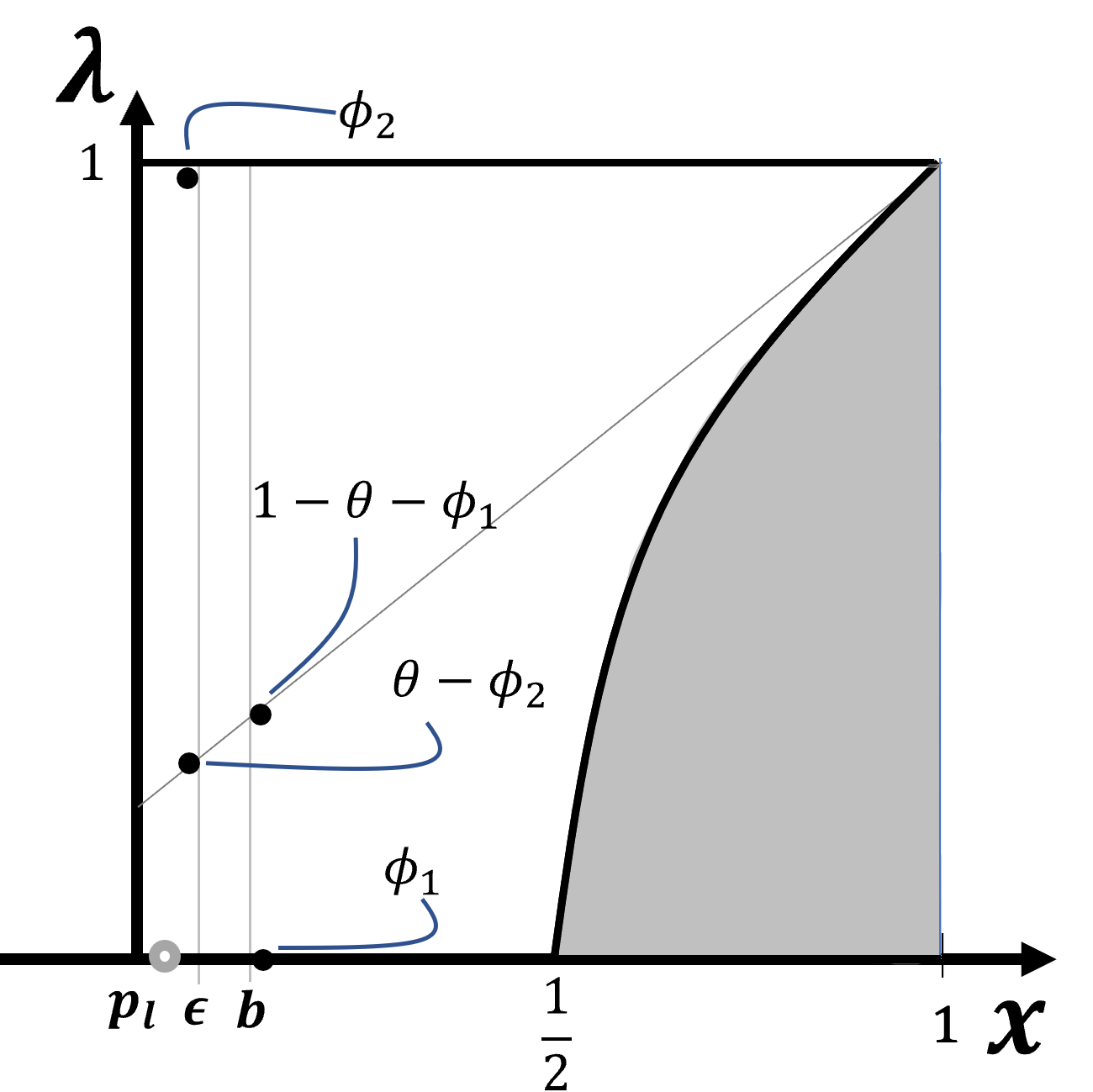}
		\caption{{\footnotesize $\ L(p_l, p_l ;\alpha,\ldots)$ $\geqslant$ $L(\epsilon,\epsilon;\alpha,\ldots)\ $ and $\ \phi_1\leqslant 1-\theta\ $ and $\ \phi_2\leqslant \theta$ }}
		\label{fig:fig_CBIsoln_withnoconsFails_Phi1lt1minustheta_4}
	\end{subfigure}
	\begin{subfigure}[]{0.3\linewidth}
		\centering
			\includegraphics[width=1.0\linewidth]{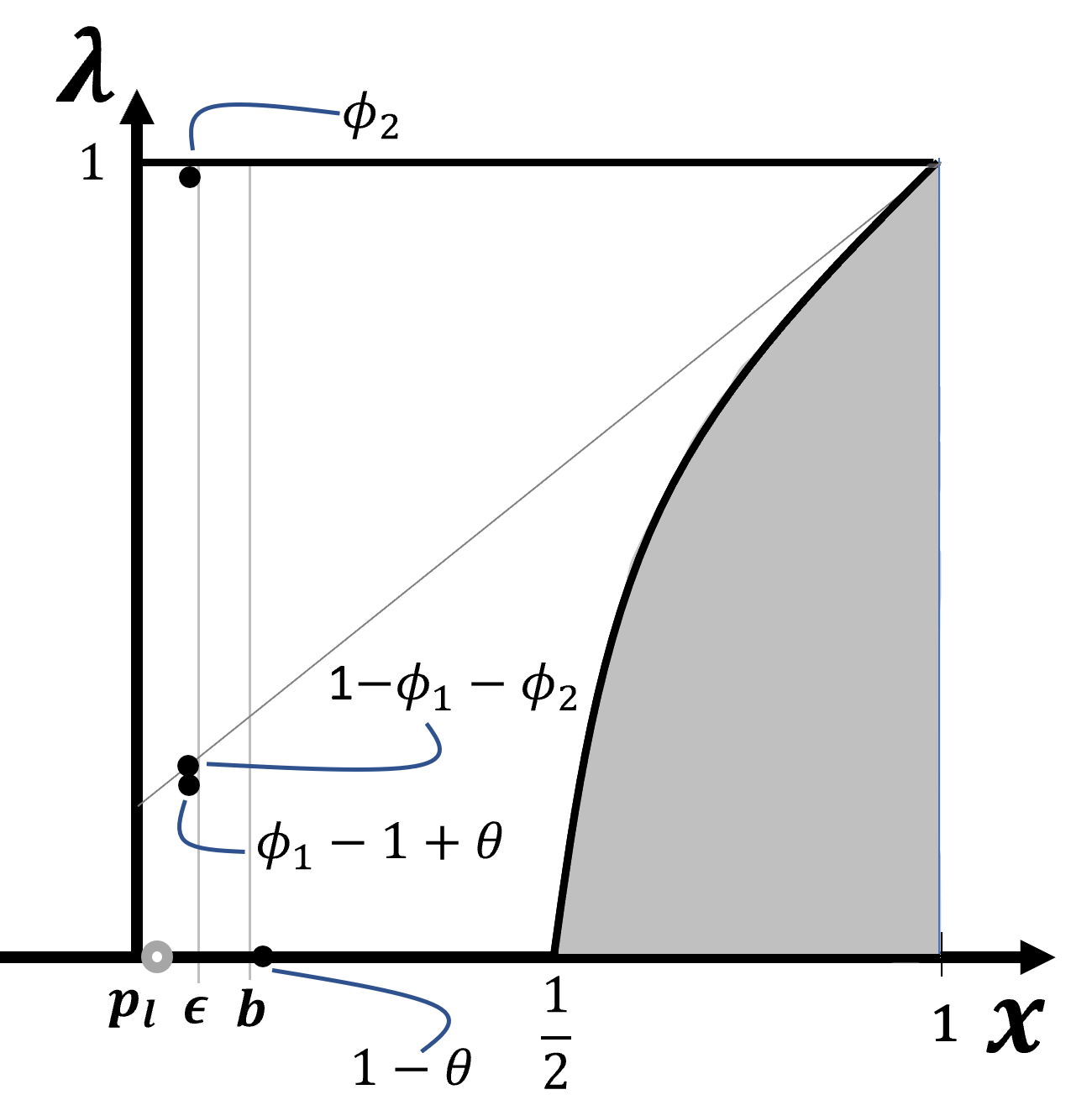}
		\caption{{\footnotesize $\ L(p_l, p_l ;\alpha,\ldots)$ $\geqslant$ $L(\epsilon,\epsilon;\alpha,\ldots)\ $ and $\ \phi_1\geqslant 1-\theta$ }}
		\label{fig:fig_CBIsoln_withnoconsFails_Phi1gt1minustheta_4}
	\end{subfigure}
	\caption[CBI priors: with failures]{{\footnotesize Worst case prior distributions that solve the optimisation problem in Theorem \ref{theorem_CBI_withFailures_and_pl} when failures are observed, without any consecutive failures (i.e. $r=0$). Each distribution's support is determined by $\alpha, \beta, \gamma, \delta$, and whether the first execution succeeds or fails. The location $(x^\ast, \lambda^\ast)$ of the global maximum for the Klotz likelihood is indicated by the grey circle. The $4$ priors, illustrated in subfigures \ref{fig:fig_CBIsoln_withnoconsFails_Phi1lt1minustheta_1},  \ref{fig:fig_CBIsoln_withnoconsFails_Phi1lt1minustheta_2}, \ref{fig:fig_CBIsoln_withnoconsFails_Phi1lt1minustheta_3} and \ref{fig:fig_CBIsoln_withnoconsFails_Phi1lt1minustheta_4},  are solutions when $\phi_1\leqslant 1-\theta\ $ and $\ \phi_1\leqslant \theta$. While the priors in \ref{fig:fig_CBIsoln_withnoconsFails_Phi1gt1minustheta_1},  \ref{fig:fig_CBIsoln_withnoconsFails_Phi1gt1minustheta_2}, \ref{fig:fig_CBIsoln_withnoconsFails_Phi1gt1minustheta_3} and  \ref{fig:fig_CBIsoln_withnoconsFails_Phi1gt1minustheta_4} solve the problem when $\phi_1\geqslant 1-\theta$. These solutions assume $\alpha, \beta, \gamma, \delta > 0$.  \normalsize}}  
	\label{fig_CBIsoln_withnoconsFails_Phi1glt1minustheta}
\end{figure}

\begin{figure}[h!]
	\captionsetup[figure]{format=hang}
	\centering
	\begin{subfigure}[]{0.3\linewidth}
		\centering
			\includegraphics[width=1.0\linewidth]{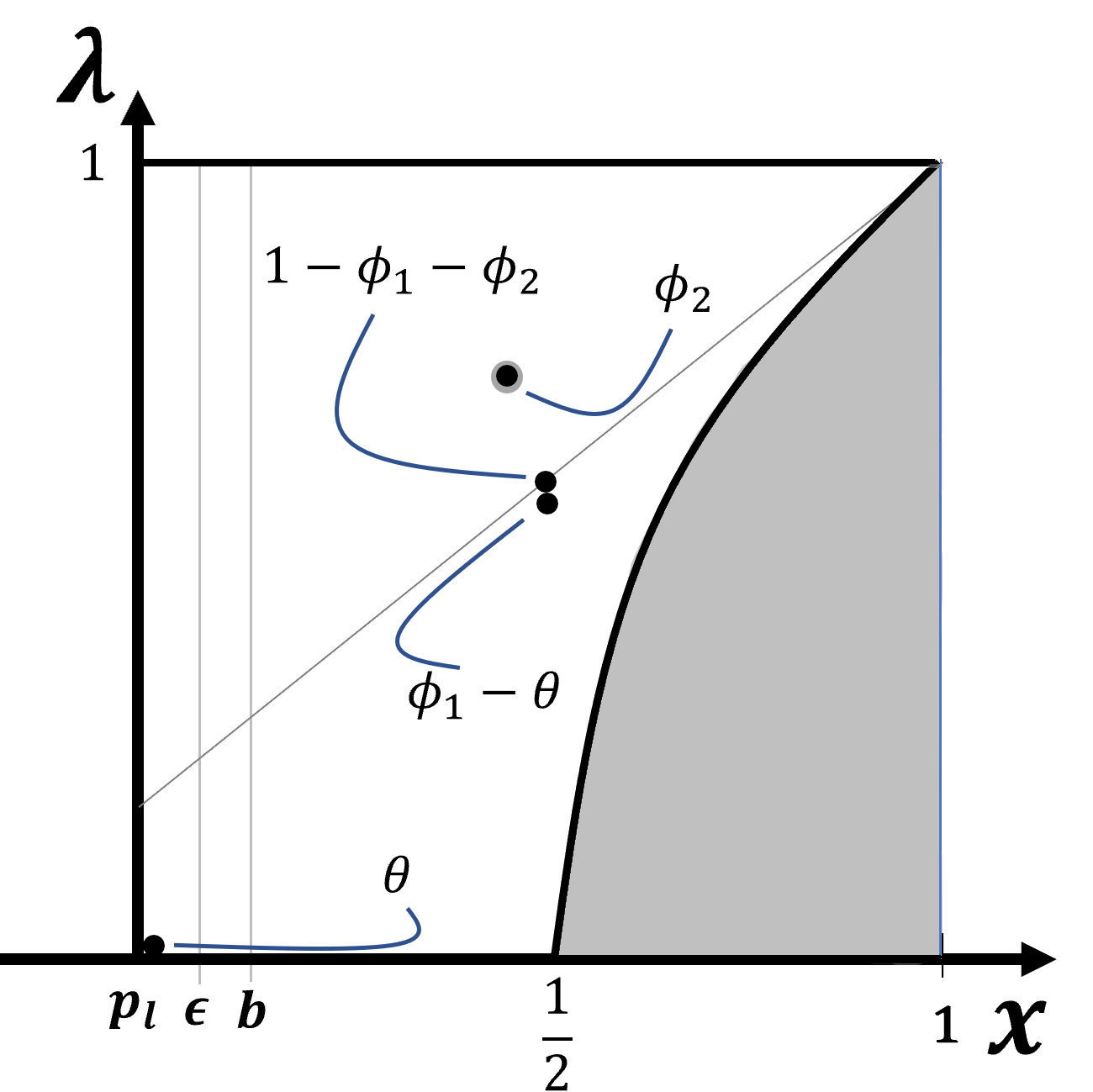}
		\caption{{\footnotesize $\phi_1\geqslant \theta\ $ and $\ r>0$ }}
		\label{fig:fig_CBIsoln_withFails_zeroconf_1}
	\end{subfigure}
	\begin{subfigure}[]{0.3\linewidth}
		\centering
			\includegraphics[width=1.0\linewidth]{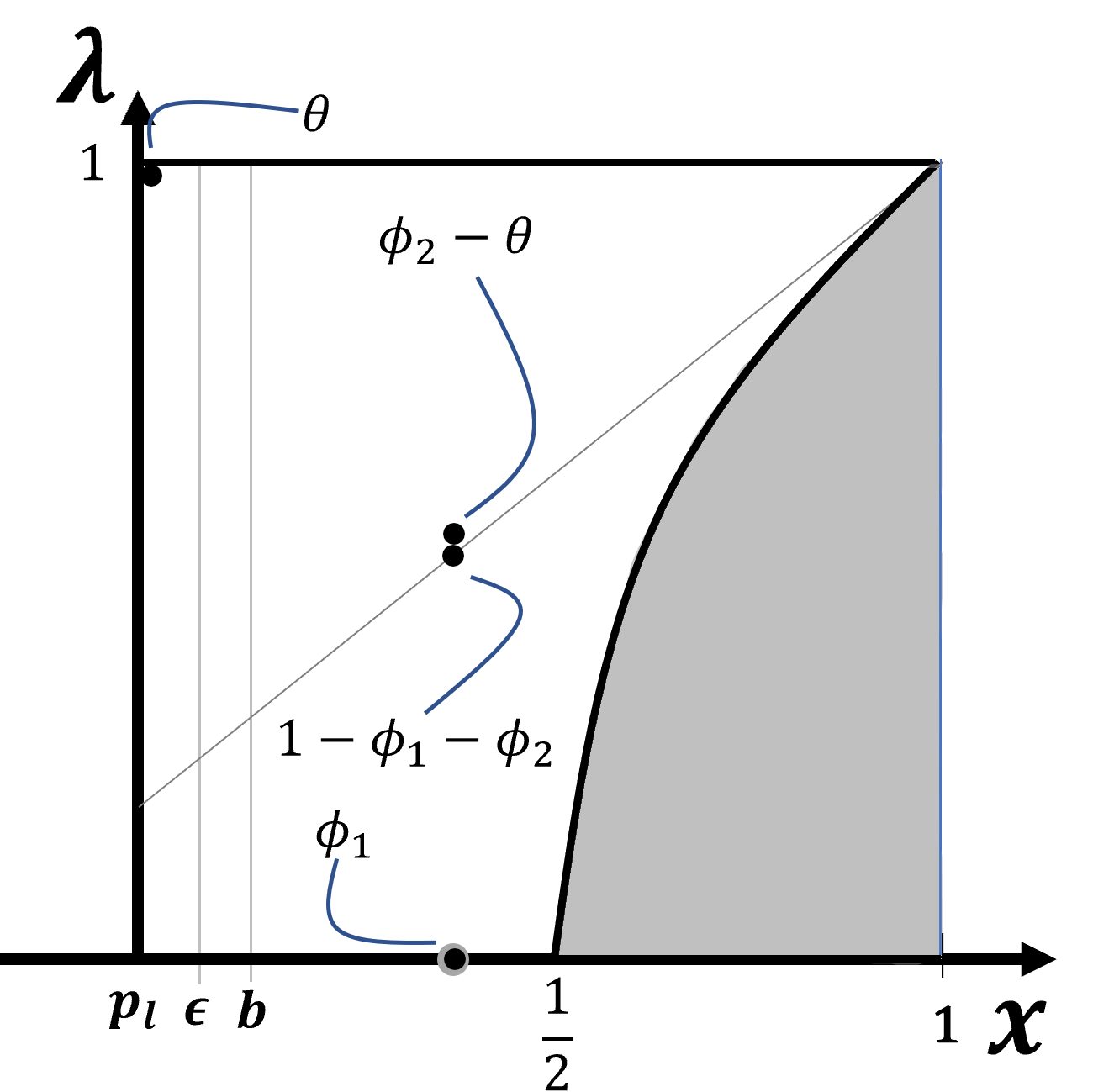}
		\caption{{\footnotesize $\phi_2\geqslant\theta\ $ and $\ r=0$}}
		\label{fig:fig_CBIsoln_withnoconsFails_zeroconf_1}
	\end{subfigure}
	\begin{subfigure}[]{0.3\linewidth}
		\centering
			\includegraphics[width=1.0\linewidth]{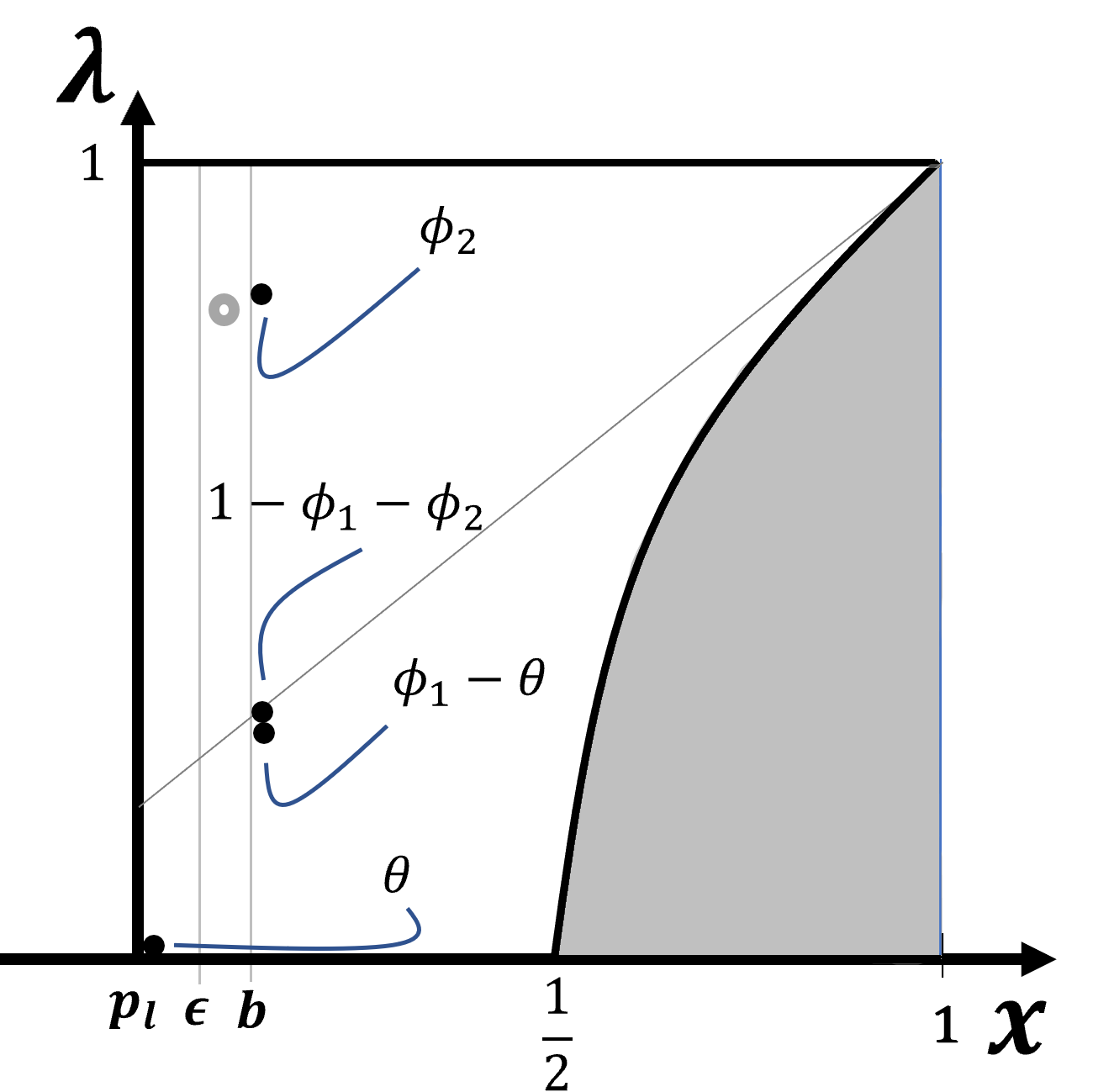}
		\caption{{\footnotesize $\phi_1\geqslant \theta\ $ and $\ r>0$  }}
		\label{fig:fig_CBIsoln_withFails_zeroconf_2}
	\end{subfigure}
	\begin{subfigure}[]{0.3\linewidth}
		\centering
			\includegraphics[width=1.0\linewidth]{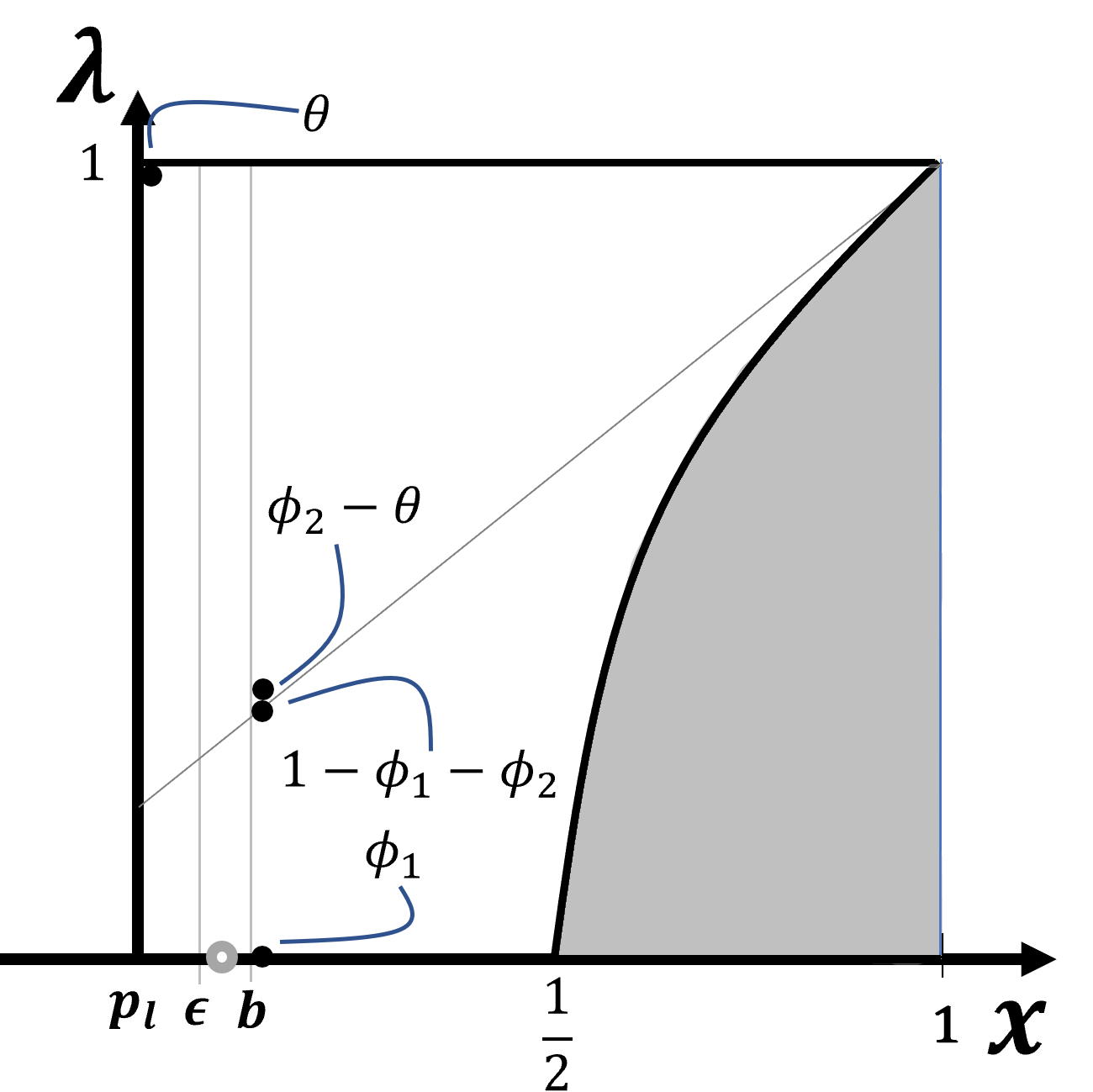}
		\caption{{\footnotesize $\phi_2\geqslant\theta\ $ and $\ r=0$ }}
		\label{fig:fig_CBIsoln_withnoconsFails_zeroconf_2}
	\end{subfigure}
	\begin{subfigure}[]{0.3\linewidth}
		\centering
			\includegraphics[width=1.0\linewidth]{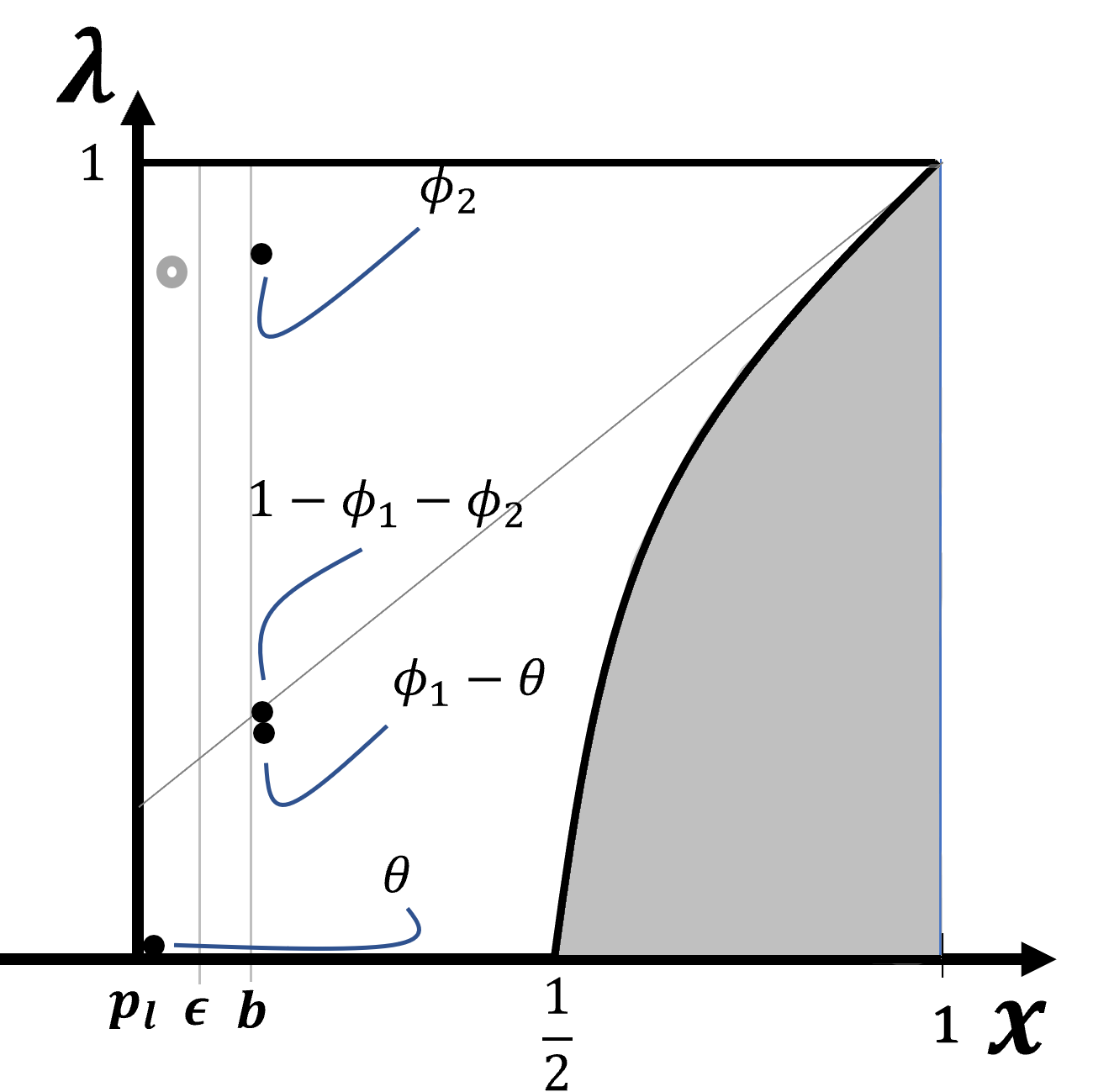}
		\caption{{\footnotesize $\ L(p_l, p_l ;\alpha,\ldots)$ $\leqslant$ $L(\epsilon,\epsilon;\alpha,\ldots)$ and $\phi_1\geqslant \theta\ $ and $\ r>0$ }}
		\label{fig:fig_CBIsoln_withFails_zeroconf_3}
	\end{subfigure}
	\begin{subfigure}[]{0.3\linewidth}
		\centering
			\includegraphics[width=1.0\linewidth]{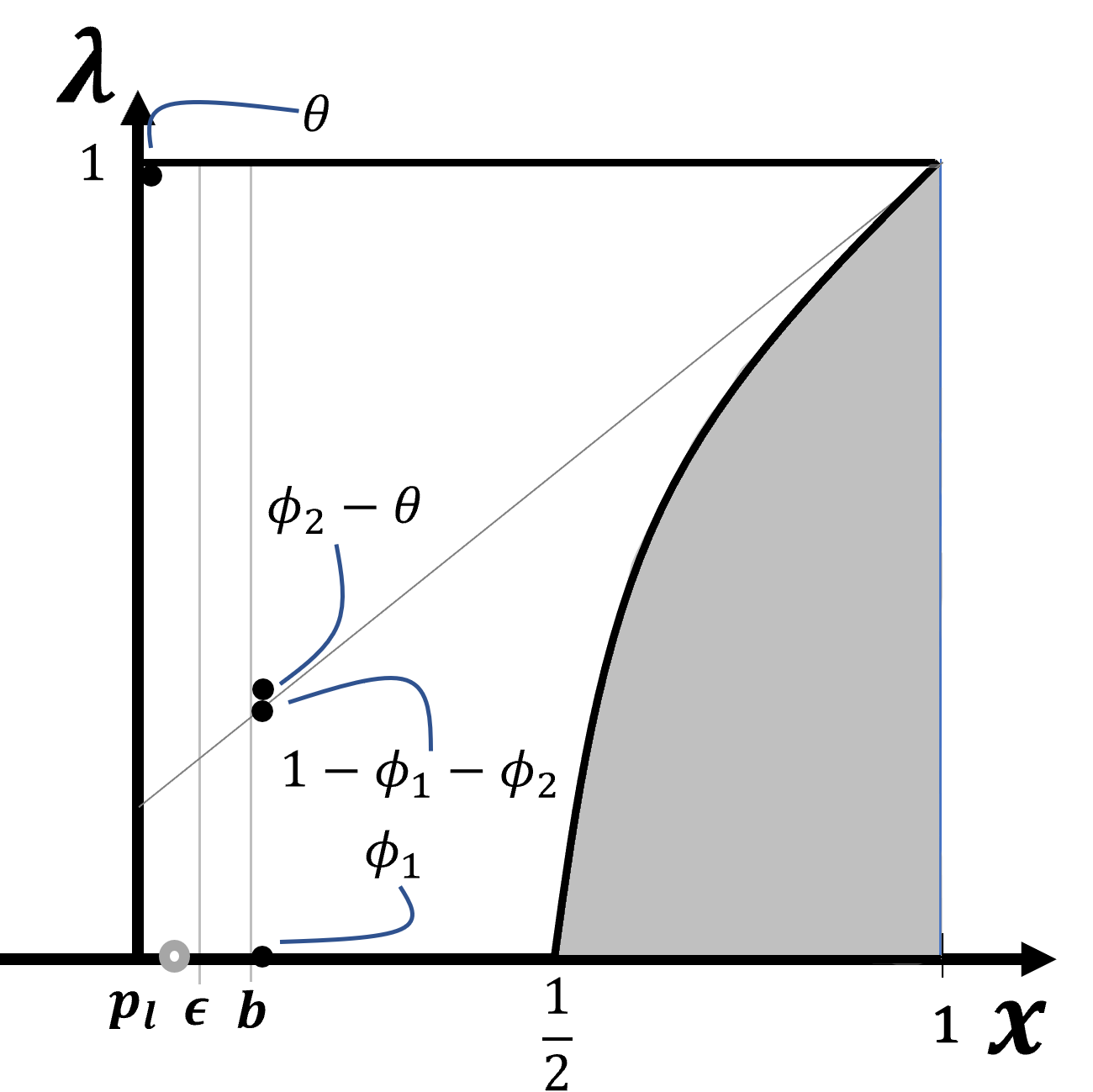}
		\caption{{\footnotesize $\ L(p_l, p_l ;\alpha,\ldots)$ $\leqslant$ $L(\epsilon,\epsilon;\alpha,\ldots)\ $ and $\phi_2\geqslant\theta\ $ and $\ r=0$  }}
		\label{fig:fig_CBIsoln_withnoconsFails_zeroconf_3}
	\end{subfigure}
	\begin{subfigure}[]{0.3\linewidth}
		\centering
			\includegraphics[width=1.0\linewidth]{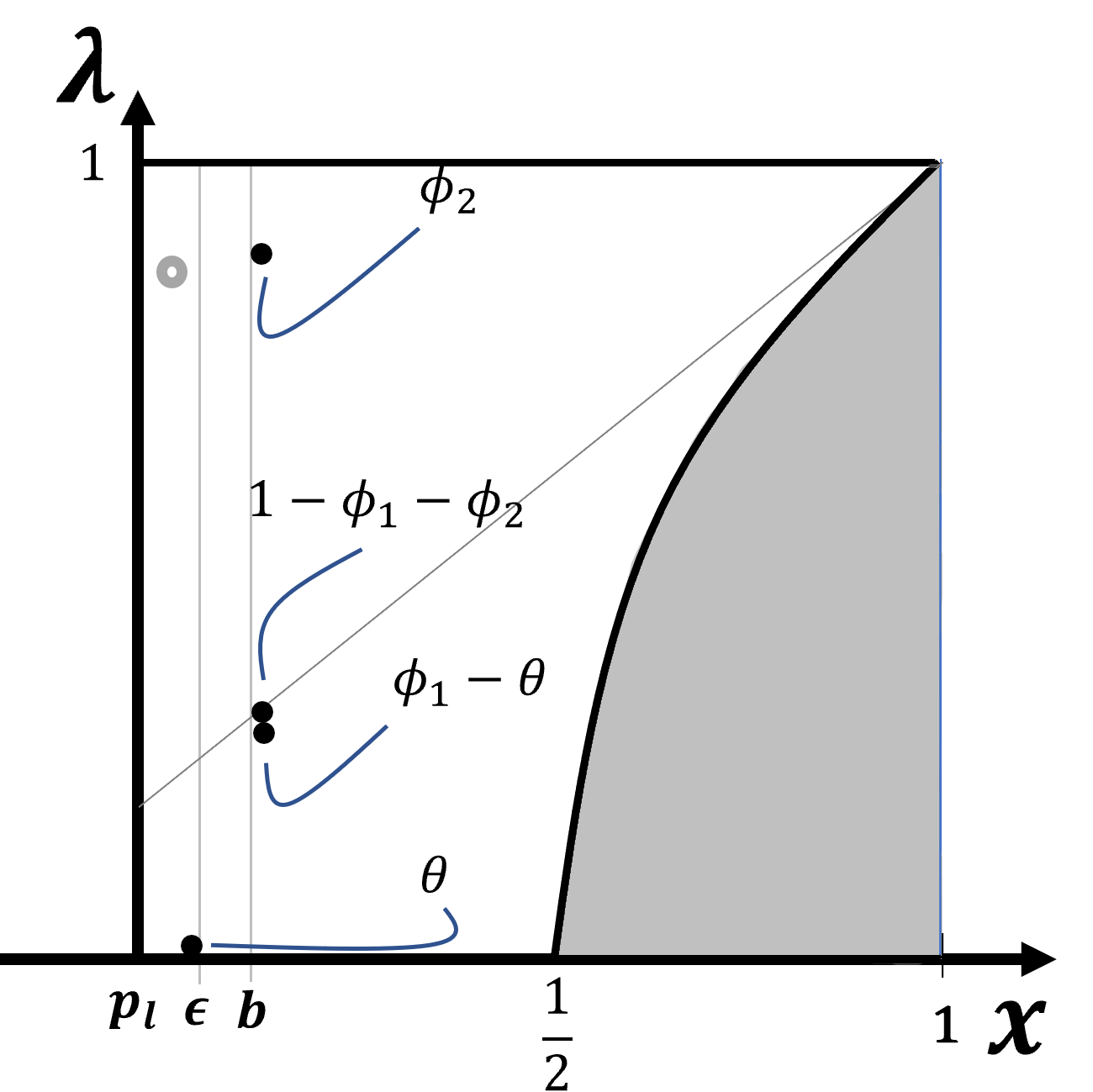}
		\caption{{\footnotesize $\ L(p_l, p_l ;\alpha,\ldots)$ $\geqslant$ $L(\epsilon,\epsilon;\alpha,\ldots)\ $ and $\phi_1\geqslant \theta\ $ and $\ r>0$ }}
		\label{fig:fig_CBIsoln_withFails_zeroconf_4}
	\end{subfigure}
	\begin{subfigure}[]{0.3\linewidth}
		\centering
			\includegraphics[width=1.0\linewidth]{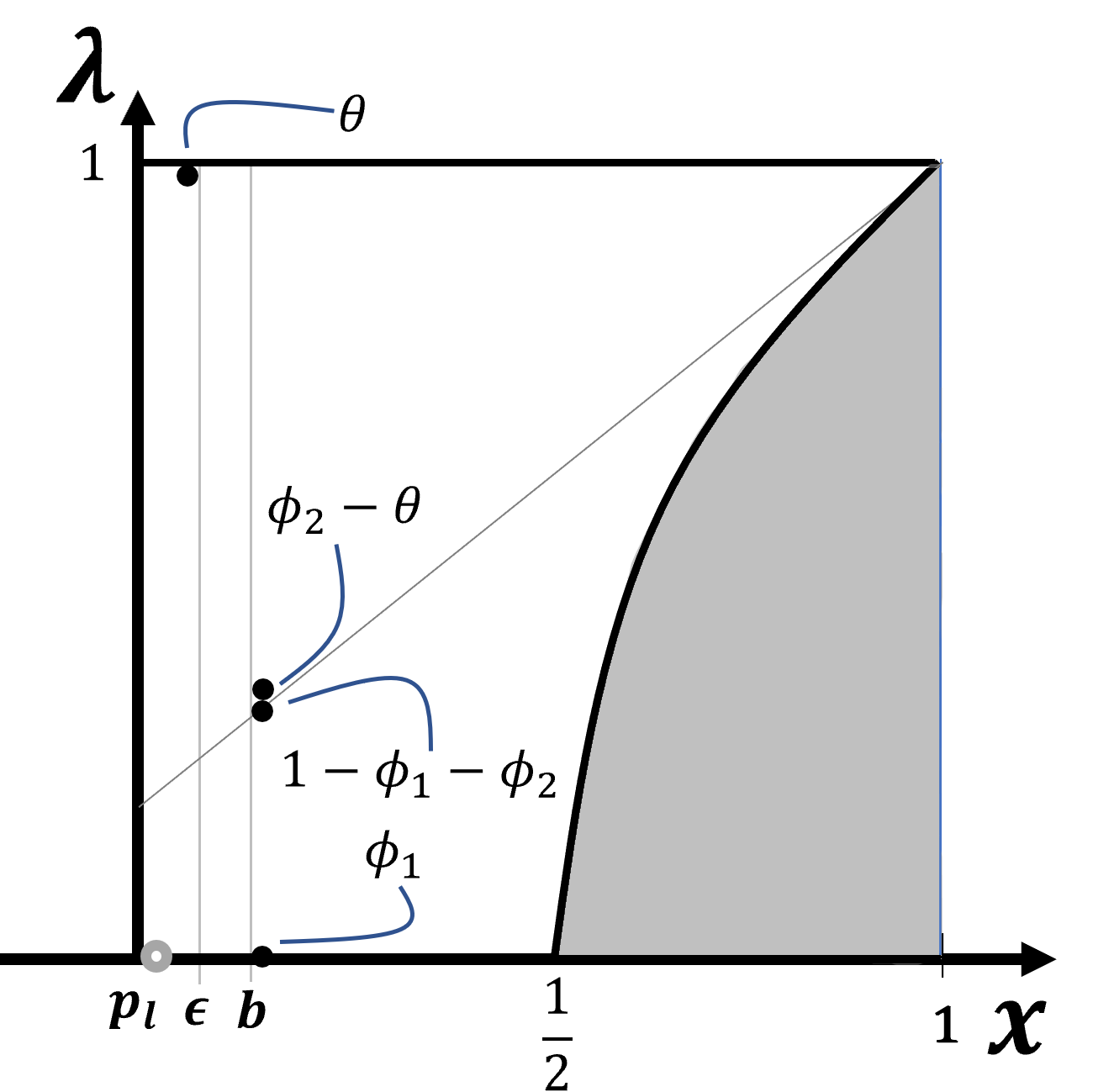}
		\caption{{\footnotesize $\ L(p_l, p_l ;\alpha,\ldots)$ $\geqslant$ $L(\epsilon,\epsilon;\alpha,\ldots)\ $ and $\phi_2\geqslant\theta\ $ and $\ r=0$ }}
		\label{fig:fig_CBIsoln_withnoconsFails_zeroconf_4}
	\end{subfigure}
	\caption[CBI priors: with failures]{{\footnotesize Worst case prior distributions that solve the optimisation problem in Theorem \ref{theorem_CBI_withFailures_and_pl} when failures are observed, giving $0$ posterior confidence. Each distribution's support is determined by $\alpha, \beta, \gamma, \delta$, and whether the first execution succeeds or fails. The location $(x^\ast, \lambda^\ast)$ of the global maximum for the Klotz likelihood is indicated by the grey circle. The $4$ priors, illustrated in subfigures \ref{fig:fig_CBIsoln_withFails_zeroconf_1},  \ref{fig:fig_CBIsoln_withFails_zeroconf_2}, \ref{fig:fig_CBIsoln_withFails_zeroconf_3} and \ref{fig:fig_CBIsoln_withFails_zeroconf_4}, are solutions when $\phi_1\geqslant \theta\ $ and $\ r>0$. While the priors in \ref{fig:fig_CBIsoln_withnoconsFails_zeroconf_1},  \ref{fig:fig_CBIsoln_withnoconsFails_zeroconf_2}, \ref{fig:fig_CBIsoln_withnoconsFails_zeroconf_3} and \ref{fig:fig_CBIsoln_withnoconsFails_zeroconf_4} solve the problem when $\phi_2\geqslant\theta\ $ and $\ r>0$. These solutions assume $\alpha, \beta, \gamma, \delta > 0$.  \normalsize}}  
	\label{fig_CBIsoln_Fails_zeroconf}
\end{figure}
\begin{figure}[h!]
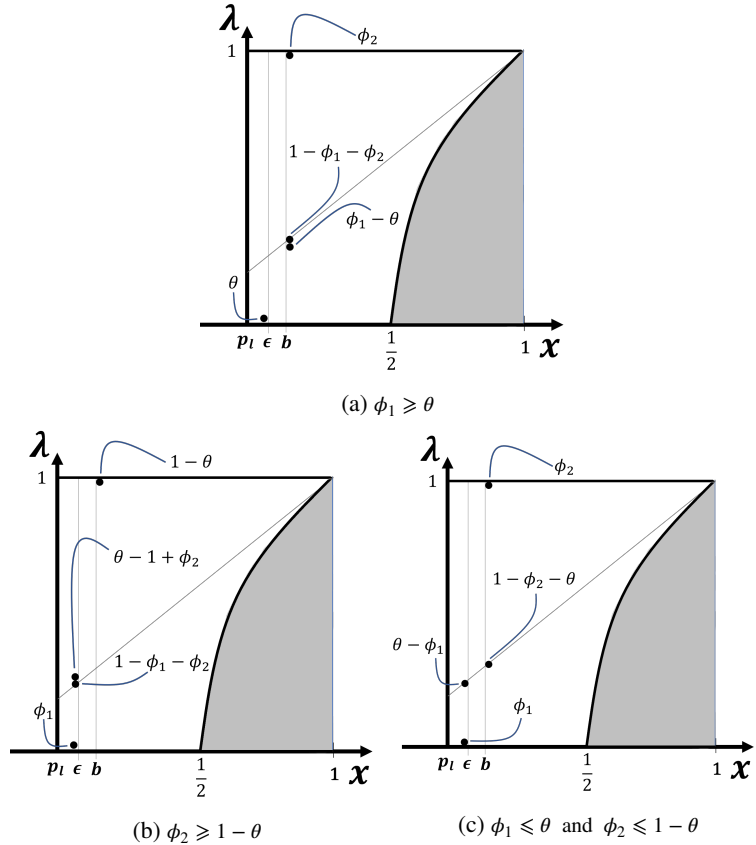

	\captionsetup[figure]{format=hang}
	\centering
	\begin{subfigure}[]{1.0\linewidth}
		\centering
			\includegraphics[width=0.3\linewidth]{Images/fig_CBIsoln_noFails_Phi1gtTheta}
		\caption{{\footnotesize $\phi_1\geqslant\theta$ }}
		\label{fig:fig_CBIsoln_noFails_Phi1gtTheta}
	\end{subfigure}
	\begin{subfigure}[]{0.3\linewidth}
		\centering
			\includegraphics[width=1.0\linewidth]{Images/fig_CBIsoln_noFails_Phi1ltThetaPhi2gt1minusTheta}
		\caption{{\footnotesize $\phi_2\geqslant 1-\theta$  }}
		\label{fig:fig_CBIsoln_noFails_Phi1ltThetaPhi2gt1minusTheta}
	\end{subfigure}
	\begin{subfigure}[]{0.3\linewidth}
		\centering
			\includegraphics[width=1.0\linewidth]{Images/fig_CBIsoln_noFails_Phi1ltThetaPhi2lt1minusTheta}
		\caption{{\footnotesize $\phi_1\leqslant\theta\ $ and $\ \phi_2\leqslant 1-\theta$ }}
		\label{fig:fig_CBIsoln_noFails_Phi1ltThetaPhi2lt1minusTheta}
	\end{subfigure}
	\caption[CBI priors: no failures]{{\small For executions with {\bf no failures}, these worst-case prior distributions solve the optimisation problem in Theorem \ref{theorem_CBIwithdependence}. These priors are relevant for the ranges of parameter values indicated in each subfigure. The support of each distribution is determined by $\beta$. \normalsize}}
	\label{fig_CBIsoln_noFails}
\end{figure}

For example, consider the limit points in Fig.\ref{fig:fig_limitingDist_gammadeltaalpha0}, for the case when $\phi_1\geqslant\theta$ and no failures are observed. Focus on the subset $x\leqslant \epsilon$ and recall the requirement $P(X\leqslant\epsilon)=\theta$. To be pessimistic, we must allocate probability $\theta$ to those limit points within this subset at which the likelihood is smallest. This is the limit point $(\epsilon, 0)$. Since $P(X\leqslant\Lambda)=\phi_1\geqslant\theta$, all of the $\theta$ probability can be allocated to this limit point ``from below'' the $45^\circ$ diagonal and ``from the left'' of the line $x=\epsilon$ (see Fig.~\ref{fig:figB8_proof_step1}). Consequently, because $P(X\leqslant\epsilon)=\theta$, no more probability can be allocated to any other limit points in $x\leqslant\epsilon$.

Now we need to assign probability $1-\theta$ to the remaining limit points in $\mathcal R$. There are two alternative limit points above the diagonal where we may assign the $\phi_2$ probability. Assigning to the point $(b,1)$ gives more pessimistic results than assigning to $(b,b)$. We can see this by sharing the $\phi_2$ probability between the two points, and noting that the objective function monotonically decreases as the amount of $\phi_2$ allocated to $(b,1)$ increases. In effect, all of $\phi_2$ should be allocated to any sequence of points that approximate $(b,1)$ arbitrarily-well, ``from the right'' of the line $x=b$. This justifies Fig.~\ref{fig:figB8_proof_step2}. Similar reasoning shows that allocating probability $\phi_1-\theta$ to the point $(b,b)$, ``from the right'' of $x=b$, is more pessimistic than allocating it to $(b,0)$ ``from the left''. Thus justifying Fig.~\ref{fig:figB8_proof_step3}.  Note that these allocations are possible and do not violate the constraints, because $1-\phi_1-\phi_2\geqslant 0$ and $\phi_1\geqslant\theta$ imply $0\leqslant\phi_1-\theta + \phi_2\leqslant 1-\theta$.

Finally, using similar ``approximation''-based reasoning to how $\phi_2$ and $\phi_1-\theta$ were allocated, we must assign the remaining probability $1-\phi_1-\phi_2$ to the point $(b,b)$. Via any sequence of points that approximate $(b,b)$ arbitrarily-well ``from the right'', along the diagonal (see Fig.~\ref{fig:figB8_proof_step4}). 

Note that probabilities were assigned to limit points that lie along the line $x=b$, but only by assigning the probabilities to points that approximate these limit points ``from the right'' of the line $x=b$. Consequently, our final limiting distribution gives the value of the infimum in the optimisation problem, but only by computing ``$P( X < b\mid\ldots)$'' for this distribution, and not by computing ``$P( X \leqslant b\mid\ldots)$''.

Using similar arguments to allocate probabilities to limit points, all of the remaining worst-case distributions in Fig.s~\ref{fig_CBIsoln_withFails_Phi2glt1minustheta} --  \ref{fig_CBIsoln_noFails} are constructed from limit points analogous to those in Fig.~\ref{fig_limitingDists}.
\end{proof}

\noindent\textbf{Remarks} : with very few modifications, the foregoing arguments can be used to derive worst-case prior distributions subject to the additional constraint of PK\ref{cons_pl_lowerbound}, i.e. $P(X\geqslant p_l)=1$. Such priors solve the optimisation problem in Theorem \ref{theorem_CBI_withFailures_and_pl}. Indeed, after observing $n$ executions of a system (which include some consecutive, failed executions), figure \ref{fig_CBIsoln_withFails_Phi2glt1minustheta} illustrates $2$ groups of priors ($4$ priors in each group) that give the smallest posterior confidence in the system's unknown \emph{pfe} being no worse than the bound $b$. These solutions are subject to PKs \ref{cons_pl_lowerbound}, \ref{cons_engineering_goal}, \ref{cons_negative_dependence} and \ref{cons_positive_dependence}.

\section{Independent Executions and Conservative Assessments}
\label{sec_post_conf_in_independence}

\subsection{Which Independence Beliefs give Conservative Results?}
Fig.~\ref{fig_example2_nuclear} already shows that assessments based on independent executions \emph{are} conservative initially, when the number of inputs during operational testing is relatively small. That is, the univariate CBI curve initially overlaps with the CBI curve for dependent executions.  But, posterior confidence based on independence becomes increasingly optimistic after a large number of tests. That is, the curves begin to deviate significantly as the number of tests increases. So, only assessments that allow for doubt in independence -- i.e. nonzero $\phi_1$ or $\phi_2$ -- can support (in the long run) more pessimistic claims about the bound $b$.

Once some doubt in independence has been expressed, an assessor might want to allow for operational evidence to ``slowly'' allay such doubts. Or, instead, allow for the evidence to ``quickly'' convince them otherwise -- that independence does \emph{not} hold! PK\ref{cons_strongbeliefinIndependence} represents an assessor who's initially very confident the system will exhibit independent, failure-free executions.

\begin{constraint}[strong belief in independence]
\label{cons_strongbeliefinIndependence}
The probability $P(\mbox{ executions will be independent and failure-free })$, from the joint prior distribution of $(X,\Lambda)$, has a value that is the solution to the optimisation problem: 
\begin{align*}
	&\qquad\sup\limits_{\mathcal D} P(\mbox{ executions will be independent and failure-free }) \\
	&\mbox{s.t.} \,\,\,\,\,\,PK\ref{cons_pl_lowerbound},\,\,\,PK\ref{cons_engineering_goal},\,\,\,PK\ref{cons_negative_dependence},\,\,\,PK\ref{cons_positive_dependence}
\end{align*}
\end{constraint}
While PK\ref{cons_weakbeliefinIndependence} is held by an assessor who's initially very doubtful the system will exhibit independent, failure-free executions.

\begin{constraint}[weak belief in independence]
\label{cons_weakbeliefinIndependence}
The probability $P(\mbox{ executions will be independent and failure-free })$, from the joint prior distribution of $(X,\Lambda)$, has a value that is the solution to the optimisation problem:
\begin{align*}
	&\qquad\inf\limits_{\mathcal D} P(\mbox{ executions will be independent and failure-free }) \\
	&\mbox{s.t.} \,\,\,\,\,\,PK\ref{cons_pl_lowerbound},\,\,\,PK\ref{cons_engineering_goal},\,\,\,PK\ref{cons_negative_dependence},\,\,\,PK\ref{cons_positive_dependence}
\end{align*}
\end{constraint}

Note the difference between the optimisation problems in these PKs, and those in CBI Theorems \ref{theorem_univariateCBI}, \ref{theorem_thm1_baseline} and \ref{theorem_CBI_withFailures_and_pl}. Here, the optimisations \emph{constrain} the prior distribution in how it assigns probability mass; hence why these are PKs. While the previous optimisations \emph{are contrained by} the prior distributions -- specifically, constrained by the PKs the priors must satisfy.  

\begin{figure}[h!]
	\captionsetup[figure]{format=hang}
	\centering
	\begin{subfigure}[]{0.3\linewidth}
		\centering
		\includegraphics[width=1.0\linewidth]{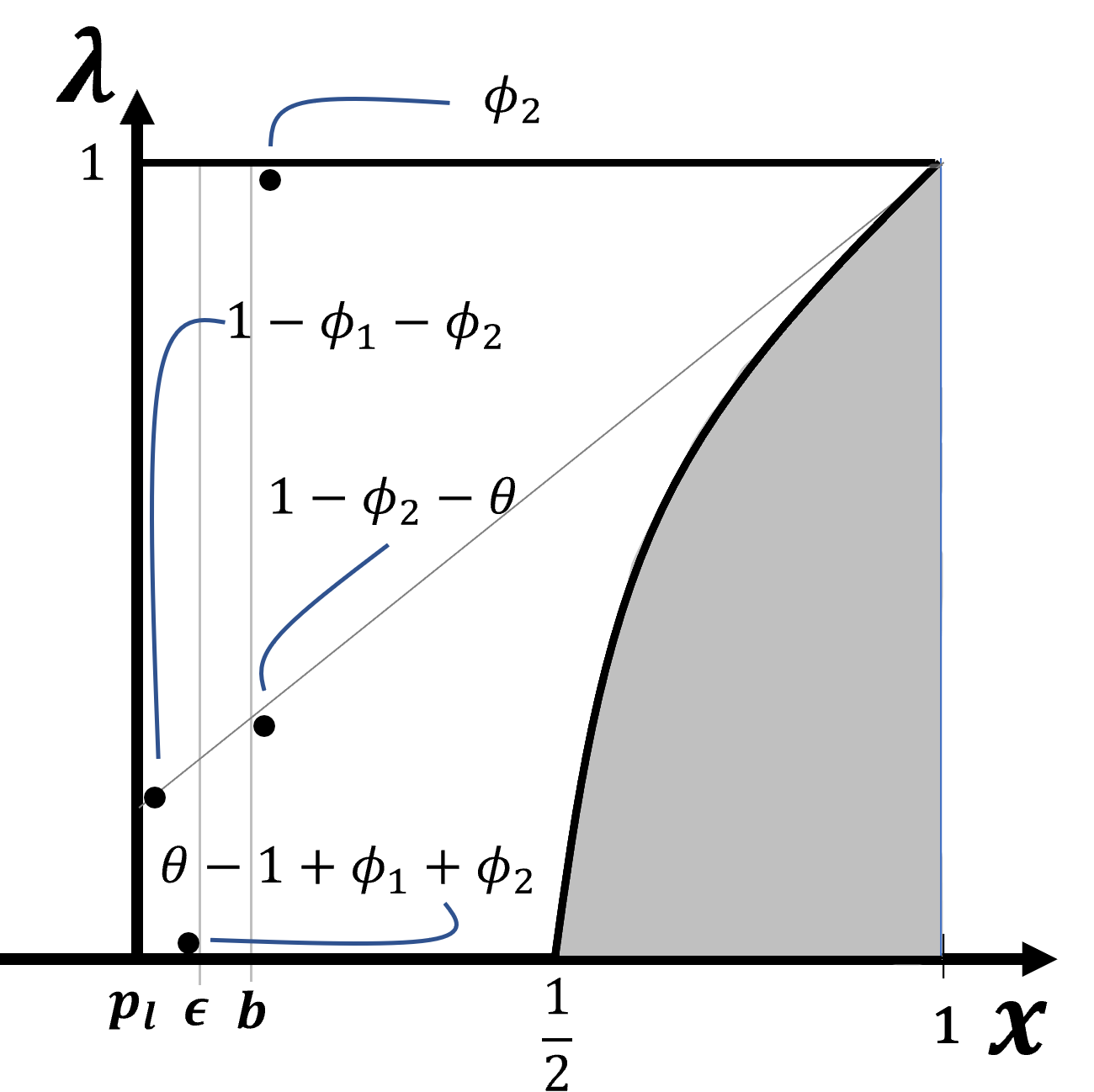}
		\caption{{\footnotesize $\phi_1+\phi_2\geqslant 1-\theta\,$ and $\,\phi_2\leqslant 1-\theta$ \normalsize}}
		\label{fig:fig_strongbeliefInIndependence}
	\end{subfigure}
	\begin{subfigure}[]{0.3\linewidth}
		\centering
			\includegraphics[width=1.0\linewidth]{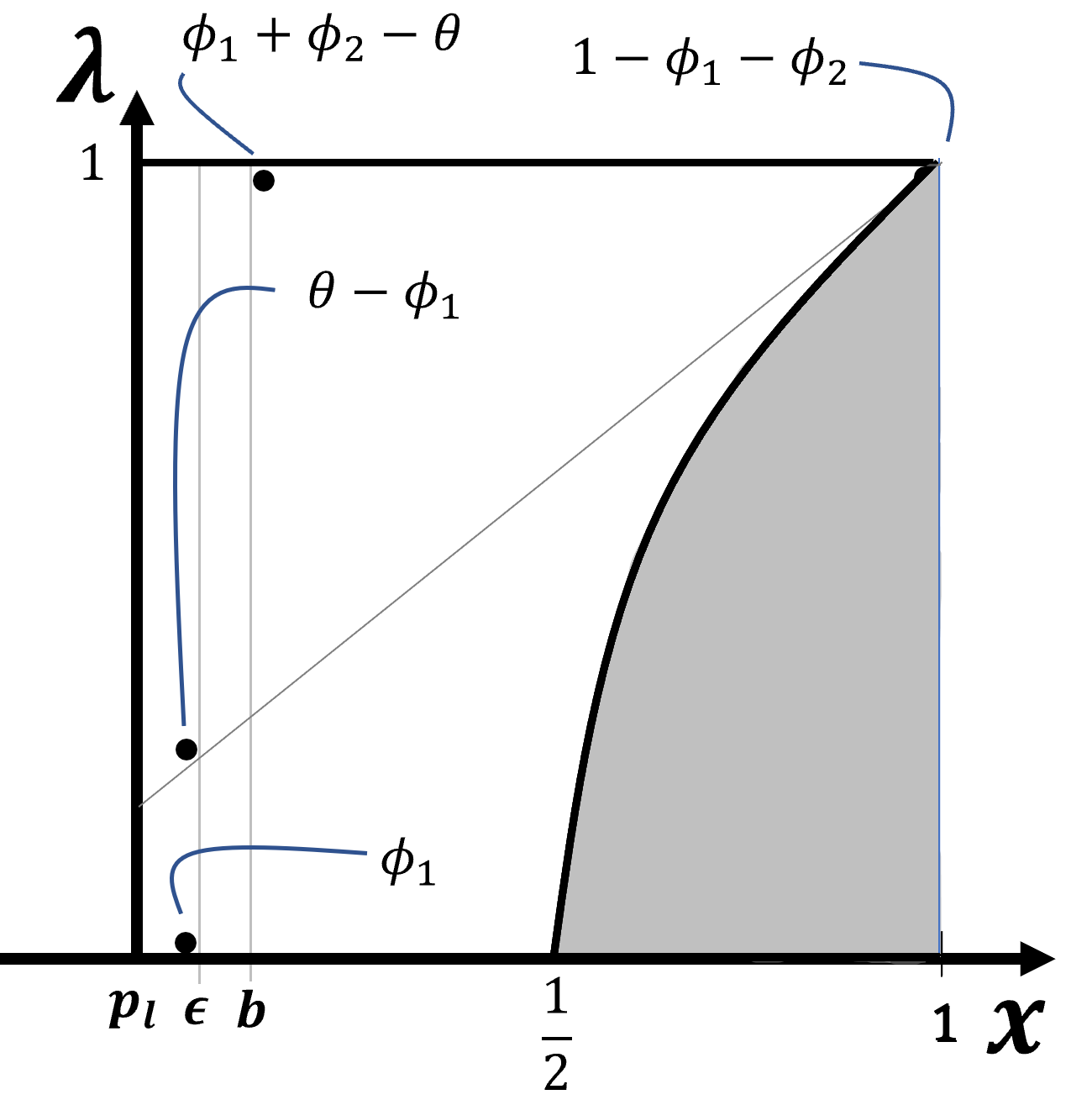}
		\caption{{\footnotesize $\phi_1+\phi_2\geqslant \theta\,$ and $\,\phi_1\leqslant \theta$ \normalsize }}
		\label{fig:fig_strongskepticismAboutIndependence}
	\end{subfigure}
	\caption[Extreme beliefs about independent executions]{{\footnotesize  Prior distributions representing extreme beliefs about whether the executions will be independent (i.e. whether $[X=\Lambda]$), for the $\phi_1,\,\phi_2,\,\theta$ ranges indicated. Consider all prior distributions with the largest prior probability of observing $n$ independent executions with no failures. The most pessimistic posterior confidence in $b$ from these priors is given by the prior in Fig.~\ref{fig:fig_strongbeliefInIndependence}. Similarly, for all priors with the smallest prior probability of $n$ independent failure-free executions, Fig.~\ref{fig:fig_strongskepticismAboutIndependence} gives the most pessimistic posterior confidence.   
			\normalsize}}
	\label{fig_beliefsAboutIndependence}
\end{figure}

Theorems \ref{theorem_beliefinIndependenceIsConserv} and \ref{theorem_doubtinIndependenceIsOptimistic} below give the pessimistic posterior confidence in $b$, for assessors who hold PK\ref{cons_strongbeliefinIndependence} or PK\ref{cons_weakbeliefinIndependence} beliefs, respectively. Proved in \ref{sec_app_D}, some prior distributions that give the pessimistic posterior confidence in these theorems are shown in Fig.~\ref{fig_beliefsAboutIndependence}. And the confidence from these priors, as well as from priors in Theorems \ref{theorem_univariateCBI} and \ref{theorem_thm1_baseline}, are compared in Fig.~\ref{fig_doubtInIndependence_noperfection}.

\begin{theorem}
Using \eqref{eqn_xKlotzlklhdFn_maintxt} and \eqref{eqn_Klotzmodelpostconf}, the optimisation problem
\begin{align*}
	&\qquad\inf\limits_{\mathcal D} P(\,X\leqslant b \mid n\mbox{ executions without failure} ) \\
	&\mbox{s.t.} \,\,\,\,\,\,PK\ref{cons_pl_lowerbound},\,\,\,PK\ref{cons_engineering_goal},\,\,\,PK\ref{cons_negative_dependence},\,\,\,PK\ref{cons_positive_dependence},\,\,\,PK\ref{cons_strongbeliefinIndependence} 
\end{align*}
has the prior distribution in Fig.~\ref{fig:fig_strongbeliefInIndependence} as a solution, since $P(\, X<b \mid n\mbox{ executions without failure} )$ from this prior equals the infimum.
\label{theorem_beliefinIndependenceIsConserv}
\end{theorem}

\begin{theorem}
Using \eqref{eqn_xKlotzlklhdFn_maintxt} and \eqref{eqn_Klotzmodelpostconf}, the optimisation problem
\begin{align*}
	&\qquad\inf\limits_{\mathcal D} P(\,X\leqslant b \mid n\mbox{ executions without failure} ) \\
	&\mbox{s.t.} \,\,\,\,\,\,PK\ref{cons_pl_lowerbound},\,\,\,PK\ref{cons_engineering_goal},\,\,\,PK\ref{cons_negative_dependence},\,\,\,PK\ref{cons_positive_dependence},\,\,\,PK\ref{cons_weakbeliefinIndependence} 
\end{align*}
has the prior distribution in Fig.~\ref{fig:fig_strongskepticismAboutIndependence} as a solution, since $P(\, X<b \mid n\mbox{ executions without failure} )$ from this prior equals the infimum.
\label{theorem_doubtinIndependenceIsOptimistic}
\end{theorem}

Fig.~\ref{fig_doubtInIndependence_noperfection} clearly shows that the confidence from the very skeptical ``PK\ref{cons_weakbeliefinIndependence}''-believing assessor is initially the most optimistic (i.e. the widely-spaced dotted curve lies above all of the other curves). Noticeably more optimistic than even the confidence based on independent executions (i.e. the solid curve). This is in contrast to the assessor who holds the strong PK\ref{cons_strongbeliefinIndependence} belief in independence. Such beliefs actually support conservative claims initially, even as claims based on independence start becoming optimistic -- as suggested by the overlap of the dashed curve and the narrowly-spaced dotted curve, where the solid curve lies above both of them. Eventually, however, the roles are reversed as PK\ref{cons_strongbeliefinIndependence} supported claims become optimistic, while PK\ref{cons_weakbeliefinIndependence} supported claims become conservative and agree with the dashed curve. In this sense, ``strongly believing'' and ``being skeptical of'' the independence assumption are two halves of ``conservatively doubting'' the independence assumption. This is the behaviour for the range of PK parameter values in Fig.~\ref{fig_doubtInIndependence_noperfection}.

\begin{figure}[htbp!]
\centering
\includegraphics[width=0.45\linewidth]{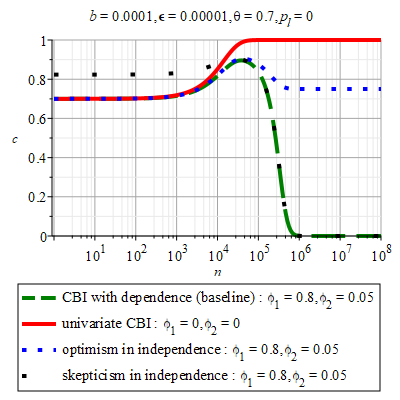}
\caption{{\footnotesize A comparison of posterior confidence in the bound $b$ after operational testing, showing the impact of some confidence in the system's executions being independent. For the parameter values in this example, being very skeptical about independent executions (i.e. the prior in Fig.~\ref{fig:fig_strongskepticismAboutIndependence}) gives the most optimistic posterior confidence in the bound $b$ shown here. While strong beliefs in independent executions (i.e. the prior in Fig.~\ref{fig:fig_strongbeliefInIndependence}) almost results in the most pessimistic confidence in $b$, at least for $n<4\times 10^{4}$ approximately.\kizito{legend needs to be fixed}   \normalsize}}
\label{fig_doubtInIndependence_noperfection}
\end{figure}

Why does PK\ref{cons_strongbeliefinIndependence} initially support less optimistic claims than PK\ref{cons_weakbeliefinIndependence}, then less pessimistic claims as the number of successes $n$ rises? It has to do with which prior beliefs about $(X,\Lambda)$ -- i.e. which locations in $\mathcal R$ -- are crucial for conservative confidence in $b$. There are two principal beliefs a conservative assessor must hold: {\bf i)} strong doubts of failure-free operation being evidence of a ``sufficiently reliable'' system -- i.e., being evidence of a system with a \emph{pfe} just smaller than $b$; {\bf ii)} strong confidence in failure-free operation being evidence of an ``almost sufficiently reliable'' system -- i.e., being evidence of a system with a \emph{pfe} slightly worse than $b$, that exhibits perfectly positively correlated executions. With a PK\ref{cons_strongbeliefinIndependence} belief, failure-free operation initially supports confidence in an ``almost sufficiently reliable'' system (hence, initially conservative confidence in $b$). But eventually, an increasing number of successes could also be due to the system being fault-free (because there is a nonzero probability at $(p_l,p_l)$ in Fig.~\ref{fig:fig_strongbeliefInIndependence} and $p_l=0$ in Fig.~\ref{fig_doubtInIndependence_noperfection}). So, the dotted curve reaches a horizontal asymptote. It's the reverse with a PK\ref{cons_weakbeliefinIndependence} belief, where an ``almost sufficiently reliable'' system is initially very unlikely (hence initially optimistic confidence in $b$), but becomes arbitrarily more likely as $n$ grows (hence conservative confidence in $b$).

When $\epsilon=0$ in PK\ref{cons_engineering_goal},
both PK\ref{cons_strongbeliefinIndependence} and PK\ref{cons_weakbeliefinIndependence} support conservative confidence in $b$ as $n$ grows large -- i.e., Theorems~\ref{theorem_thm1_baseline},  \ref{theorem_beliefinIndependenceIsConserv} (with PK\ref{cons_strongbeliefinIndependence}) and \ref{theorem_doubtinIndependenceIsOptimistic} (with PK\ref{cons_weakbeliefinIndependence}) agree eventually (see Fig.~\ref{fig_doubtInIndependence_withperfection}).

\begin{figure}[h!]
\centering
\includegraphics[width=0.45\linewidth]{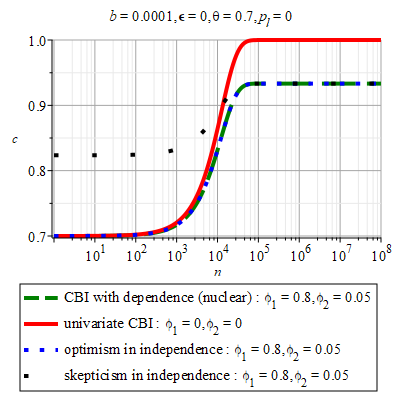}
\caption{{\footnotesize A similar comparison to that of Fig.~\ref{fig_doubtInIndependence_noperfection}, but with the added constraint that there is a probability $\theta=0.7$ of the system containing no faults (i.e., $\epsilon = 0$). Now, a strong belief in independent executions gives the most pessimistic posterior confidence in $b$ -- i.e. the dashed ``\emph{optimism in independence}'' and the dotted ``\emph{CBI with dependence}'' curves are now indistinguishable.\kizito{legend needs to be fixed}\normalsize}}
\label{fig_doubtInIndependence_withperfection}
\end{figure}

\subsection{Proof of Theorem \ref{theorem_beliefinIndependenceIsConserv}}
\label{sec_app_D}
We derive the prior distribution in Fig.~\ref{fig:fig_strongbeliefInIndependence} that solves the optimisation problem in Theorem \ref{theorem_beliefinIndependenceIsConserv}. Analogous steps can be used to derive the prior in Fig.~\ref{fig:fig_strongskepticismAboutIndependence} which solves Theorem \ref{theorem_doubtinIndependenceIsOptimistic}.
\begin{theorem*}
The optimisation problem
\begin{align*}
	&\qquad\inf\limits_{\mathcal D} P(\,X\leqslant b \mid n\mbox{ executions without failure} ) \\
	&\mbox{s.t.} \,\,\,\,\,\,PK\ref{cons_pl_lowerbound},\,\,\,PK\ref{cons_engineering_goal},\,\,\,PK\ref{cons_negative_dependence},\,\,\,PK\ref{cons_positive_dependence},\,\,\,PK\ref{cons_strongbeliefinIndependence} 
\end{align*}
has the prior distribution in Fig.~\ref{fig:fig_strongbeliefInIndependence} as one of its solutions, since $P(\, X<b \mid n\mbox{ executions without failure} )$ from this prior equals the infimum.
\label{theorem_app_beliefinIndependenceIsConserv}
\end{theorem*}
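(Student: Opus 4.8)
The plan is to handle PK\ref{cons_strongbeliefinIndependence} in two stages: first collapse the supremum buried inside it into a concrete structural constraint on the prior, and then re-run the CBI minimisation of Theorem~\ref{theorem_CBIwithdependence} subject to that constraint. I would begin by recording the likelihood: with $n$ failure-free executions we have $s=r=0$, so the ``success-to-success'' branch \eqref{eqn_succsucc} of the Klotz likelihood collapses to $L(x,\lambda)=(1-x)\left(1-\tfrac{(1-\lambda)x}{1-x}\right)^{n-1}$, which on the diagonal $x=\lambda$ is exactly $(1-x)^n$. The event ``\emph{executions are independent and failure-free}'' is $\{X=\Lambda\}$ together with no failures, so for a discrete prior its probability is the likelihood-weighted diagonal mass $\sum_{x_i=\lambda_i}M_i(1-x_i)^n$.

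Next I would solve the inner supremum in PK\ref{cons_strongbeliefinIndependence}. PKs~\ref{cons_negative_dependence} and \ref{cons_positive_dependence} fix the total diagonal mass at $1-\phi_1-\phi_2$, while PK\ref{cons_pl_lowerbound} forces $x\geqslant p_l$. Since $(1-x)^n$ is strictly decreasing on $[p_l,1]$, the supremum equals $V^\ast=(1-\phi_1-\phi_2)(1-p_l)^n$, attained \emph{iff} every atom of diagonal mass sits at $(p_l,p_l)$. In the regime of Fig.~\ref{fig:fig_strongbeliefInIndependence}, where $\phi_1+\phi_2\geqslant 1-\theta$ (equivalently $1-\phi_1-\phi_2\leqslant\theta$), this pinning is consistent with PK\ref{cons_engineering_goal}, because the residual $\theta-(1-\phi_1-\phi_2)$ of the $x\leqslant\epsilon$ budget can be supplied off-diagonal. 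Hence PK\ref{cons_strongbeliefinIndependence} is \emph{equivalent} to the clean constraint that all of the $1-\phi_1-\phi_2$ diagonal mass lies at $(p_l,p_l)$.

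With this reduction, the remaining task is exactly the minimisation of Theorem~\ref{theorem_CBIwithdependence}, except that one atom is now frozen. I would invoke its machinery verbatim: restrict to discrete priors, use unimodality of $L$ along vertical/horizontal lines and over $\mathcal R$, and push the free masses to limit points via the gradient argument. This sends the $\phi_2$ above-diagonal mass to $(b,1)$, where $L=1-b$ and it contributes only to the denominator; the PK\ref{cons_engineering_goal}-forced surplus $\phi_1+\phi_2-(1-\theta)$ to the minimal-likelihood point $(\epsilon,0)$ inside $x\leqslant\epsilon$; and the residual below-diagonal mass $(1-\theta)-\phi_2$ to $(b,b)$ approached from the right, where $L=(1-b)^n$ and it again only inflates the denominator. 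The second regime condition $\phi_2\leqslant 1-\theta$ is precisely what guarantees $\phi_1+\phi_2-(1-\theta)\leqslant\phi_1$, so that this surplus can be realised as below-diagonal ($\phi_1$-type) mass; a short check confirms the four masses satisfy PK\ref{cons_pl_lowerbound}--PK\ref{cons_positive_dependence}. Reading off numerator and denominator then yields the infimum, computed as $P(X<b\mid\ldots)$ since the $x=b$ atoms are approached from the right.

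I expect the main obstacle to be the first stage rather than the second. Resolving PK\ref{cons_strongbeliefinIndependence} requires recognising that its embedded supremum collapses, by strict monotonicity of $(1-x)^n$, to a single pinned support point, and then verifying that this pinning is simultaneously \emph{feasible} with PK\ref{cons_engineering_goal} and \emph{compatible} with absorbing the forced $x\leqslant\epsilon$ surplus into the $\phi_1$ budget --- it is exactly these two checks that carve out the stated region $\phi_1+\phi_2\geqslant 1-\theta,\ \phi_2\leqslant 1-\theta$. Once PK\ref{cons_strongbeliefinIndependence} has been turned into ``freeze the diagonal at $(p_l,p_l)$'', the minimisation is a routine re-run of the established limit-point / linear fractional programming argument, and the companion Theorem~\ref{theorem_doubtinIndependenceIsOptimistic} follows by replacing the supremum with an infimum, which instead drives the diagonal mass toward large $x$.
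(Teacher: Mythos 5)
Your proposal is correct and follows essentially the same route as the paper's own proof: first collapse the supremum in PK\ref{cons_strongbeliefinIndependence} by noting that the diagonal likelihood $(1-x)^n$ is maximised at $(p_l,p_l)$, so in the regime $1-\phi_1-\phi_2\leqslant\theta$ all diagonal mass must be pinned there, and then re-run the Appendix~\ref{sec_app_B} limit-point minimisation on the remaining free masses. The only difference is one of detail --- the paper merely cites ``the methods of \ref{sec_app_B}'' for the second stage, whereas you spell out the resulting allocation ($\phi_2$ to $(b,1)$, the surplus $\phi_1+\phi_2-(1-\theta)$ to $(\epsilon,0)$, the residual $(1-\theta)-\phi_2$ to $(b,b)$ from the right) together with the feasibility checks that carve out the stated parameter region.
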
 

\begin{proof}
For the prior distributions that solve PK\ref{cons_strongbeliefinIndependence}, the size of \[P(n\mbox{ independent failure-free executions} ) = \MyExp[L(X,1;n,0,0)]\] is made as big as possible (for all $n\geqslant 0$) by assigning as much probability mass as possible to locations along the diagonal in $\mathcal R$ where the Klotz likelihood is largest. The likelihood is largest at $(p_l,p_l)$ and monotonically decreases to $0$ at $(1,1)$. Thus, {\bf i)} if $\theta\geqslant 1-\phi_1-\phi_2$ then the prior assigns all of the mass along the diagonal (i.e. $1-\phi_1-\phi_2$) to the location $(p_l,p_l)$. This gives the largest possible value for $\MyExp[L(X,1;n,0,0)]$ as $(1-\phi_1-\phi_2)(1-p_l)^n$; {\bf ii)} if instead $\theta\leqslant 1-\phi_1-\phi_2$, then the largest amount of probability mass that the prior can assign to $(p_l, p_l)$ is $\theta$, while the remaining $1-\phi_1-\phi_2-\theta$ mass must be assigned to the limit point of the range $x>\epsilon$ (along the diagonal) where the likelihood is largest -- the limit point $(\epsilon,\epsilon)$. In this case, the prior must be the limit of a sequence of priors that re-assign the remaining mass via a sequence of points that converge to $(\epsilon,\epsilon)$ from the right. The largest value of $\MyExp[L(X,1;n,0,0)]$ in this case is thus $\theta(1-p_l)^n+(1-\phi_1-\phi_2)(1-\epsilon)^n$.

Consequently, the priors that solve PK\ref{cons_strongbeliefinIndependence} -- i.e., the feasible priors in theorem~\ref{theorem_beliefinIndependenceIsConserv} -- must allocate probability along the diagonal in one of the two ways just outlined. In particular, from among those priors that allocate all of the $1-\phi_1-\phi_2$ mass to the point $(p_l, p_l)$, the methods of \ref{sec_app_B} justify the prior in Fig.~\ref{fig:fig_strongbeliefInIndependence} as a solution of theorem~\ref{theorem_beliefinIndependenceIsConserv} .
\end{proof}

\end{document}